\newcommand{\vio}{\color{black}}
\numberwithin{equation}{section}
\newcommand\gH{{\mathfrak{H}}}
\newcommand\gN{{\mathfrak{N}}}
\newcommand\fS{{\mathfrak{S}}}
\newcommand\gotD{\mathfrak{D}}
\newcommand\gotH{\gH}
\newcommand\gotK{\mathfrak{K}}
\newcommand\gotL{\mathfrak{L}}
\newcommand\gotN{\gN}
\newcommand\gotS{\fS}
\newcommand\gotX{\mathfrak{X}}
\newcommand\gotY{\mathfrak{Y}}
\newcommand{\ga}{{\alpha}}
\newcommand{\gd}{{\delta}}
\newcommand{\gD}{{\Delta}}
\newcommand{\gga}{{\gamma}}
\newcommand{\gG}{{\Gamma}}
\newcommand{\gk}{{\kappa}}
\newcommand{\gl}{{\lambda}}
\newcommand{\gL}{{\Lambda}}
\newcommand{\gO}{{\Omega}}
\newcommand{\go}{{\omega}}
\newcommand{\gs}{{\sigma}}
\newcommand\gS{{\Sigma}}
\newcommand{\gth}{{\theta}}
\newcommand{\gT}{{\Theta}}
\newcommand{\R}{{\mathbb{R}}}
\newcommand{\C}{{\mathbb{C}}}
\newcommand{\D}{{\mathbb{D}}}
\newcommand{\bO}{{\mathbb{O}}}
\newcommand{\T}{{\mathbb{T}}}
\newcommand{\N}{{\mathbb{N}}}
\newcommand{\Z}{{\mathbb{Z}}}
\newcommand\cA{{\mathcal{A}}}
\newcommand\cB{{\mathcal{B}}}
\newcommand\cC{{\mathcal{C}}}
\newcommand\cF{{\mathcal{F}}}
\newcommand\cH{{\mathcal{H}}}
\newcommand\cL{{\mathcal{L}}}
\newcommand\cN{{\mathcal{N}}}
\newcommand\cO{{\mathcal{O}}}
\newcommand\cP{{\mathcal{P}}}
\newcommand\cS{{\mathcal{S}}}
\newcommand\cU{{\mathcal{U}}}
\newcommand\cV{{\mathcal{V}}}
\newcommand\res{{\rm res}}
\def\h#1{{{\hat #1} }}
\def\wt#1{{{\widetilde #1} }}
\def\wh#1{{{\,\widehat #1\,} }}
\def\bm\chi{\mbox{\boldmath$\chi$}}
\def\RE{{\rm Re\,}}
\def\IM{{\rm Im\,}}
\def\Ext{{\rm Ext\,}}
\def\ran{{\rm ran\,}}
\def\dom{{\rm dom\,}}
\def\col{{\rm col\,}}
\def\dim{{\rm dim\,}}
\def\graph{{\rm gr\,}}
\def\rank{{\rm rank\,}}
\def\tr{{\rm tr\,}}
\let\xker=\ker \def\ker{{\xker\,}}
\def\supp{{\rm supp\,}}
\def\ord{{\rm ord\,}}
\def\sign{{\rm sign}}
\def\loc{{\rm loc\,}}
\def\adj{{\rm adj\,}}
\def\dist{{\rm dist\,}}
\newtheorem{theorem}{Theorem}[section]
\newtheorem{proposition}[theorem]{Proposition}
\newtheorem{corollary}[theorem]{Corollary}
\newtheorem{lemma}[theorem]{Lemma}
\newtheorem{definition}[theorem]{Definition}
\newtheorem{remark}[theorem]{Remark}
\newtheorem{Hypothesis}[theorem]{Hypothesis}
\numberwithin{equation}{section}
\newcommand{\ba}{\begin{array}}
\newcommand{\ea}{\end{array}}
\newcommand{\bea}{\begin{eqnarray}}
\newcommand{\eea}{\end{eqnarray}}
\newcommand{\bead}{\begin{eqnarray*}}
\newcommand{\eead}{\end{eqnarray*}}
\newcommand{\be}{\begin{equation}}
\newcommand{\ee}{\end{equation}}
\newcommand{\bed}{\begin{displaymath}}
\newcommand{\eed}{\end{displaymath}}
\newcommand{\bl}{\begin{lemma}}
\newcommand{\el}{\end{lemma}}
\newcommand{\bt}{\begin{theorem}}
\newcommand{\et}{\end{theorem}}
\newcommand{\Label}{\label}
\newcommand{\bc}{\begin{corollary}}
\newcommand{\ec}{\end{corollary}}
\newcommand{\la}{\Label}
\newcommand{\br}{\begin{remark}}
\newcommand{\er}{\end{remark}}
\newcommand{\bd}{\begin{definition}}
\newcommand{\ed}{\end{definition}}
\newcommand{\slim}{\,\mbox{\rm s-}\hspace{-2pt} \lim}
\title{Perturbation determinants and trace formulas
for singular perturbations}
\author{Mark~Malamud\\
Institute of Applied Mathematics and Mechanics  \\
National Academy of science of Ukraine\\
R. Luxemburg  str. 74 \\
Donetsk 83114, Ukraine\\
E-mail: mmm@telenet.dn.ua \and
Hagen~Neidhardt\\
Institut f\"ur Angewandte Analysis und Stochastik\\
Mohrenstr. 39\\
D-10117 Berlin, Germany\\
E-mail: neidhard@wias-berlin.de
}
\begin{document}

\maketitle

\begin{abstract}
\noindent
We use the boundary triplet approach to extend the classical concept of perturbation determinants
to a more general setup. In particular, we examine the concept of perturbation determinants to pairs of proper
extensions of closed symmetric operators. For an ordered pair of extensions
we express the perturbation determinant in terms of the abstract Weyl function and the corresponding
boundary operators. A crucial role in our approach plays so-called almost solvable extensions.
We obtain  trace formulas for  pairs of self-adjoint, dissipative and other pairs of extensions
and express the spectral shift function in terms of the abstract  Weyl function and 
the characteristic function of almost solvable extensions.

We emphasize that for \emph{pairs of dissipative extensions our results are new even
for the case of additive  perturbations}. In this case we improve and complete  some classical 
results of M.G. Krein for pairs of self-adjoint and dissipative operators.
We apply the main results  to ordinary differential operators and to elliptic operators as well.\\

\noindent
{\bf Keywords:}  symmetric operators, proper extensions, almost solvable extensions,
perturbation determinants, trace formulas, spectral shift functions\\

\noindent
{\bf Subject classification: } 47A20, 47B25, 34B20, 34B24
\end{abstract}

\newpage
{\footnotesize{
\tableofcontents}
}

\newpage

\section{Introduction}

The perturbation determinant (perturbation determinant) was introduced and used
by Krein in \cite{K53b,K59,K60,K62a,K64}. Independently
from Krein it was also introduced by Kuroda in \cite{Kur61}.
The perturbation determinant is an important tool
to study the spectral shift function for pairs of
self-adjoint operators \cite{BirKrei62,BirKrei63,K53b,K62a,K64} as well as
for non-selfadjoint operators \cite{K87}. Moreover, it was also used
to analyze certain other properties of non-selfadjoint operators as the completeness
of the root vectors etc, cf.  \cite{K59,K60}. Recently, perturbation
determinants in application to Schr\"odinger operators were studied in
\cite{Gesztesy2003,Gesztesy2005,Gesztesy2007,Gesztesy2007a,Gesztesy2007b,Gesztesy2008,Gesztesy2009,Gesztesy2012}.
The properties of perturbation determinants are summarized in
\cite{BY92b,GK69,Yaf92}.

A perturbation determinant relates
a scalar-valued holomorphic function to an ordered  pair of closed
linear operators $\{H',H\}$ defined on some Hilbert space $\gotH$.
In the simplest case if $H'$ and $H$ are bounded operators such
$V := H' - H$ is a trace class operator the perturbation determinant is defined by
\be\la{1.0}
\gD_{H'/H}(z) := \det(I + V(H -z)^{-1}), \quad z \in \rho(H).
\ee
Obviously, the definition extends to unbounded operators $H'$ and $H$
if $\dom(H') = \dom(H)$ and $V := \overline{H' - H}$ is a trace class
operator provided the resolvent set $\rho(H)$ is not empty.
In \cite{K62a,K87} Krein has introduced a class $\gotD$ of pairs of
densely defined closed operators $\{H',H\}$ given by\\[-1ex]

\noindent
(i) $\rho(H') \cap \rho(H)$ is not empty,\\[-1ex]

\noindent
(ii) $\dom(H') = \dom(H)$,\\[-1ex]

\noindent
(iii) $(H'-H)(H-z)^{-1} \in \gotS_1(\gotH)$ for $z \in \rho(H)$.\\[-1ex]

\noindent
to which the definition \eqref{1.0} can be extended.
Pairs $\{H',H\}$ of densely defined closed
operators satisfying the conditions (i) and (ii) are called regular in
the following. Therefore, Krein's theory of perturbation determinant
makes sense for regular pairs of operators.

However, non-regular pairs $\{\wt A',\wt A\}$ naturally appear  in consideration
of boundary value problems for differential operators.
For instance, any pair $\{\wt A',\wt A\}$ of
different proper extensions of a densely defined  closed symmetric operator $A$
is not regular: $\dom(\wt A') = \dom(\wt A)$ if and only if  $\wt A' = \wt A$. In the sequel  a pair
$\{\wt A',\wt A\}$ of closed operators is called singular if there is a densely
defined closed symmetric operator $A$ such that $\wt A'$ and $\wt A$
are proper extensions of $A$. The aim of the paper is to extended
Krein's theory of perturbation determinants from regular to
singular pairs.

To do this there are several approaches. For instance (see  \cite{BY92b}, \cite{Yaf92})
for a pair $\{H',H\}$ of densely defined closed operators with the trace class resolvent difference
%
%
the following concept of generalized perturbation determinant was proposed
\be\la{1.01}
\wt \gD_{H'/H}(z,\xi) := \det\left((H' - z)(H' - \xi)^{-1}(H -\xi)(H - z)^{-1}\right),
\ee
$z \in \rho(H)$ and $\xi \in \rho(H')$. However, this  definition has few drawbacks.  
One of them is that it cannot be applied to boundary value problems.

In the present paper we propose a new  concept of generalized perturbation determinants
by applying the technique of
boundary triplets and the corresponding Weyl functions (see
Section \ref{sec.II}  for precise definitions). This new
approach to extension theory of symmetric operators has been
appeared and elaborated  during the last three decades
(see \cite{BGP07,DM87,DM91,GG84,Mal92} and references therein).

Recall that a triplet $\Pi = \{\cH, \Gamma_0,
\Gamma_1\}$, where $\cH$ is an auxiliary Hilbert space and
$\Gamma_0,\Gamma_1:\  \dom(A^*)\rightarrow \cH$ are linear
mappings,  is called a  boundary triplet for the adjoint $A^*$ of a symmetric operator $A$  if
the "abstract Green's identity"
    \be\label{Intro2.0}
(A^*f,g) - (f,A^*g) = (\gG_1f,\gG_0g)_{\cH} -
(\gG_0f,\gG_1g)_{\cH}, \qquad f,g\in\dom(A^*),
   \ee
holds and the mapping $\gG:=(\Gamma_0,\Gamma_1)^\top:  \dom(A^*)
\rightarrow \cH \oplus \cH$ is surjective.

 A boundary triplet $\Pi = \{\cH, \Gamma_0,
\Gamma_1\}$ for $A^*$ always exists whenever $n_+(A)= n_-(A)$, though it is not unique. Its
role in extension theory is similar to  that of a coordinate system in analytic
geometry. It  leads to a natural  parametrization   of the  set $\Ext_A$ of proper  extensions $\wt A$ of $A$ ($A\subset \wt A \subset A^*$)
by means of the set ${\wt\cC}(\cH)$ of linear relations (multi-valued operators) in $\cH$, see \cite{GG84} and \cite{DM91} for detailed treatments. 
In this paper  we consider  only the case of boundary  relation $\gT$ being  be the graph $\graph(B)$
of a closed linear operator $B$  in $\cH$, i.e. assume that the extension  $\wt A$ is given by
\be \label{Intro2.2}
A_B :=
A^*\!\upharpoonright\ker(\gG_1 - B\gG_0).
\ee
In this case the extension parameter is often called the boundary operator.

The main analytical tool in this approach is  the abstract Weyl
function $M(\cdot)$ which was introduced and studied in
\cite{DM91}. This  Weyl function plays a similar role in the
theory of  boundary triplets as the classical Weyl-Titchmarsh
function does in the  theory of Sturm-Liouville operators (see \cite{BMN02, DM91, Mal92, MalNei09}). 

This approach to the perturbation determinant for singular
pairs allows to express it in terms of the Weyl function
\cite{DM87,DM91,DM95} and the corresponding boundary operator. 


%
%
\bd\la{I.1}
{\em
Let $A$ be a densely defined closed symmetric operator in $\gotH$,
let $\Pi = \{\cH,\gG_0,\gG_1\}$ be a boundary triplet for $A^*$
and $M(\cdot)$ the corresponding  Weyl function. We say that the
ordered pair $\{\wt A',\wt A\}$ of proper extensions of $A$ belongs to the
class $\gotD^\Pi$ if $\wt A'$ and $\wt A$  admit representations \eqref{Intro2.2} 
with closed boundary operators $B'$ and $B$, respectively,
and the following conditions are valid

\item[\;\;\rm (i)] the set $\{z \in \rho(A_0): 0 \in \rho(B - M(z))\}$ is not empty,

\item[\;\;\rm (ii)] $\dom(B') = \dom(B)$,

\item[\;\;\rm (iii)] $(B' - B)(B - M(z))^{-1} \in \gotS_1(\gotH)$
    for $z \in \rho(A_0)$ obeying $0 \in \rho(B - M(z))$\\[-1.2ex]

\noindent
where $A_0$ is a self-adjoint extension of $A$ given by
$A_0 := A^*\upharpoonright\ker(\gG_0)$.
If $\{\wt A',\wt A\} \in \gotD^\Pi$, then the scalar-valued function
\be\la{1.102}
\gD^\Pi_{\wt A'/\wt A}(z) := \det(I_\cH + (B'-B)(B - M(z))^{-1})
\ee
defined for all those $z \in \rho(A_0)$ satisfying $0 \in \rho(B -
M(z))$ is called the perturbation determinant of the pair $\{\wt A',\wt A\}$
with respect to $\Pi$.
}
\ed
In the following we verify that the so defined
perturbation determinant $\gD^\Pi_{\wt A'/\wt A}(\cdot)$
fulfills all standard properties of Krein's perturbation
determinant listed in Appendix \ref{B}, see also \cite[Chapter 8.1.1]{Yaf92}.

The definition of the perturbation determinant for proper extensions
depends on the chosen boundary triplet $\Pi$. However, it turns out that if
$\{\wt A',\wt A\} \in \gotD^\Pi$ and $\{\wt A',\wt A\} \in \gotD^{\Pi'}$,
then the perturbation determinants  $\gD^{\Pi}_{\wt A',\wt A}(z)$
and $\gD^{\Pi'}_{\wt A',\wt A}(z)$ differ only by a multiplicative
complex constant. Hence, the definition of the perturbation in the sense of
Definition \ref{I.1} does not depend so much on $\Pi$ as it seems to
be at first glance.

Further, the natural question arises whether for two extensions $\wt
A'$ and $\wt A$ of $A$ satisfying
\be\la{1.103}
(\wt A' - z)^{-1} - (\wt A - z)^{-1} \in \gotS_1(\gotH), \qquad z
\in \rho(\wt A') \cap \rho(\wt A).
\ee
there is always a boundary triplet $\Pi$ for
$A^*$ such that $\{\wt A',\wt A\} \in \gotD^\Pi$.
We show that this holds if $\wt A$ is almost solvable,
cf. Section \ref{sec.III}.

The most simple expression for the perturbation determinant we obtain if the symmetric
operator $A$ has finite deficiency indices $n_+(A)=n_-(A)<\infty$. Namely, in
this case, as an immediate consequence of Definition \ref{I.1}, we
arrive at the following formula for the perturbation determinant
\be\label{0.5}
\gD^\Pi_{\wt A'/\wt A}(z) := \frac{\det(B'-  M(z))}{\det(B -
M(z))}, \qquad z \in \rho(\wt A') \cap \rho(\wt A).
\ee
For instance, let $A := A_{\min}$ be a minimal symmetric operator
generated in $L^2(\R_+)$ by the Sturm-Liouville differential
expression
\bed
\cL = -D^2+q,\qquad  q=\overline{q}\in l^1_{\rm loc}[0,\infty),
\eed
assuming  the limit point case at infinity. Let also
$L_j:=A_{h_j},j\in\{1,2\}$, be a proper extension of $A$ determined
by 
\bed
\dom(A_{h_j})=\{y\in\dom(A^*):\  y'(0)=h_j y(0)\},\qquad
j\in\{1,2\}.
\eed
Then
\bed
\Delta_{L_2/L_1}(z) = \frac{m(z)-h_2}{m(z)-h_1},
\eed
where $m(\cdot)$ is the Weyl function corresponding to the
Dirichlet extension.

In the case of  infinite deficiency indices $n_{\pm}(A) =\infty$ we consider
proper extensions  $\wt A',  \wt A \in \Ext_A$  satisfying
$\rho(\wt A') \cap \rho(\wt A) \not= \emptyset$  and ($A_0 := A^*\upharpoonright\ker(\gG_0)$)
\bed
(\wt A' - \zeta)^{-1} - (A_0 -\zeta)^{-1} \in \gotS_1(\gotH) \quad
\mbox{and} \quad (\wt A - \zeta)^{-1} - (A_0 -\zeta)^{-1} \in
\gotS_1(\gotH)
\eed
for  $\zeta \in \rho(\wt A') \cap \rho(\wt A) \cap \C_\pm$.
Then we can choose a boundary triplet $\Pi = \{\cH,\gG_0,\gG_1\}$  for $A^*$
such that $\wt A' = A_{B'}, \  \wt A = A_B$ with
$B', B \in \cC(\cH)$ and satisfying
$(B'-\mu)^{-1}, \  (B-\mu)^{-1} \in \gotS_1(\cH)$
for some $\mu \in \rho(B') \cap \rho(B) \cap \R$.

Denoting  by $M(\cdot)$ the corresponding  Weyl function,
we show that  there exists  a boundary triplet
 $\wt \Pi = \{\wt \cH,\wt \gG_0,\wt
\gG_1\}$ for $A^*$ such that $\{\wt A',\wt A\} \in \gotD^{\wt \Pi}$,
$\{\wt A',A_0\} \in \gotD^{\wt \Pi}$ and $\{\wt A,A_0\} \in
\gotD^{\wt \Pi}$, and  the perturbation determinant $\gD^{\wt \Pi}_{\wt A/A_0}(\cdot)$
 admits a representation
\be\la{0.11}
\gD^{\wt \Pi}_{\wt A/A_0}(z) = \det(I - (\mu - B)^{-1}(\mu - M(z))), \qquad z \in \rho(A_0).
\ee
Combining this formula with similar formula for $\gD^{\wt \Pi}_{\wt A'/A_0}(z)$
and using the chain rule we arrive at the formula for $\gD^{\wt \Pi}_{\wt A'/\wt A}(\cdot)$
    \be\label{0.7}
\gD^{\wt \Pi}_{\wt A'/\wt A}(z) = \frac{\det(I - (\mu -
B')^{-1}(\mu - M(z)))}{\det(I - (\mu - B)^{-1}(\mu - M(z)))},
\qquad z \in \rho(\wt A) \cap \C_\pm.
\ee
If $0\in \rho(B')\cap \rho(B)$,  we can put  $\mu = 0$ in both formulas \eqref{0.11} and
\eqref{0.7}.

Formula  \eqref{0.7} gives a desired extension of \eqref{0.5}  to the case $n_{\pm}(A) =\infty$
while it is also useful in the case $n_{\pm}(A) < \infty$ (see Section  \ref{sec.VII.2}).

Formulas \eqref{0.11} and  \eqref{0.7} can be applied to boundary value problems for partial differential equations. For instance,  consider the Schr\"odinger  symmetric  operator
in a domain $\Omega \subset \R^2$ with smooth compact boundary,
\be\label{0.14A}
{\mathcal A}:= - \Delta + q(x) =  \frac{\partial^2}{\partial x_1^2} + \frac{\partial^2}{\partial x_2^2} + q(x),\quad  q=\overline q  \in C^{\infty}({\overline\Omega}).
\ee
Consider  Robin-type realizations of the expression ${\mathcal A}$,
\be\label{6.12c}
\begin{split}
\wh A_{\gs_j} &:= A_{\max}\upharpoonright \dom(\wh A_{\gs_j}),\\
\dom(\wh A_{\gs_j}) &:= \{f\in H^2(\gO): G_1f = \gs G_0f\}, \quad j\in \{1,2\},
\end{split}
\ee
and denote  by $A_0$ the Dirichlet realization of ${\mathcal A}$ given by $\dom(\wh A_0)= \{f\in H^2(\gO): G_0f =0 \}$. Here $G_0$ and $G_1$ are trace operators,  $G_0u := \gamma_0u := u|_{\partial \Omega}$ and $G_1u :=
\gamma_0 \bigl({\partial u}/{\partial\nu} \bigr), \  u\in \dom(A_{\max}).$
It is known that $\wh A_0 = ({\wh A_0})^*$  and 
the realization $\wh A_{\sigma_j}$  is  closed whenever $\sigma_j\in C^2(\partial\Omega)$ and self-adjoint if $\sigma$ is real.

Denote by  $\wh \gs_j$  the multiplication operator induced by $\gs_j$  in $L^2(\partial\gO)$.
Assuming that  $0 \in \rho(\wh{A_{\gs_1}}) \cap \rho(\wh{A_{\gs_2}})\cap \rho(\wh{A_0})$, 
we indicate  a boundary triplet $\wt \Pi$ for $A_{\max}$ such that
 $\{\wh A_{\gs_j}, \wh{A_0}\} \in \gotD^{\wt \Pi}$ and the corresponding
perturbation determinants  $\gD^{\wt \Pi}_{\wh A_{\gs_j}/A_0}(\cdot)$  
and   $\gD^{\wt \Pi}_{\wh A_{\gs_2}/A_{\gs_1}}(\cdot)$ are given by
\bed
\gD^{\wt \Pi}_{\wh A_{\gs_j}/A_0}(z) =
{\det}_{L^2(\partial\gO)}\left(I - (\gL_{0,0}(z) - \gL_{0,0}(0))(\wh \gs_j - \gL_{0,0}(0))^{-1}\right),\; j\in \{1,2\}, \
\eed
and 
\bed
\gD^{\wt \Pi}_{\wh A_{\gs_2}/A_{\gs_1}}(z) =
\frac{{\det}_{L^2(\partial\gO)}\left(I - (\gL_{0,0}(z) - \gL_{0,0}(0))(\wh \gs_2 - \gL_{0,0}(0))^{-1}\right)}{{\det}_{L^2(\partial\gO)}\left(I - (\gL_{0,0}(z) - \gL_{0,0}(0))(\wh \gs_1 - \gL_{0,0}(0))^{-1}\right)},
\eed
$z \in \rho(\wh A_{\gs_1}) \cap \rho(\wh A_{\gs_2})\cap \rho(\wh{A_0})$, respectively.
Here $\gL_{0,0}(\cdot)$ is the Dirichlet to Neumann map restricted to ${L^2(\partial\gO)}$ (see Section 6.3 for details).

In a second part of the paper we use the definition of perturbation
determinants to find trace formulas for extensions, in particular, for
pairs of self-adjoint, accumulative and arbitrary closed
extensions.

The paper is organized as follows. In Section \ref{sec.II} we give a
short introduction into the theory of boundary triplets.
Section \ref{sec.III} is devoted to almost solvable extensions. The properties of
the perturbation determinant of pairs of proper extensions are
investigated and verified in Section \ref{sec.IV}. Trace formulas are proven in
Section \ref{sec.V}. In Section \ref{sec.VI} we compare the
results with those ones of regular pairs.
Finally, in Section \ref{sec.VII} we given certain examples of
perturbation determinants for partial differential operators.
In the Appendix we have collected several results for the convenience
of the reader which are necessary for proofs or for
understanding.

{\bf Notation.}
By $\gotH$ and $\cH$ we denote separable Hilbert spaces. Linear
operators in $\gotH$ or $\cH$ are always denoted by capital Latin
letters, for example by $H$, $A$, etc. By $\dom(A)$, $\ran(A)$ and
$\rho(A)$ we denote the domain, range and spectrum of $A$,
respectively. The symbols $\gs_p(\cdot)$, $\gs_c(\cdot)$ and
$\gs_r(\cdot)$ stand for the point, the continuous and the
residual spectrum of a linear operator. Recall that $z \in
\rho_c(H)$ if $\ker(H-z) = \{0\}$ and $\ran(H-z) \not=
\overline{\ran(H-z)} = \gotH$;  $z \in \gs_r(H)$ if $\ker(H-z) =
\{0\}$ and $\overline{\ran(H-z)} \not= \gotH$.

The set of bounded linear operators from  $\gotH_1$ to
$\gotH_2$ is denoted by $[\gotH_1,\gotH_2]$; $[\gotH] :=
[\gotH,\gotH]$. The Schatten-v.Neumann ideals                   
of compact operators on a Hilbert space $\gotH$ is
denoted by $\gotS_p(\gotH)$, $0 < p \le \infty$; in particular,
$\gotS_\infty(\gotH)$ denotes  the ideal of all compact
operators in $\gotH$.
The set of closed linear operators and the set  of
closed linear relations  in $\cH$ is denoted  by
$\cC(\cH)$ and $\widetilde\cC(\cH)$, respectively.  

$C^k_b({\Omega}),$\ $k\in\Z_+\cup \{\infty\}$,  the set
of $C^k$-functions bounded in $\Omega$ with all their derivatives
of order $\le k,$
 $C_b({\Omega}) :=C^0_b({\Omega})$; $C^k_{u}(\Omega)$,
$k\in\Z_+\cup \{\infty\}$,  the set of $C^k$-functions uniformly
continuous in $\Omega$ with all their derivatives of order $\le
k$, $C_{u}(\Omega):= C^0_{u}(\Omega)$; $C^k_{ub}(\Omega) :=
C^k_{u}(\Omega) \cap C^k_{b}(\Omega)$, $C_{ub}(\Omega):=
C^0_{ub}(\Omega)$; $H^s(\Omega) \ s\in \R$, the usual Sobolev
spaces.

\section{Preliminaries}\la{sec.II}

\subsection{Relations}\la{sec.II.1.1}

For any linear relation $\Theta$ in $\cH$  the adjoint
relation $\Theta^*\in\widetilde\cC(\cH)$ is defined by
\bed
\Theta^*= \left\{
\begin{pmatrix} k\\k^\prime
\end{pmatrix}: (h^\prime,k)=(h,k^\prime)\,\,\text{for all}\,
\begin{pmatrix} h\\h^\prime\end{pmatrix}
\in\Theta\right\}.
\eed
A linear relation $\Theta$ is called symmetric if
$\Theta\subset\Theta^*$ and self-adjoint if $\Theta=\Theta^*$. The
relation $\Theta$ is called dissipative if $\{h,h'\} \in \gT$
yields $\IM(h',h) \ge 0$ and accumulative if $-\Theta$ is
dissipative. If a dissipative (accumulative) relation $\gT$ does
not admit any closed dissipative (accumulative) extension, then
$\gT$ is called maximal dissipative or $m$-dissipative (maximal
accumulative or $m$-accumulative).

We usually consider $\cC(\cH)$ as a subset of $\widetilde\cC(\cH)$
by identifying an operator $T\in \cC(\cH)$ with its graph $\graph(T).$
In particular, an operator $T\in \cC(\cH)$ is called dissipative
if $\IM((Tf,f)) \ge 0$, $f \in \dom(T)$, and accumulative if
$\IM(Tf,f) \le 0$, $f \in \dom(T)$. A dissipative  operator $T$ is
called maximal dissipative ($m$-dissipative) if it does not admit
any closed dissipative extension.

\subsection{Boundary triplets and proper extensions}\la{sec.II.1.2}

Let $A$ be a densely defined closed symmetric operator
in  $\gH$ with equal deficiency indices $n_\pm(A)= \dim(\gotN_{\mp
i})$,\  $\gotN_z := \ker(A^* -z)$.
\bd\la{def2.2}
{\em
\item[\;\;\rm (i)]
 A closed extension $\wt A$ of $A$ is called a proper
extension, in short $\wt A \in \Ext_A,$ if $A \subseteq \wt A
\subseteq A^*;$

\item[\;\;\rm (ii)]
Two proper extensions $\wt A'$, $\wt A$ are called
disjoint if $\dom(\wt A')\cap \dom(\wt A) = \dom( A)$ and
transversal if in addition $\dom(\wt A') + \dom(\wt A) = \dom(
A^*)$.
}
\ed

Any  extension $\wt A= {\wt A}^*$ of $A$ is proper, $\wt A\in
\Ext_A$. Moreover, any dissipative (accumulative)
extension $\wt A$ of $A$ is proper,
(cf. \cite[Theorem III.1.3]{GG91}, \cite{Mal92}). In the following we
we call also $A$ and $A^*$ be proper extensions. This sounds strange
but makes sense with respect of Proposition \ref{prop2.1} below.
\bd[\cite{GG91}]\la{II.1}
{\em
A triplet $\Pi = \{\cH, \Gamma_0,
\Gamma_1\}$, where $\cH$ is an auxiliary Hilbert space and
$\Gamma_0,\Gamma_1:\  \dom(A^*)\rightarrow \cH$ are linear
mappings,  is called an (ordinary) boundary triplet for $A^*$ if
the "abstract Green's identity"
    \be\label{2.0}
(A^*f,g) - (f,A^*g) = (\gG_1f,\gG_0g)_{\cH} -
(\gG_0f,\gG_1g)_{\cH}, \qquad f,g\in\dom(A^*),
   \ee
holds and the mapping $\gG:=(\Gamma_0,\Gamma_1)^\top:  \dom(A^*)
\rightarrow \cH \oplus \cH$ is surjective.
}
\ed
A boundary triplet $\Pi=\{\cH,\gG_0,\gG_1\}$ for $A^*$ exists
whenever  $n_+(A) = n_-(A)$. Note also that  $n_\pm(A) =
\dim(\cH)$ and $\ker(\Gamma_0) \cap \ker(\Gamma_1)=\dom(A)$.

With any boundary triplet $\Pi$ one associates  two canonical
self-adjoint extensions $A_j:=A^*\!\upharpoonright\ker(\gG_j)$, $j = 0,1$.
Conversely, for any extension $A_0=A_0^*\in \Ext_A$ there exists
a (non-unique) boundary triplet $\Pi=\{\cH,\gG_0,\gG_1\}$ for $A^*$
such that $A_0:=A^*\!\upharpoonright\ker(\gG_0)$.

Using the concept of boundary triplets one can parameterize all
proper extensions of $A$ in the following way.
\begin{proposition}[{\cite{DM91, Mal92}}]\label{prop2.1}
Let  $\Pi=\{\cH,\gG_0,\gG_1\}$  be a
boundary triplet for  $A^*.$  Then the mapping
\begin{equation}\label{bij}
(\Ext_A\ni)\ \widetilde A \to  \Gamma \dom(\widetilde A)
=\{\{\Gamma_0 f,\Gamma_1f \} : \  f\in \dom(\widetilde A) \} =:
\Theta \in \widetilde\cC(\cH)
\end{equation}
establishes  a bijective correspondence between $\Ext_A$
and  $\widetilde\cC(\cH)$. We write  $\wt A = A_\Theta$
if $\wt A$ corresponds to $\Theta$ by \eqref{bij}. Moreover, the following holds:

\item[\;\;\rm (i)] $A^*_\Theta= A_{\Theta^*}$, in particular,
  $A^*_\Theta= A_\Theta$ if and only if $\Theta^* = \Theta$.

\item[\;\;\rm (ii)]  $A_\Theta$ is symmetric (self-adjoint) if
  and only if  $\Theta$ is symmetric (self-adjoint).

\item[\;\;\rm (iii)]
  $A_\Theta$ is $m$-dissipative  ($m$-accumulative) if
  and only if so is $\Theta$.

\item[\;\;\rm (iv)] The extensions $A_\Theta$ and $A_0$ are
disjoint (transversal) if and only if $\Theta\in \cC(\cH)$
($\Theta \in [\cH]$). In this case \eqref{bij} takes the form
\be \label{2.2}
A_\Theta := A_{\graph(\Theta)} =
A^*\!\upharpoonright\ker(\gG_1- \Theta\gG_0).
\ee
\end{proposition}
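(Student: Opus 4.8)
The plan is to establish Proposition~\ref{prop2.1} by exploiting the Green identity \eqref{2.0} together with the surjectivity of $\gG=(\gG_0,\gG_1)^\top$. First I would show that the map \eqref{bij} is well-defined and injective. Injectivity is the geometrically clean part: if $A_{\gT}=A_{\gT'}$ as operators, then they have the same domain, and since $\gT$ is literally the image $\gG\,\dom(\wt A)$, the two relations coincide. Conversely, $\dom(\wt A)=\{f\in\dom(A^*):\ \gG f\in\gT\}=\gG^{-1}(\gT)$ recovers $\wt A$ from $\gT$, so the correspondence is a bijection provided every $\gT\in\wt\cC(\cH)$ arises this way. For surjectivity onto $\wt\cC(\cH)$ the key point is that $\ker\gG=\dom(A)$ (stated in the excerpt just after Definition~\ref{II.1}) and that $\gG$ maps $\dom(A^*)$ onto all of $\cH\oplus\cH$; hence $\gG$ descends to a \emph{linear isomorphism} of the quotient $\dom(A^*)/\dom(A)$ onto $\cH\oplus\cH$. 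Under this isomorphism linear relations $\gT\subseteq\cH\oplus\cH$ pull back bijectively to subspaces of $\dom(A^*)$ containing $\dom(A)$, i.e.\ to proper extensions, and closedness of $\gT$ corresponds to closedness of $\wt A$. This yields the bijection and identifies $\wt A=A_{\gT}$.

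For part~(i), I would compute $A_{\gT}^*$ directly from Green's identity. A vector $g\in\dom(A^*)$ lies in $\dom(A_{\gT}^*)$ iff $(A_{\gT}f,g)=(f,A^*g)$ for all $f\in\dom(A_{\gT})$; by \eqref{2.0} this is equivalent to $(\gG_1f,\gG_0g)-(\gG_0f,\gG_1g)=0$ for all $f$ with $\gG f\in\gT$, which is precisely the condition $\{\gG_0g,\gG_1g\}\in\gT^*$ by the definition of the adjoint relation in Section~\ref{sec.II.1.1}. Thus $A_{\gT}^*=A_{\gT^*}$. Parts~(ii) and~(iii) then follow almost formally: symmetry $\gT\subseteq\gT^*$ transfers to $A_{\gT}\subseteq A_{\gT^*}=A_{\gT}^*$ via monotonicity of \eqref{bij}, and self-adjointness is the equality case from~(i). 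For dissipativity, Green's identity gives $2\,\IM(A_{\gT}f,f)=2\,\IM(\gG_1f,\gG_0f)=-i\big[(\gG_1f,\gG_0f)-(\gG_0f,\gG_1f)\big]$, so the sign of $\IM(A_{\gT}f,f)$ matches that of $\IM(\gG_1f,\gG_0f)=\IM(h',h)$ for $\{h,h'\}\in\gT$; maximality on the operator side corresponds to maximality of the relation, yielding the $m$-dissipative/$m$-accumulative equivalence.

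For part~(iv), disjointness $\dom(A_{\gT})\cap\dom(A_0)=\dom(A)$ means that no nonzero element of $\dom(A^*)/\dom(A)$ lies in both $\gG^{-1}(\gT)$ and $\ker\gG_0$; translated through the isomorphism, this says $\gT$ contains no element of the form $\{0,h'\}$ with $h'\neq0$, i.e.\ $\gT$ is (the graph of) a single-valued operator, $\gT\in\cC(\cH)$. Transversality adds $\dom(A_{\gT})+\dom(A_0)=\dom(A^*)$, which forces $\dom(\gT)=\cH$ and, combined with closedness and the closed graph theorem, gives $\gT\in[\cH]$. In the operator case, writing $\gT=\graph(\Theta)$, the condition $\gG f\in\gT$ becomes $\gG_1f=\Theta\gG_0f$, i.e.\ $f\in\ker(\gG_1-\Theta\gG_0)$, which is exactly \eqref{2.2}.

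The main obstacle I anticipate is the careful bookkeeping in the surjectivity/closedness argument underlying the bijection: one must verify that the quotient map induced by $\gG$ on $\dom(A^*)/\dom(A)$ is not merely an algebraic isomorphism but respects closures correctly, so that closed relations $\gT$ match closed extensions $\wt A$. This rests on the fact that $A^*$ is closed and that $\gG$ is bounded relative to the graph norm of $A^*$ (a standard but essential consequence of the boundary-triplet axioms); establishing this continuity, and thereby the correspondence between graph-closedness of $\wt A$ and closedness of $\gT$ in $\cH\oplus\cH$, is where the real work lies. The algebraic identities in~(i)--(iv) are then comparatively routine applications of \eqref{2.0} and the definition of $\gT^*$.
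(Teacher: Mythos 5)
The paper does not prove Proposition \ref{prop2.1} at all: it is quoted as a known result and attributed to \cite{DM91,Mal92}, so there is no in-paper argument to compare against. Your sketch is essentially the standard proof from that literature, and it is sound in outline: the graph-norm continuity of $\gG$ (obtained from the closed graph theorem, using Green's identity \eqref{2.0} and the closedness of $A^*$) makes $\gG$ a bounded surjection of $\dom(A^*)$, equipped with the graph norm, onto $\cH\oplus\cH$ with kernel $\dom(A)$; by the open mapping theorem it induces a topological isomorphism of $\dom(A^*)/\dom(A)$ onto $\cH\oplus\cH$, which gives the bijection together with the matching of closedness. Items (i), (ii) and (iv) then follow from Green's identity, the definition of $\gT^*$, and the order-preserving character of the correspondence, exactly as you write them (in (i), note explicitly that $A\subseteq A_\gT$ forces $A_\gT^*\subseteq A^*$, so quantifying over $g\in\dom(A^*)$ loses nothing).

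The one genuine soft spot is part (iii). Your phrase ``maximality on the operator side corresponds to maximality of the relation'' is only automatic for maximality \emph{within the class of proper extensions}: the order isomorphism transfers ``maximal among closed dissipative relations'' to ``maximal among closed dissipative extensions contained in $A^*$'', whereas $m$-dissipativity of $A_\gT$ demands that it admit no closed dissipative extension whatsoever, proper or not. To close this gap you need the Phillips-type theorem that a (maximal) dissipative extension of a densely defined symmetric operator is automatically a restriction of $A^*$ --- precisely the fact the paper records just before Definition \ref{II.1}, citing \cite{GG91} and \cite{Mal92}. Alternatively one can bypass it by showing directly that $\C_-\subseteq\rho(A_\gT)$ whenever $\gT$ is $m$-dissipative: for $z\in\C_-$ one has $\IM(-M(z))\ge\delta>0$, hence $0\in\rho(\gT-M(z))$, and the Krein-type formula \eqref{2.30} (or Proposition \ref{t1.12}(i)) then yields $z\in\rho(A_\gT)$; combined with the dissipativity of $A_\gT$, which you correctly derived from Green's identity, this gives $m$-dissipativity. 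With that supplement, and with the graph-norm continuity of $\gG$ that you rightly identified as the technical core of the bijection, your proof is complete.
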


In particular, $A_j:=A^*\!\upharpoonright\ker(\gG_j) =
A_{\Theta_j},\ j\in \{0,1\},$ where $\gT_0:= \{0\} \times \cH$ and
$\gT_1 := \cH \times \{0\} = \graph(\bO)$ where $\bO$ denotes the zero
operator in $\cH$.

We note that $\wt\cC(\cH)$ contains the linear relations $\{0\} \times
\{0\}$ and $\cH \times \cH$. It turns out that
the corresponding closed extensions of $A$ are $A$ and $A^*$,
respectively. This is the reason why we include $A$ and $A^*$ into the
sets of proper extensions of $A$.

\subsection{Weyl functions and spectra of proper extensions}\la{sec.II.1.3}

It is well known that Weyl functions are an important tool in the
direct and inverse spectral theory of singular Sturm-Liouville
operators. In \cite{DM87,DM91,DM95} the concept of Weyl function
was generalized to the case of an arbitrary symmetric operator $A$
with  $n_+(A) = n_-(A)\le \infty$. Here following \cite{DM91}  we
briefly recall basic facts on Weyl functions and $\gamma$-fields
associated with a boundary triplet  $\Pi.$
\begin{definition}[{\cite{DM87,DM91}}]\label{Weylfunc}
{\em
Let $\Pi=\{\cH,\gG_0,\gG_1\}$ be a boundary triplet  for $A^*$ and
$A_0=A^*\!\upharpoonright\ker(\gG_0)$. The operator valued
functions $\gamma(\cdot) :\ \rho(A_0)\rightarrow  [\cH,\gH]$ and
$M(\cdot) :\ \rho(A_0)\rightarrow  [\cH]$ defined by
\be\label{2.3A}
\gamma(z):=\bigl(\Gamma_0\!\upharpoonright\gotN_z\bigr)^{-1}
\qquad\text{and}\qquad M(z):=\Gamma_1\gamma(z), \quad
z\in\rho(A_0), \ee
are called the $\gamma$-field and the  Weyl function,
respectively, corresponding to the boundary triplet $\Pi.$
}
\end{definition}
Clearly, the Weyl function can equivalently be defined by
\bed
M(z)\Gamma_0 f_z = \Gamma_1f_z,\qquad  f_z \in \gotN_z, \quad z\in\rho(A_0).
\eed
The $\gamma$-field $\gamma(\cdot)$ and the Weyl function
$M(\cdot)$ in \eqref{2.3A} are well defined. Moreover, both
$\gamma(\cdot)$ and $M(\cdot)$ are holomorphic on $\rho(A_0)$ and
the following relations hold
\be\la{2.5}
\gamma(z)=\bigl(I+(z-\zeta)(A_0-z)^{-1}\bigr)\gamma(\zeta),
\qquad z,\zeta\in\rho(A_0),
\ee
and
\begin{equation}\label{mlambda}
M(z)-M(\zeta)^*=(z-\overline\zeta)\gamma(\zeta)^*\gamma(z), \qquad
z,\zeta\in\rho(A_0).
\end{equation}
Identity \eqref{mlambda} yields that $M(\cdot)\in (R_{\cH})$, i.e.
$M(\cdot)$  is a  $[\cH]$-valued Nevanlinna function, that
is, $M(\cdot)$ is a ($[\cH]$-valued) holomorphic function on
$\C\backslash \R$ and
\bed
M(z)=M(\overline z)^*\qquad\text{and}\qquad
\frac{\IM(M(z))}{\IM(z)}\geq 0, \qquad
z\in \rho(A_0).
\eed
It follows also from \eqref{mlambda} that
$M(\cdot)$ satisfies $0\in \rho(\IM(M(z)))$
for all $z\in\C\backslash\R$.
\begin{proposition}[{\cite{DM87,DM91}}]\label{t1.12}
Let $A$ be a simple symmetric operator in $\gH$
simple closed densely defined symmetric operator in $\gH$
and let $\Pi=\{ {\cH}, \Gamma_0,\Gamma_1 \}$ be a boundary triplet
for $A^*$, $M(\cdot)$ the corresponding Weyl function. Assume that
$\Theta\in \wt\cC({\cH})$ and $z \in \rho(A_0)$. Then the
following holds:

\item[\;\;{\em (i)}] $z \in \rho(A_\gT)$ if and only if  $0 \in
\rho(\gT -M(z))$;

\item[\;\;{\em (ii)}] $z \in \gs_\tau(A_\gT)$ if  and only if $0
\in \gs_\tau(\gT - M(z))$,
$\tau = p,c,r$. Moreover,  $\dim(\ker(A_\gT -z)) = \dim(\ker(\gT -
M(z)))$.
\end{proposition}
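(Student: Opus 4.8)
The plan is to prove Proposition~\ref{t1.12}, which characterizes the spectrum of a proper extension $A_\gT$ in terms of the operator pencil $\gT - M(z)$. The key structural fact I would exploit is the Krein-type resolvent formula connecting $(A_\gT - z)^{-1}$ to $(A_0 - z)^{-1}$ through the $\gamma$-field and Weyl function; but since that formula is not yet stated in the excerpt, I would instead argue directly from the boundary-triplet description of $\dom(A_\gT)$ and the decomposition $\dom(A^*) = \dom(A_0) \dotplus \gotN_z$ valid for $z \in \rho(A_0)$.

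\medskip\noindent
\emph{Setup.} Fix $z \in \rho(A_0)$. The central observation is a bijective correspondence between solutions of $(A_\gT - z)u = 0$ (more generally $(A_\gT-z)u = g$) and elements in the kernel (or range behaviour) of $\gT - M(z)$, realized through the map $\gamma(z)$. First I would record that since $z\in\rho(A_0)$ one has the direct-sum decomposition $\dom(A^*) = \dom(A_0)\dotplus\gotN_z$: any $f\in\dom(A^*)$ splits uniquely as $f = f_0 + f_z$ with $f_0\in\dom(A_0)=\ker(\gG_0)$ and $f_z\in\gotN_z$. Then $\gG_0 f = \gG_0 f_z$ and, using $\gG_1 f_z = M(z)\gG_0 f_z$ together with $\gamma(z)=(\gG_0\!\upharpoonright\gotN_z)^{-1}$, the boundary values of $f_z$ are controlled entirely by $M(z)$ acting on $\gG_0 f_z \in \cH$.

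\medskip\noindent
\emph{Part (i) and the kernel dimension.} To prove the equivalence $z\in\rho(A_\gT)\Leftrightarrow 0\in\rho(\gT - M(z))$ and the point-spectrum statement simultaneously, I would first treat $\ker(A_\gT - z)$. An element $f_z\in\gotN_z$ (these are exactly the solutions of $(A^*-z)f=0$) lies in $\dom(A_\gT)$ iff $\{\gG_0 f_z, \gG_1 f_z\}\in\gT$, i.e. iff $\{\gG_0 f_z, M(z)\gG_0 f_z\}\in\gT$, which says precisely $\{h, M(z)h\}\in\gT$ for $h:=\gG_0 f_z$, equivalently $\{h,0\}\in\gT - M(z)$, i.e. $h\in\ker(\gT - M(z))$. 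Since $\gamma(z)\colon\cH\to\gotN_z$ is a bijection with $\gG_0\gamma(z)=I_\cH$, the map $h\mapsto\gamma(z)h$ gives a linear isomorphism $\ker(\gT - M(z))\to\ker(A_\gT - z)$, yielding $\dim\ker(A_\gT - z)=\dim\ker(\gT - M(z))$. In particular $0\in\gs_p(\gT-M(z))\Leftrightarrow z\in\gs_p(A_\gT)$. For the resolvent equivalence I would show both surjectivity and boundedness of the inverse follow from $0\in\rho(\gT - M(z))$: solving $(A_\gT - z)u = g$ by writing $u = (A_0-z)^{-1}g + \gamma(z)h$ and imposing the boundary condition $\{\gG_0 u,\gG_1 u\}\in\gT$ reduces, after using the Green-identity-derived formula $\gG_1(A_0-z)^{-1}g = \gamma(\bar z)^* g$, to solving an equation of the form $(\gT - M(z))h = (\text{explicit bounded function of } g)$; invertibility of $\gT - M(z)$ then delivers $h$, hence $u$, with bounded dependence on $g$.

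\medskip\noindent
\emph{Part (ii) for $\tau = c, r$.} The continuous and residual cases I would obtain by an analogous range analysis. The definitions recalled in the Notation paragraph ($z\in\rho_c$ iff trivial kernel and dense-but-proper range; $z\in\gs_r$ iff trivial kernel and non-dense range) mean I must track the closure and density of $\ran(A_\gT - z)$ against that of $\ran(\gT - M(z))$. The cleanest route is to establish that the affine solvability equation $(\gT - M(z))h = \psi(g)$, where $g\mapsto\psi(g)$ is a bounded bijection onto $\cH$, transports range-closedness and range-density properties faithfully between the operator $A_\gT - z$ in $\gotH$ and the relation $\gT - M(z)$ in $\cH$. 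Here the simplicity of $A$ is what guarantees $\gotN_z$ is ``large enough'' that no solution is lost, ensuring the correspondence is onto on the relevant quotients. The main obstacle I anticipate is precisely the careful bookkeeping for linear \emph{relations} rather than operators: $\gT - M(z)$ may be multivalued, so I must argue at the level of the graph and be attentive to the multivalued part and to whether closedness of one range forces closedness of the other. Handling $\tau = c$ versus $\tau = r$ requires disentangling ``dense range'' from ``closed range,'' and verifying that the bounded bijection $\psi$ and the isomorphism $\gamma(z)$ preserve exactly these topological features; once that transport principle is set up rigorously, the three spectral-type equivalences fall out uniformly.
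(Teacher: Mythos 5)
The paper itself contains no proof of Proposition \ref{t1.12}: it is imported verbatim from \cite{DM87,DM91}, so your proposal can only be measured against the standard argument behind the cited result --- and your route (the decomposition $\dom(A^*)=\dom(A_0)\dotplus\gotN_z$ for $z\in\rho(A_0)$, the kernel correspondence $h\mapsto\gamma(z)h$ between $\ker(\gT-M(z))$ and $\ker(A_\gT-z)$, and the reduction of $(A_\gT-z)u=g$ to a boundary equation via $\gG_1(A_0-z)^{-1}=\gamma(\overline z)^{\,*}$) is precisely that standard argument. Your Setup and the kernel/point-spectrum analysis are correct. Note, however, that in Part (i) you only sketch the implication $0\in\rho(\gT-M(z))\Rightarrow z\in\rho(A_\gT)$; the converse half of the ``if and only if'' is nowhere argued and must come either from the range analysis of Part (ii) combined with closedness of the relation $\gT-M(z)$ (closed since $\gT$ is closed and $M(z)$ bounded), or from the surjectivity of $\gamma(\overline z)^{\,*}$ discussed next.

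The concrete gap is the assertion on which your Part (ii) transport principle rests: the map $\psi(g)=\gamma(\overline z)^{\,*}g$ is \emph{not} a bounded bijection onto $\cH$. Its kernel is $\ran(\gamma(\overline z))^\perp=\gotN_{\overline z}^{\,\perp}=\overline{\ran(A-z)}\neq\{0\}$, a closed subspace whose orthogonal complement is the deficiency subspace $\gotN_{\overline z}$, so $\psi$ is very far from injective. What is true, and what the argument actually needs, is that $\psi$ is a bounded \emph{surjection}: since $\gamma(\overline z)$ is injective with closed range $\gotN_{\overline z}$, the closed range theorem gives $\ran(\gamma(\overline z)^{\,*})=\ker(\gamma(\overline z))^\perp=\cH$. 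Your solvability reduction then yields the preimage identity $\ran(A_\gT-z)=\psi^{-1}\bigl(\ran(\gT-M(z))\bigr)$, and the transport of spectral type must be run through this identity using the open mapping theorem ($\psi$ is open): preimages under a continuous open surjection of dense, closed, or full sets are respectively dense, closed, or full, and conversely $\psi(\psi^{-1}(S))=S$ recovers these properties of $S$ from those of the preimage. With that repair, together with the kernel isomorphism, the equivalences for $\tau=p,c,r$ follow. Finally, simplicity of $A$ plays no role at the point where you invoke it: the completeness of the solution family $u=(A_0-z)^{-1}g+\gamma(z)h$, $h\in\cH$, uses only $\ker(A^*-z)=\gotN_z$ and $z\in\rho(A_0)$; simplicity is relevant for characterizing the spectrum via boundary behaviour of $M(\cdot)$ on $\R$, not for this proposition.
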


\subsection{Krein-type formula for resolvents and comparability}\la{sec.II.1.4}

Let $\Pi=\{\cH,\gG_0,\gG_1\}$ be a boundary triplet for $A^*,$
$M(\cdot)$ and  $\gamma(\cdot)$ the  corresponding Weyl function
and the  $\gamma$-field, respectively.
For any proper (not necessarily self-adjoint)  extension
${\widetilde A}_{\Theta}\in \Ext_A$ with non-empty resolvent set
$\rho({\widetilde A}_{\Theta})$ the following Krein-type formula
holds (cf. \cite{DM87,DM91,DM95})
\begin{equation}\label{2.30}
(A_\Theta - z)^{-1} - (A_0 - z)^{-1} = \gamma(z) (\Theta -
M(z))^{-1} \gamma^*({\overline z}), \quad z\in \rho(A_0)\cap
\rho(A_\Theta).
\end{equation}
Formula \eqref{2.30} extends
the known Krein formula for canonical resolvents to the case of any $A_\Theta\in \Ext_A$
with $\rho(A_\Theta)\not = \emptyset.$  Moreover,
formulas  \eqref{bij}, \eqref{2.2} and \eqref{2.3A} express all
parameters  in \eqref{2.30} in terms of the boundary triplet $\Pi$ (cf.
\cite{DM87,DM91,DM95}). Namely, these expressions make it possible
to apply formula \eqref{2.30} to boundary value problems.

The following result is deduced from formula \eqref{2.30}.
\begin{proposition}[{\cite[Theorem 2]{DM91}}]\la{prop2.9}
Let $\Pi=\{\cH,\gG_0,\gG_1\}$  be a boundary triplet for $A^*$,
$\gT',\gT \in \wt\cC(\cH)$ and
$\rho(A_{\gT'})\cap \rho(A_{\gT})\not = \emptyset.$  If
$\rho({\gT'})\cap \rho({\gT})\not = \emptyset$, then for
any Neumann-Schatten ideal ${\mathfrak S}_p$, $1 \le p \le \infty$,
the following holds:

\item[\;\;\rm (i)] The relation
\be\la{2.31}
(A_{\gT'} - z)^{-1} - (A_{\gT} - z)^{-1}\in{\gotS}_p(\gH),
\quad  z\in  \rho(A_{\gT'})\cap \rho(A_{\gT}),
\ee
is equivalent to
\be\la{2.31a}
\bigl(\gT' - \zeta\bigr)^{-1}-
\bigl(\gT - \zeta \bigr)^{-1}\in{\mathfrak S}_p(\cH),  \quad
\zeta\in \rho({\gT'})\cap \rho({\gT}).
\ee
In particular,  $(A_{\gT}- z)^{-1} - (A_0 - z)^{-1}\in{\mathfrak
S}_p(\gH)$ for $z\in  \rho(A_{\gT})\cap \rho(A_{0})$ if and only
if $\bigl(\gT - \zeta\bigr)^{-1} \in{\mathfrak S}_p(\cH)$ for
$\zeta \in \rho(\gT)$.

\item[\;\;\rm (ii)] If $\dom(\gT') = \dom(\gT)$, then the
following implication holds:
\begin{equation}\label{2.32}
\overline{\gT' - \gT} \in{\mathfrak S}_p(\cH) \Longrightarrow
(A_{\gT'}-z)^{-1} - (A_{\gT}-z)^{-1}\in{\mathfrak S}_p(\gH),
\end{equation}
$z\in  \rho(A_{\gT'})\cap \rho(A_{\gT})$. In particular, if
$\gT',\gT\in[\cH]$, then \eqref{2.31}
is equivalent to $\gT' - \gT \in \gotS_p(\cH)$.
\end{proposition}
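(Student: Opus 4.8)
The plan is to push everything down to the parameter space $\cH$ via the Krein-type resolvent formula \eqref{2.30}, and then to exploit two structural facts: that the $\gamma$-field is bounded and boundedly invertible from one side, and that Schatten-ideal membership of a difference of resolvents does not depend on the spectral parameter. Throughout I write $R_\Theta(z) := (A_\Theta - z)^{-1}$ and, for $\Xi \in \wt\cC(\cH)$ and $\mu\in\rho(\Xi)$, abbreviate the difference $D(\mu):=(\gT'-\mu)^{-1}-(\gT-\mu)^{-1}$.

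I would treat (i) first. Fix $z \in \rho(A_{\gT'}) \cap \rho(A_{\gT}) \cap \rho(A_0)$; by Proposition \ref{t1.12}(i) this gives $M(z) \in \rho(\gT') \cap \rho(\gT)$. Subtracting the two instances of \eqref{2.30} (for $\gT'$ and $\gT$) cancels the common term $R_0(z)$ and produces
\begin{equation*}
R_{\gT'}(z) - R_{\gT}(z) = \gamma(z)\bigl[(\gT' - M(z))^{-1} - (\gT - M(z))^{-1}\bigr]\gamma^*(\overline z).
\end{equation*}
Since $\gamma(z) \in [\cH,\gH]$ and $\gamma^*(\overline z) \in [\gH,\cH]$ are bounded, the ideal property of $\gotS_p$ shows that $X(z):=(\gT' - M(z))^{-1} - (\gT - M(z))^{-1}\in\gotS_p(\cH)$ forces the left-hand side into $\gotS_p(\gH)$. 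For the converse I would use that $\gamma(z) = (\gG_0\!\upharpoonright\!\gotN_z)^{-1}$ has the bounded left inverse $\gG_0\!\upharpoonright\!\gotN_z$ (extended by $0$ off $\gotN_z$), while $\gamma^*(\overline z)$, being surjective onto $\cH$, has a bounded right inverse; composing the displayed identity with these inverses recovers $X(z)$, so $R_{\gT'}(z) - R_{\gT}(z) \in \gotS_p(\gH)$ implies $X(z) \in \gotS_p(\cH)$.

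It remains to pass from $X(z)\in\gotS_p(\cH)$ at the point $M(z)$ to \eqref{2.31a} at an arbitrary $\zeta \in \rho(\gT') \cap \rho(\gT)$. This is a parameter-shift lemma: for any $\lambda,\zeta \in \rho(\gT')\cap\rho(\gT)$ the resolvent identity for relations yields the factorisation
\begin{equation*}
D(\lambda) = \bigl[I + (\lambda-\zeta)(\gT'-\lambda)^{-1}\bigr]\,D(\zeta)\,\bigl[I - (\lambda-\zeta)(\gT-\zeta)^{-1}\bigr]^{-1},
\end{equation*}
where the outer factors are bounded and boundedly invertible, the inverse of the last one being $I + (\lambda-\zeta)(\gT-\lambda)^{-1}$, as one checks directly from the resolvent identity. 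Hence $D(\zeta) \in \gotS_p(\cH)$ for one admissible $\zeta$ if and only if $D(\lambda) \in \gotS_p(\cH)$ for every admissible $\lambda$; applying this with $\lambda = M(z)$ closes the chain of equivalences and proves (i). The ``in particular'' assertion follows by taking one relation to be $\gT_0 = \{0\}\times\cH$, for which $A_{\gT_0}=A_0$ and $(\gT_0 - \zeta)^{-1} = \bO$.

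For part (ii) I would start from the standard factorisation of the resolvent difference. As $\dom(\gT') = \dom(\gT)$,
\begin{equation*}
(\gT'-\lambda)^{-1} - (\gT-\lambda)^{-1} = -(\gT'-\lambda)^{-1}(\gT'-\gT)(\gT-\lambda)^{-1},
\end{equation*}
and since $\ran((\gT-\lambda)^{-1}) \subseteq \dom(\gT) \subseteq \dom(\overline{\gT'-\gT})$ the right member equals $-(\gT'-\lambda)^{-1}\,\overline{\gT'-\gT}\,(\gT-\lambda)^{-1}$, a product of an $\gotS_p$-operator with two bounded factors, hence in $\gotS_p(\cH)$; invoking (i) transfers this to $\gotS_p(\gH)$ and gives \eqref{2.32}. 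When $\gT',\gT \in [\cH]$ the same identity read backwards, together with (i), upgrades the implication to the stated equivalence. The step I expect to demand the most care is the relation-theoretic bookkeeping: checking that the resolvent identity and both factorisations hold verbatim in $\wt\cC(\cH)$, that $\gG_0\!\upharpoonright\!\gotN_z$ is genuinely bounded (so $\gamma(z)$ is boundedly left-invertible), and that the inclusion $\dom(\gT) \subseteq \dom(\overline{\gT'-\gT})$ legitimises replacing $\gT'-\gT$ by its closure. Once these points are secured, everything else reduces to the ideal property of $\gotS_p$.
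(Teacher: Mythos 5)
You should know at the outset that the paper itself contains no proof of Proposition \ref{prop2.9}: it is quoted verbatim from \cite[Theorem 2]{DM91}, so there is no in-paper argument to compare against, and your proposal has to be judged on its own merits. On those merits, your overall route is the standard one behind the cited result: subtract two copies of the Krein formula \eqref{2.30} to get $R_{\gT'}(z)-R_{\gT}(z)=\gamma(z)\bigl[(\gT'-M(z))^{-1}-(\gT-M(z))^{-1}\bigr]\gamma^*(\overline z)$, use that $\gamma(z)$ is boundedly left-invertible and $\gamma^*(\overline z)$ boundedly right-invertible to transfer $\gotS_p$-membership in both directions, and then move the spectral parameter. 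Your scalar parameter-shift factorisation is correct (the outer factors are indeed mutually inverse, as you say), part (ii) including the bounded-case equivalence is fine, and the reduction of the ``in particular'' assertion to $\gT_0=\{0\}\times\cH$ works.

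The one step that does not work as written is ``applying this with $\lambda = M(z)$''. Your parameter-shift lemma is proved for \emph{scalar} $\lambda,\zeta$, where the factor $(\lambda-\zeta)$ commutes with all resolvents; $M(z)$ is an operator in $[\cH]$, so the displayed identity has no meaning for $\lambda=M(z)$ until the order of the factors is fixed, and only one ordering makes the computation go through. The gap is repairable: for a linear relation $\gT$ and a bounded operator $B=M(z)$ with $0\in\rho(\gT-B)$, $\zeta\in\rho(\gT)$, one has the two relation-valued second resolvent identities
\begin{equation*}
(\gT - B)^{-1} - (\gT - \zeta)^{-1} = (\gT - B)^{-1}(B - \zeta)(\gT - \zeta)^{-1} = (\gT - \zeta)^{-1}(B - \zeta)(\gT - B)^{-1},
\end{equation*}
and from these, exactly as in your scalar computation,
\begin{equation*}
(\gT' - B)^{-1} - (\gT - B)^{-1} = \bigl[I + (\gT' - B)^{-1}(B - \zeta)\bigr]\,D(\zeta)\,\bigl[I + (B - \zeta)(\gT - B)^{-1}\bigr],
\end{equation*}
where $D(\zeta)=(\gT'-\zeta)^{-1}-(\gT-\zeta)^{-1}$ and the outer factors are boundedly invertible, with inverses $I - (\gT' - \zeta)^{-1}(B - \zeta)$ and $I - (B - \zeta)(\gT - \zeta)^{-1}$ respectively. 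With this noncommutative substitute your chain of equivalences in (i) closes, and the rest of the proof stands. Two smaller points you should make explicit: $\rho(A_{\gT'})\cap\rho(A_{\gT})\cap\rho(A_0)\neq\emptyset$ (true, since $\rho(A_{\gT'})\cap\rho(A_{\gT})$ is open and nonempty while $\rho(A_0)\supseteq\C\setminus\R$), and Proposition \ref{t1.12} is stated in the paper only for \emph{simple} $A$, whereas Proposition \ref{prop2.9} assumes no simplicity --- the implication you actually need, $z\in\rho(A_{\gT})\cap\rho(A_0)\Rightarrow 0\in\rho(\gT-M(z))$, holds without simplicity and is already implicit in the Krein formula \eqref{2.30}.
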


\section{Almost solvable extensions}\la{sec.III}

\subsection{Basic facts}\la{sec.III.1}

In the following $A$ denotes a densely defined closed symmetric
operator in $\mathfrak H$. The concept of  almost solvable extensions
of $A$ was introduced in \cite{DM85} (see also \cite{DM92a,DM92,DM95}).
Let us recall basic facts on these extensions and let us extend this concept to a family of proper
extensions.
\bd\label{def2.10}
{\em
\item[\;\;\rm (i)] An extension  $\wt A\in \Ext_A$ is called
almost solvable if there exists a self-adjoint extension $\wh A$ of
$A$ such that $\wh A$ and $\wt A$  are transversal, see Definition \ref{def2.2}(ii).

\item[\;\;\rm (ii)]
The family $\{\wt A_j\}^N_{j=1}$, $\wt A_j \in \Ext_A$, $j\in
\{1,\ldots,N\}$, $2 \le N \le \infty$, is called jointly almost solvable
if there exists a self-adjoint  extension $\wh A$ of $A$ such that
$\wh A$ is transversal to each $\wt A_j, \  j\in \{1, \ldots,
N\}$.

\item[\;\;\rm (iii)]
Let $\gl \in \R$. The family $\{\wt A_j\}^N_{j=1}$,
$\wt A_j \in \Ext_A$, $j\in \{1, \ldots, N\}$,  $2 \le N \le \infty$, is called
jointly almost solvable with respect to $\gl$ if there exists a
self-adjoint extension $\wh A$ of $A$ which is transversal to any
$\wt A_j$, $j\in \{1, \ldots, N\}$, $2 \le N \le \infty$, and
satisfies in addition the condition $\gl \in \rho(\wh A)$.
}
\end{definition}

We note that Definition \ref{def2.10}(i) coincides with Definition 3 of \cite{DM92}.
\bd\la{III.3}
{\em
Let $\wt A_j \in \Ext_A$, $  j\in \{1, \ldots, N\}.$ A boundary
triplet $\Pi =\{\cH,\gG_0,\gG_1\}$ for $A^*$ is called regular for
$\{\wt A_j\}^N_{j=1}$ if there exist  operators $B_j \in [\cH]$, $
j\in \{1, \ldots, N\}$, such that $\wt A_j = A_{B_j} :=
A^*\!\upharpoonright\ker(\gG_1- B_j\gG_0),$\  $j\in \{1, \ldots,
N\}.$
}
\ed
\begin{proposition}\la{prop2.11}
Let $\wt A_j \in \Ext_A$,\  $j\in \{1, \ldots, N\}.$

\item[\;\;\rm (i)] The family $\{\wt A_j\}^N_{j=1}$ is jointly
almost solvable if and only if there exists a boundary triplet
$\Pi=\{\cH,\Gamma_0,\Gamma_1\}$ for $A^*$ which is regular for
$\{\wt A_j\}^N_{j=1}$.

\item[\;\;\rm(ii)] The family $\{\wt A_j\}^N_{j=1}$ is jointly
almost solvable with respect to $\gl \in \R$ if and only if there
exists  a boundary triplet $\Pi=\{\cH,\Gamma_0,\Gamma_1\}$ for
$A^*$ which is regular for $\{\wt A_j\}^N_{j=1}$ and $\gl \in
\rho(A_0)$.
\end{proposition}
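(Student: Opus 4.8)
The plan is to prove both biconditionals by translating the geometric notion of transversality into the algebraic language of boundary triplets, using Proposition~\ref{prop2.1}(iv), which identifies transversality of $A_\Theta$ to $A_0$ with \emph{boundedness} of the parameter, i.e.\ $\Theta\in[\cH]$. The bridge in both directions is the fact, recorded after Definition~\ref{II.1}, that every self-adjoint extension $A_0=A_0^*\in\Ext_A$ arises as $A_0=A^*\!\upharpoonright\ker(\gG_0)$ for some (non-unique) boundary triplet $\Pi=\{\cH,\gG_0,\gG_1\}$. Together these two facts reduce the statement to a dictionary translation.

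For the sufficiency direction of (i) I would start from a boundary triplet $\Pi$ regular for $\{\wt A_j\}^N_{j=1}$, so that $\wt A_j=A_{B_j}$ with $B_j\in[\cH]$ for every $j$. Setting $\wh A:=A_0=A^*\!\upharpoonright\ker(\gG_0)$ produces a self-adjoint extension of $A$, and since each $B_j$ is bounded, Proposition~\ref{prop2.1}(iv) guarantees that $A_{B_j}$ and $A_0$ are transversal. Thus this single $\wh A$ is transversal to all the $\wt A_j$ at once, which is exactly joint almost solvability in the sense of Definition~\ref{def2.10}(ii).

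For the necessity direction I would take the self-adjoint $\wh A$ furnished by joint almost solvability and invoke the existence result to pick a boundary triplet $\Pi$ with $A_0=\wh A$. Writing $\wt A_j=A_{\Theta_j}$ with $\Theta_j\in\wt\cC(\cH)$ via the bijection of Proposition~\ref{prop2.1}, the hypothesis that $\wh A=A_0$ is transversal to each $\wt A_j$ forces, again by Proposition~\ref{prop2.1}(iv), $\Theta_j\in[\cH]$; these are then bounded operators $B_j$, and \eqref{2.2} gives $\wt A_j=A_{B_j}=A^*\!\upharpoonright\ker(\gG_1-B_j\gG_0)$, so $\Pi$ is regular. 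The structural point on which the word \emph{jointly} turns, and the one I expect to be the only real subtlety, is that a \emph{single} transversal self-adjoint $\wh A$ is supplied by the hypothesis; this is what lets one common triplet serve all the $\wt A_j$ simultaneously. Had each $\wt A_j$ merely been individually almost solvable with its own transversal partner, no single triplet need parametrize the whole family by bounded operators.

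Statement (ii) follows by the identical argument while carrying along the spectral condition. In the sufficiency direction one has $\gl\in\rho(A_0)=\rho(\wh A)$ with $\wh A=A_0$; in the necessity direction the transversal $\wh A$ satisfying $\gl\in\rho(\wh A)$ becomes $A_0$ after the triplet is chosen, so $\gl\in\rho(A_0)$. The only point demanding care is to confirm that the existence-of-triplet construction can indeed be applied to the specific $\wh A$ coming from Definition~\ref{def2.10}, and that boundedness of the parameter is \emph{equivalent} to—not merely implied by—transversality; both are precisely what Proposition~\ref{prop2.1}(iv) asserts, so no genuine obstacle remains beyond bookkeeping.
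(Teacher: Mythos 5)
Your proof is correct and takes essentially the same route as the paper: both directions reduce to Proposition~\ref{prop2.1}(iv) (transversality to $A_0$ is equivalent to boundedness of the boundary parameter) combined with the fact that any self-adjoint extension $\wh A$ can be realized as $A_0=A^*\!\upharpoonright\ker(\gG_0)$ for a suitable boundary triplet, which is exactly how the paper proves the necessity in (ii). The only cosmetic difference is that the paper cites \cite[Proposition 7.1]{DM95} for the sufficiency direction, whereas you obtain it directly from the equivalence in Proposition~\ref{prop2.1}(iv), which is legitimate since that statement is an ``if and only if.''
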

\begin{proof}
(i) The proof follows immediately from  Proposition \ref{prop2.1}(iv)
and \cite[Proposition 7.1]{DM95}.

(ii) By Definition \ref{def2.10}(iii) there exists  a self-adjoint
extension $\wh A$ which is transversal to $\wt A_j$, $j\in
\{1,2,\ldots,N\}$, and  such that $\gl \in \rho(\wh A)$. Choosing
a boundary triplet $\Pi = \{\cH,\gG_0,\gG_1\}$ for $A^*$ such that
$\wh A := A_0 = A^*\upharpoonright\ker(\gG_0)$
and applying Proposition \ref{prop2.1}(iv) we get the necessity.
Sufficiency is again implied by \cite[Proposition 7.1]{DM95}.
\end{proof}

Proposition \ref{prop2.11}(i)  makes it possible to introduce the
real part and the imaginary part of an almost solvable extension
$\wt A$.
\bd\la{defIII.4} \cite{DM85} {\em
Let $\wt A\in \Ext_A$ be almost solvable and let
$\Pi=\{\cH,\Gamma_0,\Gamma_1\}$ be a boundary triplet for $A^*$
which is regular for $\wt A,$ i.e.  $\wt A = A_B,\ B\in [\cH].$
Then the self-adjoint extensions $\wt A_R, \  \wt A_I\in \Ext_A$
defined by $\wt A_R := \wt A_{B_R}$ and $\wt A_I := \wt A_{B_I}$,
are called the real and the imaginary parts of $\wt A$,
respectively.
}
\ed

It can be shown (see \cite{DM92a}, \cite{DM95})  that the
definitions of $\wt A_R$ and $\wt A_I$ depend only on  $\wt A$ and
do not depend on the  choice of the regular boundary triplet.

It follows from Proposition \ref{prop2.11} that in the case
$n_+(A)= n_-(A)<\infty$  any $\wt A \in \Ext_A$ is almost
solvable. The following statement  demonstrates that for
$n_{\pm}(A)= \infty$ the class of almost solvable extensions is
also rather wide.
\begin{proposition}[{\cite[Theorem 1]{DM92}}]\la{II.9}
Let $\wt A \in \Ext_A$.  If the condition $(\rho(\wt A) \cup
\gs_c(\wt A))\cap \C_\pm \not = \emptyset$ is satisfied, then $\wt
A$ is almost solvable. In particular, $\wt A $ is almost solvable
if $\rho(\wt A) \cap \R\not = \emptyset$.
\end{proposition}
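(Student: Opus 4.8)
The plan is to pass to the boundary-triplet parametrization and turn the statement into a question about the linear relation $\gT\in\wt\cC(\cH)$ with $\wt A=A_\gT$, for a fixed boundary triplet $\Pi=\{\cH,\gG_0,\gG_1\}$ for $A^*$ with Weyl function $M(\cdot)$ (after splitting off the self-adjoint part of $A$ we may assume $A$ simple, so that Proposition \ref{t1.12} applies). First I reformulate almost solvability. By Definition \ref{def2.10}(i) and Definition \ref{def2.2}(ii), $\wt A$ is almost solvable iff there is a self-adjoint relation $\gS=\gS^*$ with $A_\gS$ transversal to $A_\gT$. Since $\gG=(\gG_0,\gG_1)^\top$ is surjective with $\ker(\gG)=\dom(A)$ and maps $\dom(A_\gT),\dom(A_\gS)$ onto $\gT,\gS$, disjointness amounts to $\gT\cap\gS=\{0\}\times\{0\}$ and the domain-sum condition to $\gT+\gS=\cH\oplus\cH$. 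Thus, generalizing Proposition \ref{prop2.1}(iv), $\wt A$ is almost solvable iff $\gT$ admits a self-adjoint complement
\[
\gT\dotplus\gS=\cH\oplus\cH,\qquad \gS=\gS^*,
\]
so the goal is to construct such a $\gS$.

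Next I convert the hypothesis, reading the condition as furnishing points $z_+\in\C_+$ and $z_-\in\C_-$ with $z_\pm\in\rho(\wt A)\cup\gs_c(\wt A)$ (the clause $\rho(\wt A)\cap\R\ne\eset$ forces this by openness of $\rho(\wt A)$). Since $z_\pm\in\C\setminus\R\subseteq\rho(A_0)$, Proposition \ref{t1.12} gives for each sign
\[
\ker(\gT-M(z_\pm))=\{0\}\qquad\text{and}\qquad\overline{\ran(\gT-M(z_\pm))}=\cH.
\]
I now combine the two resolvent coordinates into one operator: for $(u,v)\in\gT$ set $a:=v-M(z_+)u$ and $b:=v-M(z_-)u$, and define $\hat V\colon a\mapsto b$. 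The $z_+$-conditions say precisely that $\hat V$ is single-valued (as $\ker(\gT-M(z_+))=\{0\}$) and densely defined (as $\dom\hat V=\ran(\gT-M(z_+))$ is dense); the $z_-$-conditions say precisely that $\hat V$ is injective and has dense range. The underlying coordinate change $(u,v)\mapsto(a,b)$ is the bounded block operator $\left(\begin{smallmatrix}-M(z_+)&I\\-M(z_-)&I\end{smallmatrix}\right)$, which is boundedly invertible because $M(z_+)-M(z_-)$ is, its imaginary part $\IM M(z_+)-\IM M(z_-)$ being uniformly positive ($z_+\in\C_+$, $z_-\in\C_-$). Hence $\hat V$ is a closed, densely defined, injective operator with dense range, and $\graph(\hat V)$ is the image of $\gT$ under a fixed linear isomorphism of $\cH\oplus\cH$.

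The heart of the proof is to manufacture the self-adjoint complement from $\hat V$. Carrying out the polar decomposition $\hat V=U|\hat V|$ in the inner product weighted by $\IM\!\big(M(z_+)-M(z_-)\big)$ — under which self-adjoint relations correspond to unitaries — the factor $U$ is unitary, since $\ker\hat V=\{0\}$ and $\overline{\ran\hat V}=\cH$ make its initial and final spaces all of $\cH$, while $|\hat V|\ge 0$ is injective. Taking $W:=-U$ gives
\[
\hat V-W=U\,(|\hat V|+I),
\]
a bijection of $\dom\hat V$ onto $\cH$, because $|\hat V|+I\ge I$ is boundedly invertible and $U$ is unitary. As in the first paragraph, $\hat V-W$ bijective is exactly $\graph(\hat V)\dotplus\graph(W)=\cH\oplus\cH$; transporting $\graph(W)$ back through the coordinate isomorphism produces a relation $\gS$ with $\gT\dotplus\gS=\cH\oplus\cH$, and $\gS=\gS^*$ because $W$ is unitary in the weighted metric. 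By the first paragraph this exhibits the transversal self-adjoint extension $A_\gS$, so $\wt A$ is almost solvable.

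The main obstacle is precisely this last construction, and it is also where both half-planes are indispensable. A single half-plane condition only makes $\hat V$ a densely defined single-valued operator, which in general admits no unitary complement: a maximal symmetric extension $\wt A$ with deficiency indices $(0,1)$ has $\C_+\subseteq\rho(\wt A)$ yet is not almost solvable, its Cayley transform being a non-surjective isometry. It is the opposite-half-plane point $z_-$ that forces $\hat V$ to be simultaneously injective and of dense range, which is exactly what makes the polar factor unitary. The remaining technical point to verify with care is that the relation obtained from $\graph(-U)$ is genuinely self-adjoint; this rests on the positivity of $\IM\!\big(M(z_+)-M(z_-)\big)$, and is most transparent in the special case $z_-=\overline{z_+}$: after the triangular normalization $\gG_0\mapsto(\IM M(z_+))^{1/2}\gG_0$, $\gG_1\mapsto(\IM M(z_+))^{-1/2}(\gG_1-\RE M(z_+)\gG_0)$ one has $M(z_+)=iI$ and $M(z_-)=M(z_+)^*=-iI$, the coordinate change is the classical Cayley transform, and $U$ is unitary in the usual sense.
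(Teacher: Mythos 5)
The paper itself does not prove Proposition \ref{II.9}: it imports the statement from \cite[Theorem 1]{DM92} without proof, so your argument has to stand entirely on its own. Most of it does: the reduction to simple $A$, the reformulation of almost solvability as the existence of a self-adjoint relation $\gS$ with $\gT\dotplus\gS=\cH\oplus\cH$, the translation of the spectral hypothesis via Proposition \ref{t1.12} into $\ker(\gT-M(z_\pm))=\{0\}$ and $\overline{\ran(\gT-M(z_\pm))}=\cH$, the bounded invertibility of the block map $X$, the resulting properties of $\hat V$, and the equivalence of $\graph(\hat V)\dotplus\graph(W)=\cH\oplus\cH$ with bijectivity of $\hat V-W$ are all correct. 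The genuine gap is the one claim everything rests on: that under $X\colon(u,v)\mapsto(v-M(z_+)u,\,v-M(z_-)u)$ self-adjoint relations correspond to unitaries of a single weighted inner product, so that $\gS:=X^{-1}(\graph(-U))$ is self-adjoint. That correspondence holds precisely when $M(z_-)=M(z_+)^*$ (essentially $z_-=\overline{z_+}$): pushing the boundary form $(v,u')-(u,v')$ through $X$ produces, besides the two diagonal weighted terms, cross terms proportional to $M(z_+)^*-M(z_-)$ and $M(z_+)-M(z_-)^*$, which vanish only in the conjugate case. Already for $n_\pm(A)=1$ (where every weighted inner product is a positive multiple of the usual one) the claim fails: the self-adjoint relation $\graph(t)$, $t\in\R$, is mapped to $\graph\bigl(\tfrac{t-M(z_-)}{t-M(z_+)}\bigr)$, and $\bigl|\tfrac{t-M(z_-)}{t-M(z_+)}\bigr|=1$ for all real $t$ if and only if $M(z_-)=\overline{M(z_+)}$. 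Concretely, take $M(z)=i\sqrt z$ (half-line Dirichlet Laplacian), $z_+=i$, $z_-=-4i$, so $M(z_+)=e^{3\pi i/4}$ and $M(z_-)=-\sqrt2(1+i)$: then $X$ sends the self-adjoint relation $\graph(0)$ to $\graph(2i)$, which is not unimodular; and running your construction on the $m$-dissipative extension $\gT=\graph(i)$ gives $\hat V=\tfrac{i-M(z_-)}{i-M(z_+)}\approx 2.91+2.21i$, $W=-\hat V/|\hat V|$, and $\gS=X^{-1}(\graph(W))=\graph(s)$ with $s=\tfrac{M(z_-)-W M(z_+)}{1-W}\approx-1.42-0.23i\notin\R$, i.e.\ a transversal but non-self-adjoint complement.

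You flag this step yourself and verify it only for $z_-=\overline{z_+}$, but the hypothesis cannot be reduced to that case: it supplies one point of $\rho(\wt A)\cup\gs_c(\wt A)$ in each half-plane with no relation between them, and for a non-self-adjoint extension the set $\rho(\wt A)\cup\gs_c(\wt A)$ is not symmetric under conjugation, so $\overline{z_+}$ and $\overline{z_-}$ may both lie in $\gs_p(\wt A)\cup\gs_r(\wt A)$ and no conjugate pair need be available. (Conjugation symmetry of this set does hold when the boundary relation is self-adjoint or normal-like, but those are exactly the uninteresting cases.) What your argument actually proves is the weaker statement: $\wt A$ is almost solvable whenever some $z\in\C_+$ satisfies $z,\overline z\in\rho(\wt A)\cup\gs_c(\wt A)$; this does cover the ``in particular'' clause, since a real point of $\rho(\wt A)$ yields a conjugate pair by openness, and in that case your construction is the classical Cayley-transform argument (with the caveat that the correct weight is $(\IM M(z_+))^{-1}$ rather than $\IM\bigl(M(z_+)-M(z_-)\bigr)$). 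The essential content of \cite[Theorem 1]{DM92} — passing from two \emph{unrelated} points $z_+\in\C_+$, $z_-\in\C_-$ to a transversal self-adjoint extension — is exactly the step that is missing.
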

Recall that an extension $\wt A \in \Ext_A$ is called solvable if
$0\in \rho(\wt A)$. Hence any solvable extension  is almost
solvable. Furthermore, we note that the sufficient condition of
Proposition \ref{II.9} is not necessary. It might even happen that
$n_+(A_+)= n_-(A_-)<\infty$ and  $\wt A$ is almost solvable
although $(\rho(\wt A) \cup \gs_c(\wt A))\cap \C_\pm = \emptyset$.
Such extensions can easily be constructed for $A=A_+\oplus A_-$
where $A_{\pm}$  are  simple symmetric operators with deficiency
indices $n_+(A_+)= n_-(A_-)=1$ and $n_-(A_+)= n_+(A_-)=0$.

Finally, we indicate a criteria which easily follows from
\cite[Proposition 1.5]{DM92a}.
\bl\label{lem2.4}
Let $A$ be as above and let $\Pi = \{\cH,\gG_0,\gG_1\}$ be a
boundary triplet for $A^*$, $M(\cdot)$  the corresponding Weyl
function. Let $\wt A' := A_{\gT'}$ and $\wt A := A_{\gT}$ where
$\gT', \gT\in \widetilde\cC(\cH)$ and   $\zeta \in \rho(\wt A')
\cap \rho(\wt A).$  Then the  extensions $\wt A'$ and $\wt A$ are
transversal if and only if
\bed 0 \in \rho\left((\gT' - M(\zeta))^{-1} - (\gT -
M(\zeta))^{-1}\right).
      \eed
\el
\subsection{Compactness and almost solvable extensions}\la{sec.III.2}

In general, even two almost solvable extensions are
not necessarily jointly almost solvable. However, the following
result, which is very important in applications to the
perturbation determinants, is valid.
\begin{proposition}\la{prop3.3}
Let $A$ be a densely defined closed symmetric operator and let
$\wt A_j \in \Ext_A$, $j\in \{1, \ldots, N\}$, $2 \le N \le\infty$. If
at least one $\wt A_{j_0}$, $j_0 \in \{1,2,\ldots,N\}$, is almost
solvable and there exists a non-real $\zeta \in
\bigcap^N_{j=1}\rho(\wt A_j)$ such that
\be\label{7.7A}
(\wt A_j - \zeta)^{-1} - (\wt A_{j_0} - \zeta)^{-1} \in \mathfrak S_\infty(\gH),
\qquad j \in \{1,2,\ldots,N\},
\ee
then  the family $\{\wt A_j\}^N_{j=1}$ is jointly  almost
solvable.
\end{proposition}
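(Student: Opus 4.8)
The plan is to reduce the statement, via Proposition~\ref{prop2.11}(i), to the construction of a single self-adjoint extension $\wh A$ of $A$ that is transversal to every $\wt A_j$; producing such $\wh A$ is precisely joint almost solvability by Definition~\ref{def2.10}(ii). First I would exploit the almost solvability of the distinguished $\wt A_{j_0}$ to fix a convenient boundary triplet $\Pi=\{\cH,\gG_0,\gG_1\}$ for $A^*$ whose canonical extension $A_0=A^*\!\upharpoonright\ker(\gG_0)$ is transversal to $\wt A_{j_0}$; then $\wt A_{j_0}=A_B$ with $B\in[\cH]$ by Proposition~\ref{prop2.1}(iv). Since $\zeta$ is non-real we have $\zeta\in\rho(A_0)$, and by an admissible (standard) transformation of the triplet, which leaves $A_0$ — hence the transversality to $\wt A_{j_0}$ and the boundedness of $B$ — intact, I may normalize the Weyl function at the single point $\zeta$ so that $M(\zeta)=iI$; this uses $0\in\rho(\IM M(\zeta))$, noted after \eqref{mlambda}. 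Writing $\wt A_j=A_{\gT_j}$, Proposition~\ref{t1.12}(i) gives $M(\zeta)=iI\in\rho(\gT_j)\cap\rho(B)$, so $R_j:=(\gT_j-iI)^{-1}$ and $R:=(B-iI)^{-1}$ are bounded, with $R$ boundedly invertible.

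The second step transports the spectral hypothesis to the parameter side. Subtracting Krein's formula~\eqref{2.30} for $\wt A_j$ and for $\wt A_{j_0}$ and using that $\gamma(\zeta)$ is bounded with bounded left inverse while $\gamma^*(\overline\zeta)$ has bounded right inverse, assumption~\eqref{7.7A} is equivalent, by Proposition~\ref{prop2.9}(i), to
\bed
R_j-R=(\gT_j-iI)^{-1}-(B-iI)^{-1}=:K_j\in\gotS_\infty(\cH),\qquad j\in\{1,\dots,N\}.
\eed
Thus all the $R_j$ are compact perturbations of the one boundedly invertible operator $R$. By Lemma~\ref{lem2.4}, a self-adjoint extension $A_\Xi$ with $\zeta\in\rho(A_\Xi)$ is transversal to $\wt A_j$ if and only if $(\Xi-iI)^{-1}-R_j$ is boundedly invertible. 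Hence the whole problem becomes: \emph{find one self-adjoint $\Xi$ such that $S-R_j$ is invertible for all $j$, where $S:=(\Xi-iI)^{-1}$}. I would parametrize the self-adjoint extensions by their Cayley transform, i.e. by unitaries $U\in[\cH]$ via $S=\tfrac{i}{2}(I-U^*)$, so that $S-R_j$ is invertible iff $U^*-W_j$ is invertible, where $W_j:=I-2iR_j=W+K_j'$ with $W:=I-2iR$ and $K_j'=-2iK_j$ compact.

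The concluding step is a genericity argument. The space of self-adjoint extensions, metrized by $d(\Xi,\Xi')=\|(\Xi-iI)^{-1}-(\Xi'-iI)^{-1}\|$ (equivalently the unitary group in operator norm), is complete, hence a Baire space. For each $j$ let $\mathcal B_j$ denote the set of self-adjoint extensions that fail to be transversal to $\wt A_j$; it is closed, since invertibility is open. The crux is that each $\mathcal B_j$ is \emph{nowhere dense}. On the nonempty open set $\mathcal G$ of extensions transversal to $\wt A_{j_0}$ (where $S-R$ is invertible) one has the factorization $S-R_j=(S-R)\bigl(I-(S-R)^{-1}K_j\bigr)$, so $S-R_j$ is Fredholm of index zero and its non-invertibility is governed by the compact analytic family $(S-R)^{-1}K_j$; invariance of the essential spectrum under $K_j$, i.e. $\gs_{\mathrm{ess}}(R_j)=\gs_{\mathrm{ess}}(R)$ with $R$ invertible, confines non-transversality to a thin set, while each $\wt A_j$ is individually almost solvable by Proposition~\ref{II.9} (as $\zeta\in\rho(\wt A_j)\cap\C_\pm$), supplying the point of invertibility needed to run the analytic Fredholm alternative. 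Granting nowhere-density, $\bigcup_{j=1}^N\mathcal B_j$ is meager even when $N=\infty$, so some self-adjoint $\wh A\notin\bigcup_j\mathcal B_j$ is transversal to every $\wt A_j$, which is joint almost solvability.

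The hard part is exactly the nowhere-density of $\mathcal B_j$ inside $\mathcal G$. A naive one-parameter family $\Xi=tI$ reduces transversality to $e^{i\theta}\notin\gs(W_j)$ and almost suffices, since boundedness of $B$ forces the unit circle not to lie inside $\gs_{\mathrm{ess}}(W)$; but $\gs_{\mathrm{ess}}(W)$ may still separate the whole circle from the unbounded component of its complement, so the scalar family can be swallowed entirely and one must genuinely vary over all self-adjoint extensions. The delicate point is to verify that on each component of the Fredholm region the relevant compact-perturbation family is invertible at some point — seeded by the individual almost solvability of $\wt A_j$ — so that the exceptional set is truly discrete rather than everything. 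This is where the single common invertible reference $R$, coming from the almost-solvable $\wt A_{j_0}$, together with Weyl's theorem on essential spectra, carries the argument and renders the countable intersection over $j$ admissible.
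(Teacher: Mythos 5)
Up to the Baire--category step your argument is the paper's own: you fix a boundary triplet for which $A_0$ is transversal to $\wt A_{j_0}$ (Proposition \ref{prop2.11}), convert \eqref{7.7A} through Krein's formula \eqref{2.30} into compactness of $R_j-R$, and recast the problem, via Lemma \ref{lem2.4}, as producing one self-adjoint parameter $\Xi$ such that $(\Xi-M(\zeta))^{-1}-R_j$ is boundedly invertible for every $j$; your normalization $M(\zeta)=iI$ is exactly the paper's symmetrization by $\sqrt{M_I(\zeta)}$, rewritten as a triplet transformation. Where you diverge is the finish: the paper completes the proof with precisely the scalar family you reject, namely $B_\mu=M_R(\zeta)+\mu^{-1}M_I(\zeta)$ (i.e.\ $\Xi=tI$ after your normalization), for which transversality to $\wt A_j$ reads $\zeta(\mu)=\mu(1-i\mu)^{-1}\notin\gs(D'_j)$, combined with the assertion that $\gs(D'_j)\cap\cU$ is countable; since the curve $\{\zeta(\mu)\}$ is an uncountable continuum accumulating at $0$, a good $\mu_0$ then exists.

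The genuine gap in your proposal is that its entire content --- the nowhere-density of $\mathcal{B}_j$ --- is assumed (``granting nowhere-density''), not proven, and the mechanism you sketch cannot be completed as described. First, a holomorphic family of unitaries is necessarily constant (each entry of $U(w)^{*}=U(w)^{-1}$ is simultaneously holomorphic and anti-holomorphic), so there is no analytic Fredholm alternative to run \emph{inside} the unitary group, i.e.\ inside the set of self-adjoint extensions; any such argument must leave that set, which is exactly what the paper's complex spectral parameter $\lambda\in\cU$ does. Second, your seed of invertibility is illegitimate: Proposition \ref{II.9}, as the paper itself applies it (cf.\ Corollary \ref{IV.40}(i)), requires nonempty intersection with both $\C_+$ and $\C_-$, whereas you only have the single non-real point $\zeta$; and even granted some extension transversal to $\wt A_j$, it need not lie in $\mathcal{G}$, nor in the same component of the Fredholm region as a prescribed ball, so the alternative cannot be transported to where you need it. Third, the inference ``$\gs_{\rm ess}(R_j)=\gs_{\rm ess}(R)$ with $R$ invertible confines non-transversality to a thin set'' is false as stated: essential spectra are stable under compact perturbations, spectra are not.

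In fact your suspicion about the scalar family is well founded and cuts deeper than you indicate. Take $D'_1=R$ to be the bilateral shift and $D'_j=S\oplus S^{*}$ ($S$ the unilateral shift), a rank-one perturbation of $R$; then $\cU\subseteq\rho(D'_1)$, while every $\lambda\in\D$ is a non-isolated eigenvalue of $D'_j$ at which $D'_j-\lambda$ is Fredholm of index zero but not invertible, so $\gs(D'_j)\cap\cU=\cU$ and the whole curve $\zeta(\mu)$ is swallowed. This configuration is realizable within the hypotheses of Proposition \ref{prop3.3} (take $B_1:=M(\zeta)+D_1^{-1}\in[\cH]$ and $\gT_j:=M(\zeta)+D_j^{-1}$ as a closed relation, and use Proposition \ref{t1.12}), so the degenerate alternative of the analytic Fredholm theorem must genuinely be excluded; the paper's own proof asserts the countability of $\gS_j$ without doing so, and the example shows that assertion can fail. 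Your Baire formulation may well be the right frame for a repair --- nowhere-density over \emph{all} self-adjoint extensions is not contradicted by the example --- but as written it replaces the paper's missing countability argument by an equally missing nowhere-density argument, so the proof is not complete.
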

\begin{proof}
Without loss of generality we assume $j_0 = 1$. Since $\wt A_1$ is
almost solvable there is a self-adjoint extension $\wh A$ of $A$
which is transversal to $\wt A_1$. We choose a boundary triplet
$\Pi = \{\cH,\gG_0,\gG_1\}$ for $A^*$ such that $\wh A = A_0 :=
A^*\upharpoonright\ker(\gG_0)$ and denote by $M(\cdot)$ the Weyl
function. By Proposition \ref{prop2.11}, there exists  a $B_1\in
[\cH]$ such that $\wt A_1 = A_{B_1}$. Since $\zeta \in \rho(\wt
A_1)$ one gets from Proposition \ref{t1.12} that $0 \in \rho(B_1 -
M(\zeta))$ which yields the existence of the operator $D_1 := (B_1
- M(\zeta))^{-1}$. Moreover, for any $j = 2,3,\ldots,N$, there are
closed relations $\gT_j$ in $\cH$ such that $\wt A_j = A_{\gT_j}$,
cf. Proposition \ref{prop2.11}. Again, by Proposition \ref{t1.12},
$0 \in \rho(\gT_j - M(\zeta))$ which shows that $D_j := (\gT_j -
M(\zeta))^{-1}$, $j \in \{2,3,\ldots,N\}$, exists and is bounded
From condition \eqref{7.7A} we get that
\be\la{3.20}
D_j - D_1 \in \gotS_\infty(\cH), \quad j =2,3,\ldots,N.
\ee
Notice, that $D_1$ is invertible, that is $0 \in \rho(D_1)$.

Without loss of generality we assume $\zeta \in \C_+$. We set
\bed
B_\mu := M_R(\zeta) + \mu^{-1}M_I(\zeta), \quad \mu \in \R \setminus \{0\},
\eed
where $ M_R(\zeta) := \RE(M(\zeta))$ and $M_I(\zeta) := \IM(M(\zeta))$. The operators
$B_\mu$ are bounded and self-adjoint. Hence $\wh A_\mu := A_{B_\mu}$
defines a family of self-adjoint extensions of $A$. Obviously, we have
\bed
(B_\mu - M(\zeta))^{-1} = \frac{\mu}{1 - i\mu}\frac{1}{M_I(\zeta)}
\eed
where we have used that for non-real $z \in \C_+$ the operator
$M_I(z)$ is invertible. Since $M_I(z)$ is also non-negative we have
\bed
D_j - (B_\mu - M(\zeta))^{-1} =
\frac{1}{\sqrt{M_I(\zeta)}}\left(D'_j -
\frac{\mu}{1 - i\mu}\right)\frac{1}{\sqrt{M_I(\zeta)}},
\quad j = 1,2,\ldots,N,
\eed
where $D'_j := \sqrt{M_I(\zeta)}D_j\sqrt{M_I(\zeta)}$, $j = 1,2,\ldots,N$.
From \eqref{3.20} we immediately get that
\bed
D'_j - D'_1 \in \gotS_\infty(\cH), \quad j = 2,3,\ldots,N.
\eed
Since $D_1$ is invertible the operator $D'_1$ is also invertible, that is
$0 \in \rho(D'_1)$. Hence there is a neighborhood
$\cU$ of zero such that $\cU \subseteq \rho(D'_1)$. The set $\gS_j :=\gs(D'_j)
\cap \cU$ consists of isolated eigenvalues of $D'_j$ of finite
algebraic multiplicity, that is, the set  $\gS_j$ is countable for
each $j = 2,3,\ldots,N$. Hence the set $\gS := \bigcup^N_{j=2}\gS_j$
is countable. Setting $\zeta(\mu) := \mu(1 - i\mu)^{-1}$, $\mu \in \R
\setminus \{0\}$, one has $\lim_{\mu\to 0}\zeta(\mu) = 0$.
Since the curve $\zeta(\mu)$ is continuous there is a least one $\mu_0 \in
\R \setminus \{0\}$ such that $\zeta(\mu_0) \in \cU \setminus \gS
\subseteq \cU \setminus \gS_j$, $j = 2,3,\ldots,N$, which yields
$\zeta(\mu_0) \in \cU \cap \rho(D'_j)$, $j = 2,3\ldots,N$, and, of
course, $\zeta(\mu_0) \in \cU$. Hence
the operators $D'_j - \zeta(\mu_0)$, $j = 1,2\ldots,N$, are invertible which shows that
$0 \in \rho(D_j - (B_{\mu_0} - M(\zeta))^{-1})$, $j = 1,2,\ldots,N$.
Hence  $0 \in \rho\left((\gT_j - M(\zeta))^{-1} - (B_{\mu_0} - M(\zeta))^{-1}\right)$,
$j = 1,2,\ldots,N$.
Taking into account Lemma \ref{lem2.4} we complete the proof.
\end{proof}

\begin{proposition}\la{II.14}
Let $A$ be a densely defined closed symmetric operator and let
$\wt A_j \in \Ext_A$, $1 \le N < \infty$. If there is a
real $\gl$ such that $\gl \in \rho(\wt A_{j_0})$ for some $j_0
\in \{1,2,\ldots,N\}$ and $\zeta \in \cap^N_{j=1}\rho(\wt A_j)$
such that \eqref{7.7A} holds, then the family
$\{\wt A_j\}^N_{j=1}$ is jointly almost solvable with respect to $\gl$.
\end{proposition}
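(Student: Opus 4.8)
The plan is to verify the defining condition in Definition~\ref{def2.10}(iii): I must produce a self-adjoint extension $\wh A$ of $A$ that is transversal to every $\wt A_j$, $j\in\{1,\dots,N\}$, and satisfies $\gl\in\rho(\wh A)$. Without loss of generality take $j_0=1$. Two preliminary observations set the stage. First, $\bigcap_{j=1}^N\rho(\wt A_j)$ is open and nonempty (it contains $\zeta$), hence it contains a non-real point; since membership of $(\wt A_j-z)^{-1}-(\wt A_1-z)^{-1}$ in $\gotS_\infty(\gH)$ is independent of $z$ in the common resolvent set (the resolvent identity, cf.\ Proposition~\ref{prop2.9}), I may and do assume \eqref{7.7A} holds at a non-real $\zeta\in\C_+$. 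Second, $\gl\in\rho(\wt A_1)\cap\R$, so by Proposition~\ref{II.9} the extension $\wt A_1$ is almost solvable. Thus the hypotheses of Proposition~\ref{prop3.3} already hold and the family is jointly almost solvable; the genuinely new content is the resolvent constraint at the real point $\gl$.

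The crucial preparatory step is to choose a boundary triplet $\Pi=\{\cH,\gG_0,\gG_1\}$ whose distinguished self-adjoint extension $A_0$ is transversal to the anchor $\wt A_1$ and, simultaneously, satisfies $\gl\in\rho(A_0)$. Granting this, Proposition~\ref{prop2.1}(iv) gives $\wt A_1=A_{B_1}$ with $B_1\in[\cH]$, and since $\gl$ is real, $M(\gl)=M(\gl)^*\in[\cH]$ is well defined. I then repeat the construction from the proof of Proposition~\ref{prop3.3}: writing $\wt A_j=A_{\gT_j}$ and $D_j:=(\gT_j-M(\zeta))^{-1}$, the $D_j$ are bounded (note $\zeta\in\C_+\subseteq\rho(A_0)$), $D_1$ is invertible, $D_j-D_1\in\gotS_\infty(\cH)$, and the family $B_\mu:=M_R(\zeta)+\mu^{-1}M_I(\zeta)$ of bounded self-adjoint operators yields self-adjoint extensions $A_{B_\mu}$ transversal to all the $\wt A_j$ for every $\mu$ outside a countable exceptional set $\gS$ (finiteness of $N$ ensures $\gS$ is countable and hence avoidable).

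The single new ingredient is the control of $\gl$. Because $\zeta\in\C_+$ gives $M_I(\zeta)=\IM M(\zeta)\ge cI>0$ (recall $0\in\rho(\IM M(z))$ for non-real $z$), the operator
\[
B_\mu-M(\gl)=\bigl(M_R(\zeta)-M(\gl)\bigr)+\mu^{-1}M_I(\zeta)
\]
is positive definite, hence boundedly invertible, for all sufficiently small $\mu>0$; here I use precisely that $\gl\in\rho(A_0)$, so that $M(\gl)$ exists and Proposition~\ref{t1.12}(i) applies, giving $\gl\in\rho(A_{B_\mu})$ for all small $\mu>0$. Choosing $\mu_0>0$ small enough to lie outside $\gS$ and inside this interval, the self-adjoint extension $\wh A:=A_{B_{\mu_0}}$ is transversal to every $\wt A_j$ and has $\gl\in\rho(\wh A)$. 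By Definition~\ref{def2.10}(iii) (equivalently Proposition~\ref{prop2.11}(ii)) the family $\{\wt A_j\}_{j=1}^N$ is then jointly almost solvable with respect to $\gl$, as required.

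The main obstacle is the preparatory step: producing a triplet with $A_0$ transversal to $\wt A_1$ \emph{and} $\gl\in\rho(A_0)$. Since $A\subseteq\wt A_1$ and $\gl\in\rho(\wt A_1)$, the point $\gl$ is of regular type for $A$ (the operator $A-\gl$ is bounded below), so self-adjoint extensions of $A$ with $\gl$ in their resolvent set do exist; the difficulty is to secure one that is simultaneously transversal to $\wt A_1$. I expect to settle this either by a relative construction anchored at the almost-solvable extension $\wt A_1$ itself, or by superimposing transversality onto a $\gl$-regular self-adjoint extension via an avoidance argument in the spirit of Proposition~\ref{prop3.3}; this is the step whose routine verification I would scrutinize most carefully, while everything else parallels Proposition~\ref{prop3.3} together with the positivity observation above.
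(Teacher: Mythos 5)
Your reduction of the general case to an anchored one is sound as far as it goes: for $\zeta\in\C_+$ one indeed has $M_I(\zeta)\ge cI>0$, so $B_\mu-M(\gl)=\bigl(M_R(\zeta)-M(\gl)\bigr)+\mu^{-1}M_I(\zeta)$ is uniformly positive, hence boundedly invertible, for all small $\mu>0$, and Proposition~\ref{t1.12}(i) then gives $\gl\in\rho(A_{B_\mu})$; since the exceptional set of $\mu$ in the avoidance argument of Proposition~\ref{prop3.3} is countable and the good values accumulate at $0$, a simultaneous choice of $\mu_0$ is possible. The genuine gap is your ``preparatory step'': producing a boundary triplet whose $A_0$ is \emph{both} transversal to $\wt A_1$ \emph{and} satisfies $\gl\in\rho(A_0)$. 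This is not a routine verification that can be deferred --- it is precisely the assertion of the proposition for a single extension (Definition~\ref{def2.10}(iii) with the family $\{\wt A_1\}$), so your argument presupposes the core case of what is to be proved. Moreover, your use of the machinery of Proposition~\ref{prop3.3} hinges on it in an essential way: the invertibility of $D_1=(\gT_1-M(\zeta))^{-1}$, which drives the whole avoidance argument, holds if and only if $\gT_1$ is a bounded everywhere defined operator, i.e.\ (Proposition~\ref{prop2.1}(iv)) if and only if $\wt A_1$ and $A_0$ are transversal. So nothing in your scheme works until that step is supplied, and neither of the two strategies you sketch is carried out.

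The paper closes exactly this gap, and in a way that makes the anchor transversality unnecessary. Since $\gl\in\rho(\wt A_1)$ and $A\subseteq\wt A_1$, a whole interval $\gd=(\gl-\epsilon,\gl+\epsilon)\subseteq\rho(\wt A_1)$ is a gap for $A$, and Krein's theorem \cite{K47} yields a self-adjoint extension $\wh A$ with $\gd\subseteq\rho(\wh A)$; this $\wh A$ is taken as $A_0$, with no transversality requirement. All extensions are then written as $\wt A_j=A_{\gT_j}$ with closed linear \emph{relations} $\gT_j$, and transversality of the candidate self-adjoint extensions is tested through Lemma~\ref{lem2.4}: at a common real regular point $\gl'\in\gd$ of all the $\wt A_j$ (which exists because, by \eqref{7.7A}, each $\gs(\wt A_j)\cap\gd$, $j\ge 2$, is countable), one sets $B'_\mu:=M(\gl')+\mu^{-1}$, and the criterion becomes invertibility of $D'_j-\mu$, $D'_j:=(\gT_j-M(\gl'))^{-1}$, which holds for every sufficiently \emph{large} real $\mu$ --- no invertibility of $D'_1$ near zero is needed, which is why relations suffice. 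Finally, $\gl\in\rho(A_{B'_{\mu_0}})$ follows from monotonicity of the Weyl function on the gap, $B'_{\mu_0}-M(\gl)=\mu_0^{-1}+M(\gl')-M(\gl)\ge\mu_0^{-1}$. If you wish to keep your two-step structure, your preparatory step can be completed by exactly this argument applied to $\wt A_1$ alone; but once that is done, your second step merely repeats at the non-real point $\zeta$ what the paper's single argument already accomplishes at the real points $\gl,\gl'$.
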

\begin{proof}
Without loss of generality we assume $j_0 = 1$. Since
$\gl \in \rho(\wt A_1)$ one gets that $\gl$ is a regular point of $A$.
Hence there is a neighborhood $\gd := (\gl - \epsilon,\gl
+\epsilon)$, $\epsilon > 0$, of $\gl$ such that
$\gd \subseteq \rho(\wt A_1)$ is a gap for $A$, that is,
\bed
\|\bigl(A-\gl)\bigr)f\|\ge \epsilon\|f\|, \qquad   f\in\dom(A).
\eed
From \cite{K47} we get that there is a self-adjoint extension
$\wh A$ of $A$ such that $\rho(\wh A) \supseteq \gd$. We choose a
boundary triplet $\Pi = \{\cH,\gG_0,\gG_1\}$ of $A^*$ with Weyl
function $M(z)$ such that $\wh A = A_0 =
A^*\upharpoonright\ker(\gG_0)$. We recall that the Weyl function
$M(z)$ is well-defined and bounded on  $\gd$ because $\gd$ is a
gap of $\wh A = A_0$. Let $B_\mu = M(\gl) + \mu^{-1}$ where
$\mu \in \R \setminus \{0\}$. The operator $B_\mu$ is
bounded and self-adjoint for $\mu \in \R \setminus \{0\}$.
We set $\wh A_\mu := A_{B_\mu}$. Obviously, $\wh A_\mu$ is
self-adjoint. Moreover, we have $B_\mu - M(\gl) = \mu^{-1}$.
Hence $(B_\mu - M(\gl))^{-1} = \mu$ which yields that
$0 \in \rho(B_\mu - M(\gl))$ for any $\mu \in \R \setminus \{0\}$.
Using Proposition \ref{t1.12} we find $\gl \in \rho(\wh A_\mu)$ for any $\mu \not= 0$.

By Proposition \ref{prop2.1} we find a closed linear relation
$\gT_1$ in $\cH$ such that $\wt A_1 = A_{\gT_1}$. Moreover, from
Proposition \ref{t1.12} we get that $0 \in \rho(\gT_1 - M(\gl))$.
Hence $D_1 := (\gT_1 - M(\gl))^{-1}$ is a bounded operator.
Furthermore, we have
\bed
(\gT_1 - M(\gl))^{-1} - (B_\mu - M(\gl))^{-1} = D_1 - \mu.
\eed
Obviously, there is a $\mu \in \R \setminus
\{0\}$ such that $\mu \in \rho(D_1)$ holds.
Hence $0 \in \rho\left((\gT_1 - M(\gl))^{-1} - (B_\mu - M(\gl))^{-1}\right)$.
Applying Lemma \ref{lem2.4} we obtain that the extensions $\wt A_1$ and
$\wh A$ are transversal for sufficiently large $\mu$.

By assumption \eqref{7.7A} the sets $\gS_j := \gs(\wt A_j) \cap \gd$, $j = 2,3,\ldots,N$,
are countable which yields that $\gS = \bigcup^N_{j=2}\gS_j$ is
countable. Moreover, by Proposition \ref{prop2.1} there are closed
relations $\gT_j$ in $\cH$ such that $\wt A_j = A_{\gT_j}$, $j
=2,3,\ldots,N$. Let $\gl' \in (\gl,\gl + \epsilon) \setminus \gS
\subseteq (\gl,\gl + \epsilon) \setminus \gS_j$, $j
= 2,3,\ldots,N$, which yields $\gl' \in (\gl,\gl+\epsilon) \cap \rho(\wt
A_j)$, $j =2,3,\ldots,N$, and $\gl' \in \rho(\wt A_1)$. From Proposition
\ref{t1.12} we find $0 \in \rho(\gT_j - M(\gl'))$ for $j =
1,2,\ldots,N$. Hence the operators $D'_j := (\gT_j - M(\gl'))^{-1}$,
$j = 2,3,\ldots,N$, exist and are bounded.
We set $B'_\mu := M(\gl') + \mu^{-1}$ and $\wh{A'_\mu} := A_{B'_\mu}$,
$\mu \in \R \setminus \{0\}$. Obviously, $\wh{A'_\mu}$ is
self-adjoint. We have
\bed
(\gT_j - M(\gl'))^{-1} - (B'_\mu - M(\gl'))^{-1} = D'_j - \mu, \quad
j = 1,2,\ldots,N.
\eed
Hence, there is a sufficiently large real number $\mu_0 > 0$ such that
$0 \in \rho\left((\gT_j - M(\gl'))^{-1} - (B'_{\mu_0} - M(\gl'))^{-1}\right)$,
$j = 1,2,3,\ldots,N$.
By Lemma \ref{lem2.4} the self-adjoint extension $\wh{A'_{\mu_0}}$ is transversal to $\wt A_j$,
$j = 1,2,3,\ldots,N$.

It remains to show that $\gl \in \rho(\wh{A'_{\mu_0}})$. We have
$B'_{\mu_0} - M(\gl) = \mu^{-1}_0 + M(\gl') - M(\gl) \ge \mu^{-1}_0$
where we have used that $M(\gl) \le M(\gl')$ for $\gl,\gl' \in \gd$ and $\gl < \gl'$.
Hence $0 \in \rho\left( B'_{\mu_0} - M(\gl)\right)$. From
Proposition \ref{t1.12} we obtain $\gl \in \wh{A'_{\mu_0}}$.
\end{proof}

\subsection{Characteristic function and almost solvable extensions}\la{sec.III.3}

It is known several approaches to the definition of the
characteristic function (CF) of an unbounded operator with
non-empty resolvent set. The most relevant to our considerations
definitions have been proposed in  \cite{Str60} and
\cite{DM92a,DM92}. In general, the CF might  have some exotic
properties.  However, it was shown in \cite{DM92a,DM92,DM85} that
the CF of an almost solvable extension of $A$ takes values in
$[\cH]$ and has some nice properties similar to that of the CF of
bounded operators (cf. \cite{Bro71}). We will not present a strict
definition of CF since in what follows we need only the following
formula expressed CF in terms of the  Weyl function.
\begin{proposition}[{\cite[Theorem 2]{DM92a}}]\la{II.15}
Let $A$ be  a densely defined closed symmetric operator and  let $\wt A$ be an almost solvable
extension of $A$.  Let also $\Pi=\{\cH,\Gamma_0,\Gamma_1\}$ be a
boundary triplet for $A^*$  which is regular for $\wt A$, i.e.
$\wt A = A_B = A^*\!\upharpoonright\ker(\gG_1- B\gG_0)$ and $B\in
[\cH].$ Then the characteristic function of the operator
$A_B$ admits the representation
\be\label{2.33}
W^\Pi_{\wt A}(z) := I +
2i|B_I|^{1/2}\bigl(B^*-M(z)\bigr)^{-1}|B_I|^{1/2}J, \quad z \in
\rho(\wt A^*) \cap \rho(A_0),
\ee
where  $B_I = J|B_I|$, $J = \sign(B_I)$, is the polar decomposition of $B_I := \IM(B)$.
\end{proposition}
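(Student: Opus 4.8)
The plan is to reduce the characteristic function of the extension $\wt A = A_B$ to the characteristic function of the \emph{bounded} boundary operator $B$, transporting everything from the defect subspaces $\gotN_z = \ker(A^*-z)$ into the auxiliary space $\cH$ by means of the $\gamma$-field. First I would recall from \cite{DM92a} the definition of $W^\Pi_{\wt A}(\cdot)$ and localize the non-self-adjointness of $A_B$ on the boundary. By Green's identity \eqref{2.0}, for $f\in\dom(A_B)$ one has $\Gamma_1 f = B\Gamma_0 f$, whence
\[
\IM(A_B f, f) = (B_I\,\Gamma_0 f,\,\Gamma_0 f)_\cH, \qquad B_I = \IM(B).
\]
Thus the defect of $A_B$ is carried by $\overline{\ran(|B_I|^{1/2})}\subseteq\cH$, so the channel space for the characteristic function of $A_B$ coincides with that of the bounded operator $B$, equipped with the same polar data $B_I = J|B_I|$, $J=\sign(B_I)$.

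Second, I would record the elementary formula for the characteristic function of the bounded operator $B$, namely
\[
W_B(\zeta) = I + 2i|B_I|^{1/2}(B^*-\zeta)^{-1}|B_I|^{1/2}J, \qquad \zeta\in\rho(B^*),
\]
which is the standard bounded-operator form and follows directly from the definition of the characteristic function via the Cayley-type transform of $B$.

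The decisive step is the reduction identity $W^\Pi_{\wt A}(z) = W_B(M(z))$. Each defect vector $f_z\in\gotN_z$ has the form $f_z=\gamma(z)h$ with $h=\Gamma_0 f_z\in\cH$, and then $\Gamma_1 f_z = M(z)h$ by \eqref{2.3A}. Translating the defining relations of $W^\Pi_{\wt A}(z)$ through the isomorphisms $\gamma(z)\colon\cH\to\gotN_z$, the boundary condition $\Gamma_1 f = B\Gamma_0 f$ describing $\dom(A_B)$ becomes, on the defect subspaces, a condition governed by $B-M(z)$ in place of $B-\zeta$; hence the spectral variable of $B$ is replaced by $M(z)$. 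By Proposition \ref{t1.12} the point $z\in\rho(\wt A^*)$ corresponds to $0\in\rho(B^*-M(z))$, so the resolvent $(B^*-M(z))^{-1}$ is well defined on $\rho(\wt A^*)\cap\rho(A_0)$, and substituting $\zeta=M(z)$ into the bounded-operator formula yields exactly the asserted representation \eqref{2.33}.

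The main obstacle is the precise matching in the reduction step: one must verify that, under the $\gamma$-field identification, the incoming and outgoing channel spaces together with the factors $J$ and $|B_I|^{1/2}$ transfer without modification, and that the spectral parameter of $B$ maps \emph{exactly} to $M(z)$. This is careful bookkeeping with the defect subspaces and Green's identity, rather than any hard analytic estimate; once the channel spaces and polar data are shown to be preserved, the formula is immediate from the bounded case.
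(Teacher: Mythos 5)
Your argument has a genuine gap at precisely the step you call ``decisive''. The identity $W^\Pi_{\wt A}(z)=W_B(M(z))$ is not something that can be obtained by ``substituting $\zeta=M(z)$'' into the bounded-operator formula: in $W_B(\zeta)=I+2i|B_I|^{1/2}(B^*-\zeta)^{-1}|B_I|^{1/2}J$ the parameter $\zeta$ is a complex \emph{scalar}, whereas $M(z)$ is an \emph{operator} in $\cH$, so the expression $I+2i|B_I|^{1/2}\bigl(B^*-M(z)\bigr)^{-1}|B_I|^{1/2}J$ is a new operator-valued assertion that the scalar formula says nothing about. Likewise, the sentence ``translating the defining relations of $W^\Pi_{\wt A}(z)$ through the isomorphisms $\gamma(z)$, the boundary condition $\gG_1 f=B\gG_0 f$ becomes a condition governed by $B-M(z)$'' is exactly the content of the theorem to be proved, not a proof of it: what is true and elementary (Proposition \ref{t1.12}, Krein's formula \eqref{2.30}) is that the \emph{resolvent} of $A_B$ is governed by $(B-M(z))^{-1}$; lifting this to the level of the characteristic function is the whole point of \eqref{2.33} and requires an actual computation. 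Your argument is therefore circular at its core.

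A second, related defect is that you never write down which definition of the characteristic function of the \emph{unbounded} operator $A_B$ you are working with (Straus's, or the one of \cite{DM92a}); you only identify, via Green's identity \eqref{2.0}, that $\IM(A_Bf,f)=(B_I\gG_0f,\gG_0f)_\cH$, which makes the choice of channel space $\overline{\ran(|B_I|^{1/2})}$ and of the signature $J$ plausible but proves nothing about the function itself. Without the definition, the claim that the channel spaces and polar data ``transfer without modification'' cannot even be checked. Note also that the paper itself does not prove this proposition: it is quoted as \cite[Theorem 2]{DM92a}, and the authors explicitly decline to state the definition of the characteristic function, using \eqref{2.33} as a black box. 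So your attempt cannot be judged against a proof in the paper; judged on its own, it reduces the statement to the bounded case by a formal scalar-to-operator substitution that is not a valid inference. A correct proof along your general lines would have to start from the precise definition in \cite{DM92a}, build the symmetric operator $S=A_B\cap A_{B^*}$ (whose domain is $\{f\in\dom(A^*):\gG_1f=B\gG_0f,\ B_I\gG_0f=0\}$), exhibit the colligation/boundary data for $A_B$ relative to $S$, and then compute the characteristic function using \eqref{2.3A} and \eqref{2.30}; none of this bookkeeping is present in your proposal.
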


It follows from \eqref{2.33} that $W^\Pi_{\wt A}(\cdot)$
takes values in $[\cH]$ and is $J$-contractive in
$\C_+$, respectively $J$-expansive in  $\C_-$. In particular,
it is contractive in $\C_+$ if ${\wt A} = {A}_B$ is $m$-dissipative,
that is, $B$ is $m$-dissipative,
cf. Proposition \ref{prop2.1}(iii). Notice that $J=I$ in this case.

\section{Perturbation determinants for extensions}\la{sec.IV}

\subsection{Elementary properties}\la{B.I}

Through this section we always assume that $A$ is a densely defined closed
symmetric operator in $\gH$ with equal deficiency indices.
We are going to show that the perturbation determinant for extensions $\gD^\Pi_{\wt A'/\wt A}(\cdot)$, cf.
Definition \ref{I.1}, has similar properties as the perturbation determinant
for additive perturbations $\gD_{H'/H}(\cdot)$.
\bl \la{def7.3a}
Let $\Pi = \{\cH,\gG_0,\gG_1\}$ be a boundary triplet for $A^*$,
$M(\cdot)$ the corresponding  Weyl function.   Let also $\wt A',
\wt A\in \Ext_A$ and  $\wt A' = A_{B'}$ and $\wt A = A_B$. If
$\{\wt A',\wt A\} \in \gotD^\Pi$, then the following holds:
\item[\rm\;\;(i)]
$\{z \in \rho(A_0): 0 \in \rho(B - M(z)\} = \rho(\wt A) \cap \rho(A_0)
\not= \emptyset$ and \eqref{1.103} holds for $\zeta \in \rho(\wt
A') \cap \rho(\wt A)$.

\item[\rm\;\;(ii)]
The perturbation determinant $\gD^\Pi_{\wt
A'/\wt A}(z)$ is well defined on the open set $\rho(\wt A) \cap
\rho(A_0)$ and is holomorphic there.

\item[\rm\;\;(iii)]
If $\Pi$ is regular for $\{\wt A',\wt A\}$, then $B',B \in [\cH]$ and
$B'-B \in \gotS_1(\cH)$.

\item[\rm \;\;(iv)]
If $n_{\pm}(A) < \infty$, then
\be\label{3.2A}
\gD^\Pi_{\wt A'/\wt A}(z) =
\frac{\det\bigl(B'-M(z)\bigr)}{\det\bigl(B-M(z)\bigr)}, \qquad z
\in \rho(\wt A) \cap \rho(A_0).
\ee
\el
\begin{proof}
(i) This statement follows immediately from Proposition \ref{t1.12}(i).

To prove \eqref{1.103}  we start with the identity
\be\la{4.2AA}
(B - M(z))^{-1} - (B' - M(z))^{-1} =  (B' - M(z))^{-1}(B'-B)(B - M(z))^{-1}
\ee
valid for $z \in \rho(\wt A') \cap \rho(\wt A)$.
It follows  that the right-hand side is a trace class operator due to the assumption
$(B'-B)(B - M(z))^{-1} \in \mathfrak S_1(\cH)$.
Now \eqref{1.103}  is implied by combining
Krein-type formula \eqref{2.30} with the identity \eqref{4.2AA}.

(ii) By Proposition \ref{t1.12} we have $0\in\rho\bigl(B-M(z)\bigr)$ if
$z\in\rho(A_B) \cap \rho(A_0)$. Taking into account Definition \ref{I.1} we find
that the  perturbation determinant is defined on $\rho(\wt A) \cap \rho(A_0)$. Moreover, since the operator-valued function
$(B - M(z))^{-1}$ is holomorphic on the open set $\{z \in \C: 0\in\rho\bigl(B-M(z)\bigr)\}$ the perturbation determinant
$\gD^\Pi_{\wt A'/\wt A}(\cdot)$ is holomorphic on $z\in\rho(A_B) \cap \rho(A_0)$.

(iii) If $\Pi$ is regular for $\{\wt A',\wt A\}$, then
$B',B \in [\cH]$ by definition. Therefore, by Proposition \ref{prop2.9}(ii),
the inclusion \eqref{1.103} is equivalent to
$B'- B \in \gotS_1(\cH)$.

(iv) If $n_{\pm}(A) < \infty$, then $\dim(\cH) = n_{\pm}(A) <
\infty$. Hence $B'$ and $B$ are bounded operators and \eqref{3.2A}
is implied by combining  the identity
\bed
I_\cH + (B'-B)(B - M(z))^{-1} = (B' - M(z))(B - M(z))^{-1}, \quad  z \in \rho(\wt A') \cap \rho(A_0).
\eed
with Proposition \ref{classprop}(ii).
\end{proof}
\br\la{prob}
{\em
Lemma \ref{def7.3a} rises several problems.
\item[\;\;(a)]
By  Lemma \ref{def7.3a}(i) the assumption $\{\wt A',\wt A\} \in \gotD^\Pi$
yields  \eqref{1.103}.
Is the converse true? In other words, is there exits a boundary triplet $\Pi$ for $A^*$ such that
$\{\wt A',\wt A\} \in \gotD^\Pi$ whenever the extensions
 $\wt A',\wt A (\in \Ext_A)$  satisfy
condition \eqref{1.103}?
The answer is not obvious since due to \eqref{4.2AA} the inclusion \eqref{1.103}
is in general implied by the inclusion  $(B'-B)(B - M(z))^{-1} \in \mathfrak S_1(\cH)$ but not vice versa.

\item[\;\;(b)] The perturbation determinant $\gD^\Pi_{\wt A'/\wt A}(\cdot)$  depends
on a  chosen boundary triplet $\Pi$. What is character of this dependence?

\item[\;\;(c)]
In vie of  Lemma \ref{def7.3a}(ii),  $\gD^\Pi_{\wt A'/\wt A}(\cdot)$ is holomorphic on
$\rho(\wt A) \cap \rho(A_0)$.
We will show the next section  that $\gD^\Pi_{\wt A'/\wt A}(\cdot)$ admits holomorphic
continuation to the domain  $\rho(\wt A)$, as it takes place in the classical definition.
}
\er

\subsection{Existence of an appropriate boundary triplet}\la{sec.IV.2}

We are going to answer problem (a) of Remark \ref{prob}, at least, partially.
\begin{proposition}\la{IV.30}
Let $\wt A',\wt A \in \Ext_A$,  $\rho(\wt A') \cap \rho(\wt A) \not=
\emptyset$ and let \eqref{1.103} be valid. If $\wt A$ is an almost
solvable extension, then there exists a boundary triplet $\Pi =
\{\cH,\gG_0,\gG_1\}$ for $A^*$ such that $\{\wt A',\wt A\} \in
\gotD^\Pi$. Moreover, the boundary triplet $\Pi$ can be chosen regular for $\{\wt A',\wt A\}$.
\end{proposition}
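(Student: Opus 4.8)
The plan is to reduce the existence of an appropriate triplet to the \emph{joint} almost solvability of the pair $\{\wt A', \wt A\}$, and then to read off the three defining conditions of $\gotD^\Pi$ from the boundedness of the resulting boundary operators together with the Schatten--von Neumann criterion of Proposition \ref{prop2.9}.

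First I would set $\wt A_1 := \wt A$ and $\wt A_2 := \wt A'$ and observe that the hypotheses are precisely those of Proposition \ref{prop3.3} with $N = 2$ and $j_0 = 1$: the extension $\wt A_1 = \wt A$ is almost solvable by assumption, and condition \eqref{7.7A} (formulated with the ideal $\gotS_\infty$) is implied by \eqref{1.103}, since $\gotS_1(\gotH) \subseteq \gotS_\infty(\gotH)$. The only point needing attention is that a \emph{non-real} $\zeta$ lie in $\rho(\wt A') \cap \rho(\wt A)$; this holds because this intersection is open and, by hypothesis, nonempty, so it must meet $\C \setminus \R$. Proposition \ref{prop3.3} then shows that $\{\wt A', \wt A\}$ is jointly almost solvable.

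Next, by Proposition \ref{prop2.11}(i), joint almost solvability of $\{\wt A', \wt A\}$ is equivalent to the existence of a boundary triplet $\Pi = \{\cH, \gG_0, \gG_1\}$ for $A^*$ that is regular for $\{\wt A', \wt A\}$; I fix such a $\Pi$, so that $\wt A' = A_{B'}$ and $\wt A = A_B$ with $B', B \in [\cH]$. This already secures the ``moreover'' clause, and it remains to verify the three conditions of Definition \ref{I.1}. Condition (ii) is immediate, since $\dom(B') = \dom(B) = \cH$. For (i), recall that $A_0 = A^*\!\upharpoonright\ker(\gG_0)$ is self-adjoint, so $\rho(A_0) \supseteq \C \setminus \R$; as $\rho(\wt A)$ is open and nonempty it meets $\C \setminus \R$, whence $\rho(\wt A) \cap \rho(A_0) \neq \emptyset$, and by Proposition \ref{t1.12}(i) every such $z$ satisfies $0 \in \rho(B - M(z))$, which is exactly (i). Finally, for (iii), the last assertion of Proposition \ref{prop2.9}(ii) gives, for $B', B \in [\cH]$, that the resolvent-difference condition \eqref{1.103} is equivalent to $B' - B \in \gotS_1(\cH)$; multiplying this trace-class operator on the right by the bounded operator $(B - M(z))^{-1}$ stays in $\gotS_1(\cH)$, which is (iii). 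Hence $\{\wt A', \wt A\} \in \gotD^\Pi$.

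The conceptual weight sits entirely in the first step: converting the compactness of the resolvent difference into joint almost solvability via Proposition \ref{prop3.3}. Once a common transversal self-adjoint extension is available, both boundary operators are bounded and every clause of Definition \ref{I.1} follows by bookkeeping, using Proposition \ref{t1.12} and the Schatten-class equivalence of Proposition \ref{prop2.9}. I expect no genuine obstacle beyond invoking Proposition \ref{prop3.3} correctly; the one small subtlety is guaranteeing a non-real point in $\rho(\wt A') \cap \rho(\wt A)$ so that \eqref{7.7A} applies, and this is supplied by the openness of resolvent sets.
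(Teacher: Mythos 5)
Your proof is correct and follows essentially the same route as the paper: Proposition \ref{prop3.3} gives joint almost solvability, Proposition \ref{prop2.11}(i) gives a regular boundary triplet, and the equivalence in Proposition \ref{prop2.9}(ii) (which the paper invokes via Lemma \ref{def7.3a}(iii)) converts \eqref{1.103} into $B'-B\in\gotS_1(\cH)$. Your version merely spells out details the paper leaves implicit, namely the existence of a non-real point in $\rho(\wt A')\cap\rho(\wt A)$ needed for Proposition \ref{prop3.3} and the verification of conditions (i)--(ii) of Definition \ref{I.1}.
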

\begin{proof}
By Proposition \ref{prop3.3}, the pair $\{\wt A',\wt A\}$ is
jointly  almost solvable. By  Proposition \ref{prop2.11}(i),  we can
find a regular boundary triplet $\Pi$. Finally, using Lemma
\ref{def7.3a}(iii)  we get $B'-B \in \gotS_1(\cH)$ which yields
$\{\wt A',\wt A\} \in \gotD^\Pi$.
\end{proof}
\bc\la{IV.40}
Let $\wt A',\wt A \in \Ext_A$,  $\rho(\wt A') \cap \rho(\wt A) \not=
\emptyset$ and let \eqref{1.103} be valid.

\item[\;\;\rm (i)] If $(\rho(\wt A) \cup \gs_c(\wt A)) \cap \C_+$
and $(\rho(\wt A) \cup \gs_c(\wt A)) \cap \C_-$ are not empty,
then there exists a boundary triplet $\Pi$ for $A^*$  which is
regular for $\{\wt A',\wt A\}$ and  such that  $\{\wt A',\wt A\}
\in \gotD^\Pi$.

\item[\;\;\rm (ii)] If $\rho(\wt A) \cap \R$ is not empty, then
there exists a boundary triplet $\Pi = \{\cH,\gG_0,\gG_1\}$ for
$A^*$  such that $\{\wt A',\wt A\} \in \gotD^\Pi$. Moreover, this
triplet can be chosen regular for $\{\wt A',\wt A\}$ such that the condition
$\gl \in \rho(A_0)$, $A_0 := A^*\upharpoonright\ker(\gG_0)$, for some
$\gl \in  \rho(\wt A) \cap \R$ is satisfied.
\ec
\begin{proof}
(i) The proof follows from Proposition \ref{II.9},
Proposition  \ref{prop3.3}, Proposition \ref{prop2.11}(i) and
Proposition \ref{IV.30}.

(ii) From Proposition \ref{II.9} and Proposition \ref{II.14} we get that $\{\wt A',\wt A\}$ is almost
solvable with respect to $\gl$. Proposition \ref{prop2.11}(ii) and Proposition \ref{IV.30} yield the
existence of the desired boundary triplet.
\end{proof}

\subsection{Dependence on the boundary triplet}

Let us answer problem (b) of Remark \ref{prob}.
\begin{proposition}\la{III.5}
Let $\wt A', \wt A\in \Ext_A$ and let
$\Pi = \{\cH,\gG_1,\gG_0\}$ and $\Pi' = \{\cH',\gG'_1,\gG'_0\}$  be boundary triplets for $A^*$ such that $\{\wt A',\wt A\} \in \gotD^\Pi$ and $\{\wt A',\wt
A\} \in \gotD^{\Pi'}$.  If $\wt A$ is almost solvable, then
there exists  a constant $c \in \C$ such that
\be\la{4.108}
\gD^\Pi_{\wt A'/\wt A}(z) = c\;\gD^{\Pi'}_{\wt A'/\wt A}(z), \quad
z \in \rho(\wt A) \cap \rho(A_0),
\ee
where $A_0 = A^*\upharpoonright\ker(\gG_0)$ and $A'_0 =
A^*\upharpoonright\ker(\gG'_0)$. If $\wt A'$ and $\wt A$
are self-adjoint, then $c$ is real.
\end{proposition}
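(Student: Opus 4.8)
The plan is to avoid comparing the two triplet-dependent formulas head-on, and instead show that $\gD^\Pi_{\wt A'/\wt A}$ and $\gD^{\Pi'}_{\wt A'/\wt A}$ share the \emph{same} logarithmic derivative in $z$, and that this common logarithmic derivative is an \emph{intrinsic} quantity attached to the ordered pair $\{\wt A',\wt A\}$, namely $-\tr\bigl[(\wt A'-z)^{-1}-(\wt A-z)^{-1}\bigr]$, which involves no boundary triplet at all. Once this is in hand, $\log\gD^\Pi_{\wt A'/\wt A}-\log\gD^{\Pi'}_{\wt A'/\wt A}$ has vanishing derivative and is therefore locally constant, and the almost solvability of $\wt A$ will be used only to put both determinants, holomorphically, on one connected set so that the local constant becomes the single global constant $c$ of \eqref{4.108}.

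First I would differentiate \eqref{1.102} directly. Writing $T(z):=(B'-B)(B-M(z))^{-1}\in\gotS_1(\cH)$ one has $\tfrac{d}{dz}\log\det(I+T)=\tr\bigl[(I+T)^{-1}T'\bigr]$. Two simplifications drive the computation: since $(B-M(z))+(B'-B)=B'-M(z)$ we have $I+T(z)=(B'-M(z))(B-M(z))^{-1}$, hence $(I+T(z))^{-1}=(B-M(z))(B'-M(z))^{-1}$; and differentiating $(B-M(z))^{-1}$ brings in the factor $M'(z)$, for which the boundary-triplet identity $M'(z)=\gamma^*(\bar z)\gamma(z)$ (a direct consequence of \eqref{mlambda}) is available. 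Using $B'-B=(B'-M(z))-(B-M(z))$ to telescope, and cyclicity of the trace, the expression collapses to $\tr\bigl[\bigl((B-M(z))^{-1}-(B'-M(z))^{-1}\bigr)M'(z)\bigr]$.

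The second step identifies this trace intrinsically. Applying the Krein-type formula \eqref{2.30} to $\gT=\graph(B)$ and $\gT'=\graph(B')$ and subtracting gives
\[
(\wt A'-z)^{-1}-(\wt A-z)^{-1}=\gamma(z)\bigl[(B'-M(z))^{-1}-(B-M(z))^{-1}\bigr]\gamma^*(\bar z);
\]
taking the trace and using cyclicity together with $\gamma^*(\bar z)\gamma(z)=M'(z)$ yields exactly the negative of the trace found above. Hence
\[
\frac{d}{dz}\log\gD^\Pi_{\wt A'/\wt A}(z)=-\tr\bigl[(\wt A'-z)^{-1}-(\wt A-z)^{-1}\bigr],
\]
whose right-hand side does not depend on $\Pi$. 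The very same computation applies to $\Pi'$, so the two logarithmic derivatives coincide on the common domain.

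It remains to upgrade ``equal logarithmic derivatives'' to a single multiplicative constant, and this is where I expect the only real difficulty, and where the almost solvability of $\wt A$ enters. Since $\wt A$ is almost solvable, both $\gD^\Pi_{\wt A'/\wt A}$ and $\gD^{\Pi'}_{\wt A'/\wt A}$ extend holomorphically from their natural domains $\rho(\wt A)\cap\rho(A_0)$, resp.\ $\rho(\wt A)\cap\rho(A'_0)$, to all of $\rho(\wt A)$ (cf.\ Remark \ref{prob}(c)); on $\rho(\wt A)\cap\rho(\wt A')$ both are holomorphic and vanish exactly at the eigenvalues of $\wt A'$ with matching multiplicities, so their quotient is holomorphic and nonvanishing with zero logarithmic derivative, hence constant on each connected component, and by analytic continuation equal to one constant $c$ on $\rho(\wt A)\cap\rho(A_0)$, giving \eqref{4.108}. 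Finally, if $\wt A'$ and $\wt A$ are self-adjoint then $B',B$ are self-adjoint and $M(\bar z)=M(z)^*$, whence the reflection symmetry $\overline{\gD^\Pi_{\wt A'/\wt A}(\bar z)}=\gD^\Pi_{\wt A'/\wt A}(z)$ holds (and likewise for $\Pi'$); writing $c=\gD^\Pi_{\wt A'/\wt A}(\bar z)/\gD^{\Pi'}_{\wt A'/\wt A}(\bar z)$ and conjugating gives $\bar c=c$, so $c\in\R$. The secondary technical points to check are the trace-class bookkeeping justifying cyclicity in the infinite-dimensional setting and the identity $M'(z)=\gamma^*(\bar z)\gamma(z)$, but the genuine obstacle is the globalization step just described, which relies squarely on the hypothesis that $\wt A$ is almost solvable.
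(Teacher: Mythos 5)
Your core computation is correct and is in fact genuinely different from the paper's route: the identity
\[
\frac{d}{dz}\log\gD^\Pi_{\wt A'/\wt A}(z)=-\tr\bigl[(\wt A'-z)^{-1}-(\wt A-z)^{-1}\bigr]
\]
is exactly the paper's formula \eqref{4.106} (Proposition \ref{III.8}(v)), whose proof there uses only Yafaev's derivative formula, \eqref{mlambda} and the Krein-type formula \eqref{2.30}, so invoking it is not circular. The paper instead relates $\Pi$ and $\Pi'$ to a common regular triplet via a $J$-unitary block matrix $X$ and computes an \emph{explicit} $z$-independent constant $\wt c=\det\bigl(I+X_{21}(B-B')(X_{21}B'+X_{22})^{-1}\bigr)$.

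However, your globalization step has two genuine gaps. First, it is circular: the holomorphic continuation of $\gD^\Pi_{\wt A'/\wt A}(\cdot)$ to all of $\rho(\wt A)$ announced in Remark \ref{prob}(c) is Proposition \ref{III.7}, whose proof in the paper \emph{uses} Proposition \ref{III.5}; you may not assume it here. Second, and more seriously, equal logarithmic derivatives only make the quotient $\gD^\Pi/\gD^{\Pi'}$ \emph{locally} constant, i.e.\ constant on each connected component of the common domain of holomorphy, and that domain need not be connected: $\rho(\wt A)\cap\C_+$ and $\rho(\wt A)\cap\C_-$ are always separate components (they can only be joined through real points of $\rho(\wt A)\cap\rho(A_0)\cap\rho(A'_0)$, which need not exist), and $\rho(\wt A)\cap\C_+$ itself may be disconnected when $\wt A$ has non-real spectrum. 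Nothing in your argument excludes different constants on different components; note that your reflection argument in the self-adjoint case then only yields $c_-=\overline{c_+}$ for the upper/lower half-plane constants, not a single real $c$. This is precisely what the paper's explicit formula for $\wt c$ buys: being manifestly independent of $z$, it is the \emph{same} constant on every component of $\rho(\wt A)\cap\C_\pm$ simultaneously, after which the extension to $\rho(\wt A)\cap\rho(A_0)\cap\rho(A'_0)$ is routine analytic continuation. To repair your proof you would need an independent mechanism forcing the per-component constants to agree (for instance, a common normalization of both determinants at infinity, which is not available in this generality), and absent that the almost-solvability hypothesis must be used as the paper uses it: to produce a regular triplet and the transformation formulas \eqref{5.19}--\eqref{5.18}, not merely to quote holomorphy on $\rho(\wt A)$.
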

\begin{proof}
By Proposition \ref{IV.30} there exists  a boundary triplet $\wt
\Pi = \{\wt \cH,\wt \gG_0,\wt \gG_1\}$ for $A^*$ such that $\{\wt
A',\wt A\} \in \gotD^{\wt \Pi}$ and  which is regular for $\{\wt
A',\wt A\}$. Since $\{\wt A',\wt A\} \in \gotD^\Pi$ and $\{\wt
A',\wt A\} \in \gotD^{\wt \Pi}$ there exist operators $B',B \in
\cC(\cH)$ and $\wt B', \wt B \in [\wt \cH]$ which satisfy the
requirements of Definition \ref{I.1}. In particular, one has $\wt
A' = A_{B'} = A_{\wt B'}$ and $\wt A := A_B = A_{\wt B}$. Since
$\wt \Pi$ is regular we have $\wt B' - \wt B \in \gotS_1(\wt \cH)$
(cf. Lemma \ref{def7.3a}). Moreover, we have
\bead
\gD^\Pi_{\wt A'/\wt A}(z) & = & \det(I + (B'-B)(B-M(z))^{-1}),
\quad z \in \rho(\wt A) \cap \rho(A_0),\\
\gD^{\wt \Pi}_{\wt A'/\wt A}(z) & = & \det(I + (\wt B'- \wt B)(\wt
B- \wt M(z))^{-1}), \quad z \in \rho(\wt A) \cap \rho(\wt A_0),
\eead
where $\wt A_0 = A^*\upharpoonright\ker(\wt \gG_0)$.

By \cite{GG91} (see also \cite[Proposition 1.7]{DM95}) there
exists a  $J$-unitary block-operator matrix $X \in [\cH \oplus
\cH]$ such that
\bed
\begin{pmatrix}
\wt \Gamma_1\\
\wt \Gamma_0
\end{pmatrix} =
\begin{pmatrix}
X_{11} & X_{12}\\
X_{21} & X_{22}
\end{pmatrix}
\begin{pmatrix}
\Gamma_1\\
\Gamma_0
\end{pmatrix}
\eed
where $J := \begin{pmatrix} 0 & -iI\\ iI & 0\end{pmatrix}$.  The
corresponding Weyl functions as well as boundary operators are related by (see \cite[Proposition
1.7]{DM95})
\be\label{5.19}
\begin{matrix}
\wt B  = \bigl(X_{11} B + X_{12}\bigr)\bigl(X_{21}B + X_{22}\bigr)^{-1},  \\[3mm]
\wt B' = \bigl(X_{11} B' + X_{12}\bigr)\bigl(X_{21}B' + X_{22}\bigr)^{-1}.
\end{matrix}
\ee
Since $\wt B'$ and $\wt B$ are bounded operators it follows from
\cite[Proposition 1.7]{DM95} that $0 \in \rho(X_{21}B + X_{22})$
and $0 \in \rho(X_{21}B' + X_{22})$. In particular, one has
that $\dom(X_{21}B + X_{22}) = \dom(B)$ and $\dom(B) = \ran((X_{21}B + X_{22})^{-1})$.
Similarly one gets that $0
\in \rho(X_{21}M(z) + X_{22})$ and
\be\label{5.18}
\wt M(z) = \bigl(X_{11}M(z) + X_{12}\bigr)
\bigl(X_{21}M(z) + X_{22}\bigr)^{-1}, \quad  z\in \C \setminus \R.
\ee
Taking this fact into account we have
\bead
\lefteqn{\wt B - \wt M(z)}\\
& = &
\bigl(X_{11} B + X_{12}\bigr)\bigl(X_{21}B + X_{22}\bigr)^{-1} -
\bigl(X_{11}M(z) + X_{12}\bigr)\bigl(X_{21}M(z) + X_{22}\bigr)^{-1} \\
& = &
X_{11}\bigl(B - M(z)\bigr)\bigl(X_{21}B + X_{22}\bigr)^{-1} +\\
& &
\bigl(X_{11}M(z) + X_{12}\bigr)
\left\{\bigl(X_{21}B + X_{22}\bigr)^{-1} - \bigl(X_{21}M(z) + X_{22}\bigr)^{-1}\right\}\\
& =  &
X_{11}(B - M(z))\bigl(X_{21}B + X_{22}\bigr)^{-1} - \\
& &
\bigl(X_{11}M(z) + X_{12}\bigr)
\bigl(X_{21}M(z) + X_{22}\bigr)^{-1} X_{21}(B - M(z))\bigl(X_{21}B + X_{22}\bigr)^{-1}.
\eead
Hence
\bed
\wt B - \wt M(z) = Q(z) (B - M(z))\bigl(X_{21}B +
X_{22}\bigr)^{-1}, \quad z \in \C_\pm,
\eed
 where
\bed
Q(z) := X_{11} -
\bigl(X_{11}M(z) + X_{12}\bigr)\bigl(X_{21}M(z) + X_{22}\bigr)^{-1}
X_{21}
\eed
which is a well defined operator-valued function. This yields the representation
\bed Q(z) = (\wt B - \wt M(z))\bigl(X_{21}B + X_{22}\bigr)(B - M(z))^{-1},
\quad z \in \rho(\wt A) \cap \C_\pm.
\eed
We set
\bed
\Xi(z) := (B - M(z))\bigl(X_{21}B + X_{22}\bigr)^{-1}(\wt B - \wt M(z))^{-1},
\quad z \in \rho(\wt A) \cap \rho(\wt A_0),
\eed
and note that due to \eqref{5.19} $\Xi(\cdot)$  is a well-defined
family of bounded operators. Clearly,  $Q(z)\Xi(z) = I_{\wt \cH}$
and $\Xi(z)Q(z)  = I_\cH$ for $z \in \rho(\wt A) \cap \C_\pm$.
Hence
\be\la{3.6}
(\wt B - \wt M(z))^{-1} = \bigl(X_{21}B + X_{22}\bigr)(B -
M(z))^{-1}\Xi(z), \quad z \in \rho(\wt A) \cap \C_\pm.
\ee
Similarly we find
\be\la{3.7}
\wt B' - \wt M(z) = Q(z) (B' - M(z))\bigl(X_{21}B' +
X_{22}\bigr)^{-1}, \quad z \in \C_\pm.
\ee
Combining  \eqref{3.6} with  \eqref{3.7} we arrive at the representation
\bead
\lefteqn{
I + (\wt B' - \wt B)(\wt B - \wt M(z))^{-1} }\\
  & &
= Q(z) (B' - M(z)) \bigl(X_{21}B' + X_{22}\bigr)^{-1}
\bigl(X_{21}B + X_{22}\bigr)(B - M(z))^{-1}\Xi(z)
    \eead
for $z \in \rho(\wt A) \cap \C_\pm$. It follows that
\bead
\lefteqn{
I + (\wt B' - \wt B)(\wt B - \wt M(z))^{-1}}\\
& = &
Q(z) (B' - M(z))\times\\
& &
\left\{I + \bigl(X_{21}B' + X_{22}\bigr)^{-1}X_{21}(B - B')\right\} (B - M(z))^{-1}\Xi(z) \\
& = &
Q(z) (B' - M(z))(B - M(z))^{-1}\Xi(z)  + \\
& &
Q(z) (B' - M(z))\bigl(X_{21}B' + X_{22}\bigr)^{-1}X_{21}(B - B')(B - M(z))^{-1}\Xi(z) \\
& = &
I + Q(z)(B' - B)(B - M(z))^{-1}Q(z)^{-1} +\\
& &
Q(z) (B' - M(z))\bigl(X_{21}B' + X_{22}\bigr)^{-1}X_{21}(B - B') (B - M(z))^{-1}Q(z)^{-1}.
\eead
Hence
\bead
\lefteqn{\det(I + (\wt B' - \wt B)(\wt B - \wt M(z))^{-1})} \\
& = &
\det\Big(I + (B' - B)(B - M(z))^{-1} \\
& &
+ (B' - M(z))\bigl(X_{21}B' + X_{22}\bigr)^{-1}X_{21}(B -B')(B - M(z))^{-1}\Big).
\eead
Further, it is easily seen that
\bead
\lefteqn{
I + (B' - B)(B - M(z))^{-1} +}\\
& &
(B' - M(z))\bigl(X_{21}B' + X_{22}\bigr)^{-1}X_{21}(B -B')(B - M(z))^{-1}\\
& = &
\left(I + (B' - B)(B - M(z))^{-1}\right)\times\\
& &
\times\left(I + (B - M(z))\bigl(X_{21}B' + X_{22}\bigr)^{-1}X_{21}(B - B')(B - M(z))^{-1}\right).
\eead
By the multiplicative property for determinants we get
\bead
\lefteqn{
\det(I + (\wt B' - \wt B)(\wt B - \wt M(z))^{-1})}\\
& & \hspace{-4mm}
= \det(I+(B' - B)(B - M(z))^{-1})
\det(I+X_{21}(B -B')\bigl(X_{21}B' +
X_{22}\bigr)^{-1}).
\eead
Note, that the second determinant in the last formula is well
defined. Indeed, it follows from \eqref{3.7}  that the operator
valued function
\bed
(B' - M(z))\bigl(X_{21}B' + X_{22})^{-1} = Q(z)^{-1}(\wt B'- M(z))
\eed
takes values in $\cH$ for $z\in \rho(\wt A')\cap \C_{\pm}$.
Taking into account Definition \ref{I.1}(iii) we get $(B -B')(B'-M(z))^{-1} \in
\gotS_1(\cH)$,\   $z\in \rho(\wt A')\cap \C_{\pm}.$ Therefore
\bead
\lefteqn{
X_{21}(B -B')\bigl(X_{21}B' + X_{22})^{-1} }\\
& &
= X_{21}(B -B')(B'-M(z))^{-1}(B'-M(z))\bigl(X_{21}B' +
X_{22})^{-1} \in \gotS_1(\cH)
\eead
for   $z\in \rho(\wt A')\cap \C_{\pm}$ and  the determinant is
well-defined. Setting $\wt c := \det\left(I + X_{21}(B
-B')\bigl(X_{21}B' + X_{22}\bigr)^{-1}\right)$ we get $\gD^{\wt
\Pi}_{\wt A'/\wt A}(z) = \wt c\;\gD^{\Pi}_{\wt A'/\wt A}(z)$ for
$z \in \rho(\wt A) \cap \C_\pm$.

Similarly,  there exists  a constant $\wt c' \in \C$ such that
$\gD^{\wt \Pi}_{\wt A'/\wt A}(z) = \wt c'\;\gD^{\Pi'}_{\wt A'/\wt
A}(z)$ for $z \in \rho(\wt A) \cap C_\pm$. Setting $c := \wt
c'/\wt c$ we find $\gD^{\Pi}_{\wt A'/\wt A}(z) = c
\;\gD^{\Pi'}_{\wt A'/\wt A}(z)$ for $z \in \rho(\wt A) \cap
\C_\pm$.

It remains to prove that $c= \overline c$ whenever $\wt A' =
(\wt A')^*$  and $\wt A = (\wt A)^*$. We have
\bed
c \;\gD^{\Pi'}_{A'/\wt A}(\overline{z}) = \gD^{\Pi}_{\wt
A'/\wt A}(\overline{z}) = \overline{\gD^{\Pi}_{\wt A'/\wt A}(z)} =
\overline{c}\;\gD^{\Pi'}_{\wt A'/\wt A}(\overline{z}), \quad z
\in \C_\pm,
   \eed
which yields $c = \overline{c}$.

Finally we note that
\eqref{4.108} was proved  for $z \in \rho(\wt A) \cap \C_\pm$.
Since both sides admit an analytic continuation to $\rho(\wt A)
\cap \rho(A_0) \cap \rho(A'_0)$, the relation \eqref{4.108}
extends to this set too.
\end{proof}

\subsection{Domain of holomorphy}

We note that relation \eqref{4.108} is not completely satisfactory because it should be valid for $z \in \rho(\wt A)$.
To this end we have to answer problem (c) of Remark \ref{prob}.
\begin{proposition}\la{III.7}
Let $\wt A', \wt A\in \Ext_A$ and let $\Pi$ be a boundary triplet for
$A^*$  such that $\{\wt A',\wt A\} \in \gotD^\Pi$.  Then
$\gD^\Pi_{\wt A'/\wt A}(\cdot)$ admits a holomorphic continuation
to $\rho(\wt A)$.
\end{proposition}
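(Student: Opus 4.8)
The plan is to reduce the statement to a local continuation problem and to resolve the singular points by switching to a better boundary triplet. By Lemma~\ref{def7.3a}(ii), $\gD^\Pi_{\wt A'/\wt A}(\cdot)$ is already holomorphic on the open set $\rho(\wt A)\cap\rho(A_0)$, so it suffices to extend it holomorphically across every point of $\rho(\wt A)\setminus\rho(A_0)$. Since $A_0=A^*\!\upharpoonright\ker(\gG_0)$ is self-adjoint, $\gs(A_0)\subset\R$, and hence every such point is a real number $z_0\in\rho(\wt A)\cap\gs(A_0)$. For each fixed $z_0$ of this type I would produce a second boundary triplet $\Pi'=\{\cH',\gG_0',\gG_1'\}$ whose reference extension $A_0'=A^*\!\upharpoonright\ker(\gG_0')$ satisfies $z_0\in\rho(A_0')$ and still obeys $\{\wt A',\wt A\}\in\gotD^{\Pi'}$; the change-of-triplet formula of Proposition~\ref{III.5} then transports the holomorphy of $\gD^{\Pi'}$ near $z_0$ back to $\gD^\Pi$.

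First I would record that $\wt A$ is automatically almost solvable: the hypothesis $\{\wt A',\wt A\}\in\gotD^\Pi$ forces $\rho(\wt A)\cap\rho(A_0)\neq\eset$ by Lemma~\ref{def7.3a}(i), whence $\rho(\wt A)\neq\eset$ and Proposition~\ref{II.9} applies. To build $\Pi'$ for the fixed real point $z_0\in\rho(\wt A)$, I would invoke Proposition~\ref{II.14} for the two-element family $\{\wt A',\wt A\}$ with $\gl=z_0$: the required compactness \eqref{7.7A} is supplied by the trace-class relation \eqref{1.103}, which holds by Lemma~\ref{def7.3a}(i). This yields a self-adjoint extension transversal to both $\wt A'$ and $\wt A$ and having $z_0$ in its resolvent set; choosing $\Pi'$ so that $A_0'$ equals this extension (Propositions~\ref{prop2.11}(ii) and~\ref{prop2.1}(iv)) makes $\Pi'$ regular for $\{\wt A',\wt A\}$ with $z_0\in\rho(A_0')$. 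Finally Lemma~\ref{def7.3a}(iii), together with \eqref{1.103}, gives $B'-B\in\gotS_1(\cH')$, so that $\{\wt A',\wt A\}\in\gotD^{\Pi'}$.

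To conclude, I would apply Proposition~\ref{III.5}: since $\wt A$ is almost solvable and the pair lies in both $\gotD^\Pi$ and $\gotD^{\Pi'}$, there is a constant $c$ with $\gD^\Pi_{\wt A'/\wt A}(z)=c\,\gD^{\Pi'}_{\wt A'/\wt A}(z)$ on $\rho(\wt A)\cap\rho(A_0)\cap\rho(A_0')$. Picking an open neighborhood $\cU\subseteq\rho(A_0')\cap\rho(\wt A)$ of $z_0$, the function $c\,\gD^{\Pi'}_{\wt A'/\wt A}$ is holomorphic on $\cU$ by Lemma~\ref{def7.3a}(ii) and agrees with $\gD^\Pi_{\wt A'/\wt A}$ on $\cU\cap\rho(A_0)\supseteq\cU\setminus\R$; it therefore furnishes the desired continuation across $z_0$. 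Single-valuedness on all of $\rho(\wt A)$ is routine: any two such local extensions coincide on the intersection of their domains because that intersection meets $\rho(A_0)$ (whose complement lies in $\R$), where both equal $\gD^\Pi_{\wt A'/\wt A}$, so the identity theorem forces agreement on every connected component. Gluing these local pieces with $\gD^\Pi_{\wt A'/\wt A}$ on $\rho(\wt A)\cap\rho(A_0)$ produces a single holomorphic function on $\rho(\wt A)$.

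The main obstacle is the construction of $\Pi'$ realizing the prescribed real regular point $z_0$ of the reference extension, that is, desingularizing the Weyl function exactly at $z_0$, which is why the argument must route through Proposition~\ref{II.14} rather than the weaker Proposition~\ref{IV.30}. Once $\Pi'$ is available the remaining analysis is bookkeeping, since $\gs(A_0)\subset\R$ guarantees that the region already covered by $\gD^\Pi_{\wt A'/\wt A}$ is dense near every singular point.
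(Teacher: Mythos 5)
Your proposal is correct and follows essentially the same route as the paper's own proof: holomorphy on $\rho(\wt A)\cap\rho(A_0)$ from Lemma~\ref{def7.3a}(ii), then for each real point of $\rho(\wt A)\setminus\rho(A_0)$ a regular boundary triplet with that point in the resolvent set of the reference extension via Proposition~\ref{II.14} and Proposition~\ref{prop2.11}(ii), and finally transfer of holomorphy through the constant relating the two determinants in Proposition~\ref{III.5}. Your explicit gluing/single-valuedness argument via the identity theorem is a small refinement of a step the paper leaves implicit, not a different method.
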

\begin{proof}
By Lemma \ref{def7.3a},  the perturbation determinant $\gD^\Pi_{\wt A'/\wt A}(\cdot)$ is
well defined and holomorphic on $\rho(\wt A) \cap \rho(A_0)$.
Since $A_0 = A_0^*$ the determinant $\gD^\Pi_{\wt A'/\wt A}(\cdot)$ is
holomorphic on $\rho(\wt A) \cap \C_\pm$.

Further, let us assume that $\gl \in \rho(\wt A) \cap \R$. By
Lemma \ref{def7.3a}(i), the set $\rho(\wt A') \cap \rho(\wt A)$
is not empty and \eqref{1.103} holds.
By Proposition \ref{II.14}, the pair $\{\wt A',\wt A\}$ is jointly
almost solvable with respect to $\gl$. It follows from Proposition
\ref{prop2.11}(ii) that there exists  a regular boundary triplet
$\Pi_\gl = \{\cH^\gl,\gG^\gl_0,\gG^\gl_1\}$  such that $\gl \in
\rho(A_{0,\gl})$, $A_{0,\gl} :=A^*\upharpoonright\ker(\gG^\gl_0).$
According to Lemma \ref{def7.3a}(ii) the perturbation determinant $\gD^{\Pi_\gl}_{\wt
A'/\wt A}(\cdot)$ is holomorphic on $\rho(\wt A) \cap
\rho(A_{0,\gl})$. If $\gl' \in \rho(\wt A) \cap \R$, we find a
regular boundary triplet $\Pi_{\gl'}$ such that  $\gl' \in
\rho(A_{0,\gl'})$. By Lemma \ref{def7.3a}(ii),  the perturbation determinant
$\gD^{\Pi_{\gl'}}_{\wt A'/\wt A}(z)$ is holomorphic on $\rho(\wt
A) \cap \rho(A_{0,\gl'})$. By Proposition \ref{III.5}, there is a
constant $c \in\C$ such that $\gD^{\Pi_\gl}_{\wt A'/\wt A}(z) =
c\;\gD^{\Pi_{\gl'}}_{\wt A'/\wt A}(z)$  for $z \in \rho(\wt A)
\cap \C_\pm$. Since the right-hand side of this identity is
holomorphic on $\rho(\wt A) \cap \rho(A_{0,\gl'})$ the left-hand
side admits a holomorphic continuation to $\rho(\wt A) \cap
\rho(A_{0,\gl'})$. Since $\gl' \in \rho(\wt A)$ is arbitrary, the proof is complete.
\end{proof}

Proposition \ref{III.7} enables us to consider the perturbation determinant as a holomorphic function on $\rho(\wt A)$.
Doing so we immediately obtain the following improvement of Proposition \ref{III.5}.
\bc\la{IV.8a}
Let the assumptions of Proposition \ref{III.5} be satisfied. If the extension $\wt A$ is
almost solvable, then there exists a constant $c \in \C$ such that
$\gD^\Pi_{\wt A'/\wt A}(z) = c\gD^{\Pi'}_{\wt A'/\wt A}(z)$ for $z \in \rho(\wt A)$.
\ec
\begin{proof}
From Proposition \ref{III.7} we immediately get that the relation \eqref{4.108} extends to $\rho(\wt A)$.
\end{proof}

\subsection{Properties}

Let us show that the perturbation determinant $\gD^\Pi_{\wt A'/\wt A}(\cdot)$ fulfills properties similar to those
of Krein's determinant mentioned in Section \ref{sec.II.1.4}.
\begin{proposition}\label{III.8}
Let $\Pi = \{\cH,\gG_0,\gG_1\}$ be a boundary triplet for $A^*$,
$M(\cdot)$ the corresponding  Weyl function. Assume  that the proper
extensions  $\wt A, \wt A', \wt A''(\in \Ext_A)$  are disjoint
with $A_0,$ i.e. $\wt A = A_B$,  $\wt A' = A_{B'}$, $\wt A'' =
A_{B''}$ with $B, B', B''\in \cC(\cH)$.

\item[\rm\;\;(i)]
If $B',B \in \gotS_1(\cH)$, then $\{\wt A',\wt A\} \in \gotD^\Pi$ and
\be\la{4.102}
\gD^\Pi_{\wt A'/\wt A}(z) = \frac{\det(I_\cH - B'M(z)^{-1})}{\det(I_\cH - BM(z)^{-1})},
\qquad z \in \rho(\wt A) \cap \rho(A_1),
\ee
where $A_1 := A^*\upharpoonright\ker(\gG_1)$.

\item[\rm\;\;(ii)]
Let $\rho(\wt A')\cap\rho(\wt A) \not= \emptyset$. If $\{\wt A'',\wt A'\} \in \gotD^\Pi$ and $\{\wt A',\wt A\} \in
\gotD^\Pi$, then $\{\wt A'',\wt A\} \in \gotD^\Pi$ and
\be\label{3.15}
\gD^\Pi_{\wt A''/\wt A'}(z)\;\gD^\Pi_{\wt A'/\wt A}(z) = \gD^\Pi_{\wt
A''/\wt A}(z), \qquad    z \in \rho(\wt A')\cap\rho(\wt A).
\ee

\item[\rm\;\;(iii)]
Let $\rho(\wt A')\cap\rho(\wt A) \not= \emptyset$.
If $\{\wt A',\wt A\} \in \gotD^\Pi$, then $\{\wt A,\wt A'\} \in \gotD^\Pi$
and
\be\la{4.104}
\gD^\Pi_{\wt A'/\wt A}(z) \gD^\Pi_{\wt A/\wt A'}(z) = 1,
\quad z \in \rho(\wt A') \cap \rho(\wt A).
\ee

\item[\rm\;\;(iv)]
Let $\{\wt A',\wt A\} \in \gotD^\Pi$.
If $z_0 \in \rho(A_0)$ is either a regular point
or a normal eigenvalue of $\wt A'$ and $\wt A$, then
   \be\la{3.11}
\ord\left(\Delta^\Pi_{\wt A'/\wt A}(z_0)\right) = m_{z_0}(\wt A')
- m_{z_0}(\wt A).
   \ee
In particular, if $z_0 \in \rho(A_0)\cap \rho(\wt A)$, then
   \be\la{3.11A}
\ord\left(\Delta^\Pi_{\wt A'/\wt A}(z_0)\right) = m_{z_0}(\wt A'),
   \ee
cf. Appendix \ref{B}.

\item[\rm\;\;(v)] If $\{\wt A',\wt A\} \in \gotD^\Pi$, then
\be\la{4.106}
\frac{1}{\gD_{\wt A'/\wt A}(z)}\frac{d}{dz}\gD_{\wt A'/\wt A}(z) =
\tr((\wt A - z)^{-1} - (\wt A' - z)^{-1})
\ee
for $z \in \rho(\wt A') \cap \rho(\wt A)$, where
the right hand side   has sense by Lemma \ref{def7.3a}(i).

\item[\rm\;\;(vi)]
If $\{\wt A',\wt A\} \in \gotD^\Pi$ and $\{\wt A'^*,\wt A^*\} \in \gotD^\Pi$,
then
\be\la{4.107}
\gD^\Pi_{\wt A'^*/\wt A^*}(z) = \overline{\gD^\Pi_{\wt A'/\wt A}(\overline{z})},
\quad z\in\rho(\wt A^*).
\ee

\item[\rm\;\;(vii)]
If $\{\wt A',\wt A\} \in \gotD^\Pi$, then
\bed
\frac{\gD^\Pi_{\wt A'/\wt A}(z)}{\gD^\Pi_{\wt A'/\wt A}(\zeta)} =
\det(I_\cH + (M(z)-M(\zeta))(B - M(z))^{-1}(B'-B)(B' - M(\zeta))^{-1}),
\eed
for $z \in \rho(\wt A)$ and $\zeta \in \rho(\wt A') \cap \rho(\wt A)$.
\end{proposition}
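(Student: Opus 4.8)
My plan rests on the single algebraic identity $I_\cH+(B'-B)(B-M(z))^{-1}=(B'-M(z))(B-M(z))^{-1}$, which rewrites each perturbation determinant as a ``ratio'' $\det\big((B'-M(z))(B-M(z))^{-1}\big)$, together with the discipline of never leaving the class $I+\gotS_1(\cH)$, so that multiplicativity $\det(XY)=\det X\,\det Y$ and Sylvester's identity $\det(I+XY)=\det(I+YX)$ stay available. Two technical points recur: (a) operators such as $(B-M(z))(B'-M(z))^{-1}$ are everywhere defined and closed, hence bounded by the closed graph theorem; and (b) only the differences $(B-M(z))^{-1}-(B'-M(z))^{-1}=(B'-M(z))^{-1}(B'-B)(B-M(z))^{-1}$, cf. \eqref{4.2AA}, are trace class, not the individual resolvents.

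For (i), since $B,B'\in\gotS_1(\cH)$ are bounded with domain $\cH$, membership in $\gotD^\Pi$ is immediate, and on $\rho(A_1)$ the Weyl function is invertible (Proposition \ref{t1.12} with $\Theta_1=\graph(\bO)$); factoring $B-M(z)=-(I-BM(z)^{-1})M(z)$ and likewise for $B'$ turns the ratio into $\det\big((I-B'M(z)^{-1})(I-BM(z)^{-1})^{-1}\big)$, and multiplicativity yields \eqref{4.102}. For (ii) I would first verify the operator identity $\big(I+(B''-B')(B'-M(z))^{-1}\big)\big(I+(B'-B)(B-M(z))^{-1}\big)=I+(B''-B)(B-M(z))^{-1}$ (using $\dom(B'')=\dom(B')=\dom(B)$), whence \eqref{3.15} follows; membership $\{\wt A'',\wt A\}\in\gotD^\Pi$ comes from $(B''-B')(B-M(z))^{-1}=(B''-B')(B'-M(z))^{-1}\big(I+(B'-B)(B-M(z))^{-1}\big)$, a trace-class times bounded operator. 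Part (iii) is the case $\wt A''=\wt A$ of (ii): one checks $\rho(\wt A')\cap\rho(A_0)\ne\emptyset$ (any non-real point of the nonempty open set $\rho(\wt A')$ lies in $\rho(A_0)$) and that $(B-B')(B'-M(z))^{-1}=(B'-B)(B-M(z))^{-1}\cdot(B-M(z))(B'-M(z))^{-1}\in\gotS_1$ by point (a), then reads off \eqref{4.104}.

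For (vi) I use $\wt A^{*}=A_{B^{*}}$, $\wt A'^{*}=A_{B'^{*}}$ (Proposition \ref{prop2.1}(i)) and $M(z)^{*}=M(\overline z)$; then $\overline{\gD^\Pi_{\wt A'/\wt A}(\overline z)}=\det\big(I+(B^{*}-M(z))^{-1}(B'^{*}-B^{*})\big)$ via $\overline{\det T}=\det T^{*}$, and Sylvester's identity converts this into $\gD^\Pi_{\wt A'^{*}/\wt A^{*}}(z)$. For (vii) I form $\gD^\Pi_{\wt A'/\wt A}(z)/\gD^\Pi_{\wt A'/\wt A}(\zeta)=\det\big((B'-M(z))(B-M(z))^{-1}(B-M(\zeta))(B'-M(\zeta))^{-1}\big)$ by multiplicativity and insert $(B-M(z))^{-1}(B-M(\zeta))=I+(B-M(z))^{-1}(M(z)-M(\zeta))$. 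The natural collapse produces the factor $(B'-B)$ on the \emph{left}; recasting into the displayed ordering is the delicate step and uses the linear relation $B'-B=(B'-M(\zeta))-(B-M(z))-(M(z)-M(\zeta))$ together with the two factorizations $Q+(B'-B)P^{-1}\Delta=(Q-\Delta)(I+P^{-1}\Delta)$ and $Q+\Delta P^{-1}(B'-B)=(I+\Delta P^{-1})(Q-\Delta)$, where $P=B-M(z)$, $Q=B'-M(\zeta)$, $\Delta=M(z)-M(\zeta)$, combined with $\det(I+P^{-1}\Delta)=\det(I+\Delta P^{-1})$.

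The analytic heart, and the linchpin, is (v). Writing $T(z)=(B'-B)(B-M(z))^{-1}\in\gotS_1(\cH)$, I compute $\tfrac{d}{dz}\log\det(I+T(z))=\tr\big[(I+T(z))^{-1}T'(z)\big]$; using $(I+T)^{-1}=(B-M(z))(B'-M(z))^{-1}$, $T'(z)=(B'-B)(B-M(z))^{-1}M'(z)(B-M(z))^{-1}$, cyclicity, and \eqref{4.2AA}, this reduces to $\tr\big[\big((B-M(z))^{-1}-(B'-M(z))^{-1}\big)M'(z)\big]$. With $M'(z)=\gamma(\overline z)^{*}\gamma(z)$ (from \eqref{mlambda}) and Krein's formula \eqref{2.30}, cyclicity turns this into $\tr\big[(\wt A-z)^{-1}-(\wt A'-z)^{-1}\big]$, which is \eqref{4.106}. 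Then (iv) follows by taking the logarithmic residue: since $\res_{z_0}(\wt A-z)^{-1}=-P_{z_0}(\wt A)$ with $\tr P_{z_0}(\wt A)=m_{z_0}(\wt A)$ at a normal eigenvalue, one gets $\ord_{z_0}\gD^\Pi_{\wt A'/\wt A}=m_{z_0}(\wt A')-m_{z_0}(\wt A)$, i.e. \eqref{3.11}, with \eqref{3.11A} the case $z_0\in\rho(\wt A)$. The main obstacle I anticipate is making (v) rigorous: each term $(\wt A-z)^{-1}-(A_0-z)^{-1}=\gamma(z)(B-M(z))^{-1}\gamma(\overline z)^{*}$ need not be trace class, so the trace cannot be split into a $B$-part and a $B'$-part; the whole computation must be carried out keeping the trace-class difference $(B-M(z))^{-1}-(B'-M(z))^{-1}$ intact, performed on $\rho(\wt A)\cap\rho(\wt A')\cap\rho(A_0)$, and only afterwards extended to $\rho(\wt A')\cap\rho(\wt A)$ by analytic continuation (Proposition \ref{III.7}).
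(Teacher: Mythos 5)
Your proposal is correct in substance and, for parts (i), (ii), (iii) and (v), it follows essentially the same route as the paper: the identity $I_\cH+(B'-B)(B-M(z))^{-1}=(B'-M(z))(B-M(z))^{-1}$, the trace-class bookkeeping via \eqref{4.2AA}, and for (v) the computation $\tfrac{d}{dz}\log\det(I+T(z))=\tr\bigl[(I+T(z))^{-1}T'(z)\bigr]$ reduced through $M'(z)=\gamma(\overline z)^*\gamma(z)$ (from \eqref{mlambda}) and Krein's formula \eqref{2.30}, followed by analytic continuation off $\rho(A_0)$ via Proposition \ref{III.7} --- this is the paper's argument almost verbatim. You genuinely diverge in two places. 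For (iv) the paper identifies $m_{z_0}(\wt A)$ with multiplicity data of the operator-valued function $B-M(\cdot)$ (via Proposition \ref{t1.12} and \eqref{2.30}) and then invokes the Gohberg--Krein theory of determinants of holomorphic operator functions; you instead integrate the already-proved trace formula \eqref{4.106} around $z_0$ and read off the order from the traces of the Riesz projections. Your route is self-contained (only (v) plus the argument principle), at the cost of noting that a punctured neighbourhood of a normal eigenvalue of both extensions lies in $\rho(\wt A')\cap\rho(\wt A)$, so that \eqref{4.106} is available there; the paper's route works directly at $z_0$ and is independent of (v). For (vii) the paper says only that the claim ``follows from (ii) and (iii)'', which hides exactly the reordering problem you identify: the natural collapse yields $\det\bigl(I+(B'-B)(B-M(z))^{-1}(M(z)-M(\zeta))(B'-M(\zeta))^{-1}\bigr)$, not the displayed ordering. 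Your factorizations with $P=B-M(z)$, $Q=B'-M(\zeta)$, $\Delta=M(z)-M(\zeta)$ are correct: both sides factor into the two products of $I+\Delta P^{-1}=(B-M(\zeta))(B-M(z))^{-1}$ and $I-\Delta Q^{-1}=(B'-M(z))(B'-M(\zeta))^{-1}$ taken in opposite orders, and the determinants agree by cyclicity/similarity because $I+\Delta P^{-1}$ is boundedly invertible for $\zeta\in\rho(\wt A)\cap\rho(A_0)$. On this point your plan is more complete than the paper's.

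One step is under-justified: in (vi) you pass from $\det\bigl(I+(B^*-M(z))^{-1}(B'^*-B^*)\bigr)$ to $\gD^\Pi_{\wt A'^*/\wt A^*}(z)$ ``by Sylvester's identity''. The class $\gotD^\Pi$ allows $B,B'\in\cC(\cH)$ unbounded, so $B'^*-B^*$ is in general unbounded, and the textbook identity $\det(I+XY)=\det(I+YX)$ (bounded factors, trace-class products) does not apply. Moreover the adjoint of $(B'-B)(B-M(\overline z))^{-1}$ is the closure $\overline{(B^*-M(z))^{-1}(B'-B)^*}$, and one must check both that $(B'-B)^*$ can be replaced by $B'^*-B^*$ there (the paper does this via $C^*(B^*-M(z)^*)^{-1}=C_*(B^*-M(z)^*)^{-1}$ with $C_*:=B'^*-B^*$) and that the determinant of such a closure equals $\det\bigl(I+(B'^*-B^*)(B^*-M(z))^{-1}\bigr)$. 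This commutation for an unbounded factor is exactly what the paper's Lemma \ref{A.1} and Corollary \ref{A.2} were written for; your argument is repaired simply by citing Corollary \ref{A.2} in place of Sylvester, but as written that step has no justification when the boundary operators are unbounded.
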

\begin{proof}
(i) Obviously we have $\{\wt A',\wt A\} \in \gotD^\Pi$. Further, let $z \in \rho(A_0)$. One has $0 \in \rho(B - M(z))$ if and only if $1 \in
\rho(BM(z)^{-1})$. Hence
\bead
\lefteqn{
I_\cH + (B'-B)(B - M(z))^{-1} = (B' - M(z))(B - M(z))^{-1}}\\
& &
= (I_\cH - B'M(z)^{-1})(I_\cH - BM(z)^{-1})^{-1}, \qquad z \in
\rho(\wt A) \cap \rho(A_0).
\eead
Since $z \in \rho(A_1)$ if and only if $0 \in \rho(M(z))$
the right-hand side of \eqref{4.102} admits a
holomorphic continuation  to $\rho(\wt A) \cap \rho(A_1)$.

(ii) If $z \in \rho(\wt A') \cap \rho(\wt A) \cap \rho(A_0) \not= \emptyset$,
then
\bed
\{z \in \rho(A_0): 0 \in \rho(B' - M(z)) \;\wedge \;0 \in \rho(B - M(z))\} \not= \emptyset.
\eed
Hence $\{z \in \rho(A_0): 0 \in \rho(B' - M(z))\} \not=\emptyset$.
To check  Definition \ref{I.1}(ii)
we note that $\dom(B'') = \dom(B') = \dom(B)$. Using
\bed
(B'' - B)(B - M(z))^{-1} = (B'' - B')(B - M(z))^{-1} + (B' - B)(B - M(z))^{-1}
\eed
for $z \in \rho(\wt A') \cap \rho(A_0)$. Notice that $(B' - B)(B -
M(z))^{-1} \in \gotS_1(\cH)$ by assumption.
It remains to verify that  $(B'' -
B')(\wt B - M(z))^{-1} \in \gotS_1(\cH)$ for $z \in \rho(\wt A)
\cap \rho(A_0)$. Obviously, we have
\bead
\lefteqn{
(B'' - B')(B - M(z))^{-1}
= (B'' - B')(B'-M(z))^{-1}} \\
& &
+\  (B''-B')(B'-M(z))^{-1}(B'-B)((B-M(z))^{-1}
\eead
for $z \in \rho(\wt A') \cap \rho(\wt A) \cap \rho(A_0)$. Since
$(B'' - B')(B' - M(z))^{-1} \in \gotS_1(\cH)$ for $z \in \rho(\wt
A') \cap \rho(A_0)$ and $(B'-B)(B - M(z))^{-1} \in \gotS_1(\cH)$
for $z \in \rho(\wt A) \cap \rho(A_0)$ by assumption we find $(B''
- B)(B - M(z))^{-1} \in \gotS_1(\cH)$ for $z \in \rho(\wt A') \cap
\rho(\wt A) \cap \rho(A_0)$. Now  \eqref{3.15} follows immediately from
the definition for $z \in \rho(\wt A') \cap \rho(\wt A) \cap \rho(A_0)$.
Finally, using Proposition \ref{III.7} we can omit $\rho(A_0)$.

(iii) If $\rho(\wt A') \cap \rho(\wt A) \not= \emptyset$, then
$\rho(\wt A') \cap \rho(\wt A) \cap \rho(A_0) \not=
\emptyset$. Hence one has
$\{z \in \rho(A_0): 0 \in \rho(B' - M(z)) \wedge 0 \in \rho(B - M(z))\} \not= \emptyset$.
Therefore the conditions (i) and (ii) of Definition \ref{I.1} are satisfied for the
ordered pair $\{\wt A,\wt A'\}$. To prove  condition (iii) of Definition \ref{I.1} we use the representation
\bed
(B - B')(B' - M(z))^{-1} = -(B' -B)(B - M(z))^{-1}(B - M(z))(B' - M(z))^{-1}
\eed
for $z \in \rho(\wt A') \cap \rho(\wt A) \cap \rho(A_0)$. Since
$(B - M(z))(B' - M(z))^{-1}$ is a bounded operator we get $(B -
B')(B' - M(z))^{-1} \in \gotS_1(\cH)$ for  $z \in \rho(\wt A')
\cap \rho(A_0)$. To prove \eqref{4.104} it suffices to set $\wt
A'' = \wt A$ in \eqref{3.15}.

(iv)
From Proposition \ref{t1.12} and the Krein-type formula
\eqref{2.30} we find that $\nu_{z_0}(\wt A') = \nu_{z_0}(B' - M(z_0))$
and $\nu_{z_0}(\wt A) = \nu_{z_0}(B - M(z_0))$. Following the reasoning
of\cite[Chapter IV, Sec. 3.4]{GK69} we get
$\ord(\gD^\Pi_{\wt A'/\wt A}(z_0)) = \nu_{z_0}(B' - M(z_0)) -
\nu_{z_0}(B - M(z_0))$ which proves \eqref{3.11}.

(v) From formula \cite[(1.7.10)]{Yaf92} we get
\bea \la{4.107AA}
\lefteqn{ \frac{1}{\gD^\Pi_{\wt A'/\wt
A}(z)}\frac{d}{dz}\gD^\Pi_{\wt A'/\wt A}(z)
 }\\
& &
=\tr\left(\left (I +
(B'-B)(B-M(z))^{-1}\right)^{-1}\frac{d}{dz}(B'-B)(B -
  M(z))^{-1}\right)\nonumber
\eea
for $z \in \rho(\wt A') \cap \rho(\wt A) \cap \rho(A_0)$. From
\eqref{mlambda} we find
\bed
\frac{d}{dz}(B - M(z))^{-1} = (B - M(z))^{-1}\gga(\overline{z})^*\gga(z)(B - M(z))^{-1}
\eed
for $z \in \rho(\wt A) \cap \rho(A_0)$. Combining two last
formulas we obtain
\bead
\lefteqn{\hspace{-1cm}
\tr\left(\left (I + (B'-B)(B-M(z))^{-1}\right)^{-1}\frac{d}{dz}(B'-B)(B -
  M(z))^{-1}\right) }\\
  & &
=\tr\left(\gga(z)(B' - M(z))^{-1}(B' - B)(B -
M(z))^{-1}\gga(\overline{z})^*\right) \eead
for $z \in \rho(\wt A') \cap \rho(\wt A) \cap \rho(A_0)$. On the
other hand,  the Krein-type formula \eqref{2.30} yields
\bed
(\wt A' - z)^{-1} - (\wt A - z)^{-1} = \gga(z)(B' - M(z))^{-1}(B -
B')(B - M(z))^{-1}\gga(\overline{z})^*
\eed
for $z \in \rho(\wt A') \cap \rho(\wt A) \cap \rho(A_0)$.
Combining this  formula with \eqref{4.107AA} we arrive at
\eqref{4.106}.

(vi)
If $\{\wt A',\wt A\} \in \gotD^\Pi$, then $C(B - M(z))^{-1} \in
\gotS_1(\cH)$ for $z \in \rho(\wt A) \cap \rho(A_0)$ where $C := B'
-B$, $\dom(C) := \dom(B') = \dom(B)$. Since $\{\wt A'^*,\wt A^*\} \in \gotD^{\Pi}$
the operators $B'$ and $B$ are densely defined. Hence $B'^*$, $B^*$ and $C^*$ exist and
$\overline{(B^* - M(z)^*)^{-1}C^*} \in \gotS_1(\cH)$.
Setting $C_* := B'^*-B^*$, $\dom(C_*) = \dom(B'^*) =
\dom(B^*)$ we get $\dom(C^*) \supseteq \dom(C_*)$. Moreover, we
have
\bed
C^*(B^* - M(z)^*)^{-1} = C_* (B^* - M(z)^*)^{-1} \in \gotS_1(\cH)
\eed
for $z \in \rho(\wt A) \cap \rho(A_0)$.
Applying Corollary \ref{A.2} we obtain
\bead
\lefteqn{
\det(I_\cH +  \overline{(B^* - M(z)^*)^{-1}C^*})  }\\
   & &
= \det(I_\cH + C^*(B^* - M(z)^*)^{-1}) = \det(I_\cH + C_*(B^*
- M(z)^*)^{-1}) \eead
for $z \in \rho(\wt A) \cap \rho(A_0)$.
Since $((B' - B)(B - M(z))^{-1})^* = \overline{(B^* - M(z)^*)^{-1}C^*}$ we find
\bead
\lefteqn{
\overline{\gD^\Pi_{\wt A'/\wt A}(z)}
= \det(I_\cH +  \overline{(B^* - M(z)^*)^{-1}C^*})}  \\
& &
= \det(I_\cH + C_*(B^* - M(z)^*)^{-1}) = \gD^\Pi_{\wt A'^*/\wt A^*}(\overline{z})
\eead
for $z \in \rho(\wt A) \cap \rho(A_0)$ where we have used $M(z)^* =
M(\overline{z})$ for $z \in \rho(A_0)$. Replacing $\overline{z}$ by
$z$ it follows \eqref{4.107} for $z \in \rho(\wt A^*) \cap \rho(A_0)$.

(vii) The proof follows from (ii) and (iii).
\end{proof}

Propositions  \ref{III.7} and  \ref{III.8}
show that the perturbation determinant  $\gD^\Pi_{\wt A'/\wt A}(\cdot)$
has the  properties similar to that  of the classical  perturbation determinant.
%
%
\begin{proposition}\la{IV.8}
Let $\wt A \in \Ext_A$ and  let $\Pi = \{\cH,\gG_0,\gG_1\}$ be a
boundary triplet for $A^*$ such that $\wt A = A_B$, $B \in
\cC(\cH)$, and $M(\cdot)$ the corresponding  Weyl function.
If for some $\zeta \in \rho(\wt A) \cap \rho(A_0)$
\be\la{4.113} (\wt A - \zeta)^{-1} - (A_0 - \zeta)^{-1} \in
\gotS_1(\gotH), \qquad A_0 := A^*\upharpoonright\ker(\gG_0),
       \ee
then the following holds:

\item[\;\;\rm (i)] $B$ is discrete and   $(B - \mu)^{-1} \in
\gotS_1(\cH)$   for any  $\mu \in \rho(B)$.

\item[\;\;\rm (ii)] There exists a regular  for $\{\wt A,A_0\}$ boundary triplet $\wt \Pi =
\{\wt \cH,\wt \gG_0,\wt\gG_1\}$ for $A^*$.  

\item[\;\;\rm (iii)] If $\wt \Pi$ is a boundary triplet for $A^*$ such
  that $\{\wt A,A_0\} \in \gotD^{\wt \Pi}$, then
there exist  $\mu = \bar\mu \in \rho(B)$ and  $c \in \C$ such that
\be\la{4.11}
\gD^{\wt \Pi}_{\wt A/A_0}(z) = c\det(I - (\mu - B)^{-1}(\mu - M(z))), \quad z \in \rho(A_0).
\ee
If $0\in \rho(B)$, then one can put $\mu = 0$ in \eqref{4.11}.
\end{proposition}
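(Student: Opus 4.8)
The three parts build on one another, and the decisive idea enters in (iii). For (i) I would read the claim directly off the comparability theorem: writing $A_0 = A_{\gT_0}$ with $\gT_0 = \{0\}\times\cH$ and $\wt A = A_B$, the hypothesis \eqref{4.113} is exactly the left-hand side of Proposition~\ref{prop2.9}(i) with $p=1$, whose ``in particular'' clause gives $(B-\mu)^{-1}\in\gotS_1(\cH)$ for $\mu\in\rho(B)$; as $\gotS_1\subset\gotS_\infty$ this means $B$ has compact resolvent, hence purely discrete spectrum. For (ii) I would verify the hypotheses of Proposition~\ref{prop3.3} for the family $\{\wt A,A_0\}$: the member $A_0=A_0^*$ is almost solvable, and by (i) together with Proposition~\ref{prop2.9}(i) the resolvent difference lies in $\gotS_1\subset\gotS_\infty$ at \emph{every} point of $\rho(\wt A)\cap\rho(A_0)$. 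It remains to exhibit a non-real such point: if the given $\zeta$ is non-real we use it, and otherwise $\zeta\in\rho(A_0)\cap\R$ lies in a gap of $\gs(A_0)$, so $\rho(A_0)$ is connected and reaches into $\C_+$; since $z\mapsto B-M(z)$ has compact resolvent, is holomorphic, and is boundedly invertible at $\zeta$ (Proposition~\ref{t1.12}), analytic Fredholm theory makes it invertible off a discrete set, so $\rho(\wt A)\cap\C_+\neq\emptyset$. Proposition~\ref{prop3.3} then yields joint almost solvability and Proposition~\ref{prop2.11}(i) the regular triplet $\wt\Pi$.

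The heart of (iii) is to produce one concrete triplet realizing the right-hand side of \eqref{4.11}, and this is the step I expect to be the main obstacle: one must simultaneously turn the unbounded parameter $B$ into a bounded trace-class operator and turn $A_0$ (whose parameter is the purely multivalued relation $\{0\}\times\cH$) into a bounded operator. I would choose $\mu\in\rho(B)\cap\R$ — possible because $\gs(B)$ is discrete, hence countable, and cannot cover $\R$; if $0\in\rho(B)$ take $\mu=0$. Then define $\wt\Pi_0=\{\cH,\wt\gG_0,\wt\gG_1\}$ by the $J$-unitary transformation $\begin{pmatrix}\wt\gG_1\\\wt\gG_0\end{pmatrix} = X\begin{pmatrix}\gG_1\\\gG_0\end{pmatrix}$ with $X := \begin{pmatrix} 0 & 1\\ -1 & \mu\end{pmatrix}$, i.e. $\wt\gG_1 = \gG_0$ and $\wt\gG_0 = \mu\gG_0 - \gG_1$; since $\mu$ is real $X$ is $J$-unitary, so $\wt\Pi_0$ is a boundary triplet. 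The transformation rules used in the proof of Proposition~\ref{III.5} (from \cite[Proposition~1.7]{DM95}) then give $\wt M(z) = (\mu - M(z))^{-1}$, the boundary operator of $\wt A = A_B$ becomes $\wt B = (\mu - B)^{-1}$, and, because $\ker\wt\gG_1 = \ker\gG_0 = \dom(A_0)$, the operator $A_0$ becomes $A_{\bO}$, i.e. $\wt B_0 = \bO$.

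With the construction in hand the computation is mechanical. Now $\wt B = (\mu - B)^{-1}\in\gotS_1(\cH)$ by (i) and $\wt B_0 = \bO\in\gotS_1(\cH)$ trivially, so Proposition~\ref{III.8}(i) applies in the triplet $\wt\Pi_0$; using $\wt M(z)^{-1} = \mu - M(z)$ and $\det(I - \wt B_0\wt M(z)^{-1}) = 1$, its formula \eqref{4.102} collapses to $\gD^{\wt\Pi_0}_{\wt A/A_0}(z) = \det(I - (\mu - B)^{-1}(\mu - M(z)))$ on $\rho(\wt A)\cap\rho(A_0)$, and in particular $\{\wt A,A_0\}\in\gotD^{\wt\Pi_0}$. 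For an arbitrary admissible $\wt\Pi$, the denominator $A_0$ is self-adjoint, hence almost solvable, so Corollary~\ref{IV.8a} provides a constant $c\in\C$ with $\gD^{\wt\Pi}_{\wt A/A_0}(z) = c\,\gD^{\wt\Pi_0}_{\wt A/A_0}(z)$, which is \eqref{4.11}. Finally, both sides of \eqref{4.11} are holomorphic on $\rho(A_0)$ — the left by Proposition~\ref{III.7}, the right because $M(\cdot)$ is holomorphic and $(\mu-B)^{-1}\in\gotS_1(\cH)$ — and they agree on the non-empty open set $\rho(\wt A)\cap\rho(A_0)$; since compactness of the resolvent difference with the self-adjoint $A_0$ forces $\gs(\wt A)\setminus\R$ to consist of isolated normal eigenvalues, $\rho(\wt A)$ is dense in each half-plane and so meets every connected component of $\rho(A_0)$, whence by uniqueness of analytic continuation the identity extends to all of $\rho(A_0)$.
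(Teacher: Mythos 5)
Your proposal is correct and takes essentially the same route as the paper: part (i) is read off Proposition~\ref{prop2.9}(i), part (ii) is exactly the content of Proposition~\ref{IV.30} (which the paper cites, and whose proof is the combination of Propositions~\ref{prop3.3} and~\ref{prop2.11} you spell out), and part (iii) rests on the identical key construction, the transposed triplet $\wt\gG_1=\gG_0$, $\wt\gG_0=\mu\gG_0-\gG_1$ that converts $B$ into $(\mu-B)^{-1}\in\gotS_1(\cH)$ and $A_0$ into $A_{\bO}$, followed by the triplet-independence result (Proposition~\ref{III.5}/Corollary~\ref{IV.8a}). The only differences are cosmetic or in your favor: your analytic-Fredholm argument securing a non-real common resolvent point when $\zeta$ is real makes explicit a step the paper's citation chain leaves implicit, and your sign convention for $\wt\gG_0$ corrects what in the paper reads as $-(\gG_1+\mu\gG_0)$ but must be $\mu\gG_0-\gG_1$ to yield $\wt M(z)=(\mu-M(z))^{-1}$.
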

\begin{proof}
(i) If  \eqref{4.113} is satisfied, then, by Proposition
\ref{prop2.9}(i),  $B$ is unbounded, its spectrum is discrete and $(B - \mu)^{-1}
\in \gotS_1(\cH)$ for $\mu\in \rho(B).$

(ii) Since $A_0= A_0^*$,  Proposition \ref{II.9} yields  that
$A_0$ is almost solvable. By Proposition \ref{IV.30}, there exists
a boundary triplet $\wt \Pi$ for $A^*$ which is regular for the pair $\{\wt A,A_0\}$, in particular,   $\{\wt A,A_0\} \in \gotD^{\wt \Pi}$.

(iii) Since $B$ is discrete, $\rho(B)\cap\R\not =\emptyset.$
Choosing $\mu\in \rho(B)\cap\R$ we  set $\gG'_1 := \gG_0$ and
$\gG'_0 := -(\gG_1 + \mu\gG_0)$ and note that $\Pi' =
\{\cH, \gG'_0,\wt \gG'_1\}$ is a boundary triplet for $A^*$
too.   Moreover, we have $\wt A = A_{\wt B}$ where $\wt B = (\mu -
B)^{-1}$ and $A_0 = A_\bO = A^*\upharpoonright\ker(\wt \gG_1)$
where $\bO$ is the zero operator in $\cH$. Thus, the boundary
triplet $\wt \Pi$ is regular for $\{\wt A,A_0\}$. The  Weyl
function $\wt M(\cdot)$ associated with $\wt \Pi$ is  $\wt M(z) =
(\mu - M(z))^{-1}$, $z \in \C_\pm$. Hence, by Definition \ref{I.1},  the perturbation determinant is
\be\label{4.11New}
\gD^{\wt \Pi}_{\wt A/A_0}(z) = \det\left(I - \wt B\wt{M^{-1}}(z)\right) = \det\left(I - \wt B(\mu - M(z))\right),
\quad z \in \C_\pm.
\ee
Finally, applying Proposition \ref{III.5} we arrive at \eqref{4.11}.
\end{proof}
\bc\la{IV.9}
Let  $\Pi = \{\cH,\gG_0,\gG_1\}$ be a boundary triplet for $A^*$,
$M(\cdot)$ the corresponding  Weyl function  and let $\wt A',\wt A
\in \Ext_A$ be such that $\wt A' = A_{B'}$, $\wt A = A_B$, where
$B', B \in \cC(\cH)$. If $\rho(\wt A') \cap \rho(\wt A) \not= \emptyset$ and
\be\label{4.15}
(\wt A' - \zeta)^{-1} - (A_0 -\zeta)^{-1} \in \gotS_1(\gotH) \quad
\mbox{and} \quad (\wt A - \zeta)^{-1} - (A_0 -\zeta)^{-1} \in
\gotS_1(\gotH)
\ee
for some $\zeta \in \rho(\wt A') \cap \rho(\wt A) \cap \C_\pm$,
then the following holds:

\item[\;\;\rm (i)] $B'$ and $B$ are discrete and there exists
$\mu \in \rho(B') \cap \rho(B) \cap \R$ such that $(B'-\mu)^{-1}
\in \gotS_1(\cH)$ and $(B-\mu)^{-1} \in \gotS_1(\cH)$.

\item[\;\;\rm (ii)] There exists  a boundary triplet $\wt \Pi =
\{\wt \cH,\wt \gG_0,\wt \gG_1\}$ for $A^*$, which can be chosen to be regular
for the family $\{\wt A',\wt A,A_0\}$.  

\item[\;\;\rm (iii)] If $\wt \Pi = \{\wt \cH,\wt \gG_0,\wt
\gG_1\}$ is a (not necessarily regular for  $\{\wt A',\wt A,A_0\}$) boundary triplet for $A^*$ such that $\{\wt A',\wt A\} \in \gotD^{\wt \Pi}$,
$\{\wt A',A_0\} \in \gotD^{\wt \Pi}$ and $\{\wt A,A_0\} \in
\gotD^{\wt \Pi}$, then the perturbation determinant $\gD^{\wt \Pi}_{\wt A'/\wt A}(\cdot)$ admits the representation
    \be\label{4.16}
\gD^{\wt \Pi}_{\wt A'/\wt A}(z) = c\;\frac{\det(I - (\mu -
B')^{-1}(\mu - M(z)))}{\det(I - (\mu - B)^{-1}(\mu - M(z)))},
\quad z \in \rho(\wt A) \cap \C_\pm.
        \ee
If $0\in \rho(B')\cap \rho(B)$, then we can put  $\mu = 0$ in
\eqref{4.16}.
\ec
\begin{proof}
(i) This statement follows immediately from Proposition \ref{IV.8}(i).

(ii) Since $A_0=A_0^*$, it is almost solvable. Therefore, it
follows from  Proposition \ref{prop3.3} and \eqref{4.15} that  the
system  $\{\wt A',\wt A,A_0\}$ is jointly almost solvable. By
Proposition \ref{prop2.11}(i),  there  exists a regular boundary
triplet $\wt \Pi$ for $\{\wt A',\wt A,A_0\}$. Finally, Proposition
\ref{prop2.9} yields the inclusions  $\{\wt A',\wt A\} \in \gotD^{\wt \Pi}$,
$\{\wt A',A_0\} \in \gotD^{\wt \Pi}$ and $\{\wt A,A_0\} \in
\gotD^{\wt \Pi}$.

(iii) As in the proof of Proposition  \ref{IV.8}  we introduce the boundary triplet $\wt \Pi =
\{\cH,\wt \gG_0,\wt \gG_1\}$ where $\wt \gG_1 := \gG_0$ and $\wt
\gG_0 := -(\gG_1 + \mu\gG_0)$. Clearly,  $\wt A' = A_{\wt B'}$,
$\wt B'  := (\mu - B')^{-1}$ and $\wt A = A_{\wt B}$, $\wt B =
(\mu -B)^{-1}$ and $A_0 = A_\bO$ where $\bO$ is the zero operator.
Hence the boundary triplet $\wt \Pi$ is regular for the set of operators  $\{\wt A',\wt
A,A_0\}$. Combining the chain rule (cf. \eqref{3.15})
with  Proposition \ref{IV.8}(iii) we arrive at \eqref{4.16}.
To complete the proof it remains to  apply Proposition \ref{III.5}.
\end{proof}

\section{Perturbation determinants and trace formulas}\la{sec.V}

\subsection{Pairs of selfadjoint extensions}\la{sec.V.1}

A spectral shift function has originally been introduced by I.M.
Lifshitz in a special case and by M.G. Krein \cite{K53} in the
general case. Namely, Krein \cite{K53} proved that for a pair
$\{H' = H + V, H\}$ of selfadjoint operators with $V\in
\gotS_1(\gotH)$ there exists  a unique real-valued function
$\xi(\cdot) \in L^1({\R})$ such that the following trace formula
holds
\be\la{5.1A}
\tr\bigl((H'-z)^{-1}-(H-z)^{-1}\bigr) =
-\int_{\R}\frac{\xi(t)}{(t-z)^2}dt, \quad z\in\rho(H')\cap\rho(H).
\ee
Formula \eqref{5.1A} has been extended in \cite{K62a} to a pair of
selfadjoint operators $\{H',H\}$ which are only resolvent
comparable, that is, $(H - \zeta)^{-1} - (H_0 -\zeta)^{-1} \in
\gotS_1(\gotH)$ for some $\zeta \in \rho(H') \cap \rho(H)$. In this case formula
\eqref{5.1A} remains valid. However,  one has only $\xi(\cdot)\in
L^1\bigl({\mathbb R},\tfrac{1}{1+t^2}dt\bigr)$ which yields that the spectral shift
function $\xi(\cdot)$ is not uniquely defined by \eqref{5.1A}: alongside
with $\xi(\cdot)$ any function $\xi(\cdot)+c,\ c\in{\mathbb R}$,
satisfies \eqref{5.1A} too. First we show that the converse is
also true.
\bl\la{IV.1}
Let $H$ and $H_0$ be selfadjoint operators which are resolvent
comparable. Assume that  there exist  real-valued functions $\wt
\xi(\cdot), \xi(\cdot)  \in L^1(\R,\tfrac{1}{1+t^2}dt)$ such that
the trace formula \eqref{5.1A} holds with both $\xi(\cdot)$ and
$\wt \xi(\cdot)$. Then $\wt \xi(t)- \xi(t) = c$ for a.e. $t
  \in \R$  where $c$ is a real constant.
\el
\begin{proof}
Let $\eta(t) := \wt \xi(t)-\xi(t)$, $t \in \R$. Then
\be\label{5.2A}
\int^{\infty}_{-\infty}\frac{\eta(t) dt}{(t-z)^2}=0,   \qquad
z\in{\mathbb C}_+\cup{\mathbb C}_-,
\ee
and  $\eta(\cdot) \in L^1(\R,\tfrac{1}{1+t^2}dt)$. We set
\be\label{5.3A}
{\cP}_{\eta}(z,{\overline z}) :=
\frac{1}{\pi}\int\frac{y \;\eta(t)dt}{|t-z|^2}=\frac{1}{2i\pi}\int
\bigl(\frac{1}{t-z}-\frac{1}{t-{\overline z}}\bigr)\eta(t)dt,
\ee
where $z=x+iy \in \C_\pm$. Differentiating ${\cP}_{\eta}(z,{\overline
z})$ with respect to $z$ and $\overline z$ and taking \eqref{5.2A}
into account we get
\bed
\frac{\partial}{\partial z}{\mathcal P}_{\eta}(z,{\overline z})=
\frac{\partial}{\partial{\overline z}}{\mathcal
P}_{\eta}(z,{\overline z})=0.
\eed
Thus, ${\cP}_{\eta}(z,{\overline z})$ is holomorphic
and anti-holomorphic in ${\C}_+\cup{\C}_-$.  Hence
${\cP}_{\eta}(z,{\overline z})=a= {\rm const.}$, $z \in{\C}_+$.
Applying the Fatou theorem to \eqref{5.3A} we get
\bed
\eta(t) = {\mathcal P}(t+i0,t-i0)= a= \overline a ={\rm const.}
\eed
for a.e. $t \in \R$.
\end{proof}

In the sequel we need the following  technical  lemma.
\bl\la{IV.2}
Let $\Pi = \{\cH,\gG_0,\gG_1\}$ be a  boundary triplet for $A^*$
and  $M(\cdot)$  the corresponding Weyl function. Let also $B$ be
a maximal accumulative operator, in particular, a selfadjoint
operator. Then the following statements are true:

\item[\;\;\rm(i)] If $V_+ \in \gotS_1(\cH)$ and $V_+\ge 0$, then
there exist  a constant $c_+ > 0$ and a non-negative function
$\xi_+(\cdot) \in L^1(\R,\tfrac{1}{1+t^2}dt)$ such that the
following representation holds
\be\la{4.4}
\det\left(I + V_+(B - M(z))^{-1}\right)= c_+
\exp\left\{\frac{1}{\pi}\int_R \left(\frac{1}{t-z}
-\frac{t}{1+t^2}\right)\xi_+(t) dt\right\},
\ee
$z \in \C_+$.

\item[\;\;\rm(ii)] If $V = V^* \in \gotS_1(\cH)$, then there exist
a constant $c > 0$ and a real-valued  function $\xi(\cdot) \in
L^1(\R,\tfrac{1}{1+t^2}dt)$ such that the  representation
\be\la{4.4a}
\det\left(I + V(B - M(z))^{-1}\right)= c\;
\exp\left\{\frac{1}{\pi}\int_R \left(\frac{1}{t-z}
-\frac{t}{1+t^2}\right)\xi(t) dt\right\}
\ee
holds for $z \in \C_+$.
\el
\begin{proof}
(i) We introduce the operator-valued Nevanlinna function
\bed
\gO_+(z) := I + \sqrt{V_+}(B - M(z))^{-1}\sqrt{V_+}, \quad z \in \C_+.
\eed
Since $\gO_+(z)$ is $m$-dissipative for $z \in \C_+$ and $0 \in
\rho(\gO_+(z))$, $z \in \C_+$, the operator-valued function
$\log(\gO_+(z))$ is well-defined by \eqref{2.17} for $z \in \C_+$.
Since $\log(\gO_+(z)) \in \gotS_1(\cH)$  \cite[Theorem 2.8]{GM00b}
guarantees the existence of a measurable function $\Xi_+(\cdot):
\R \longrightarrow \gotS_1(\cH)$ such that $\Xi(t) \ge 0$ for a.e.
$t \in \R$ and $\tr(\Xi(\cdot)) \in L^1(\R,\tfrac{1}{1+t^2}dt)$.
Moreover,  the following representation holds
\bed
\log(\gO_+(z)) = \gO_+ +
\frac{1}{\pi}\int_\R \left(\frac{1}{t-z} - \frac{t}{1+t^2}\right)\Xi_+(t)dt,
\quad z \in \C_+,
\eed
where the integral is taken in the weak sense and $\gO_+ =
\gO_+^* \in \gotS_1(\cH)$.  Setting  $\xi_+(t) := \tr(\Xi_+(t))$,
$t \in \R$, we define a non-negative function $\xi_+(\cdot)$
satisfying $\xi_+(\cdot) \in L^1(\R,\tfrac{1}{1+t^2}dt)$ and such
that
\bed
\tr(\log(\gO_+(z))) = \tr(\gO_+) + \frac{1}{\pi}\int_\R
\left(\frac{1}{t-z} - \frac{t}{1+t^2}\right)\xi_+(t) dt, \quad z
\in \C_+.
\eed
Taking into account \eqref{2.20} we  verify \eqref{4.4} with  $c_+ :=
\exp\{\tr(\gO_+)\} > 0$.

(ii) Using the decomposition $V = V_+ - V_-$, $V_\pm \ge 0$, we
set $B_- := B - V_-$.  It follows from the identity
$$
\left(I + V(B - M(z))^{-1}\right) \left(I + V_-(B_- -
M(z))^{-1}\right) =  I + V_+(B_- - M(z))^{-1}
$$
that
\be\label{4.20}
\det\left(I + V(B - M(z))^{-1}\right) = \frac{\det\left(I +
V_+(B_- - M(z))^{-1}\right)} {\det\left(I + V_-(B_- -
M(z))^{-1}\right)},\qquad z \in \C_+.
\ee
Combining \eqref{4.20} with  the representation \eqref{4.4} we
arrive at  \eqref{4.4a}.
\end{proof}

Lemma \ref{IV.2} implies the following representation theorem  for
a perturbation determinant.
\bt\la{V.30}
Let $\wt A',\wt A \in \Ext_A$ be selfadjoint extensions of $A$ such
that the pair $\{\wt A',\wt A\}$ is resolvent comparable, that is,
condition \eqref{1.103} is satisfied. Then the following holds:

\item[\;\;\rm (i)] There exists  a boundary triplet $\Pi =
\{\cH,\gG_0,\gG_1\}$ for $A^*$, which can be chosen regular for $\{\wt A',\wt A\}$,
such that $\{\wt A',\wt A\} \in \gotD^\Pi$.

\item[\;\;\rm (ii)] If $\{\wt A',\wt A\} \in \gotD^\Pi$, then
there exist a real-valued function $\xi(\cdot) \in
L^1\bigl(\R,\tfrac{1}{1+t^2}dt\bigr)$ and a  constant $c=\bar c$
such that the following representation
\be\label{5.37}
\gD^\Pi_{\wt A'/\wt A}(z) =
c\;\exp\left\{\frac{1}{\pi}\int_\R\left(\frac{1}{t-z}-\frac{t}{1+t^2}\right)\xi(t)dt\right\},
\quad z \in \C_\pm.
\ee
holds. Moreover, there exists  an integer $n \in \Z$ such that
\be\label{5.38}
\xi(t) = \lim_{\varepsilon\to +0}\IM(\log(\gD^\Pi_{\wt A'/\wt
A}(t+i\varepsilon))) + n\pi, \quad \text{for a.e.}\ t \in \R.
\ee

\item[\;\;\rm (iii)]
The trace formula
\be\label{5.39}
\tr\left((\wt A' -z)^{-1} -(\wt A - z)^{-1}\right) =
-\frac{1}{\pi}\int^{\infty}_{-\infty}\frac{\xi(t)}{(t-z)^2}dt,
\qquad z \in \C_\pm.
\ee
is valid where $\xi(\cdot)$ is given by (ii). Any real-valued function $\wt \xi(\cdot)$ satisfying
\eqref{5.39} differs from $\xi(\cdot)$ by an additive real constant.
\et
\begin{proof}
(i) This statement follows immediately is from  Corollary \ref{IV.40}(i).

(ii) At first, let us assume that $\Pi$ is regular for $\{\wt
A',\wt A\}$. Further, we assume that $\wt A' = A_{B'}$ and $\wt A = A_B$,
where $B'$ and $B$ are bounded selfadjoint operators.
By  Lemma \ref{IV.2}(ii) there exist  a  real constant $c$
and a real-valued function $\xi(\cdot) \in
L^1(\R,\tfrac{1}{1+t^2}dt)$ such that the following representation
\bea\label{5.9}
\lefteqn{
\gD^\Pi_{\wt A'/\wt A}(z) =
\det\left(I + (B'-B)(B - M(z))^{-1}\right) } \\
      & & = c\;
\exp\left\{\frac{1}{\pi}\int_R \left(\frac{1}{t-z}
-\frac{t}{1+t^2}\right)\xi(t) dt\right\}
\nonumber
\eea
holds for $z \in \C_\pm$. This proves \eqref{5.37}. If $\Pi$ is not regular, one obtains
\eqref{5.37}  by combining \eqref{5.9} with Proposition
\ref{III.5}.

It follows from \eqref{5.37}  that there exists  an integer  $n
\in \Z$ such that
\be\la{4.9}
\log(\gD^\Pi_{\wt A'/\wt A}(z)) = \log(|c|) + \frac{1}{\pi}
\int_\R\left(\frac{1}{t - z} - \frac{t}{1 + t^2}\right) \xi(t)dt +
in\pi,
 \ee
for $z \in \C_+$. At first we find that this representation is
true in a neighborhood of a point $z \in \C_+$ such that the value
$\gD^\Pi_{\wt A'/\wt A}(z)$ does not belong to  the negative
imaginary semi-axis. By analytical continuation it holds for $z
\in \C_+$. Taking the imaginary part of both sides of \eqref{4.9} and applying
Fatou's theorem we arrive at \eqref{5.38}.

(iii) From (ii) and Proposition \ref{III.8}(v) we immediately
obtain that the trace formula \eqref{5.39} holds
choosing $\wt \xi(t) := \xi(t)$, $t \in \R$. Applying Lemma \ref{IV.1} we get that any real-valued function
$\wt \xi(\cdot)$ obeying \eqref{5.39} differs from $\xi(\cdot)$ by a real constant.
\end{proof}

Any function $\wt \xi(\cdot) \in
L^1\bigl(\R,\tfrac{1}{1+t^2}dt\bigr)$ obeying
\eqref{5.39} is called the spectral shift function of the
pair $\{\wt A',\wt A\}$ of extensions of $A$. Theorem \ref{V.30}(iii)
shows that each pair $\{\wt A',\wt A\}$ admits many spectral shift functions.
\bt\label{th5.1A}
Let  $\wt A_2,\wt A_1, H\in \Ext_A$  be selfadjoint extensions of $A$.
Assume that the pairs $\{\wt A_1,H\}$ and $\{\wt A_2,H\}$ are resolvent comparable, that is,
conditions of type  \eqref{1.103} are satisfied. If for some
$\gl_0\in \rho(\wt A_2)\cap \rho(\wt A_1)\cap\R$ the condition
\be\label{5.12}
(\wt A_2 - \gl_0)^{-1} \le (\wt A_1 - \gl_0)^{-1}
\ee
is valid, then the
real-valued spectral shift functions $\xi_1(\cdot)$ and $\xi_2(\cdot)$ of the pairs
$\{\wt A_1,H\}$ and $\{\wt A_2,H\}$, respectively, can be chosen such that
$\xi_1(t) \le \xi_2(t)$ holds for a.e. $t \in \R$.
\et
\begin{proof}
By Corollary \ref{IV.40}(ii) there is a boundary triplet $\Pi
= \{\cH,\gG_0,\gG_1\}$ for $A^*$, which is regular for
$\{\wt A',\wt A,H\}$ and satisfies $\gl \in \rho(A_0)$, such that
$\{\wt A',H\} \in \gotD^\Pi$, $\{\wt A,H\} \in \gotD^\Pi$ and $\{\wt
A',\wt A\} \in \gotD^\Pi$. By Theorem \ref{V.30}(ii) there are
real-valued functions $\xi_j(\cdot) \in L^2(\R,\frac{1}{1+t^2}dt)$, $j =1,2$,
such that the perturbations determinants $\gD^\Pi_{\wt A_1/H}(z)$
admits the representations
\be\la{5.11}
\gD^\Pi_{\wt A_1/H}(z) = c_1\exp\left\{\frac{1}{\pi}\int_\R
\left(\frac{1}{t-z} - \frac{t}{1+t^2}\right)\xi_1(t)\right\},
\quad z \in \C_\pm,
\ee
$c_1 \in \R$, are valid. Taking into account the chain rule, cf. Proposition
\ref{III.8}(ii), we get
\be\la{5.13}
\gD^\Pi_{\wt A_2/H}(z) = \gD^\Pi_{\wt A_2/\wt A_1}(z)
\gD^\Pi_{\wt A_1/H}(z), \quad z \in \C_\pm.
\ee
Assume that $\wt A_j = A_{B_j}$ where $B_j$ are bounded operators in
$\cH$. Since $\gl_0 \in \rho(\wt A_2) \cap
\rho(\wt A_1) \cap \rho(A_0)$,   Proposition \ref{t1.12}(i) yields
$0 \in \rho(B_2 - M(\gl_0))$ and $0 \in  \rho(B_1 - M(\gl_0))$. Hence
\bead
\lefteqn{
0 \le (\wt A_1 - \gl_0)^{-1} - (\wt A_2 - \gl_0)^{-1} }\\
& & = \gga(\gl_0)\left((B_1 - M(\gl_0))^{-1} - (B_2 -
M(\gl_0))^{-1}\right)\gga(\gl_0)^*.
\eead
It follows from \eqref{5.12}  that $(B_1 - M(\gl_0))^{-1}
- (B_2 - M(\gl_0))^{-1} \ge 0$. Next we
introduce a new  boundary triplet $\wt
\Pi := \{\cH,\wt \gG_0,\wt \gG_1\}$ for $A^*$ by setting
\be\la{4.10a}
\wt \gG_1 := - \gG_0, \quad \wt \gG_0 := \gG_1 -
M(\gl_0)\gG_0.
\ee
Clearly, $\wt A_j = A_{\wt B_j} = A^*\upharpoonright\ker(\wt \gG_1 -
\wt B_j \wt\gG_0)$, where
$\wt B_j := -(B_j - M(\gl_0))^{-1}$, $j =1,2$.
Notice that $\{\wt A_2,\wt A_1\} \in \gotD^{\wt \Pi}$.  By
Definition \ref{I.1} one has
\bed
\gD^{\wt \Pi}_{\wt A_2/\wt A_1}(z) = \det(I + (\wt B_2 - \wt B_1)(\wt B_1 - \wt M(z))^{-1}),
\quad z\in \C_\pm,
\eed
where $\wt M(\cdot)$ is the Weyl function corresponding to the
boundary triplet $\wt \Pi$.  Since $\wt B_2 - \wt B_1 \ge 0$,  Lemma
\ref{IV.2}(i) implies  existence of  a non-negative function $
\xi(\cdot) \in L^1(\R,\tfrac{1}{1 + t^2}dt)$ such that the
representation
\be\la{5.16}
\gD^{\wt \Pi}_{\wt A_2/\wt A_1}(z) = \wt c \;
\exp\left\{\frac{1}{\pi}\int_\R \left(\frac{1}{t-z} -
\frac{t}{1+t^2}\right)\xi(t)dt\right\}, \quad z \in \C_+,
\ee
$\wt c > 0$, holds. By Proposition \ref{III.5} there is a real
constant $\wt c_{21}$ such that $\gD^\Pi_{\wt A_2/\wt A_1}(z) =
\wt c_{21}\;\gD^{\wt \Pi}_{\wt  A_2/\wt A_1}(z)$, $z \in \C_\pm$, which yields
\be\la{5.14}
\gD^\Pi_{\wt A_2/\wt A_1}(z) = c_{21} \; \exp\left\{\frac{1}{\pi}\int_\R \left(\frac{1}{t-z} -
\frac{t}{1+t^2}\right)\xi(t)dt\right\}, \quad z \in \C_\pm,
\ee
$c_{21} := \wt c_{21}\wt c \in \R$.
Inserting \eqref{5.11} and \eqref{5.14} into \eqref{5.13} we find
\be\la{5.15}
\gD^\Pi_{\wt A_2/H} = c_2 \; \exp\left\{\frac{1}{\pi}\int_\R \left(\frac{1}{t-z} -
\frac{t}{1+t^2}\right)\xi_2(t) dt\right\}, \quad z \in \C_\pm,
\ee
$c_2 := c_{21}c_1 \in \R$, $\xi_2(t) := \xi_1(t) + \xi(t)$, $t \in
\R$. Using Proposition \ref{III.8}(v) one gets from \eqref{5.11} and
\eqref{5.15} that $\xi_1(t)$ and $\xi_2(t)$ are spectral shift
functions for the pairs $\{\wt A_1,H\}$ and $\{\wt
A_2,H\}$. respectively. Since $\xi(t) \ge 0$ for a.e. $t \in \R$
we obtain $\xi_1(t) \le \xi_2(t)$ for a.e. $t \in \R$.
\end{proof}
\bc
Let $\wt A_2,\wt A_1 \in \Ext_A$ be selfadjoint extensions of $A$.
If the pair $\{\wt A_2,\wt A_1\}$ is resolvent comparable and condition
\eqref{5.12} is satisfied for some $\gl_0 \in \rho(\wt A_2) \cap
\rho(\wt A_1) \cap \R$, then the real-valued spectral shift function $\xi(\cdot)$ of the pair
$\{\wt A_2,\wt A_1\}$ can be chosen such that
$\xi(t) \ge 0$ for a.e. $t \in \R$.
\ec
\begin{proof}
By Corollary \ref{IV.40}(ii) there is a boundary triplet $\Pi$ for $A^*$
which is regular for $\{\wt A_2,\wt A_1\}$ and $\gl_0 \in \rho(A_0)$.
We set $B := \tfrac{B_1 + B_2}{2}$
and $H := A_B$. From $B_2 - B_1 \in \gotS_1(\cH)$ we get $B_2 - B \in
\gotS_1(\cH)$  and $B_1 - B \in \gotS_1(\cH)$. Hence the pairs
$\{\wt A_2, H\}$ and $\{\wt A_1,H\}$ are resolvent comparable.
By Theorem \ref{th5.1A} there are real-valued spectral shift functions
$\xi_2(\cdot)$ and $\xi_1(\cdot)$ of the pairs $\{\wt A_2, H\}$ and
$\{\wt A_1,H\}$, respectively, satisfying $\xi_1(t) \le \xi_2(t)$ for
a.e. $t \in \R$. Setting $\xi(t) := \xi_2(t) - \xi_1(t) \ge 0$, $t \in \R$, we
define a real-valued spectral shift function of the pair $\{\wt A_2,\wt A_1\}$.
\end{proof}

\subsection{Pairs of accumulative extensions}\la{sec.V.2}

We are going to prove a technical lemma which will be important to
prove a new type of trace formula in the following.
\bl\la{IV.3a}
Let $\Pi = \{\cH,\gG_0,\gG_1\}$  be a boundary triplet for $A^*,$
 $M(\cdot)$ the corresponding  Weyl function and let $B$ be a bounded accumulative
operator in $\cH$.

\item[\;\;\rm (i)] If $0 \le V_+ \le |B_I| = -B_I$, $B_I
:=\IM(B)\le 0$ and $V_+ \in \gotS_1(\cH)$, then  the function
$w_+(\cdot):= \det(I + iV_+(B - M(\cdot))^{-1})$ is holomorphic
and contractive in $\C_+.$ Moreover, there exist  a
\emph{non-negative function} $\eta_+(\cdot) \in
L^1(\R,\tfrac{1}{1+t^2}dt)$ and  a  number $\varkappa_+ \in \T$
such that the following  representation holds
  \be\la{4.180}
w_+(z) = \varkappa_+ \exp\left\{\frac{i}{\pi}\int_\R
\left(\frac{1}{t-z} - \frac{t}{1+t^2}\right) \eta_+(t)dt\right\},
\qquad z \in \C_+,
  \ee
where $\eta_+(t) = -\ln(|\det(w_+(t+i0))|)$ for a.e. $t \in \R$, i.e.
$w_+(\cdot)$ is an outer function.

\item[\;\;\rm (ii)] If $V \le |B_I| = -B_I$ and $V \in
\gotS_1(\cH)$, then there exist  a real-valued function
$\eta(\cdot) \in L^1(\R,\tfrac{1}{1+t^2}dt)$ and  a complex number
$\varkappa \in \T$ such that the perturbation determinant
$w(\cdot):= \det(I + iV(B - M(\cdot))^{-1})$ admits the following
representation
\be\la{4.190}
w(z) = \varkappa
\exp\left\{\frac{i}{\pi}\int_\R \left(\frac{1}{t-z} -
\frac{t}{1+t^2}\right) \eta(t)dt\right\}, \qquad  z \in \C_+,
\ee
where $\eta(t) = -\ln(|\det(w(t+i0))|)$  for a.e. $t \in \R$, i.e. $w(\cdot)$ is an outer function.
\el
\begin{proof}
Since $V_+\in \gotS_1$,  it admits the spectral decomposition
$V_+ = \sum_{k\in\N}\mu_k (\cdot,\psi_k)\psi_k$ where $\mu_k \ge
0$, $\{\mu_k\}_{k\in\N}\in l_1$ and $\{\psi_k\}_{k\in\N}$ is an
orthonormal system. We set
\bed
B_0 : = B_R + i(B_I + V_+)
\quad \mbox{and} \quad
B_l := B_0 - i\sum^l_{k=1} \mu_k (\cdot,\psi_k)\psi_k, \quad l \in \N.
\eed
Notice that $B_l = B + i\sum^\infty_{k= l+1}\mu_k(\cdot,\psi_k)\psi_k$ and
$\lim_{l\to\infty}\|B_l - B\|_{\gotS_1} = 0$ where $\|\cdot\|_{\gotS_1}$ denotes the trace norm.

By assumption,  $B_I + V_+ \le 0$ the operator  $B_0$ is
$m$-accumulative. Further let us introduce the operator-valued
function
\bed
W_l(z) := I + i\mu_lP_l(B_l - M(z))^{-1}P_l, \quad P_l :=
(\cdot,\psi_l)\psi_l,
\quad z \in \C_+, \quad l \in \N.
\eed
We set $w_l(z) := \det(W_l(z))$, $z \in \C_+$, $l\in \N$. Clearly,
\be\la{4.22}
w_l(z) = 1 + i\mu_l((B_l - M(z))^{-1}\psi_l,\psi_l), \quad z \in
\C_+, \quad l \in \N.
\ee
Further, we set
\bed
\gD_{B_{l-1}/B_l}(z) := \det\left(I + (B_{l-1} - B_l)(B_l -
M(z))^{-1}\right), \quad z \in \C_+, \quad l \in \N.
\eed
Since $B_{l-1} - B_l = i\mu_l(\cdot,\psi_l)\psi_l$, $l \in \N$,
we get $\gD_{B_{l-1}/B_l}(z) = w_l(z)$, $z \in \C_+$, $l \in \N$.
Due to the chain rule we obtain
\be\la{4.19a}
\gD_{B_0/B_l}(z) = \prod^l_{k=1}\gD_{B_{k-1}/B_k}(z) =
\prod^l_{k=1}w_k(z), \quad z \in \C_+, \quad l  \in \N.
\ee
Since $B_0 - B =V_+$ we have
$\det(W_+(z)) = \gD_{B_0/B}(z)$. By $\lim_{l\to\infty}\|B_l-B\|_{\gotS_1} = 0$ we get
from \eqref{4.19a} that
\be\la{4.20b}
w_+(z) := \det(W_+(z)) = \gD_{B_0/B}(z) =
\lim_{l\to\infty}\gD_{B_0/B_l}(z) =
\lim_{l\to\infty}\prod^l_{k=1}w_k(z)
\ee
for $z \in \C_+$. Note, that alongside with $W_+(\cdot),$  the
operator function  $W_l(\cdot),$ $l \in \N,$ is holomorphic and
contractive in $\C_+$. Hence $w_l(z) = \det(W_l(z))$, $l \in \N$, is  holomorphic and
contractive in $\C_+$,  thus $w_l(\cdot) \in H^\infty(\C_+)$.
Next we set
\bed
\theta_l(z) := \gD_{B_l/B_{l-1}}(z) := 1 - i\mu_l\left((B_{l-1} -
M(z))^{-1}\psi_l,\psi_l\right), \quad z \in \C_+, \quad l \in \N.
\eed
Notice that $\theta_l(z)$ is well defined since $B_{l-1}$ is accumulative.
Moreover, one has
\be\label{5.25}
\theta_l(z)w_l(z) =  w_l(z)\theta_l(z) = 1, \quad z \in \C_+,
\quad l \in \N.
\ee
Since $B_{l-1}$ is accumulative,  $\IM((B_{l-1} - M(z))^{-1}>0$,
hence
\bed
\RE(\theta_l(z)) = 1 + \mu_l\IM\left(\left((B_{l-1} -
M(z))^{-1}\psi_l,\psi_l\right)\right)>1, \quad z \in \C_+, \quad l
\in \N.
\eed
Combining this inequality with \eqref{5.25} we get
\bed
\RE(w_l(z)) = \frac{1}{|\theta_l(z)|^2}\RE(\theta_l(z)) >
\frac{1}{|\theta_l(z)|^2}>0, \quad z \in \C_+, \quad l \in \N.
\eed
By  \cite[Corollary II.4.8 ]{Gar81},  for each $l \in \N$ the
function $w_l(z)$ is an outer function.  According to
\eqref{4.15a} it admits  the representation
\bed
w_l(z) = \varkappa_l\;
\exp\left\{\frac{i}{\pi}
\int_R\left(\frac{1}{t-z} - \frac{t}{1+t^2}\right)\eta_l(t)dt\right\},
\quad \varkappa_l \in \T,
\eed
for $z \in \C_+$,  $l \in \N$, where $\eta_l(t) :=
-\ln(|w_l(t+i0)|)$, $t\in \R$. Hence
\be\label{5.26}
\prod^l_{k=1}w_l(z) = \\
\prod^l_{k=1}\varkappa_k\;
\exp\left\{\sum^l_{k=1}\frac{i}{\pi} \int_\R\left(\frac{1}{t-z}-\frac{t}{1+t^2}\right)\eta_k(t)dt\right\}
\ee
for $z \in \C_+$ and $l \in \N$. Now  \eqref{4.19a} yields
\bed
0 \le \left|\gD_{B_0/B_l}(z)\right| =
\exp\left\{-\sum^l_{k=1}\frac{1}{\pi}\int_R\frac{y}{(t-x)^2+y^2}\eta_k(t)dt\right\}
\eed
where $z = x+iy$. Since $w_k(z)$, $z \in \C_+$, is contractive, we
get $\eta_k(t) \ge 0$ for a.e. $t \in \R$. Combining
Corollary \ref{A.4} with \eqref{4.22} we obtain
\bed
-\int_R \ln(|w_k(t+i0)|)\frac{1}{1+t^2}dt \le 2\pi |w_k(i)-1| \le
2\pi\mu_k\frac{1}{\|\IM(M(i))\|}, \quad k \in \N.
\eed
Since
$\{\mu_k\}_{k\in\N}\in l_1$,  the Beppo Levi theorem yields
\bed
0\le \eta_+(t):= \sum_{k\in\N} \eta_k(t) = - \sum_{k\in\N}\ln(|w_k(t+i0)|)
\in L^1(\R,\tfrac{1}{1+t^2}dt),
\eed
and
\be\label{5.27}
\frac{i}{\pi}
\int_R\left(\frac{1}{t-z} - \frac{t}{1+t^2}\right)\eta_+(t)dt
= \sum^\infty_{k=1}\frac{i}{\pi} \int_R\left(\frac{1}{t-z} -
\frac{t}{1+t^2}\right)\eta_k(t)dt.
\ee
It follows from  \eqref{5.26} and \eqref{5.27} that
\bead
\lefteqn{
w_+(z) = \lim_{l\to\infty}\prod^l_{k=1}w_k(z) = }\\
& &
\left(\lim_{l\to\infty}\prod^l_{k=1}\varkappa_k\right)
\exp\left\{\frac{i}{\pi} \int_R\left(\frac{1}{t-z} -
\frac{t}{1+t^2}\right)\eta_+(t)dt\right\}, \quad z \in \C_+,
\eead
where $w_+(\cdot) = \det(W_+(\cdot))$  and $W_+(\cdot)$ is given
by \eqref{4.18b}. Hence the limit
$\varkappa_+ := \lim_{l\to\infty}\prod^l_{k=1}\varkappa_k \in \T$
exists and we   arrive at the representation \eqref{4.180}. Thus,
$w_+(\cdot)$ is the outer function and  $\eta_+(t) =
-\ln(|\det(w_+(t+i0))|)$ for a.e. $t \in \R$, see Appendix \ref{App.III}.

(ii) Let $V = V_+ - V_-$, $V_\pm \ge 0$. We set $B_- := B - iV_-$.
Since $(B_-)_I = B_I - V_- \le 0$, the operator $B_-$ is
accumulative. According to \eqref{4.20}
\bed
\det(I + iV(B - M(z))^{-1}) =
\frac{\det(I + iV_+(B_- - M(z))^{-1})}{\det(I + iV_-(B_- - M(z))^{-1})},
\quad z \in \C_+.
\eed
The assumption $V \le -B_I$ yields  $0 \le V_+ \le -B_I + V_- =
-(B_-)_I$. Applying (i) we get the existence of a complex number
$\varkappa_+ \in \T$ and a non-negative function $\eta_+(\cdot)
\in L^1(\R,\tfrac{1}{1+t^2}dt)$ such that the representation
\bed
\det(I + iV_+(B_- - M(z))^{-1}) = \varkappa_+
\exp\left\{\frac{i}{\pi}\int_\R \left(\frac{1}{t-z} - \frac{t}{1+t^2}\right)\eta_+(t)dt\right\},
\eed
is valid for $z \in \C_+$. From $0 \le V_- \le -B_I + V_- =
(B_-)_I$ and (i) we get the existence of a complex number
$\varkappa_- \in \T$ and a non-negative function $\eta_-(\cdot)
\in L^1(\R,\tfrac{1}{1+t^2}dt)$ such that the representation
\bed
\det(I + iV_-(B_- - M(z))^{-1}) = \varkappa_-
\exp\left\{\frac{i}{\pi}\int_\R \left(\frac{1}{t-z} - \frac{t}{1+t^2}\right)\eta_-(t)dt\right\},
\eed
holds for $z \in \C_-$. Setting $\varkappa :=
\varkappa_+/\varkappa_- \in \T$ and $\eta(t) := \eta_+(t) -
\eta_-(t)$, $t \in \R$, we arrive at  the representation
\eqref{4.190}.
\end{proof}

Next we apply Lemma  \ref{IV.3a}  to a pair $\{\wt A, H\}$  of
extensions of $A$ where $\wt A$ is maximal accumulative  and $H$
is selfadjoint.
\bt\la{V.60}
Let
$\wt A, H \in \Ext_A$, $H = H^*$  and let  $\wt A$   be a
$m$-accumulative extension. If the condition
\be\la{4.14}
(\wt A - \zeta)^{-1} - (H - \zeta)^{-1} \in
\gotS_1(\gotH), \quad \zeta \in \rho(\wt A) \cap \rho(H).
\ee
is satisfied, then the following holds:

\item[\;\;\rm (i)]
There exists  a boundary triplet $\Pi =
\{\cH,\gG_0,\gG_1\}$ of $A^*$, which can be chosen regular for $\{\wt A,H\}$,
such that $\{\wt A,H\} \in \gotD^\Pi$.

\item[\;\;\rm (ii)]
If $\{\wt A,H\} \in \gotD^\Pi$, then there
exists a complex constant $c$ and a complex-valued  function
$\go(\cdot) \in L^1(\R,\tfrac{1}{1+t^2}dt)$ satisfying
$\IM(\go(t)) \le 0$ for a.e. $t \in \R$ and  such that the representation
\be\label{5.57}
\gD^\Pi_{\wt A/H}(z)= c\;
\exp\left\{\frac{1}{\pi}\int_\R\left(\frac{1}{t-z} -
\frac{t}{1+t^2}\right)\go(t)dt\right\},\quad z\in \C_+,
\ee
is valid.

\item[\;\;\rm(iii)] The trace formula
\be\la{4.16a}
\tr\left((\wt A -z)^{-1} - (H - z)^{-1}\right) =
- \frac{1}{\pi}\int_\R \frac{\go(t)}{(t-z)^2}dt,\qquad z \in \C_+,
\ee
holds where $\go(\cdot)$ is given by {\rm (ii)}.
\et
\begin{proof}
(i) Since  $H=H^*$ is almost solvable and  $\rho(\wt A)\cap
\rho(H)\supset \C_+$
the existence of a regular  boundary triplet $\Pi$ for $A^*$ is
implied by   Proposition \ref{IV.30} and condition \eqref{4.14}.

(ii) Let us assume that $\Pi$ is regular for $\{\wt A,H\}$. If
$\wt A = A_B$ and $H = A_C$, where $B,C \in [\cH]$, then, by
Proposition \ref{prop2.1}, the  operator $B$ is accumulative and
$C=C^*$.  From $B- C \in \gotS_1(\cH)$, cf. Proposition  \ref{prop2.9}(ii), we get $B_R - C \in
\gotS_1(\cH)$, $B_R := \RE(B)$ and $B_I \in \gotS_1(\cH)$, $B_I :=
\IM(B) \le 0$.   Consider the selfadjoint extension $\wt A_R :=
A_{B_R}$. Hence,  $\{\wt A, \wt A_R\} \in \gotD^\Pi$
and $\{\wt A_R,H\} \in \gotD^\Pi$ as well as the following relation
\be\la{4.18a}
\gD^\Pi_{\wt A/H}(z) = \gD^\Pi_{\wt A/\wt A_R}(z)\gD^\Pi_{\wt A_R/H}(z), \quad z \in \C_+.
\ee
holds. Applying  Theorem \ref{V.30}(ii)
we get  the existence of a  real number $c_R$ and a
real-valued function $\xi(\cdot) = \overline \xi(\cdot) \in
L^1(\R,\tfrac{1}{1+t^2}dt)$ such that
\be\la{4.21}
\gD^\Pi_{\wt A_R/H}(z) = c_R
\exp\left\{\frac{1}{\pi}\int_R \left(\frac{1}{t-z} -
\frac{t}{1+t^2}\right)\xi(t)dt\right\}, \qquad z \in \C_+,
       \ee
Let us consider $\gD^\Pi_{\wt A_R/\wt A}(z)$, $z \in \C_+$. We have
\bed
\gD^\Pi_{\wt A_R/\wt A}(z) = \det(I -iB_I(B - M(z))^{-1}), \quad z \in \C_+.
\eed
By Lemma \ref{IV.3a}(i) there exists  a complex number $\varkappa
\in \T$ and a non-negative function $\eta_+(\cdot) \in
L^1(\R,\tfrac{1}{1+t^2}dt)$ such that the representation
\be\la{4.300}
\gD^\Pi_{\wt A_R/\wt A}(z) = \varkappa\;
\exp\left\{\frac{i}{\pi}\int_\R \left(\frac{1}{t-z} -
\frac{t}{1+t^2}\right) \eta_+(t) dt\right\}, \quad z\in \C_+,
\ee
holds. Hence,
\be\la{4.300a}
\gD^\Pi_{\wt A/\wt A_R}(z) = \overline{\varkappa}\;
\exp\left\{-\frac{i}{\pi}\int_\R \left(\frac{1}{t-z} -
\frac{t}{1+t^2}\right) \eta_+(t) dt\right\}, \quad z\in \C_+.
\ee
Setting $\go(t) := \xi(t) - i\eta_+(t)$, $t \in \R$, and
inserting \eqref{4.21} and \eqref{4.300a} into \eqref{4.18a}  we
arrive at  the representation \eqref{5.57} provided that $\Pi$ is
regular for $\{H,\wt A\}$. Notice $\IM(\go(t)) \le 0$ for a.e. $t \in
\R$. Finally, applying Proposition \ref{III.5} we verify the representation \eqref{5.57} for any
boundary triplet  $\Pi$ such that $\{\wt A,H\} \in \gotD^\Pi$.
Moreover, the condition $\IM(\go(t)) \le 0$ for a.e. $t \in \R$
remains valid.

(iii) The trace formula \eqref{4.16a} follows immediately from
\eqref{5.57} and Proposition \ref{III.8}(v).
\end{proof}

In the following a complex-valued function $\go(\cdot) \in
L^1(\R,\tfrac{1}{1+t^2}dt)$ such that the trace formula \eqref{4.16a} takes place
is called a spectral shift function for the pair $\{\wt A, H\}$
consisting of an accumulative and a selfadjoint extension.
\bt\la{IV.3}
Let $\wt A_1,\wt A_2 \in \Ext_A$ be $m$-accumulative and let $H \in
\Ext_A$ be a selfadjoint extension such that both pairs $\{\wt A_1,H\}$ and $\{\wt A_2,H\}$ be
are resolvent comparable, that is, condition \eqref{4.14} is
satisfied for both pairs. Then the following holds:

\item[\;\;\rm(i)]
If $\gl_0 \in \rho(\wt A_1) \cap \rho(\wt A_2)\cap
\R$ and the inequality
\be\la{4.18}
\RE((\wt A_2 - \gl_0)^{-1}) \le \RE((\wt A_1 - \gl_0)^{-1})
\ee
holds, then there are complex-valued spectral shift functions $\go_1(\cdot)$
and $\go_2(\cdot)$ of the pairs $\{\wt A_1,H\}$ and $\{\wt A_2,H\}$, respectively,
such that $\RE(\go_1(t)) \le \RE(\go_2(t))$ for a.e. $t \in \R$.

\item[\;\;$\rm (ii)$]
If $\gl_0 \in \rho(\wt A_1) \cap \rho(\wt
A_2)\cap \R$ and the inequality
\be\la{4.19}
\IM((\wt A_2 - \gl_0)^{-1}) \le \IM((\wt A_1 - \gl_0)^{-1}),
\ee
then there are complex-valued spectral shift functions $\go_1(\cdot)$
and $\go_2(\cdot)$ of the pairs $\{\wt A_1,H\}$ and $\{\wt A_2,H\}$, respectively,
such that $\IM(\go_1(t)) \le \IM(\go_2(t)) \le 0$ for a.e. $t \in \R$.

\item[\;\;$\rm (iii)$]
If both  conditions \eqref{4.18} and
\eqref{4.19} are valid, then there are complex-valued spectral shift functions
$\go_1(\cdot)$ and $\go_2(\cdot)$
of the pairs $\{\wt A_1,H\}$ and $\{\wt A_2,H\}$, respectively,
such that both inequalities $\RE(\go_1(t)) \le
\RE(\go_2(t))$ and $\IM(\go_1(t)) \le \IM(\go_2(t))\le 0$ are satisfied
for a.e. $t \in \R$.
\et
\begin{proof}
(i) By Proposition \ref{II.14}, the family $\{\wt A_2,\wt
A_1,H\}$ is jointly almost solvable with respect to $\gl_0$. Taking
into account Proposition \ref{prop2.11} there is a boundary
triplet $\Pi$ for $A^*$ which is regular for $\{\wt A_2,\wt
A_1,H\}$ and $\gl_0 \in \rho(A_0)$. Hence there exist bounded
operators $C,B_2,B_1 \in [\cH]$ such that $H= A_C$, $\wt A_j =
A_{B_j}$,  $j = 1,2$. By Proposition \ref{prop2.1}, $C=C^*$
and the operators $B_2,B_2$ are accumulative. Moreover,
Proposition \ref{prop2.9}(ii) and condition \eqref{4.14} yield
$B_j-C \in \gotS_1(\cH)$, $j =1,2,$ and $B_2-B_1 \in
\gotS_1(\cH)$. Hence $\{\wt A_j,H\} \in \gotD^\Pi$, $j =1,2$, and
$\{\wt A_2,\wt A_1\} \in \gotD^\Pi$. Notice that $B_{j,I} := \IM(B_j)
\in \gotS_1(\cH)$, $j =1,2$, $j =1,2$.

It follows from the Krein-type formula  \eqref{2.30} that
\bea\label{5.40}
\lefteqn{
(\wt A_1 - \gl_0)^{-1} - (\wt A_2 - \gl_0)^{-1} =}\\
& &
\gga(\gl_0)\left((B_1 - M(\gl_0))^{-1} - (B_2 -
M(\gl_0))^{-1}\right)\gga(\gl_0)^*.
\nonumber
\eea
Hence
\bed
\RE((\wt A_1 - \gl_0)^{-1}) - \RE((\wt A_2 - \gl_0)^{-1}) =
\gga(\gl_0)\left(\RE(\wt B_2) - \RE(\wt B_1) \right) \gga(\gl_0)^*
\eed
where $\wt B_j := -(B_j - M(\gl_0))^{-1}$, $j = 1,2$.
This identity yields the equivalence
\be\la{4.330a}
\RE((\wt A_2 -\gl_0)^{-1} \le \RE((\wt A_1 -\gl_0)^{-1}
\Longleftrightarrow  \RE(\wt B_1) \le \RE(\wt B_2).
\ee
Notice that the operators  $\wt B_1$ and $\wt B_2$ are accumulative
simultaneously with $B_1$ and $B_2$. Furthermore it holds
$\wt B_{j,I} := \IM(\wt B_{j,I}) \in \gotS_1(\cH)$, $j =1,2$.

Introducing the modified
boundary triplet $\wt \Pi = \{\cH,\wt \gG_0,\wt \gG_1\}$ defined
by \eqref{4.10a} we find that $\wt A_1 = A_{\wt B_1}$ and $\wt A_2
= A_{\wt B_2}$. Next we set $\wt A_3:= A_{\wt B_3}$, where
\be\la{4.330}
\wt B_3 := \wt B_{2,R} + i\wt B_{1,I},
\ee
and $\wt B_{j,R} := \RE(\wt B_j)$ and $\wt B_{j,I} = \IM(\wt
B_j)\le 0$, $j = 1,2$. By Proposition  \ref{prop2.1}(iii),
$A_{\wt B_3}$ is $m$-accumulative since $\wt B_{3,I}\le 0$.
One easily  verifies that $\{\wt A_2,\wt A_3\} \in \gotD^{\wt \Pi}$ and
and $\{\wt A_3,\wt A_1\} \in \gotD^{\wt \Pi}$. Indeed we have $\wt B_2 -
\wt B_3 = \wt B_{2,I} - \wt B_{1,I} \in \gotS_1(\cH)$ and
$\wt B_3 - \wt B_1 = \wt B_{2,R} - \wt B_{1,R} \in \gotS_1(\cH)$.
It follows from the chain rule \eqref{3.15} that
\be\la{4.25}
\gD^{\wt \Pi}_{\wt A_2/H}(z) = \gD^{\wt \Pi}_{\wt A_2/\wt A_1}(z)\gD^{\wt \Pi}_{\wt A_1/H}(z) =
\gD^{\wt \Pi}_{\wt A_2/\wt A_3}(z) \gD^{\wt \Pi}_{\wt A_3/\wt A_1}(z)\gD^{\wt \Pi}_{\wt
  A_1/H}(z),
\ee
$z\in \C_+$.

By Theorem \ref{V.60}(ii),  the perturbation determinant $\gD^{\wt \Pi}_{\wt
A_1/H}(\cdot)$ admits the representation
\be\la{4.35}
\gD^{\wt \Pi}_{\wt A_1/H}(z) = c_1 \exp\left\{\frac{1}{\pi} \int_R
\left(\frac{1}{t-z} - \frac{t}{1+t^2}\right)\go_1(t)dt\right\},
\qquad z \in \C_+,
\ee
where $c_1 \in \C$, $\go_1(\cdot) \in L^1(\R,\tfrac{1}{1+t^2}dt)$
and $\IM(\go_1(t)) \le 0$ for a.e. $t \in \R$.

Further, denoting by $\wt M(\cdot)$  the Weyl function
corresponding to $\wt \Pi$, we have, by definition,
\bed
\gD^{\wt \Pi}_{\wt A_3/\wt A_1}(z) = \det(I + (\wt B_3 - \wt
B_1)(\wt B_1 - \wt M(z))^{-1}),\qquad z \in \C_+.
\eed
Since the operators $\wt A_1$ and $\wt A_2$ are resolvent
comparable, Proposition \ref{prop2.9}(ii) combining with
\eqref{4.330} yields $\wt B_3 - \wt B_1 = \wt B_{2,R} - \wt
B_{1,R} \in \gotS_1(\cH)$.
It follows from \eqref{4.18}, \eqref{4.330a} and \eqref{4.330}
that  $\wt B_3 - \wt B_1 = \wt B_{2,R} - \wt B_{1,R} \ge 0$.
Therefore Lemma \ref{IV.2}(i) implies the existence of a
non-negative real number $\wt c_R > 0$ and  a non-negative
function $\xi(\cdot) \in L^1(\R,\tfrac{1}{1+t^2}dt)$ such that
the representation
\be\la{4.25a}
\gD^{\wt \Pi}_{\wt A_3/\wt A_1}(z) = \wt c_R
\exp\left\{\frac{1}{\pi}\int_\R\left(\frac{1}{t-z} -
    \frac{1}{1+t^2}\right)\xi(t)dt\right\}, \quad z \in \C_+,
\ee
holds. Since
\bed
\gD^{\wt \Pi}_{\wt A_2/\wt A_3}(z) = \det(I + (\wt B_2 - \wt
B_3)(\wt B_2 - \wt M(z))^{-1}),  \qquad z \in \C_+,
\eed
and $\wt B_2 - \wt B_3 = i(B_{2,I} - B_{1,I})$ and $B_{2,I} -
B_{1,I} \le -B_{1,I}$ we get from Lemma \ref{IV.3a}(ii) the
existence of a unimodular constant $\wt \varkappa \in \T$ and a
real-valued function $\eta(\cdot) \in L^1(\R,\tfrac{1}{1+t^2}dt)$
such that the representation
\be\la{4.370}
\gD^{\wt \Pi}_{\wt A_2/\wt A_3}(z) = \wt \varkappa
\exp\left\{\frac{i}{\pi}\int_R \left(\frac{1}{t-z} -
\frac{t}{1+t^2}\right)\eta(t)dt\right\}, \qquad z \in \C_+,
\ee
is valid. Inserting \eqref{4.35}, \eqref{4.25a} and \eqref{4.370}
into \eqref{4.25} and setting $c_2 := c_1\wt c_R \wt \varkappa \in
\C$ and $\go_2(t) := \go_1(t) + \xi_+(t) + i\eta(t)$, $t \in \R$,
we arrive at  the representation
\be\la{4.380}
\gD^{\wt \Pi}_{\wt A_2/H}(z) = c_2 \exp\left\{\frac{1}{\pi}\int_R
\left(\frac{1}{t-z} - \frac{t}{1+t^2}\right)\go_2(t)dt\right\}, \quad z
\in \C_+,
\ee
where $\RE(\go_1(t)) \le \RE(\go_1(t) + \xi_+(t)) = \RE(\go_2(t))$
for a.e. $t \in \R$.

(ii)
To handle the second case we use again the factorization
\eqref{4.25} but in slightly different manner.
From Theorem \ref{V.60}(ii) it follows the representation \eqref{4.380}
where $c_2 \in \C$, $\go_2(\cdot) \in L^1(\R,\tfrac{1}{1+t^2}dt)$
and $\IM(\go_2(t)) \le 0$ for a.e. $t \in \R$.
Now representation \eqref{4.25a} follows  from
Lemma \ref{IV.2}(ii) where, however, the function $\xi(\cdot)$ is
not necessarily non-negative. It follows  from \eqref{5.40} and
the assumption \eqref{4.19} that
\bed
\IM((\wt A_2 -\gl_0)^{-1}) \le \IM((\wt A_1 -\gl_0)^{-1})
\Longleftrightarrow \IM(\wt B_1) \le \IM(\wt B_2)
\eed
which yields $\wt B_{2,I} - \wt B_{1,I} \ge 0$. By Lemma \ref{IV.3a}(i),
representation \eqref{4.370} holds with a non-negative function
$\eta(\cdot)\ge 0.$  Inserting \eqref{4.380},
\eqref{4.25a} and \eqref{4.370} into \eqref{4.25} and setting
$\go_1(t) := \go_2(t) - \xi(t) - i\eta(t)$, $t \in \R$, we obtain the
representation \eqref{4.35}. From \eqref{4.35} and Proposition
\ref{III.8}(v) we obtain that $\go_1(t)$ is a spectral shift function
for the pair $\{\wt A_1,H\}$. Obviously, we have
$\IM(\go_1(t)) \le \IM(\go_1(t) + \eta(t))  = \IM(\go_2(t)) \le 0$ for a.e. $t \in \R$.

(iii) This statement can be proved following the reasoning of
(ii). Since in addition the condition \eqref{4.18} is satisfied
we find that the representation \eqref{4.25a} holds and $\xi(t) \ge 0$ for a.e. $t
\in \R$. Since $\go_1(t) = \go_2(t) -\xi(t) -i\eta(t)$, $t \in \R$, we
easily verify $\RE(\go_1(t)) \le \RE(\go_2(t))$ and $\IM(\go_1(t))
\le \IM(\go_2(t))$ for a.e. $t \in \R$.
\end{proof}
\br\la{IV.4}
{\em
\item[\;\ (i)]
By Lemma \ref{IV.1} the trace formula \eqref{5.39}
for selfadjoint extensions determines the function $\xi(\cdot)
\in L^1(\R,\tfrac{1}{1+t^2}dt)$  uniquely  up to a real constant.
In contrast to that, the trace formula \eqref{4.16a}  does not determine
the spectral shift function $\go(\cdot) \in
L^1(\R,\tfrac{1}{1+t^2}dt)$ up to a constant.

\item[\;\ (ii)]
Trace formula \eqref{4.16a} differs from that one
of \eqref{5.39} because the spectral shift function $\go(\cdot)$
is not real-valued. Using the results of \cite{AN90} one gets a trace
formula of type \eqref{5.39} for the pair $\{\wt A,H\}$ if in
addition to the assumption \eqref{4.14} the condition
\be\la{4.38}
(\wt A^* + i)^{-1} - (\wt A - i)^{-1} + 2i(\wt A^* + i)^{-1}(\wt A - i)^{-1} \in \gotS^0_1(\gotH)
\ee
is satisfied. Here $\gotS^0_1(\gotH)$ stands for the ideal of all
compact operators $T$ satisfying
\bed
\sum^\infty_{k=1}s_k(T)\log^+\left(\frac{1}{s_k(T)}\right) < \infty,
\eed
where $s_k(T)$, $k \in \N$, are the singular numbers of $T$. Note
that $\gotS^0_1(\gotH)$ is a strict part of $\gotS_1(\gotH).$ In
this case there exists  {\emph{a real-valued}} function
$\vartheta(\cdot) \in L^1(\R,\tfrac{1}{1+t^2}dt)$ such that the
trace formula
\bed
\tr\left((\wt A - z)^{-1} - (H - z)^{-1}\right) =
-\frac{1}{\pi}\int_\R \frac{\vartheta(t)}{(t-z)^2} dt, \qquad z
\in \C_+,
\eed
holds. If $\Pi = \{\cH,\gG_0,\gG_1\}$ is a boundary triplet for
$A^*$, which is regular for $\{\wt A,H\}$. If $\wt A = A_B$ and $H = A_C$,
$B,C \in \cH]$, then condition \eqref{4.38} is equivalent to $B_I = \IM(B)
\in \gotS^0_1(\cH)$. Assumption \eqref{4.14} implies $B_R - C \in
\gotS_1(\cH)$.

\item[\;(iii)] For further results on trace formulas for
  non-selfadjoint operators we refer to papers of A.
Rybkin \cite{Ryb84,Ryb84b, Ryb87,Ryb89,Ryb94}.
}
\er
Let us extend the results to maximal accumulative extensions.
\bt\la{IV.7}
Let $A$ be as above and let $\wt A_1, \wt A_2 \in \Ext_A$ be
maximal accumulative extensions of $A$ such that
\be\la{4.41}
(\wt A_2 - \zeta)^{-1} - (\wt A_1 - \zeta)^{-1} \in
\gotS_1(\gotH), \qquad \zeta \in \rho(\wt A_2) \cap \rho(\wt A_1).
\ee
and $\rho(\wt A_1) \cap \C_- \not= \emptyset$.
Then the following assertions are valid:

\item[\;\;\rm (i)]
There exists a boundary triplet $\Pi = \{\cH,
\gG_0, \gG_1\}$ for $A^*$, which can be chosen regular for $\{\wt
A_2,\wt A_2\}$, such that $\{\wt A_2,\wt A_1\} \in
\gotD^\Pi$.

\item[\;\;\rm (ii)]
If $\{\wt A_2,\wt A_1\} \in \gotD^\Pi$, then
there exist a complex-valued function $\go(\cdot) \in
L^1(\R,\tfrac{1}{1+t^2}dt)$ and a  constant $c \in \C$ such that
the representation
\be\la{4.42}
\gD^\Pi_{\wt A_2/\wt A_1}(z) = c\;
\exp\left\{\frac{1}{\pi}\int_R\left(\frac{1}{t-z} - \frac{t}{1 +
t^2}\right)\go(t)dt\right\}, \quad z \in \C_+,
\ee
holds.

\item[\;\;\rm (iii)]
The trace formula
\be\la{4.43}
\tr\left((\wt A_2 - z)^{-1} - (\wt A_1 - z)^{-1}\right) = -
\frac{1}{\pi}\int_\R \frac{\go(t)}{(t-z)^2} dt, \quad z \in \C_+.
\ee
holds where $\go(\cdot)$ is given by (ii).
\et
\begin{proof}
(i)  Since $\rho(\wt A_1) \cap \rho(\wt A_2) \supset \C_+$
there exists a boundary triplet $\Pi$ by Corollary \ref{IV.40}(i)
which is regular for the pair $\{\wt A_2,\wt A_1\}$.
Hence there exist accumulative operators
$B_j\in [\cH]$  such that $\wt A_j = A_{B_j}$, $j =1,2$. By
Proposition \ref{prop2.9}(ii) condition \eqref{4.41} is equivalent
to  $B_2 - B_1 \in \gotS_1(\cH)$ which yields the inclusion $\{\wt
A_2,\wt A_1\} \in \gotD^\Pi$.

(ii)   First, let us assume that $\Pi = \{\cH, \gG_0, \gG_1\}$ is
regular for $\{\wt A_2,\wt A_1\}$.  Clearly, we have $B_{2,R} -
B_{1,R} \in \gotS_1(\cH)$ and $B_{2,I} - B_{1,I} \in
\gotS_1(\cH)$. We set
\be\label{5.53}
B_3 = B_{2,R} + iB_{1,I}
\ee
and $\wt A_3 := A_{B_3},\  \dom(A_3) = \ker(\Gamma_1 -
B_3\Gamma_0)$. Since $B_3$ is accumulative, $\wt A_3$ is
$m$-accumulative (cf. Proposition \ref{prop2.1}). Since $B_2 - B_3
= i(B_{2,I} - B_{1,I}) \in \gotS_1(\cH)$,   $\{\wt A_2,\wt A_3\}
\in \gotD^\Pi$. Furthermore,  $B_3 - B_1 = B_{2,R} - B_{1,R} \in
\gotS_1(\cH)$, hence $\{\wt A_3,\wt A_1\} \in \gotD^\Pi$.
Therefore the perturbation determinant $\gD^\Pi_{\wt A_2/\wt A_3}(\cdot)$ is well
defined,
\bead
\lefteqn{
\gD^\Pi_{\wt A_2/\wt A_3}(z) = \det(I + (B_2 - B_3)(B_3 - M(z))^{-1})   }\\
& & = \det(I + i(B_{2,I} - B_{1,I})(B_3 - M(z))^{-1}), \qquad z
\in \C_+. \eead
Since $B_{2,I} - B_{1,I} \le -B_{1,I} = -B_{3,I}$, we obtain from
Lemma \ref{IV.3a}(ii) that there exist a  complex number
$\varkappa \in \T$ and {\emph {a real-valued}} function
$\eta(\cdot) \in L^2(\R,\tfrac{1}{1+t^2}dt)$ such that the
representation
\be\la{4.44}
\gD^\Pi_{\wt A_2/\wt A_3}(z) = \varkappa
\exp\left\{\frac{i}{\pi}\int_R\left(\frac{1}{t-z} -
\frac{t}{1+t^2}\right)\eta(t)dt\right\}, \qquad z \in \C_+,
\ee
holds.  Further, it follows from \eqref{5.53} that
\bead
\lefteqn{
\gD^\Pi_{\wt A_3/\wt A_1}(z) =
\det(I + (B_3 - B_1)(B_1 - M(z))^{-1})  }\\
& & = \det(I + (B_{2,R} - B_{1,R})(B_1 - M(z))^{-1}), \quad z \in
\C_+. \eead
By Lemma \ref{IV.2}(ii), there exist  a constant $c_1 > 0$ and a
{\emph {real-valued function}} $\xi(\cdot) \in
L^2(\R,\frac{1}{1+t^2}dt)$ such that the representation
\be\la{4.45}
\gD^\Pi_{\wt A_3/\wt A_1}(z) = c_1
\exp\left\{\frac{1}{\pi}\int_R\left(\frac{1}{t-z} -
\frac{t}{1+t^2}\right)\xi(t)dt\right\}, \quad z \in \C_+,
\ee
holds. Combining \eqref{4.44} with \eqref{4.45}, applying the
identity $\gD^\Pi_{\wt A_2/\wt A_3}(z)\gD^\Pi_{\wt A_3/\wt A_1}(z)
= \gD^\Pi_{\wt A_2/\wt A_1}(z)$, $z \in \C_+$,  and setting $c_0
:= c_1\varkappa$ and $\go(t) := \xi(t) + i\eta(t)$, $t \in \R$,
we  arrive at  the representation \eqref{4.42} with $c_0$ in place
of $c$. Finally, if $\Pi$ is not regular we apply Proposition \ref{III.5}
to get \eqref{4.42}.

(iii) The trace formula \eqref{4.43} follows immediately from \eqref{4.42} and
Proposition \ref{III.8}(v).
\end{proof}
\bc
Let $\wt A_2,\wt A_1 \in \Ext_A$ be accumulative extensions of $A$
such that the pair $\{\wt A_2,\wt A_1\}$ is resolvent comparable, that
is condition \eqref{4.41} is satisfied.

\item[\;\;\rm (i)]
If \eqref{4.18} is satisfied, then there is a complex-valued spectral shift function
$\go(\cdot)$ of the pair $\{\wt A_2,\wt A_1\}$ such that $\RE(\go(t)) \ge 0$
for a.e. $t\in \R$.

\item[\;\;\rm (ii)]
If \eqref{4.19} is satisfied, then there is a complex-valued spectral shift function
$\go(\cdot)$  of the pair $\{\wt A_2,\wt A_1\}$ such that $\IM(\go(t)) \ge 0$
for a.e. $t\in \R$.

\item[\;\;\rm (iii)]
If \eqref{4.18} and \eqref{4.19} are satisfied, then there is a complex-valued spectral shift function
$\go(\cdot)$ of the pair $\{\wt A_2,\wt A_1\}$ such that $\RE(\go(t))
\ge 0$ and $\IM(\go(t)) \ge 0$ for a.e. $t\in \R$.
\ec
\begin{proof}
By Corollary \ref{IV.40}(ii) there is a boundary triplet $\Pi$ for $A^*$ which is regular
for $\{\wt A_2,\wt A_1\}$ such that $\gl \in \rho(A_0)$ if
where $\gl \in \rho(\wt A_2) \cap \rho(\wt A_1) \cap \R$.
Hence there are bounded accumulative operators $B_2$ and $B_1$ such
that $\wt A_j = A_{B_j}$, $j = 1,2$. Let us introduce the boundary
triplet \eqref{4.10a}. Setting $\wt B_j = -(B_j - M(\gl_0))^{-1}$, $j
=1,2$, we have $\wt A_j = A_{\wt B_j}$, $j =1,2$. In addition we
introduce $\wt B_3$ defined by \eqref{4.330} and the maximal
accumulative extension $\wt A_3 := A_{\wt B_3}$. Now we follow the proof
of Theorem \ref{IV.3}.

(i) In this case we get $\wt B_{1,R} \le \wt B_{2,R}$. As above, this
yields that $\xi(\cdot) \in L^1(\R,\tfrac{1}{1+t^2}dt)$ can be chosen
non-negative in the representation \eqref{4.25a}. Using
$\gD^{\wt \Pi}_{\wt A_2/\wt A_1}(z) =
\gD^{\wt \Pi}_{\wt A_2/\wt A_3}(z)\gD^{\wt \Pi}_{\wt A_3/\wt A_1}(z)$,
$z \in \C_+$,  taking into account
the representations \eqref{4.25a} and \eqref{4.370} and setting $\go(t) := \xi(t) +
i\eta(t)$, $\in \R$,  we get
\be\la{5.51}
\gD^{\wt \Pi}_{\wt A_2/\wt A_1}(z) =
\wt c \exp\{\frac{1}{\pi} \int_\R \left(\frac{1}{t-z} - \frac{1}{1+t^2}\right)\go(t)dt,
\quad z \in C_+,
\ee
$\wt c \in \C$ where $\RE(\go(t)) \ge 0$ for a.e. $t \in \R$. Hence
there is a spectral shift function satisfying $\RE(\go(t)) \ge 0$ for
a.e. $t \in \R$.

(ii) In this case we have $\wt B_{2,I} \le \wt B_{1,I}$. This
yields that $\eta(\cdot)$ in the representation \eqref{4.370} can be
chosen non-negative. This immediately yields that $\IM(\go(t)) \ge 0$
for a.e. $t \in R$ in the representation \eqref{5.51}.
Hence there is a spectral shift function
satisfying $\IM(\go(t)) \ge 0$ for a.e. $t \in \R$.

(iii) Finally, in this case $\xi(\cdot)) \ge 0$ and $\eta(\cdot)$
can be chosen non-negative in the representation  \eqref{4.25a}
and \eqref{4.370}, respectively.  Setting $\go(t) := \xi(t) + i\eta(t)$,
$t \in \R$, we verify \eqref{5.51}. Hence there is a spectral shift function
satisfying $\RE(\go(t)) \ge 0$ and $\IM(\go(t)) \ge 0$ for a.e. $t \in \R$.
\end{proof}

\subsection{Pairs of extensions with one $m$-accumulative
  operator}\la{sec.V.3}

Here  we consider trace formulas for pairs $\{\wt
A',\wt A\}$ of proper extensions of a closed symmetric operator
$A$ assuming that $\wt A'$ is  $m$-accumulative extension.
\bl\la{IV.3aB}
Let $\Pi = \{\cH,\gG_0,\gG_1\}$  be a boundary triplet for $A^*,$
$M(\cdot)$ the corresponding  Weyl function and let $B\in [\cH]$ be an accumulative
operator, i.e $B_I \le 0$.

\item[\;\;\rm (i)]
If $0 \le V_+ \le 2|B_I| = -2B_I$,  $V \in
\gotS_1(\cH)$, then  the holomorphic function $w_+(z):= \det(I + iV_+(B -
M(z))^{-1})$, $z \in \C_+$, is contractive.
In particular, there exist a non-negative Borel measure
$\mu_+(\cdot)$  satisfying $\int_\R \tfrac{1}{1+t^2}d\mu_+(t) <
\infty$ as well as  numbers $\varkappa_+ \in \T$ and  $\ga_+ \ge 0$  such
that the following representation
\be\la{4.180a}
w_+(z) = \varkappa_+  \cB_+(z) \exp\left\{\frac{i}{\pi}\int_\R
\left(\frac{1}{t-z} -
\frac{t}{1+t^2}\right)d\mu_+(t)\right\}e^{i\ga_+ z}
\ee
$z \in \C_+$, holds where $\cB_+(\cdot)$ is the  Blaschke product formed by the zeros
$\{z^+_k\}_{k\in\N}$  of $w_+(\cdot)$ in $\C_+$, cf. \eqref{7.17A}.

\item[\;\;\rm (ii)]
If $V \le 2|B_I| = -2B_I$ and $V \in
\gotS_1(\cH)$, then the perturbation determinant $w(\cdot):= \det(I + iV(B -
M(\cdot))^{-1})$ belongs to the Smirnov class $\cN^+(\C_+)$, see Appendix \ref{App.III}. In particular,
there exist  a non-negative Borel measure
$\mu_+(\cdot)\ge 0$ satisfying $\int_\R\tfrac{1}{1+t^2}d\mu_+(t) < \infty$,
a non-negative function $\eta(\cdot) \in L^1(\R,\tfrac{1}{1+t^2}dt)$
as well as numbers $\varkappa \in \T$,  $\ga \ge 0$  such that
\be\la{4.190A}
w(z) = \varkappa\; \cB_+(z)\exp \left\{\frac{i}{\pi}\int_\R
\left(\frac{1}{t-z} - \frac{t}{1+t^2}\right)d\mu(t)\right\}
e^{i\ga z}, \quad z \in \C_+,
\ee
where $\mu(\cdot) := \mu_+(\cdot) - \eta(\cdot)dt$ and $\cB_+(z)$ is the Blaschke product formed by the zeros of
$w(\cdot)$ lying in $\C_+$.
\el
\begin{proof}
(i) We introduce the holomorphic operator-valued function
\be\la{4.18b}
W_+(z) := I + i\sqrt{V_+}(B - M(z))^{-1}\sqrt{V_+}, \quad z \in \C_+.
\ee
Since $B_I \le 0$, $\IM(M(z)) > 0$  and $0 \in \rho(\IM(M(z)))$
for $z \in \C_+$, the operator $(B - M(z))^{-1}$ is well-defined
and bounded for $z \in \C_+$. Further, we have
\bea\la{5.60}
\lefteqn{ I-W_+(z)^*W_+(z)} \nonumber \\
& &
= i\sqrt{V_+}\bigl((B^* - M(z)^*)^{-1}-(B-M(z))^{-1}\bigr)\sqrt{V_+} \nonumber\\
& & - \sqrt{V_+}(B^* - M(z)^*)^{-1}\,V_+\,(B-M(z))^{-1}\sqrt{V_+}.
\eea
Noting that
\bea\la{5.61}
\lefteqn{
\bigl(B^*-M(z)^*\bigr)^{-1}-\bigl(B-M(z)\bigr)^{-1}}  \\
& & = -2i(B^* - M(z)^*)^{-1} \cdot (|B_I| + M_I(z))\cdot
(B-M(z))^{-1},\nonumber
\eea
we obtain from \eqref{5.60} that
\bead
\lefteqn{
I-W_+(z)^*W_+(z) }\\
     & &
= \sqrt{V_+}(B^* - M^*(z))^{-1} \cdot (2|B_I| - V_+ +
2\IM(M(z)))\cdot (B-M(z))^{-1}\sqrt{V_+}. \eead
Since $V_+ \le 2|B_I|$ and  $\IM(M(z))$ is positively definite, we
have  $I- W_+(z)^*W_+(z) \ge 0$ for $z \in \C_+$, i.e.
$W_+(\cdot)$ is contractive in $\C_+$. Hence
$w_+(\cdot) = \det W_+(\cdot)$ is contractive in  $\C_+$.
Now the representation \eqref{4.180a}  immediately follows from the
factorization  \eqref{4.9a}.

(ii) Let $V = V_+ - V_-$, $V_\pm \ge 0$. We set $B_- := B - iV_-$.
Since $(B_-)_I = B_I - V_- \le 0$, the operator $B_-$ is
accumulative too. According to \eqref{4.20} we get
\be\label{5.23}
\det(I + iV(B - M(z))^{-1}) =
\frac{\det(I + iV_+(B_- - M(z))^{-1})}{\det(I + iV_-(B_- - M(z))^{-1})},
\quad z \in \C_+.
\ee
The assumption $V \le -2B_I$ yields  $0 \le V_+ \le -2B_I + V_- \le
-2B_I + 2V_-  = -2(B_-)_I$. Applying statement (i) to the operators
$B_-$ and $V_+$ we obtain that $\det(I + iV_+(B_- - M(z))^{-1})$
is a contractive analytic function. Furthermore, from $0 \le V_- \le -B_I + V_- = -(B_-)_I$ and Lemma
\ref{IV.3a}(i) we get that $\det(I + iV_-(B_- - M(z))^{-1})$ is an outer function.
Applying Lemma \ref{D.I} we complete the proof.
\end{proof}
\bt\label{th5.2}
Let $\wt A', \wt A \in \Ext_A$ and let  $\wt A$ be an
$m$-accumulative extension with $\rho(\wt A) \cap \C_- \not=
\emptyset$. Assume in addition,  that  condition \eqref{1.103} is
satisfied for some  $\zeta \in \rho(\wt A') \cap \rho(\wt A).$
Then the following holds:

\item[\;\;\rm (i)] There exists a  boundary triplet $\Pi = \{\cH,
\gG_0, \gG_1\}$ for $A^*$, which is  regular  for $\{\wt A',\wt
A\}$  and  such that $\{\wt A',\wt A\} \in \gotD^\Pi$.

\item[\;\;\rm (ii)] If $\{\wt A',\wt A\} \in \gotD^\Pi$, then
there exist a non-negative Borel measure $\mu_+(\cdot)$
satisfying $\int \frac{1}{1+t^2}d\mu_+(t) < \infty$ and a complex-valued function
$\go(\cdot) \in L^1(\R,\tfrac{1}{1+t^2}dt)$
as well as constants $\ga_+ \ge 0$ and $c \in \C$
such that the representation
\be\la{4.45b}
\gD^\Pi_{\wt A'/\wt A}(z) = c \;\cB_+(z)
\exp\left\{\frac{1}{\pi}\int_\R \left(\frac{1}{t-z} -
\frac{t}{1+t^2}\right)d\nu(t)\right\} e^{i\ga_+ z},
\ee
$z\in\C_+$, holds where $d\nu(\cdot) := \go(\cdot)dt + id\mu_+(\cdot)$
and $\cB_+(\cdot)$ is the Blaschke product (cf. \eqref{7.17A})
formed by  the eigenvalues $z^+_k$ of $\wt A'$ in $\C_+$ and their
algebraic multiplicities $m_k^+$ satisfying condition \eqref{4.12}.

\item[\;\;\rm (iii)]
The following trace formula holds
\bea\label{4.48}
\lefteqn{
\tr\left((\wt A' -z )^{-1} - (\wt A - z)^{-1}\right) }\\
& &
= -2i\sum_k \frac{m^+_k\IM(z^+_k)}{(z- \overline{z}^+_k)(z - z^+_k)} -
\frac{1}{\pi}\int_\R \frac{1}{(t-z)^2}d\nu(t) - i\ga_+, \;\; z \in
\rho(\wt A') \cap \C_+.
\nonumber
\eea
\et
\begin{proof}
(i) Since $\wt A$ is $m$-accumulative,  $\C_+ \subset \rho(\wt
A)$.  Therefore and due to the assumption  $\rho(\wt A) \cap \C_-
\not= \emptyset$, the conditions of  Proposition \ref{II.9} are
satisfied, hence the extension $\wt A$ is almost solvable. Now the
existence of a regular boundary triplet $\Pi$ is implied by
Corollary \ref{IV.40}(i) and the assumption \eqref{1.103}.

(ii) Let $\Pi$ be regular for $\{\wt A',\wt A\}$. By definition,
there exist  bounded operators $B', B \in [\cH],$ such that $\wt
A' = A_{B'}$ and $\wt A = A_B$.  Since $\wt A$ is
$m$-accumulative, by Proposition \ref{prop2.1}(iii), the operator
$B$ is accumulative too, i.e.  $B_I = \IM(B) \le 0$.  By
Proposition \ref{prop2.9}(ii) condition  \eqref{1.103} is
equivalent to $B' - B \in \gotS_1(\cH)$.  Hence $B'_R - B_R \in
\gotS_1(\cH)$ and $V := B'_I - B_I \in \gotS_1(\cH)$.
We set
\be\label{5.58}
 C := B'_R + iB_I.
\ee
Since $B$ is accumulative, the operator $C$ is also accumulative,
$C_I = B_I \le 0$. Let $V = B'_I - B_I = V_+ - V_-$, $V_\pm \ge
0$, be the orthogonal decomposition of the operator $V=V^*$. We
set
\be\label{5.58A}
D := C - i(V_+ + V_-)
\ee
and note that $D$ is accumulative because so is $C$ and
$V_{\pm}\ge 0.$ Since
\be\label{5.58B}
B' - D = B'_R + iB'_I - B'_R - iB_I + i(V_+ + V_-) =  2iV_+ \in \gotS_1(\cH),
\ee
we get $\{\wt A',A_D\} \in \gotD^\Pi$. Notice that
  \bed
2V_+ \le -2B_I + 2V_+ + 2V_- = -2(B_I - V_+ - V_-) = -2 D_I.
  \eed
According to  \eqref{5.58B} one has $\gD^\Pi_{\wt A'/A_D}(\cdot) = \det(I + 2iV_+(D-M(\cdot))^{-1})$. By
Lemma \ref{IV.3aB}(i),  $\gD^\Pi_{\wt A'/A_D}(\cdot)$ is contractive in $\C_+$ and admits the representation
\be\la{4.46}
\gD^\Pi_{\wt A'/A_D}(z) = \varkappa_+\;\cB_+(z)
\exp\left\{\frac{i}{\pi}\int_\R\left(\frac{1}{t-z} - \frac{t}{1+t^2}\right)d\mu_+(t)\right\}
e^{i\ga_+z}
\ee
where $\varkappa_+ \in \T$, $\ga_+ \ge 0$,  $\mu_+$ is a
non-negative Borel measure on $\R$ satisfying
$\int_\R\tfrac{1}{1+t^2}d\mu_+(t) < \infty$ and  $\cB_+(\cdot )$ is the
Blaschke product, cf. \eqref{7.17A} formed by  zeros $z^+_k
\in \C_+$ of  $\gD^\Pi_{\wt A'/A_D}(\cdot)$. By Proposition
\ref{III.8}(iv),  (cf. formula \eqref{3.11A}), each zero $z^+_k$
of $\gD^\Pi_{\wt A'/A_D}(\cdot)$ of the multiplicity  $m^+_k$ is
just  the eigenvalue of $\wt A'$ lying in $\C_+$,  and $m^+_k$ is
its algebraic multiplicity.    In particular, this yields that
$\{z^+_k\}_{k\in\N} = \sigma_p(\wt A')\cap \C_+,$ and the
eigenvalues $\{z^+_k\}_{k\in\N}$
satisfy  condition \eqref{4.12}.

Furthermore, since $\{\wt A',\wt A\} \in \gotD^\Pi$ and $\{\wt
A',A_D\} \in \gotD^\Pi$, we have $\{A_D,\wt A\} \in \gotD^\Pi$.
Note that the extension $A_D$ is $m$-accumulative since $D$ is
$m$-accumulative. Thus, both  $A_D$ and $\wt A$ are
$m$-accumulative and, by Theorem \ref{IV.7}(ii), there exists a
\emph{complex-valued function} $\go(\cdot) \in
L^1(\R,\tfrac{1}{1+t^2}dt)$ and a  complex constant $c_D \in \C$
such that the following representation holds
\be\la{4.47} \gD^\Pi_{A_D/\wt A}(z) = c_D \;
\exp\left\{\frac{1}{\pi}\int_\R\left(\frac{1}{t-z} - \frac{t}{1+
t^2}\right)\go(t)dt\right\}, \quad z \in \C_+.
   \ee
Using the chain rule $\gD^\Pi_{\wt A'/\wt A}(z) = \gD^\Pi_{\wt
A'/A_D}(z)\gD^\Pi_{A_D/\wt A}(z)$, $z \in \rho(\wt A) \cap \C_+$
(cf. Proposition \ref{III.8}(ii)), and combining \eqref{4.46} with
\eqref{4.47} we arrive at  representation \eqref{4.45b} with $c
:= c_D\varkappa_+$ and $d\nu(t) = \go(t)dt + id\mu_+(t)$.

The case of a boundary triplet  $\Pi$ which is not regular for the
pair $\{\wt A',\wt A\}$  is reduced to the previous one by
applying Proposition \ref{III.5}.

(iii) Trace formula  \eqref{4.48} is implied now by combining
\eqref{4.45b} with  Proposition \ref{III.8}(v).
\end{proof}

Using the Riesz-Dunford functional calculus, cf. Appendix
\ref{App.V}, we extend trace formula \eqref{4.48} to the case of
analytic functions of the class $\cF(\wt A,\wt A')$.
\bc\la{V.14}
Let the assumptions of Theorem \ref{th5.2} be satisfied. Let
$\{z^+_k\}_{k\in\N} = \sigma_p(\wt A')\cap \C_+$ and let $m^+_k$ be the
algebraic multiplicity of $z^+_k,\ k\in \N$. If $\Phi \in \cF(\wt
A,\wt A')$, cf. Appendix \ref{App.V}, then $\Phi(\wt A') -
\Phi(\wt A) \in \gotS_1(\gotH)$ and
\begin{align}\la{5.66}
\tr(\Phi(\wt A') - &\Phi(\wt A)) = \\
&\sum_k m^+_k(\Phi(z^+_k) -
\Phi(\overline{z}^{\;+}_k)) + \frac{1}{\pi}\int_\R \Phi'(t)d\nu(t)
+ i\ga_+\res_\infty(\Phi),
\nonumber
\end{align}
where $z^+_k$ are the eigenvalues of $\wt A'$ in $\C_+$ and $m_k^+$ their algebraic
multiplicities.
\ec
\begin{proof}
From Lemma \ref{A.Va} it follows that $\Phi(\wt A') -
\Phi(\wt A) \in \gotS_1(\gotH)$. Multiplying \eqref{4.48} with
$\Phi(z)$ and integrating both sides with respect to $dz$ we obtain
immediately \eqref{5.66} using formulas \eqref{F0}, \eqref{F3},\eqref{4.14a}
and \eqref{F4} of Appendix \ref{App.V}.
\end{proof}

\subsection{Pairs of an extension and its adjoint}\la{sec.V.4}

Next we consider perturbation  determinants and trace formulas
for pairs $\{\wt A,\wt A^*\}$ of proper
extensions $\wt A,\wt A^* \in \Ext_A$ assuming that $\rho(\wt A)\cap \rho({\wt A}^*)\not =
\emptyset.$

\bt\label{th5.4}
Let $\wt A \in \Ext_A$ and $\rho(\wt A)\cap \rho({\wt A}^*)\not =
\emptyset.$ Assume  also that
\be\la{5.65}
(\wt A - \zeta)^{-1} - ({\wt A}^* - \zeta)^{-1}\in \gotS_1(\gotH), \quad \zeta
\in \rho(\wt A)\cap \rho({\wt A}^*).
\ee
Then the following holds:

\item[\;\;\rm (i)] There exists a boundary triplet  $\Pi = \{\cH,
\gG_0, \gG_1\}$ for $A^*$, which can be chosen regular for $\{\wt A,\wt
A^*\}$,  such that $\{\wt A,\wt A^*\} \in \gotD^\Pi$.

\item[\;\;\rm (ii)] If $\{\wt A,\wt A^*\} \in \gotD^\Pi$ and the
triplet $\Pi$ is regular for the pair $\{\wt A,\wt A^*\}$, then
\be\la{5.69}
\gD^\Pi_{\wt A/\wt A^*}(z) = \det(W^\Pi_{\wt A}(z)), \qquad z \in
\rho(\wt A^*) \cap \C_\pm,
\ee
where $W^\Pi_{\wt A}(\cdot)$ is the characteristic operator-valued
function of $\wt A$ defined by \eqref{2.33}, cf. Proposition \ref{II.15}.

\item[\;\;\rm (iii)] If $\{\wt A,\wt A^*\} \in \gotD^\Pi$, then
there exist a real-valued measure $\mu$ on $\R$ satisfying
$\int_\R\tfrac{1}{1+t^2}|d\mu|(t)| < \infty$ and constants $\ga \in \R$ and  $c \in
\C$ such that the representation
\be\la{4.52} \gD^\Pi_{\wt A/\wt A^*}(z) = c\;
\frac{\cB_+(z)}{\cB_-(z)}\exp \left\{\frac{i}{\pi}\int_\R
\left(\frac{1}{t-z} - \frac{t}{1+t^2}\right)d\mu(t)\right\}
e^{i\ga z}, \ee
holds  for $z \in \rho(\wt A^*) \cap \C_+$ where $\cB_+(\cdot)$
and $\cB_-(\cdot)$ are Blaschke products (cf. \eqref{7.17A})
formed by the zeros $\{z^+_k\}_{k\in\N}$ and $\{\overline{z}{\,^-_l}\}_{l\in\N}$, where
$\{z^+_k\}_{k\in\N}$ and $\{z{\,^-_l}\}_{l\in\N}$ are eigenvalues of $\wt A$ in $\C_+$ and $\C_-$, respectively,
and $m_k^+$ and $m_l^-$ their algebraic multiplicities, respectively.
Both sequences $\{z^+_k\}_{k\in\N}$ and
$\{\overline{z}{\,^-_l}\}_{k\in\N}$ satisfy condition
\eqref{4.12}.

\item[\;\;\rm (iv)] The following  trace formula holds
   \bea\la{4.53}
\lefteqn{
\tr\left((\wt A^* - z)^{-1} - (\wt A - z)^{-1}\right) }\\
& = &
2i\sum_n \frac{m_n^+\IM(z_n)}{(z - z_n)(z - \overline{z}_n)}
+ \frac{i}{\pi}\int_R\frac{1}{(t-z)^2}d\mu(t) + i\ga,
\nonumber
   \eea
$z \in \rho(\wt A^*) \cap \rho(\wt A) \cap \C_+$, where $z_n$ and
$m_n$ denote the eigenvalues of $\wt A$ in $\C \setminus \R$ and
their algebraic multiplicities, respectively.
\et
\begin{proof}
(i) Let $z_0\in \rho(\wt A)\cap\rho(\wt A^*)$. Then  $\overline
z_0\in \rho(\wt A)\cap\rho(\wt A^*)$ and, by Proposition
\ref{II.9}, the extension $\wt A$ is  almost solvable. By
Corollary \ref{IV.40}(i), there  exists a boundary triplet $\Pi$
regular for $\{\wt A,\wt A^*\}$ and  satisfying  $\{\wt A,\wt
A^*\} \in \gotD^\Pi$.

(ii) Assume that $\{\wt A,\wt A^*\} \in \gotD^\Pi$ and $\Pi$ is a
regular boundary triplet for $\{\wt A,\wt A^*\}$. By definition,
there exists  a bounded operator $B\in [\cH]$ such that $\wt A =
A_B$ and $\wt A^* = A_{B^*}$. Let  $B_I := J|B_I|$ be the polar
decomposition of $B_I= (B - B^*)/2i = B_I^*$ where $J=J^* =
J^{-1}$. From $\{\wt A,\wt A^*\} \in \gotD^\Pi$ and Proposition
\ref{prop2.9}(ii) we get $B - B^* = 2iB_I \in \gotS_1(\cH)$, $B_I
:= \IM(B)$. Taking into account the property \eqref{2.26} we get
\bead
\gD^\Pi_{\wt A/\wt A^*}(z) = \det\bigl(I + (B - B^*)(B^* - M(z))^{-1}\bigr) \quad \quad \quad \\
= \det(I + 2i\sqrt{|B_I|}(B^* -M(z))^{-1}\sqrt{|B_I|}J), \qquad z
\in \rho(\wt A^*) \cap \C_\pm. \eead
Applying  Proposition \ref{II.15} we arrive at \eqref{5.69}.

(iii) Let a boundary triplet $\Pi$ be regular for $\{\wt A,\wt
A^*\}$.  Consider the spectral  decomposition $B_I = B_I^+ -
B_I^-$ of $B_I$, where  $B_I^{\pm}$  are orthogonal, $B_I^{\pm}\ge
0$, and $B_I^{\pm} \in \gotS_1(\cH)$. Alongside $B$ consider the
dissipative operator $B_1= B_R + i |B_I| = B_R + i B^+_I + i
B^-_I$ and the corresponding extension $\wt {A'} := A_{B_1^*}$. By
Proposition \ref{prop2.1}(iii),  $\wt {A'}$ is maximal
accumulative. We note that $\{\wt A,\wt A'\} \in \gotD^\Pi$. To
the perturbation determinant $\gD^\Pi_{\wt A/\wt A'}(\cdot)$ we
can apply Lemma \ref{IV.3aB}(i) with $B_1^*$  in place of $B$ and
$V_+ := 2B_I^+\le -2\IM(B^*_1)$. This  yields the representation
\be\la{5.75a}
\gD^\Pi_{\wt A/\wt A'}(z) = \varkappa_+\;
\cB_+(z)
\exp\left\{\frac{i}{\pi}\int_\R \left(\frac{1}{t-z} -
\frac{t}{1+t^2}\right)d\mu_+(t)\right\} e^{i\ga_+ z},
\ee
$z \in \C_+$. Here $\varkappa_+ \in \T$, $\ga_+ \ge 0$,  $\nu_+$
is a \emph{non-negative Borel measure} satisfying $\int_R
({1+t^2})^{-1}d\nu_+(t)< \infty$, and  $\cB_+(\cdot)$ is  the
Blaschke product, cf. \eqref{7.17A}, with zeros
$\{z^+_k\}_{k\in\N}.$ By Proposition  \ref{III.8}(iv),
$\{z^+_k\}_{k\in\N} = \sigma_p(\wt A)\cap\C_+$ and the order
$m^+_k$ of zero $z^+_k$ equals to the algebraic multiplicity of
$z^+_k$ as the eigenvalue of $\wt A.$

Next, consider the perturbation determinant $\gD^\Pi_{\wt A^*/\wt
A'}(\cdot)$.  Again Lemma \ref{IV.3aB}(i) yields  the
representation
\be\la{5.76}
\gD^\Pi_{\wt A^*/\wt A'}(z) = \varkappa_-\; \cB_-(z)
\exp\left\{\frac{i}{\pi}\int_\R \left(\frac{1}{t-z} -
\frac{t}{1+t^2}\right)d\mu_-(t)\right\} e^{i\ga_- z}, \quad z \in
\C_+.
\ee
Here  $\cB_-(\cdot)$ is the  Blaschke product,  cf. \eqref{7.17A},
with zeros  $\{\zeta^+_k\}_{k\in\N}$ being the eigenvalues of $\wt
A^*$ in $\C_+$. Moreover,  the order $n^+_k$ of zero $\zeta^+_k$
is equal to the algebraic multiplicity of $\zeta^+_k$ as the
eigenvalue of $\wt A^*$.
 Note however, that $\zeta^+_k \in\sigma_p(\widetilde{A}^*)$ if and
only if $z^-_k := \overline{\zeta^+_k} \in\sigma_p(\widetilde A)$
and  the corresponding algebraic multiplicities $n^+_k$ and
 $m^-_k$ coincide, $n^+_k = m^-_k$.
Thus,  the Blaschke product $\cB_-(\cdot)$ is defined  by the
complex conjugated eigenvalues $z^-_k$ of $\wt A$ lying in $\C_-$
and taken with orders $m^-_k$ equal to their algebraic
multiplicities.

Combining \eqref{5.75a} and   \eqref{5.76} with the chain rule
\be\la{5.70}
\gD^\Pi_{\wt A/\wt A^*}(z)  =
\frac{\gD^\Pi_{\wt A/\wt A'}(z)}{\gD^\Pi_{\wt A^*/\wt A'}(z)}, \qquad z\in \rho(\wt A^*) \cap
\rho(\wt A) \cap \C_+,
\ee
and setting  $c := \tfrac{c_+}{c_-}$, $\mu := \mu_+ - \mu_-$ and
$\ga := \ga_+ - \ga_-$, we arrive at  \eqref{4.52}.

To prove \eqref{4.52} for any (not necessarily regular) boundary
triplet $\Pi$ satisfying $\{\wt A^*,\wt A\} \in \gotD^\Pi$ it
remains  to apply Proposition \ref{III.5}.

(iv)
Taking into account \eqref{4.48} we find
\bea\la{4.54}
\lefteqn{
\tr\left((\wt A - z)^{-1} - (\wt A' - z)^{-1}\right) }\nonumber \\
& & = -2i\sum_k \frac{m_k^+\IM(z^+_k)}{(z - z_k)(z -
\overline{z}{^{\,+}_k})} -
\frac{i}{\pi}\int_R\frac{1}{(t-z)^2}d\mu_+(t) - i\ga_+
   \eea
for $z \in \rho(\wt A) \cap \C_+$ and
    \bea\la{4.55}
\lefteqn{
\tr\left((\wt A^* - z)^{-1} - (\wt A' - z)^{-1}\right) }\nonumber  \\
& &
 = -2i\sum_l \frac{m_l^-\IM(\overline{z}^-_l)}{(z - z^-_l)(z -
\overline{z}{^{\,-}_l})}  - \frac{i}{\pi}\int_R\frac{1}{(t-z)^2}d\mu_-(t) - i\ga_-
\nonumber\\
& & = 2i\sum_l \frac{m_l^-\IM(z^-_l)}{(z - z^-_l)(z -
\overline{z}{^{\,-}_l})}  -
\frac{i}{\pi}\int_R\frac{1}{(t-z)^2}d\mu_-(t) - i\ga_-
     \eea
for $z\in \rho(\wt A^*) \cap \C_+$. Subtracting \eqref{4.54} from
\eqref{4.55}  we easily obtain \eqref{4.53}.
\end{proof}
\bc\la{V.18}
Let the assumptions of Theorem \ref{th5.4} be satisfied. If $\Phi \in
\cF(\wt A,\wt A^*)$, then $\Phi(\wt A) - \Phi(\wt A^*) \in \gotS_1(\gotH)$ and
\bea\label{5.75}
\lefteqn{
\tr(\Phi(\wt A) - \Phi(\wt A^*)) = }\\
& &
\sum_n m_n(\Phi(z_n) -
\Phi(\overline{z}_n)) + \frac{i}{\pi}\int_\R \Phi'(t)d\mu(t) +
i\ga\;\res_\infty(\Phi)
\nonumber
\eea
where $z_n$ are the eigenvalues of $\wt A$ in $\C \setminus \R$ and $m_n$
their algebraic multiplicities.
\ec
\begin{proof}
The inclusion $\Phi(\wt A) - \Phi(\wt A^*) \in \gotS_1(\gotH)$  immediately follows
from \eqref{5.65} and Lemma \ref{A.Va}.
Further, we multiply identity \eqref{4.53}  by $\Phi(\cdot)$ and
then integrate the result along a simple closed curve $\gG$
containing  the spectra $\sigma(\wt A)\cup \sigma({\wt A}^*)$.
Applying  formulas   \eqref{4.14a}, \eqref{F3} and \eqref{F4}  we
arrive at formula \eqref{5.75}.
\end{proof}
\begin{remark}
{\em
Corollary \ref{V.18}  generalize the known result of V. Adamyan
and B. Pavlov \cite{AdamPav79} and coincide with that in the case
of an $m$-dissipative operator $\wt A$ with $\rho(\wt A)\cap \C_+
\not = \emptyset$. The result has obtained  in
\cite{AdamPav79} by applying  a functional model of
$m$-dissipative operators \cite{FN70}.
}
\end{remark}

Next we complete and simplify Theorem \ref{th5.4}  assuming,  in
addition, that the resolvent of an extension is compact.
\bt\la{V.16}
Let the assumptions of Theorem \ref{th5.4} be satisfied. If, in
addition, $(\wt A -z)^{-1}\in \mathfrak S_{\infty}(\gotH)$, then
the following holds:

\item[\;\;\rm (i)] If $\{\wt A,\wt A^*\} \in \gotD^\Pi$, then the
perturbation determinant $\gD^\Pi_{\wt A/\wt A^*}(\cdot)$ 
is holomorphic in a neighborhood of the real line  $\R$ and
\be\la{5.79}
|\gD^\Pi_{\wt A/\wt A^*}(x)| = 1 \qquad \text {for}\qquad  x\in \R.
\ee

\item[\;\;\rm (ii)] If $\{\wt A,\wt A^*\} \in \gotD^\Pi$, then the
representation \eqref{4.52} is simplified to
   \be\la{4.52A}
\gD^\Pi_{\wt A/\wt A^*}(z) = c\; \frac{\cB_+(z)}{\cB_-(z)}
 e^{i\ga z}, \qquad \alpha \in \R, \qquad z \in \rho(\wt A) \cap \C_+.
  \ee

  \item[\;\;\rm (iii)]
The following trace formula holds
\be\la{4.53B}
\tr\left((\wt A^* - z)^{-1} - (\wt A - z)^{-1}\right) =
2i\sum_n \frac{m_n\IM(z_n)}{(z - z_n)(z - \overline{z}_n)}  + i\ga,
\ee
for $z \in \rho(\wt A^*) \cap \rho(\wt A).$ 
In particular, if $a=\overline a\in \rho(\wt A)$, then
\be \la{4.53A}
\ga/2 = \tr(\IM(\wt A^* - a)^{-1}) - \sum_n \IM  \left(\frac{m_n}{a - z_n}\right)
\ee
where $z_n$ and $m_n$ are the eigenvalues of $\wt A$ in $\C
\setminus \R$ and their multiplicities, respectively.
\et
\begin{proof}
(i) Let $\Pi$ be a boundary triplet for $A^*$ regular for $\{\wt
A,\wt A^*\}$ such that $\{\wt A,\wt A^*\} \in \gotD^\Pi$, cf.
Theorem \ref{th5.4}(i). Since $(\wt A^* - z)^{-1}$ is compact for
$z\in \rho(\wt A^*)$ the perturbation determinant $\gD^\Pi_{\wt
A/\wt A^*}(\cdot)$ is meromorphic in $\C$.

Since $\Pi$ is  regular for $\{\wt A,\wt A^*\}$, one has  $\wt A =
A_B = A^* \!\upharpoonright\ker(\gG_1 - B\gG_0)$  and $\wt A^* =
A_{B^*}$ where $B\in [\cH]$. Therefore the real part $\wt A_R$ of
$\wt A$ is well defined, $\wt A_R := A_{B_R}.$ Since $B_I\in
\mathfrak S_1,$  the perturbation determinants $\gD^\Pi_{\wt
A_R/\wt A}(\cdot)$ and $\gD^\Pi_{\wt A_R/\wt A^*}(\cdot)$  are
well defined and
\bed
\gD^\Pi_{\wt A_R/\wt A}(z)  = \det\bigl(I + (B_R - B)(B -
M(z))^{-1}\bigr) = \det(I - i B_I (B - M(z))^{-1}),
\eed
$z \in \rho(\wt A) \cap \rho(A_0)$, and
\bed \gD^\Pi_{\wt A_R/\wt A^*}(z)   = \det\bigl(I + (B_R -
B^*)(B^* - M(z))^{-1}\bigr) = \det(I + i B_I (B^* - M(z))^{-1}),
\eed
$z \in \rho(\wt A^*) \cap \rho(A_0)$. Moreover, by Proposition
\ref{III.7}, the determinants   $\gD^\Pi_{\wt A_R/\wt A}(\cdot)$
and $\gD^\Pi_{\wt A_R/\wt A^*}(\cdot)$ admit holomorphic
continuations from $\rho(\wt A) \cap \rho(\wt {A^*})\cap
\rho(A_0)$  to $\rho(\wt A)$ and $\rho(\wt{A^*})$, respectively.
Since the resolvents of $\wt A$ and $\wt A^*$ are compact
the determinants   $\gD^\Pi_{\wt A_R/\wt
A}(\cdot)$ and $\gD^\Pi_{\wt A_R/\wt A^*}(\cdot)$ are meromorphic. According to
\eqref{4.107} we get
\bed \gD^\Pi_{\wt A_R/\wt A^*}(z) = \overline{\gD^\Pi_{\wt A_R/\wt
A}(\overline{z})}, \qquad z \in \rho(\wt{A^*}). \eed
In particular, we have
\bed
\gD_{\wt A_R/\wt A^*}(x) = \overline{\gD_{\wt A_R/\wt A}(x)},
\qquad  x \in \rho(\wt A)\cap \rho(\wt {A^*})\cap \R = \rho(\wt
A^*)\cap \R = \rho(\wt A)\cap \R.
\eed
Using this identity and applying the chain rule we get
\be\label{5.82}
\gD_{\wt A/\wt A^*}(x) = \frac{\overline{\gD_{\wt A_R/\wt
A^*}(x)}}{\gD_{\wt A_R/\wt A}(x)}, \qquad x\in \rho(\wt A) \cap \rho(\wt A^*) \cap
\rho(\wt A_R) \cap \R.
\ee
It follows that $|\gD_{\wt A/\wt A^*}(x) | = 1$ for  $x\in
\rho(\wt A) \cap \rho(\wt A_R) \cap \R.$ Since $(\wt A_R -
z)^{-1}\in \gotS_\infty(\gotH)$, $z\in \rho(\wt A)$, the operator
$\wt A_R$ has also discrete spectrum, $\gs(\wt A_R) = \gs_d(\wt
A_R)$. Thus, $|\gD_{\wt A/\wt A^*}(x)| = 1$ for $x$ outside a
discrete set $(\gs(\wt A_R)\cup \gs(\wt A)) \cap \R$. Hence  any
possible real pole $x_0$ of the meromorphic function $\gD^\Pi_{\wt
A/\wt A*}(\cdot)$  is removable.  Thus, $|\gD_{\wt A/\wt A^*}(x)|
= 1$ for any $x\in \R$  which shows that $\gD_{\wt A/\wt
A^*}(\cdot)$ is holomorphic in a neighborhood of $\R$.

(ii) Clearly, the extension $\wt A' = A_{B^*_1}$, $B_1 := B_R +
i|B_I|$, is $m$-accumulative. Moreover, since $B - B^*_1 = 2i
|B_I|\in \gotS_1(\cH)$, it follows from Proposition
\ref{prop2.9}(ii) that
the  resolvent of $\wt A'$ is compact, i.e.,  the  spectrum of $\wt
A'$ is discrete. Hence, the perturbation determinant $F_+(\cdot)
:= \gD^\Pi_{\wt A/\wt A'}(\cdot)$ is holomorphic in $\C_+$ and
meromorphic in $\C$. In particular, $F_+(\cdot)$ admits a
holomorphic continuation through $\R \setminus \gs(\wt A') = \R
\setminus \gs_p(\wt A')$ where, of course, $\gs_p(\wt A') \cap \R$
is a discrete set. From \cite[Theorem II.6.3]{Gar81} we find that
the inner and outer factors $I_{F_+}(\cdot)$ and
$\cO_{F_+}(\cdot)$, respectively, of the contractive in $\C_+$
holomorphic function $F_+(\cdot) := \gD^\Pi_{\wt A/\wt
A'}(\cdot)$, cf. Appendix \ref{App.III}, admit also a holomorphic
continuation through $\R \setminus \gs(\wt A')$. Since the
Blaschke product $\cB_+(\cdot)$ admits a holomorphic continuation
through $\R \setminus \gs_p(\wt A')$ we get that the singular
factor  $S_{F_+}(\cdot)$  (cf. \eqref{4.10Inner})
has this property. By \cite[Theorem II.6.2]{Gar81} the singular
part $\mu^s_+$ of the measure $\mu_+$ is supported on $\R \cap
\gs(\wt A')$. Thus,  the singular continuous part $\mu^{sc}_+$ of
the measure $\mu_+$ is missing, i.e. $\mu^{sc}_+ \equiv 0$. Hence,
$\mu^s_+$ is atomic and supported on $\gs(\wt A')$, i.e.
\be\la{5.76a}
S_{F_+}(x) =
\exp\left\{\frac{i}{\pi}\sum_{t_k \in \gs(\wt A') \cap \R}\left(\frac{1}{t_k-x} -
    \frac{t_k}{1+t^2_k}\right)\mu^s_+(\{t_k\})
\right\},
\ee
$x \in \R \setminus \gs(\wt A')$. By a straightforward computations it follows from \eqref{4.9a}  that
\be\la{5.77}
\lim_{y\to+0}|\gD^{\Pi}_{\wt A/\wt A'}(x+iy)| = e^{-\mu'_+(x)}
\quad \text{for a.e.}\quad x \in \R.
\ee
By the same reasoning we get that the singular factor
$S_{F_-}(\cdot)$ of $F_-(\cdot) := \gD_{\wt A^*/\wt A'}(\cdot)$, 
(cf. \eqref{4.10Inner}) admits the representation
\be\la{5.78}
S_{F_-}(x) =
\exp\left\{\frac{i}{\pi}\sum_{t_k \in \gs(\wt A') \cap \R}\left(\frac{1}{t_k-x} -
    \frac{t_k}{1+t^2_k}\right)\mu^s_-(\{t_k\})
\right\},
\ee
$x \in \R \setminus \gs(\wt A')$, and $\mu^{sc}_- \equiv 0$. Moreover,  it follows from \eqref{5.76}
that
\be\la{5.79a} \lim_{y\to+0}|\gD^{\Pi}_{\wt A^*/\wt A'}(x+iy)| =
e^{-\mu'_-(x)} \qquad \text{for a.e.} \quad x \in \R.
    \ee
Combining  \eqref{5.82}  with  \eqref{5.77} and \eqref{5.79a} we
get
   \bed
|\gD^\Pi_{\wt A/\wt A'}(x)| = e^{-(\mu'_+(x) - \mu'_-(x))} \qquad
\text{for a.e.} \quad x \in \R.
  \eed
Since $|\gD^\Pi_{\wt A/\wt A'}(x)|=1$ for $x \in \R$ we get
$\mu'_+(x) = \mu'_-(x)$ for a.e. $x\in \R$, i.e. $\mu^{ac}_+ =
\mu^{ac}_-$. Hence $\cO_{F_+}(z) = \cO_{F_-}(z)$, $z \in \C_+$
which yields the representation
  \bed
\gD^\Pi_{\wt A/\wt A^*}(z) = \frac{F_+(z)}{F_-(z)} =
\frac{I_{F_+}(z)}{I_{F_-}(z)} =
\frac{\varkappa_+}{\varkappa_-}\frac{\cB_+(z)}{\cB_-(z)}\frac{S_{F_+}(z)}{S_{F_-}(z)}e^{i(\ga_+-\ga_-)},
\;\; z \in \rho(\wt A^*) \cap \C_+.
   \eed
Since the spectrum of $\wt A$ is discrete the eigenvalues $z^+_k$
of $\wt A$ in $\C_+$ cannot accumulate to the real axis. Hence the
limit $\cB_+(x) = \lim_{y\to+0}\cB_+(x+iy)$ exists for any $x \in
\R$ and the limit function $\cB_+(x)$ is continuous. Moreover,
$|\cB_+(x)| = 1$ for $x\in \R$. Similarly one shows that the limit
function $\cB_-(x)$ is defined everywhere, is continuous and
$|\cB_-(x)| = 1$ for $x \in \R$. Hence $\frac{\cB_+(x)}{\cB_-(x)}$
is continuous. Therefore the limit function $S(x) :=
\lim_{y\to+0}\frac{S_{F_+}(x+iy)}{S_{F_-}(x+iy)}$ exists everywhere and is
continuous. Combining  \eqref{5.76a} with  \eqref{5.78} we get the
representation
  \bed
S(x) = \exp\left\{\frac{i}{\pi}\sum_{t_k \in \gs(\wt A') \cap \R}\left(\frac{1}{t_k-x} -
    \frac{t_k}{1+t^2_k}\right)(\mu^s_+(\{t_k\}) - \mu^s_-(\{t_k\}))
\right\}
   \eed
for $x \in \R \setminus \gs(\wt A')$. It is easily  seen that the
function $S(\cdot)$ is continuous at $t_k$ if and only if
$\mu^s_+(\{t_k\}) = \mu^s_-(\{t_k\})$, $t_k \in \gs(\wt A') \cap
\R$ which yields $S(x) =1$ for $x \in \R$.   Thus, we arrive at
the representation \eqref{4.52A} with  $c := \tfrac{\varkappa_+}{\varkappa_-}$
and $\ga := \ga_+ - \ga_-$.

To prove \eqref{4.52A} for any (not necessarily regular) boundary
triplet $\Pi$ satisfying $\{\wt A^*,\wt A\} \in \gotD^\Pi$ it
remains  to apply Proposition \ref{III.5}.

(iii) Formula \eqref{4.53B} follows from \eqref{4.53} with the
measure $\mu = 0$. Formula \eqref{4.53A} follows immediately from
\eqref{4.53B}.
\end{proof}
\bc\la{V.19}
Let the assumptions of Theorem \ref{V.16} be satisfied. If
$\Phi \in \cF(\wt A,\wt A^*)$, then
$\Phi(\wt A^*) - \Phi(\wt A) \in \gotS_1(\gotH)$ and
\bed
\tr(\Phi(\wt A) - \Phi(\wt A^*)) = \sum_n m_n(\Phi(z_n) -
\Phi(\overline{z}_n))  + i\ga\;\res_\infty(\Phi)
\eed
where $\{z_n\}_{n\in\R}$ are the non-real
eigenvalues of $\wt A$ and  $m_n$   their algebraic multiplicities,
respectively.
\ec
\begin{proof}
Corollary \ref{V.19} follows from Corollary \ref{V.18} setting $\mu
\equiv 0$.
\end{proof}
\br\la{V.17}
{\em
\item[\;\;(i)]
If $\wt A$ is $m$-dissipative,
then the perturbation determinant $\gD_{\wt A/\wt A^*}(\cdot)$ is
holomorphic and contractive in $\C_+$ and due to \eqref{5.79} it
is an  inner function in $\C_+$.

In contrast to this fact, in the non-dissipative case the
perturbation determinant $\gD^\Pi_{\wt A/\wt A^*}(\cdot)$ admits
the representation $\gD^\Pi_{\wt A/\wt A^*}(z) =
\frac{F_+(z)}{F_-(z)}$, $z \in \C_+$, where the numerator and the
denominator  might really have outer factors despite of the
analyticity of both determinants  on the real line and the
necessary condition $|\gD^\Pi_{\wt A/\wt A^*}(x)| =1$ for $x \in
\R$ (cf. \eqref{5.79}).

\item[\;\;(ii)]  Notice that the non-dissipative  operator $\wt A$ might have
real eigenvalues even if it is completely non-selfadjoint. However
these eigenvalues do not appear in the representation \eqref{4.52A}
neither  in the trace formula \eqref{4.53B}.
This fact is not surprising since if  $\lambda_0 =
\overline{\lambda}_0 \in \sigma_p(\wt A)$ then $\lambda_0 \in
\sigma_p(\wt A^*)$ and $\dim\ker(\wt A - \lambda_0) =\dim\ker(\wt
A^* - \lambda_0)$ and these zeros cancel out in the representation
\eqref{4.52A}.
  Due to formula \eqref{5.69}  such eigenvalues do not appear
in the determinant of the characteristic function $W^\Pi_{\wt
A}(\cdot)$. In this connection  we mention the paper
\cite{VesNab86} where it is shown that even singular factors
cancel in a formula for the determinant of the  the characteristic
function.
}
\er

Theorem \ref{V.16} allows to indicate a condition which guarantees the
completeness of the root vector system, cf. Appendix \ref{App.IV}.
\bc\la{V.20}
Let the assumption of Theorem \ref{V.16} be satisfied and let us
assume in addition that $\wt A$ is a maximal dissipative operator. The
root vector system of $\wt A$ is complete if and only if $\ga = 0$.
\ec
\begin{proof}
Let $\wt A$ be a maximal dissipative extension of $A$ such the
$\rho(\wt A) \not= \emptyset$. Since $(\wt A - z)^{-1}$ is compact for some $z \in
\rho(\wt A)$ there is a real number $a \in
\rho(\wt A)$. Hence $a \in \rho(\wt A^*)$.  Let $R := (\wt A^* -
a)^{-1}$. A simple computation shows that $R$ is a bounded dissipative
operator. From \eqref{5.65} one gets that $(\wt A^* -a)^{-1} - (\wt A
-a)^{-1}$ is a trace class operator. Hence $R - R^*$ is trace class
operator which yields $R_I := \IM(R)$ is a trace class
operator. Since $R$ is dissipative we
obtain from  \eqref{4.53A} that
\bed
\tr(R_I) = \sum_n m_n\IM(\mu_n) + \frac{\ga}{2}, \quad \mu_n :=
\frac{1}{\overline{z}_n - a} ,
\eed
where $z_n$ and $m_n$ the eigenvalues of $\wt A$ in $\C_+$
and their multiplicities. Since $\wt A$ is maximal  dissipative it holds
$\ga \ge 0$. We note that $\mu_n$ are the eigenvalues
of $T$ and $m_n$ their multiplicities.
Applying Theorem V.2.1 of \cite{GK69} we prove
that the root vector system of $R$ is complete if and only if
$\ga = 0$. Using Lemma \ref{A.V} we get that root vector system of
$\wt A$ is complete if and only if $\ga = 0$.
\end{proof}

\subsection{Annihilation functions for dissipative
  extensions}\la{sec.V.5}

We are going to prove a Cayley-Hamiltonian-type theorem for maximal dissipative extensions
of a symmetric operator $A$ with finite deficiency indices.
{\vio{Let $A$ be a densely defined closed symmetric operator.
A point $z \in \C$ is called of regular type of $A$,
cf. \cite[Section VIII.100]{AG81}, if there is a constant $c > 0$
such that $\|(A - z)f\|^2 \ge c\|f\|^2$, $f \in \dom(A)$.
By $\wh\rho(A)$ we denote the set of all points of regular type of $A$.}}
{\vio{
\begin{proposition}\la{V.21}
Let $A$ be a \emph{simple}  closed symmetric operator in $\gotH$ with finite
deficiency indices $n:= n_+(A)=n_-(A)<\infty$ and
let $\Pi=\{\cH,\Gamma_0,\Gamma_1\}$ be a boundary triplet
for $A^*$. Further, let $\wt A \in \Ext_A$ be a maximal dissipative
extension of $A$ such that $\wt A = A_B$ with $B \in [\cH]$.
Assume that $\wh \rho(A) = \C$. Then the following holds:

\item[\;\;\rm (i)]
The resolvent of $\wt A$ is compact, i.e. the spectrum of $\wt A$ is discrete.

\item[\;\;\rm (ii)]
If $\ker(B_I) = \{0\}$, $B_I = \IM(B)$, then
$\wt A$ is completely non-selfadjoint.
In particular, $\R \subset \rho(\wt A)$.

\item[\;\;\rm (iii)]
If $\wt A$ is completely non-selfadjoint,
then $\wt A$ belongs to the class $C_0$.
Moreover, the perturbation determinant
$d(z) := \gD^\Pi_{\wt A/\wt A^*}(z)$, $z \in \C_+$,
is an annihilation function for $\wt A$.

\item[\;\;\rm (iv)]
If $\wt A$ is completely non-selfadjoint and complete, then
the  annihilation function $d(\cdot)$ is minimal for $\wt A$ if
and only if the geometric multiplicity of any eigenvalue $z$ of $\wt A$ is
one, i.e. $\dim(\ker(\wt A- z))=1$ or, equivalently, $\dim(\ker(B - M(z))) = 1$.
In particular, $d(\cdot)$ is minimal if $n_\pm(A) =1$.
\end{proposition}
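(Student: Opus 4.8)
Throughout I would exploit that $\dim\cH = n_\pm(A) = n < \infty$, so that $M(\cdot)$ is an $n\times n$ matrix-valued Nevanlinna function and, by the Krein-type formula \eqref{2.30}, every resolvent difference of two proper extensions has finite rank. For (i), I would first show that $\wh\rho(A)=\C$ forces the self-adjoint extension $A_0 = A^*\!\upharpoonright\ker(\gG_0)$ to have purely discrete spectrum: a \emph{real} point of regular type cannot lie in the essential spectrum of a self-adjoint extension with finite deficiency indices, and since $\wh\rho(A)\supseteq\R$ this makes $\gs_{\mathrm{ess}}(A_0)$ empty, i.e. $(A_0-z)^{-1}$ compact. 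As $\wt A$ is maximal dissipative, $\C_-\subset\rho(\wt A)\neq\eset$, and by \eqref{2.30} the difference $(\wt A-z)^{-1}-(A_0-z)^{-1}$ is finite rank; added to a compact operator this yields compactness of $(\wt A-z)^{-1}$, proving (i).

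For (ii) the core implication is that $A$ simple together with $\ker(B_I)=\{0\}$ makes $\wt A$ completely non-selfadjoint. I would argue this through the characteristic function: by Proposition \ref{II.15} the factor $|B_I|^{1/2}$ enters $W^\Pi_{\wt A}(\cdot)$, and since $\dim\cH<\infty$ the hypothesis $\ker(B_I)=\{0\}$ means $|B_I|^{1/2}$ is invertible, so $W^\Pi_{\wt A}(\cdot)$ is pure; combined with simplicity of $A$ this excludes a self-adjoint reducing part (cf. \cite{DM92a,DM92}). Given c.n.s., $\R\subset\rho(\wt A)$ follows from (i): the spectrum is discrete, and a real eigenvalue $\gl$ of the dissipative $\wt A$ would force $\wt A f=\gl f$ and $\wt A^* f=\gl f$, so the line $\C f$ would reduce $\wt A$ to the self-adjoint operator $\gl$, a contradiction.

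For (iii), assuming $\wt A$ c.n.s., I would use \eqref{5.69} to write $d(z)=\gD^\Pi_{\wt A/\wt A^*}(z)=\det\bigl(W^\Pi_{\wt A}(z)\bigr)$. Since $\wt A$ is $m$-dissipative with compact resolvent (by (i)) and has no real eigenvalues, Theorem \ref{V.16} and Remark \ref{V.17}(i) reduce $d$ to the inner function $c\,\cB_+(z)e^{i\ga z}$ on $\C_+$. Passing to a Cayley transform, $\wt A$ corresponds to a completely non-unitary contraction whose Sz.-Nagy--Foias characteristic function is equivalent to $W^\Pi_{\wt A}$, an $n\times n$ inner matrix function with $\det W^\Pi_{\wt A}=d\not\equiv 0$. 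The finite-dimensional $C_0$ criterion then places $\wt A$ in the class $C_0$ and exhibits $d$ as a scalar inner annihilating function, i.e. an annihilation function in the sense of Appendix \ref{App.IV}.

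Finally, $d$ is \emph{minimal} precisely when it agrees, up to a unimodular constant, with the minimal function $m_{\wt A}$. By Proposition \ref{III.8}(iv), cf. \eqref{3.11A}, the order of the zero of $d$ at an eigenvalue $z_k$ equals its algebraic multiplicity $m_{z_k}(\wt A)$, the dimension of the associated root subspace, whereas the order of $m_{\wt A}$ at $z_k$ is the maximal length of a Jordan chain there. For a complete c.n.s. operator these orders coincide at every $z_k$ iff the root subspace at $z_k$ is a single Jordan chain, equivalently $\dim(\ker(\wt A-z_k))=1$, which by Proposition \ref{t1.12}(ii) equals $\dim(\ker(B-M(z_k)))=1$; when $n_\pm(A)=1$ the space $\cH$ is one-dimensional, so this holds automatically and $d$ is minimal. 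The main obstacle I expect is (iii): making rigorous the passage from the boundary-triplet characteristic function $W^\Pi_{\wt A}$ to the Sz.-Nagy--Foias model and invoking the precise $C_0$ criterion together with the Appendix \ref{App.IV} notion of annihilation function; the c.n.s. implication in (ii) and the Jordan-structure bookkeeping in (iv) are the other delicate points.
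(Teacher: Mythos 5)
Your proposal is correct in substance, and for parts (i) and (iii) it follows essentially the paper's own route: for (i), discreteness of the spectrum of the self-adjoint extensions deduced from $\wh\rho(A)=\C$ and the finiteness of the deficiency indices (the paper phrases this via invariance of the continuous spectrum, you via the essential spectrum), followed by the Krein formula \eqref{2.30}; for (iii), the identity $d=\det(W^\Pi_{\wt A})$ from \eqref{5.69}, innerness of $d$ via Theorem \ref{V.16} and Remark \ref{V.17}(i), and then the Sz.-Nagy--Foias finite-defect criterion \cite[Theorem VI.5.2]{FN70} after a Cayley transform. The paper, however, first records the intermediate steps $\wt A\in C_{\cdot\,0}$, $-\wt A^*\in C_{\cdot\,0}$, hence $\wt A\in C_{00}$, before invoking that criterion; you compress this, and you rightly flag the identification of $W^\Pi_{\wt A}$ with the Sz.-Nagy--Foias characteristic function as the delicate point.

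You genuinely diverge in (ii) and (iv). For (ii) the paper's argument is elementary and self-contained: for $x\in\R$ and $f\in\ker(A_B-x)$, the Green identity \eqref{2.0} gives $\IM((A_B-x)f,f)=2\|\sqrt{B_I}\,\gG_0 f\|^2$, so $\ker(B_I)=\{0\}$ forces $\gG_0f=\gG_1f=0$, hence $f\in\ker(A-x)$, contradicting simplicity; together with discreteness this yields complete non-selfadjointness. You instead derive the complete non-selfadjointness from ``purity'' of the characteristic function plus Derkach--Malamud theory \cite{DM92a,DM92} --- a true but unproven step (purity is not even defined in the paper) and noticeably heavier than the two-line direct argument; your subsequent deduction of $\R\subset\rho(\wt A)$ from complete non-selfadjointness is sound, but the auxiliary fact that a real eigenvalue of a dissipative operator is also an eigenvalue of the adjoint requires its own short computation. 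For (iv) the paper computes with the adjugate of the matrix characteristic function: $d$ fails to be minimal iff the entries of $\adj(\gT_{\wt A}(\cdot))$ have a common nontrivial inner divisor, iff $\adj(\gT_{\wt A}(z_0))=0$ at some eigenvalue, iff $\dim(\ker(W^\Pi_{\wt A}(z_0)))\ge 2$, and it then proves separately that $\dim(\ker(W^\Pi_{\wt A}(z_0)))=\dim(\ker(B-M(z_0)))$. Your route --- comparing the zero-orders of $d$ (equal to the algebraic multiplicities by Proposition \ref{III.8}(iv), cf. \eqref{3.11A}) with those of the minimal function (equal to the maximal Jordan-chain lengths), and using divisibility of Blaschke products together with completeness --- is a legitimate and arguably cleaner alternative, which moreover obtains the equivalence with $\dim(\ker(B-M(z)))=1$ directly from Proposition \ref{t1.12}(ii); its only additional debt is the standard $C_0$-theory fact about the zero-orders of the minimal function, which you should justify (for instance by evaluating the $H^\infty$ functional calculus on the finite-dimensional root subspaces and using their totality, which is exactly where the completeness hypothesis enters).
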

}}
\begin{proof}
(i) Follow \cite[Section 105]{AG81} $z \in \C$ belongs to the continuous spectrum of $A$
if $z \in \C \setminus \wh \rho(A)$ and $\ran(A-z)$ is not closed. By
\cite[Theorem 100.1]{AG81} all selfadjoint extension of $A$ have the
same continuous spectrum. Since $\wh\rho(A) = \C$ we find that for all
selfadjoint extensions the continuous spectrum is empty. Hence, for
any selfadjoint extension the continuous spectrum is discrete which
shows that the resolvent of any selfadjoint extension of $A$ is
compact. Applying Krein formula \eqref{2.30} we find that the
resolvent of any other extension is compact, too.

(ii) Let us show  that $\wt A$ is completely
non-selfadjoint. From Proposition \ref{prop2.1} it follows
that the operator $B$ has to be dissipative, i.e. $\IM(B) \ge 0$.

Since the spectrum of $\wt A$ is discrete, it suffices to show that
$\wt A$ has no real eigenvalues. Let us assume that $x \in \gs(\wt A)
\cap \R$. It follows from the Green formula \eqref{2.0} that for any
$x\in \R$  the following identity holds
\bead
\IM\bigl( (A_B - x)f, f\bigr) = -i[(B\Gamma_0 f,\Gamma_0 f)_{\cH} -
(\Gamma_0 f, B\Gamma_0 f)_{\cH}] \nonumber  \\
 = 2(B_I\Gamma_0f, \Gamma_0f)_{\cH} =
2\left\|\sqrt{B_I}\Gamma_0 f\right\|^2_{\cH}, \qquad f\in \dom(A^*).
\eead
Let $f\in\ker(A_B-x)$. Since $\ker(B_I) = \{0\}$, the above identity yields  $\Gamma_0 f=0$
and $\Gamma_1 f=B\Gamma_0 f=0$. Thus, $f\in\dom(A)$  and
$f\in\ker(A-x)$.  This contradicts the simplicity of $A$. Thus, $\R
\subset \rho(\wt A))$.

(iii)
Since $A$ has finite deficiency indices,  $\{\wt A,\wt A^*\} \in
\gotD^\Pi$. Thus,  the perturbation determinant $d(\cdot) := \gD^\Pi_{\wt
  A/\wt A^*}(\cdot)$ exists on  $\rho(\wt A^*)$.  By Theorem
\ref{th5.4}(ii),  $d(z)=\det(W_{\wt A}(z))$, $z \in
\rho(\wt A^*) \cap \rho(A_0)$ where $W_{\wt A}(\cdot)$ is the characteristic
function of $\wt A$ defined by \eqref{2.33} with $J=I$.

Since $\wt A$ is $m$-dissipative, the characteristic function
$W_{\wt A}(\cdot)$ is contractive in ${\C}_+$. Hence, $d(\cdot)$ is
contractive in $\C_+$.  Further, since, by (i), the resolvent of $\wt A$ is
compact, it follows from Theorem \ref{V.16}(ii)  that
$d(\cdot)$ is an inner function. Hence
$W_{\wt A}(\cdot)$ is an inner operator-valued function (see \cite[Corollary
V.6.3]{FN70}). Applying
\cite[Proposition VI.3.5]{FN70} we obtain that $\wt A$
belongs to the class $C_{\cdot\, 0}$.

Similarly, since the operator $-\wt A^*$ is $m$-dissipative too,  its
characteristic function $W_{-\wt A^*}(\cdot)$  is also inner operator-valued function
in $\C_+$. Hence $-\wt A^*$ belongs to the class $C_{\cdot\, 0}$ which  is equivalent  to
the inclusion  $\wt A \in C_{0\,\cdot}$. Therefore $\wt A\in C_{\cdot\,0} \cap C_{0\,\cdot} = C_{00}$.

Since $n_{\pm}(A)=n<\infty$, the contraction $T_{\wt A} =
(\wt A - i)(\wt A + i)^{-1}$  (cf. Appendix \ref{App.VII}), has equal finite defect numbers, i.e.
$\dim\left(\ran(I - T^*_{\wt A} T_{\wt A})\right) =
\dim\left(\ran(I- T_{\wt A}T^*_{\wt A})\right) < \infty$.
By  \cite[Theorem VI.5.2]{FN70},  $T_{\wt A} \in
C_0$ and the determinant $d(\cdot) = \det(W^\Pi_{\wt A}(\cdot))$ defined on  $\C_+$, is an
annihilation function for  $\wt A$ , i.e. $d(\wt A) = \wt d(T_{\wt A}) = 0$, cf. Appendix \ref{App.VII}.

(iv)
Let $\{e_j\}^n_{j=1}$ be a fixed orthonormal basis in
$\cH$.   Denote by $\gT_{\wt A}(\cdot)$ the matrix representation of the
characteristic function $W^\Pi_{\wt A}(\cdot)$  with respect
to the basis $\{e_j\}^n_{j=1}$. By $\adj(\gT_{\wt A}(\cdot))$ we denote
the adjugate matrix of $\gT_{\wt A}(\cdot)$.  Note that alongside the matrix  $\gT_{\wt A}(\cdot)$
the adjugate matrix function $\adj(\gT_{\wt A}(z))$ is holomorphic and
contractive in $\C_+$ too (cf. the proof of \cite[Proposition V.6.1]{FN70}).
By \cite[Theorem VI.5.2]{FN70}, the determinant
$d(\cdot) := \det(\gT_{\wt A}(\cdot)) = \det(W^\Pi_{\wt A}(\cdot))$  of
$\gT_{\wt A}(\cdot)$  coincides with the minimal annihilation  function $m_{\wt A}(\cdot)$
of $\wt A$ if and only if the entries of $\adj(\gT_{\wt A}(\cdot))$ have no
common non-trivial inner divisor in the algebra $H^\infty(\C_+)$.

On the other hand,  by (iii), the operator $\wt A\in C_0$. Therefore
it  is complete if and only if the determinant $d(\cdot)$ is a
Blaschke product (see \cite[Section 4.5]{Nikol80}).
Therefore it follows from the identity $\adj(\gT_{\wt A}(z))\cdot\gT_{\wt A}(z) = d(z)I_n$
that each common divisor $\varphi(\cdot)$ of the entries of
$\adj(\gT_{\wt A}(\cdot))$ has to be a divisor of
$d(\cdot)$. Therefore $\varphi(\cdot)$ always contains a Blaschke factor, i.e.
it admits the representation
$\varphi(\cdot) =  \varphi_1(\cdot)b_{z_0}^{m_0}(\cdot)$ where $b_{z_0}^{m_0}(\cdot)$ is a Blaschke factor
$b_{z_0}^{m_0}(z)  :=   \left(e^{i\ga_0}({z - z_0})/({z - \overline{z_0}})\right)^{m_0}$, $m_0\ge 1$,
$z_0 \in \C_+$, cf. \eqref{7.17A}.  Clearly, the latter happens if and only if $\adj(\gT_{\wt A}(z_0)) = 0_n := 0\cdot I_n$.
However, $\adj(\gT_{\wt A}(z_0)) =0_n$ is valid if and only if
$\rank(\gT_{\wt A}(z_0)) \le n- 2$, that is,
$\dim(\ker(\gT_{\wt A}(z_0))) = \dim(\ker(W^\Pi_{\wt A}(z_0))) \ge 2$.

Further, by Proposition \ref{t1.12}(ii),  $\dim \ker(\wt A- z) = \dim\ker\bigl(B-M(z)\bigr)$ for any
$z\in \rho(A_0)$.  Let us show  that
\be\la{6.20}
\dim(\ker(W^\Pi_{\wt A}(z_0))) =  \dim(\ker(B - M(z_0))), \qquad z_0\in \C_+.
\ee
Indeed, setting $T_1 := 2i(B^* - M(z_0))^{-1}B_I^{1/2}$ and using
$\ker(B_I) = \{0\}$ one immediately verifies that $\ker(T_1) = \{0\}$.
Further, we note that
$W_{A_B}(z_0)h_0 = 0$ if and only if $h_0 \in \ker(I + B_I^{1/2}T_1)$.
If $h_0 \in \ker(I + B_I^{1/2}T_1)$, then $T_1h_0 \in \ker(I + T_1B^{1/2}_I)$
which yields $\dim(\ker(I + B_I^{1/2}T_1)) \le \dim(\ker(I + T_1B^{1/2}_I)$.
Conversely, if $h_1 \in \ker(I + T_1B^{1/2}_I)$, then $B^{1/2}_Ih_1
\in \ker(I + B_I^{1/2}T_1)$ which proves
$\dim(\ker(I + T_1B^{1/2}_I)) \le \dim(\ker(B_I^{1/2}T_1))$. Hence
\bed
\dim(\ker(W_{\wt A}(z_0)) = \dim(\ker(I + B_I^{1/2}T_1)) = \dim(\ker(I + T_1B_I^{1/2})).
\eed
Combining this relation with the identity  $I + T_1B_I^{1/2} = (B^* - M(z_0))^{-1}(B - M(z_0))$ we  arrive at \eqref{6.20}.
Thus,  $d(\cdot) = \gD^\Pi_{\wt A/\wt A^*}(\cdot)$ is a minimal
annihilation function if and only if $\dim(\ker(B - M(z))) = 1$ for
$z \in \gs(\wt A)$ {\vio{which yields $\dim(\ker(\wt A - z)) =1$.}}
\end{proof}

\section{The case of additive  perturbations}\la{sec.VI}

Here we extend some  results of Section \ref{sec.V} for extensions
to the case  of additive trace class perturbations of
$m$-accumulative operators. We emphasize that all results of this
section are new for the case additive perturbations.

\subsection{The pairs of $m$-accumulative operators}\la{sec.VIA}

We start with two technical statements.
\bl\label{lem6.1}
Assume that  $H$ and $H'$  are maximal accumulative operators in
$\gH$ and $V\in\gotS_1(\gH)$,  Then
\begin{equation}\label{7.10}
\lim_{y\to\infty}y^2\tr\bigl((H'-iy)^{-1}V(H-iy)^{-1}\bigr)
= -\tr(V).
\end{equation}
\el
\begin{proof}
The proof is based on the following statement: Let $\{Z(y)\}_{y \in \R_+}$ be a family of
bounded operators such that $\slim_{y\to\infty}Z(y) = Z$. If $V \in
\gotS_1$, then {\vio{$\wt Z(y) := Z(y)V \in \gotS_1(\gotH)$,
$y \in \R_+$, tends to $ZV$}} in the $\gotS_1$-norm, see \cite[Theorem III.6.3]{GK69}).

Let $Z(y) := y^2(H-iy)^{-1}(H'-iy)^{-1}$, $y \in \R_+$. Since $H$
and $H'$ are $m$-accumulative,
  \bed
\slim_{y\to\infty}y(H - iy)^{-1} = iI \quad \mbox{and} \quad
\slim_{y\to\infty}y(H' - iy)^{-1} = iI,
  \eed
which yields $\slim_{y\to\infty}Z(y) = -I$. Applying the statement
above we get  {\vio{$\lim_{y\to\infty}\|Z(y)V + V\|_{\gotS_1} =
0$.}} Hence
\bed
\lim_{y\to\infty}y^2\tr\bigl((H'-iy)^{-1}V(H-iy)^{-1}\bigr) =
{\vio{\lim_{y\to\infty}\tr\bigl(Z(y)V\bigr)}} = -\tr(V)
\eed
which proves \eqref{7.10}.
\end{proof}
\bc\label{cor6.2a}
Let $V \in \gotS_1(\gotH)$ and let $H$ be maximal accumulative in $\gotH$.
Let also $V_I := \IM(V) = V^+_I - V^-_I$ where $V^\pm_I
\ge 0$.  If $H' := H +V$, then $z = x+iy \in \rho(H')$ for $y >
\|V^+_I\|$ and
\eqref{7.10} holds.
\ec
\begin{proof}
We note that the operator $H' - i\|V^+_I\|$ is accumulative.
Using the representation $H' - iy = H' - i\|V^+_I\| - i(y - \|V^+_I\|)$
we find that $i(y - \|V^+_I\|) \in \rho(H' - \|V^+_I\|)$ provided that
$y - \|V^+_I\| > 0$. Since $i(y - \|V^+_I\|) \in \rho(H)$
it remains to apply Lemma \ref{lem6.1}.
\end{proof}

Next we present a counterpart to Lemma \ref{IV.2} for additive
perturbations.
\bl\label{lem6.7}
Let $H$ be a maximal accumulative operator.

\item[\;\;\rm (i)]
 If $0 \le V_+ = V^*_+\in\gotS_1(\gotH)$, then there exists a non-negative
function $\xi_+(\cdot) \in L^1(\R)$ such that the representation
\be\la{7.41}
\det\left(I + V_+(H - z)^{-1}\right) =
\exp\left\{\frac{1}{\pi}\int_\R
\frac{\xi_+(t)}{t-z}dt\right\},  \quad z\in \C_+,
\ee
holds and $\tr(V_+) = \frac{1}{\pi}\int_\R\xi_+(t)dt$ is satisfied.

\item[\;\;\rm(ii)]
If $V = V^* \in \gotS_1(\gotH)$, then there exists a real-valued function
$\xi(\cdot)\in L^1(\R)$ such that the representation \eqref{7.41} is
valid with $V$ and $\xi(\cdot)$ in place of $V_+$ and $\xi_+(\cdot)$,
in particular,
\be\la{6.3x}
\tr(V) = \int_R\xi(t)dt \qquad \mbox{and} \qquad \int_R|\xi(t)|dt \le
\|V\|_{\gotS_1}.
\ee
\el
\begin{proof}
(i) Let $V = V_+ \ge 0$.
We mimic the proof of Lemma \ref{IV.2}(i) replacing $B$ and $M(z)$  by $H$ and
and  $z$, respectively. {\vio{Doing so}} we find a non-negative function
$\xi_+(\cdot)$ satisfying $\int\tfrac{1}{1+t^2}\,\xi_+(t)dt < \infty$
and a positive constant $c_+$ such that the representation
\be\label{6.2A}
\det(I + V_+(H - z)^{-1}) = c_+ \exp\left\{\frac{1}{\pi}\int_\R
  \left(\frac{1}{t-z} - \frac{t}{1+t^2}\right)\xi_+(t) dt \right\},
\ee
$z \in \C_+$, is valid. Setting $T(z) := \sqrt{V_+}(H - z)^{-1}\sqrt{V_+}$, $z \in
\C_+$, we define a family of dissipative operators. Clearly,  $\lim_{y\to\infty}\|T(x+iy)\| = 0$, $x \in \R.$  Hence, $0 \in \rho(I +
T(x + iy))$ for any  fixed $x \in \R$ and  sufficiently large $y > 0$. Thus, for  $y$
large  enough we can take logarithm of both sides of \eqref{6.2A} using definition \eqref{2.19},
\be\la{6.3}
\log\det(I + V_+(H - z)^{-1}) = \log(c_+) + \frac{1}{\pi}\int_\R\left(\frac{1}{t-z} -
\frac{t}{1+t^2}\right)\xi_+(t) dt,
\ee
$z \in \C_+$. Hence
\bed
\IM(\log\det(I + T(z))) =
\frac{1}{\pi}\int_\R\frac{y}{(t-x)^2+y^2}\xi_+(t)dt, \quad z = x+iy
\in \C_+.
\eed
Using \eqref{2.20} we obtain
\be\la{6.3a}
\IM(\tr(\log(I + T(z)))) = \frac{1}{\pi}\int_\R\frac{y}{(t-x)^2+y^2}\xi_+(t)dt, \quad z = x+iy
\in \C_+.
\ee
It is easily seen that  $s-\lim_{y\uparrow\infty}(H-x-iy)^{-1}=0$ and $s-\lim_{y\uparrow\infty}(-iy)(H-x-iy)^{-1} = I$.
Since  $V_+\in{\gotS_1}(\gH)$, it follows with account of \cite[Theorem 3.6.3]{GK69} that
\bed
\lim_{y\uparrow \infty}\|T(x+iy)\|_{\gotS_1} = 0\quad \text{and}\quad \lim_{y\uparrow\infty}\|(-iy)T(x+iy)-V_+\|_{\gotS_1}=0.
\eed
Combining  these relations with definition \eqref{2.17} we obtain
\bed
\begin{split}
\lim_{y\uparrow\infty} y&\log\bigl(I+T(x+iy)\bigr)\\
& = -\lim_{y\uparrow\infty}(-iy)T(x+iy)\lim_{y\uparrow\infty}\int_{\R_+}\bigl(I+T(x+iy)+i\lambda\bigr)^{-1}(1+i\lambda)^{-1}d\lambda\\
& = -V_+\int_{\R_+}(1+i\lambda)^{-2}d\gl = i V_+
\end{split}
\eed
{\vio{for any fixed $x \in \R$}}. Notice that the
  convergence takes place in the $\gotS_1$-norm. Hence for any fixed $x \in \R$
\be\la{6.4a}
\lim_{y\to\infty}y\IM(\tr(\log(I + T(x+iy)))) = \tr(V_+).
\ee
On the other hand, multiplying the identity  \eqref{6.3a} by $y$ and
tending $y$ to $+\infty$ we arrive at the equality \eqref{6.4a} with
$\tfrac{1}{\pi}\int_\R\xi_+(t) dt$ in place of  $\tr(V_+)$.
Thus, these two quantities equal, $\tfrac{1}{\pi}\int_\R\xi_+(t) dt = \tr(V_+)$, in particular, $\xi_+(\cdot) \in L^1(\R)$.
Combining the later  inclusion  with representation \eqref{6.3} yields the representation
\be\label{6.8A}
\det(I + V_+(H - z)^{-1}) = c'_+ \exp\left\{\frac{1}{\pi}\int_\R \frac{\xi_+(t)}{t-z} dt \right\},
\quad z \in \C_+,
\ee
where
\bed
c'_+ :=
c_+\exp\left\{-\frac{1}{\pi}\int_\R\frac{t}{1+t^2}\xi_+(t)dt\right\}.
\eed
Setting in \eqref{6.8A} $z=iy$ and  tending $y$ to $+\infty$ and
noting that  $\lim_{y\to\infty}\det\left(I + V_+(H - iy)^{-1}\right) = 1$ we
find $c'_+ = 1$ which proves \eqref{7.41}.

(ii) Setting  $K := H- V_-$ where $V = V_+ - V_-$,
$V_\pm \ge 0$, and using the chain rule for perturbation determinants we get
\be\label{6.9}
\det(I + V(H-z)^{-1}) = \frac{\det(I +V_+(K-z)^{-1})}{\det(I + V_-(K-z)^{-1})}, \qquad z \in \C_+.
\ee
Applying (i) we get the representation  \eqref{7.41} with non-negative
$\xi_+(\cdot) \in L^1(\R)$ and {\vio{a}} similar  representation
 with non-negative $\xi_-(\cdot) \in L^1(\R)$ for ${\det(I + V_-(K-z)^{-1})}$. Setting $\xi
:= \xi_+(t) - \xi_-(t)$, $t \in \R$ and using \eqref{6.9}  we arrive at  the representation
\eqref{7.41} for $\det(I + V(H - z)^{-1})$, $z \in \C_+$. Further, since  $\tfrac{1}{\pi}\int_\R \xi_\pm(t) = \tr(V_\pm)$,
we get $\tfrac{1}{\pi}\int_\R\xi(t) dt = \tr(V)$. Moreover,
\bed
\frac{1}{\pi}\int_\R|\xi(t)|dt \le \frac{1}{\pi}\int_\R \xi_+(t) dt + \frac{1}{\pi}\int_\R \xi_-(t) dt =
\tr(V_+) + \tr(V_-) = \|V\|_{\gotS_1},
\eed
which proves \eqref{6.3x}.
\end{proof}
\br\la{remVI.4}
{\rm
Lemma \ref{lem6.7} can be proved in  a quite different way using
the classical results of Krein in \cite{K53b,K62a}, see also
\cite{K64} and \cite{BirPush98}. Indeed, since $H$ is a maximal
accumulative operator it admits a selfadjoint dilation,
that is, there is a selfadjoint operator $K$ in a larger
Hilbert space $\gotK \supseteq \gotH$  such that
\bed
(H-z)^{-1} = P^\gotK_\gotH (K - z)^{-1}\upharpoonright\gotH, \qquad z
\in \C_+.
\eed
cf.\cite{FN70}. Notice that
\bed
\gD_{H'/H}(z) = \det(I_\gotH + V(H-z)^{-1}) = \det(I_\gotK + V(K-z)^{-1}) = \gD_{K'/K}(z),
\eed
$z \in \C_+$, where $H' := H + V$ and $K' := K+ V$. By Theorem 1 of \cite{K62a} we
immediately find a real-valued function $\xi(\cdot) \in
L^1(\R)$ such that statement (ii) of Lemma \ref{lem6.7} is valid, in particular,
the relations \eqref{6.3x} hold. Moreover, If $V \ge 0$, then the same theorem
guarantees the existence of a non-negative function $\xi(\cdot) \in
L^1(\R)$ such that (i) of Lemma \ref{lem6.7} is valid.
}
\er
Using Lemma \ref{lem6.7} one readily  verifies trace formula \eqref{5.1A}.
{\vio{Passing to $m$-accumulative operators $H$ we firstly prove
an additive  counterpart of   Lemma \ref{IV.3a}.}}
\bl\label{cor6.4}
Let $H$ be a maximal accumulative operator in $\gotH$.

\item[\;\;\rm (i)]
Let $0 \le V_+ = V^*_+ \in \gotS_1(\gotH)$. If the condition
\be\la{6.10b}
(V_+f,f) \le -\IM(Hf,f), \quad f \in \dom(H).
\ee
is satisfied, then the function $w_+(z) := \det(I + iV_+(H-z)^{-1})$ admits the representation
\be\label{7.1B}
w_+(z)  =
 \exp\left\{\frac{i}{\pi}\int_\R
\frac{1}{t-z}\eta_+(t)dt\right\},  \qquad z\in \C_+,
\ee
with non-negative $\eta_+(\cdot) \in L^1(\R)$. Moreover, the representation
$\eta_+(t) = -\ln(|w_+(t + i0)|)$  holds for a.e. $t \in \R$ where
$w_+(t + i0) := \lim_{y\downarrow 0}w_+(t+iy)$.

In addition, the function $w^+(z) := \det(I - iV_+(H-z)^{-1})$, $z \in \C_+$,
admits the representation
\be\label{7.1B-}
w^+(z)  =
 \exp\left\{-\frac{i}{\pi}\int_\R
\frac{1}{t-z}\eta^+(t)dt\right\},  \qquad z\in \C_+,
\ee
with non-negative $\eta^+(\cdot) \in L^1(\R)$ such that
$\eta^+(t) = \ln(|w^+(t + i0)|) := \lim_{y\downarrow 0}w^+(t+iy)$
holds for a.e. $t \in \R$.

\item[\;\;\rm (ii)]
If  $V = V^*\in\gotS_1(\gH)$ and the condition
\be\la{6.14}
(Vf,f) \le -\IM(Hf,f), \qquad f \in \dom(H),
\ee
is satisfied, then the perturbation determinant  $w(z) := \det(I + iV(H-z)^{-1})$, $z \in
\C_+$, admits the representation
\be\label{7.1C}
w(z)  =
 \exp\left\{\frac{i}{\pi}\int_\R
\left(\frac{1}{t-z} \right)\eta(t)dt\right\},  \qquad z\in \C_+,
\ee
where $\eta(\cdot) \in L^1(\R)$  is \emph{real}-valued  and
$\eta(t) = -\ln\left(\left|w(t+i0)\right|\right)$ for a.e.
$t \in \R$.
\el
\begin{proof}
(i) In fact, the proof of Lemma \ref{IV.3a}(i) remains true if we
replace $B$ and $M(z)$ by $H$ and $z$, respectively. Hence there
exists a non-negative function $\eta_+(\cdot) \in
L^1(\R,\tfrac{1}{1+t^2}dt)$ such that the representation \eqref{4.180} holds, i.e.
  \be\label{6.15}
w_+(z) = \varkappa_+ \exp\left\{\frac{i}{\pi}\int_\R
\left(\frac{1}{t-z} - \frac{t}{1+t^2}\right) \eta_+(t)dt\right\},
\qquad z \in \C_+,
  \ee
where $\eta_+(t) = -\ln(|\det(w_+(t+i0))|)$ for a.e. $t \in \R$.
It follows that
\bed
|w_+(z)| = \exp\left\{-\frac{1}{\pi}\int_\R \frac{y}{(t-x)^2 +
    y^2}\eta_+(t)dt\right\},
\qquad z = x + iy \in \C_+,
\eed
or
\be\label{6.14A}
\exp\left\{\frac{1}{\pi}\int_\R \frac{y}{(t-x)^2 + y^2}\eta_+(t)dt\right\}
= \left|\frac{1}{w_+(z)}\right|, \qquad z = x + iy \in \C_+.
\ee
Notice that  $\frac{1}{w_+(z)} = \det(I - iV_+(H'-z)^{-1})$, where  $H' := H +
iV_+$. {\vio{Applying}} the known estimate for the perturbation determinant (see \cite[Section IV.1]{GK69}) we
arrive at the inequality
\bed
\frac{1}{|w_+(iy)|} \le \exp\left\{\left\|V_+(H' -iy)^{-1}\right\|_{\gotS_1}\right\} \le
 \exp\left\{\frac{\|V_+\|_{\gotS_1}}{y - \|V_+\|}\right\}, \quad y > \|V_+\|.
\eed
Combining this estimate  with relation \eqref{6.14A}  {\vio{this}} yields
\bed
\frac{1}{\pi}\int_\R \frac{y^2}{t^2 + y^2}\eta_+(t)dt \le
\|V_+\|_{\gotS_1}\frac{y}{y - \|V_+\|},
\quad y > \|V_+\|.
\eed
In turn, tending $y$ to $+\infty$ and applying  the
monotone convergence theorem yields
\bed
\|\eta_+\|_{L^1(\R)} = \int_\R\eta_+(t)\,dt \le \pi\|V_+\|_{\gotS_1}.
\eed
Moreover, setting
$\varkappa'_+ = \varkappa_+\exp\left\{-\frac{i}{\pi}\int_\R\frac{t}{1+t^2}\eta_+(t)dt\right\}$
and using \eqref{6.15} we arrive at the representation
\bed
w_+(z) = \varkappa'_+
\exp\left\{\frac{i}{\pi}\int_\R\frac{1}{t-z}\eta_+(t)dt\right\}, \quad z \in
  \C_+.
\eed
Finally, since  $\lim_{y\to\infty}w_+(iy) = 0$, we get $\varkappa'_+ = 1$ which
proves \eqref{7.1B}.

Notice that $H - iV_+$ is a maximal dissipative
operator. Obviously we have
\bed
w^+(z) = \det(I - iV_+(H-z)^{-1}) = \frac{1}{\det(I + iV_+(H - iV_+ -z)^{-1})}, \quad z \in \C_+.
\eed
By the result above there is a non-negative function $\eta^+(\cdot)
\in L^1(\R)$ such that the representation
\bed
\det(I + iV_+(H - iV_+ -z)^{-1}) =
\exp\left\{\frac{i}{\pi}\int_\R\frac{\eta^+(t)}{t-z)}dt\right\}, \quad z \in
  \C_+,
\eed
holds which immediately proves \eqref{7.1B-}.

(ii) Let $V = V_+ - V_-$, $V_\pm \ge 0$. We set $H_- := H -
iV_-$. Notice that $H_-$ is also a maximal accumulative operator.
Setting $w_\pm(z) := \det(I + iV_\pm(H_- -z)^{-1})$, $z \in \C_+$ and
using the chain rule (see Appendix \ref{B}, formula \eqref{1.1AB}) we get

\be\label{6.16A}
w(z) = \frac{w_+(z)}{w_-(z)}, \qquad z \in \C_+.
\ee
Next, we rewrite condition  \eqref{6.14} as
\bed
(V_+f,f) \le -\IM(Hf,f) + (V_-f,f) = -\IM(H_-f,f),
\quad f \in \dom(H_-).
\eed
Applying (i) we find that $w_+(\cdot)$ admits  {\vio{the}} representation \eqref{7.1B} with $\eta_+(t)\ge 0$.
Similarly, since  $(V_-f,f) \le -\IM(H_-f,f)$, $f \in
\dom(H_-)$, we obtain by applying (i) that  $w_-(\cdot)$ also admits a representation of type
\eqref{7.1B} with $\eta_-(t)\ge 0$ in place  of $\eta_+(t)$.
Combining \eqref{6.16A} with these representations  and setting $\eta(t) := \eta_+(t) - \eta_-(t)$, $t \in \R$,
we arrive at  \eqref{7.1C}.
\end{proof}

The counterpart of Theorem \ref{V.60} reads now as follows.
\bt\la{VI.5}
Let $H$ be a maximal accumulative operator,  $V \in
\gotS_1(\gotH)$ and let $H' = H+V$ be accumulative, i.e.
\be\la{6.7a}
\IM(Vf,f) \le -\IM(Hf,f), \quad f \in \dom(H).
\ee
Then $H'$ is maximal accumulative and there exists \emph{a
complex-valued function} $\go(\cdot) \in L^1(\R)$ such that the following holds:

\item[\;\;\rm (i)]
The perturbation determinant $\gD_{H'/H}(z)$, $z \in \C_+$, admits the
representation
\be\label{7.28}
\gD_{H'/H}(z) =
 \exp\left\{\frac{1}{\pi}\int_\R
\frac{\go(t)}{t-z}dt \right\},  \qquad z\in{\mathbb C}_+.
\ee

\item[\;\;\rm (ii)]
The trace formulas
\be\la{6.7b}
\tr\left((H'-z)^{-1} - (H-z)^{-1}\right) = -\frac{1}{\pi}\int_\R
\frac{\go(t)}{(t-z)^2}dt, \quad z \in \C_+,
\ee
and
\be\label{6.45}
\tr(V) = \frac{1}{\pi}\int_{\R}\go(t)dt.
\ee
hold.
\et
\begin{proof}
Clearly, $H'$ is $m$-accumulative because {\vio{ $H$ is so}} and $V$ is bounded.

(i) Let $V = V_R + iV_I$ where $V_R := \RE(V)$ and $V_I := \IM(V)$. We set $K := H + V_R$
and note that $K$ is $m$-accumulative. Using \eqref{6.7a} we find
\bed
(V_If,f) \le -\IM(Kf,f) = -\IM(Hf,f), \qquad f\in \dom(K) = \dom(H).
\eed
By Lemma \ref{cor6.4}(ii) there exists  a real-valued
function $\eta(\cdot) \in L^1(\R)$ such that the perturbation
determinant $\gD_{H'/K}(z) = \det(I + iV_I(K - z)^{-1})$, $z \in \C_+$,
admits the representation
\be\la{6.12b}
\gD_{H'/K}(z) = \exp\left\{\frac{i}{\pi}\int_\R\frac{\eta(t)}{t-z}
dt\right\}, \qquad z \in \C_+.
\ee
Furthermore, by Lemma \ref{lem6.7}{\vio{(ii)}}, there exists a real-valued function
$\xi(\cdot) \in L^1(\R)$ such that the perturbation determinant
$\gD_{K/H}(z) = \det(I + V_R(H-z)^{-1})$, $z \in \C_+$, admits the
representation
\be\la{6.12a}
\gD_{K/H}(z) = \exp\left\{\frac{1}{\pi}\int_\R\frac{\xi(t)}{t-z}dt\right\},
\quad z \in \C_+.
\ee
{\vio{By}} the chain rule  $\gD_{H'/H}(z) = \gD_{H'/K}(z)\gD_{K/H}(z)$, $z \in
\C_+$ (see formula  \eqref{1.1AB}), and setting $\go(t) := \xi(t) + i\eta(t)$, $t \in \R$,
we arrive at \eqref{7.28} with  a complex-valued function $\go(\cdot) \in L^1(\R)$.

(ii) Taking logarithmic derivative from both sides of \eqref{7.28} and using the property \eqref{1.1} we arrive at  the trace formula
\eqref{6.7b}.

Next, to prove \eqref{6.45}  we rewrite \eqref{6.7b} in the form
\bed
\tr\bigl((H' - z)^{-1}V(H - z)^{-1}\bigr) = 
 \frac{1}{\pi}\int_\R\frac{\go(t)}{(t-z)^2}dt,\qquad z\in \C_+,
\eed
and put here  $z=iy$. Then multiplying both sides by $y^2$  and tending $y$ to $+\infty$
with  account of Lemma \ref{lem6.1} we arrive at  \eqref{6.45}.
\end{proof}
\br\la{VI.6}
{\em
Let us compare Theorem \ref{VI.5}  with {\vio{Krein's results
of  \cite{K87}.}}

\item[\;\;\rm (i)]
Krein \cite{K87} considered the maximal accumulative operator
$H' :=H-iV_+$, with  $H = H^*$ and $ V_+ = V^*_+ \ge 0$. According to \cite[Theorem 9.1]{K87}
there exists  a non-decreasing
function $\tau(\cdot): \R \longrightarrow \R$ such that the
perturbation determinant $\gD_{H/H'}(z)$, $z \in \C_+$, admits the
representation
\be\la{6.16b}
\gD_{H/H'}(z) = \exp\left\{i\int_\R\frac{d\,\tau(t)}{t-z}\right\}, \quad
z \in \C_+.
\ee
{\vio{Lemma  \ref{cor6.4}(i)}} improves {\vio{Krein's result. Indeed, it is
shown}}  that the measure $d\tau(\cdot)$ is absolutely continuous, i.e. $d\tau(t)= \eta_+(t)dt$ where $\eta_+(\cdot)\ge 0$ and
$\eta_+(\cdot) \in L^1(\R)$. {\vio{Notice that in distinction to \eqref{7.28} Krein considers the perturbation determinant
$\gD_{H/H'}(z)$.}}

\item[\;\;\rm (ii)]
Theorem \ref{VI.5} generalizes Theorem 9.1 of \cite{K87} in two
directions.  Firstly, {\vio{$H$ can be $m$-accumulative
and, secondly, condition $\IM(V) \le 0$ in \cite{K87}  is relaxed to \eqref{6.7a}.}}

\item[\;\;\rm (iii)]
Let $H = H^*$ and  $H' = H+V$ where
$V$ is accumulative, $V\in \gotS_1(\gotH)$ {\vio{and the}}
condition $V_I\log(-V_I) \in \gotS_1(\gotH)$, $V_I = \IM(V) \le 0$, is
valid. By  \cite[Corollary 4.3]{AN90}, there exists \emph{a real-valued function}
$\xi(\cdot) \in L^1(\R,\tfrac{1}{1+t^2}dt)$ such that the trace formula
\be\la{6.17x}
\tr((H' - z)^{-1} - (H-z)^{-1}) = -\frac{1}{\pi}\int_\R \frac{\xi(t)}{(t-z)^2}dt,
\qquad z \in \C_+,
\ee
{\vio{is valid}}. Thus,  {\vio{alongside}}  representation
\eqref{6.7b} with a complex-valued function $\go(\cdot) \in L^1(\R)$
there is {\vio{the}} representation  \eqref{6.17x} with a real-valued
function $\xi(\cdot) \in L^1(\R,\tfrac{1}{1+t^2}dt)$.
}
\er

The trace formula \eqref{6.16b} can be extended to
a class of holomorphic in $\C_-$ functions $\Phi(\cdot)$ admitting the
representation
\be\label{6.55}
\Phi(z) = {\vio{\int_{[0,\infty)} \Psi(z,t) dp(t)}}, \quad z\in {\vio{\overline{\C_-}}},
\ee
where $p(\cdot)$ is a {\vio{complex-valued Borel measure on $[0,\infty)$ of finite variation, i.e
\bed
\int_{[0,\infty)}|dp((t)| < \infty,
\eed
and}}
\bed
{\vio{\Psi(z,t)}} :=
\begin{cases}
\frac{e^{-i t z}-1}{-it}, & t>0, \\
z, & t=0.
\end{cases},         \qquad z \in \overline{\C_-}.
\eed
It is well known that  any $m$-accumulative (in particular selfadjoint) operator $H$ in $\gotH$
generates a strongly continuous semigroup of contractions $e^{-itH}$, $t\ge 0$.
This fact allows one to define the  operator $\Phi(T)$ by
\be\la{6.180}
\Phi(H)h = \int_{[0,\infty)}{\vio{\Psi(H,t)}} h dp(t), \qquad h\in\dom(H).
\ee
{\vio{In general,}} $\Phi(H)$ is unbounded {\vio{ and closable such that}}
$\dom(\Phi(H)) \supseteq \dom(H)$. However, {\vio{if $\supp(p)\subset (0,\infty)$}},
then $\Phi(H)$ is bounded.

In \cite[Theorem 9.2]{K87} Krein {\vio{has}} shown that {\vio{for}} a selfadjoint operator $H =H^*$ and a maximal
accumulative operator $H' = H - iV_+$, $V_+ = V^*_+\ge 0$,
the trace formula
\bed
\tr\bigl(\Phi(H')-\Phi(H)\bigr) = -i\int_\R\Phi'(t)d\tau(t)
\eed
\newline
holds where $\Phi(\cdot)$ is given by \eqref{6.55} {\vio{and
$\tau(\cdot)$ is a non-decreasing function of finite variation}} such
that the representation \eqref{6.16b} is valid. {\vio{Notice that Krein's result becomes
comparable with those below if one changes the sign of the right-hand side, see Remark \ref{VI.6}(i).}}

We generalize  \cite[Theorem 9.2]{K87} as follows.
\bt\label{prop6.11}
Let the assumptions of Theorem \ref{VI.5} be satisfied and let
$\Phi(\cdot)$  be a complex function in $\C_+$ of {\vio{the}} form
\eqref{6.55}. Then both operators $\Phi(H')$ and  $\Phi(H)$ are well defined,  $\Phi(H') - \Phi(H)
\in \gotS_1(\gotH)$ and the following trace formula
\be\la{6.180a}
\tr(\Phi(H') - \Phi(H)) = \frac{1}{\pi}\int_\R \Phi'(t)\go(t) dt\,.
\ee
{\vio{holds}}
\et
\begin{proof}
We set
\be\label{6.27}
H_\ga = H(I + i\ga H)^{-1} \qquad \mbox{and} \qquad H'_\ga = H'(I +
i\ga H')^{-1},
\quad \ga > 0.
\ee
One easily verifies that $H'_\ga$ and $H_\ga$ are bounded accumulative
operators. Moreover, it is easily seen that
\be\label{6.29}
\slim_{\ga\to+0}(H'_\ga - z)^{-1} = (H'-z)^{-1} \quad \text{and}\quad
\slim_{\ga\to+0}(H_\ga - z)^{-1} = (H-z)^{-1}
\ee
{\vio{for $z \in \C_+$.}} Further, it  readily follows from definition \eqref{6.27}  that
\bed
(H_{\alpha}-z)^{-1} = \frac{i\alpha}{1-i\alpha z}I + \frac{1}{(1-i\alpha z)^2}\left(H - \frac{z}{1-i\alpha z}\right)^{-1}.
\eed
Combining this identity with a similar identity for $(H'_{\alpha}-z)^{-1}$ and applying
\eqref{6.7b} we get that for any $z \in  \C_+$
\be\la{6.18}
\begin{split}
&\tr\left((H'_\ga - z)^{-1} - (H_\ga - z)^{-1}\right)
= -\frac{1}{(1-i\alpha z)^2}\int_{\R}\frac{\go(t)}{(t-\frac{z}{1-i\alpha z})^2}dt\\
& = - \int_{\R}\frac{\go(t)}{\bigl(t-z(1+i\alpha t)\bigr)^2}dt
 = -\frac{1}{\pi}\int_\R \frac{\go(t)}{(1 + i\ga t)^2}\frac{1}{\left(z -\frac{t}{1+i\ga t}\right)^2} dt.
\end{split}
\ee
Let $\gG$ be a simple closed curve such that its interior contains
$\gs(H'_\ga) \cup \gs(H_\ga)$. Since $H_\ga$ and $H'_\ga$ are bounded, the Riesz-Dunford  functional calculus yields
\bed
e^{-isH'_\ga} = -\frac{1}{2\pi i}\oint_\gG e^{-sz}(H'_\ga-z)^{-1} dz,  \qquad s
\ge 0.
\eed
and
\bed
e^{-isH_\ga}  = -\frac{1}{2\pi i}\oint_\gG e^{-sz}(H_\ga-z)^{-1}{\vio{dz}},  \qquad s
\ge 0.
\eed
Hence
\be\label{6.31A}
\begin{split}
e^{-isH'_\ga} - e^{-isH_\ga} &=
-\frac{1}{2\pi i}\oint_\gG e^{-isz}\left((H'_\ga - z)^{-1} - (H_\ga - z)^{-1}\right) dz\\
&= \frac{1}{2\pi i}\oint_\gG e^{-isz}(H'_\ga - z)^{-1}V_\ga(H_\ga - z)^{-1}dz,
\end{split}
\ee
where
\be\la{6.19a}
\begin{split}
V_\ga := H'_\ga - H_\ga  &= H'(I + i\ga H')^{-1} - H(I + i\ga H)^{-1}\\
&= (I + i\ga H')^{-1}V(I + i\ga H)^{-1}, \quad \ga > 0.
\end{split}
\ee
Since  $V \in \gotS_1(\gotH)$, the last identity implies  $V_\ga \in \gotS_1(\gotH)$.
Combining this fact with \eqref{6.31A} {\vio{this}} yields  $e^{-isH'_\ga} - e^{-isH_\ga}
\in \gotS_1(\gotH)$.  Moreover,  we get from \eqref{6.31A}
\bed
\tr\left(e^{-isH'_\ga} - e^{-isH_\ga}\right) =
 -\frac{1}{2\pi i}\oint_\gG e^{-isz}\tr\left((H'_\ga - z)^{-1} -
   (H_\ga - z)^{-1}\right) dz.
\eed
Combining this formula with  \eqref{6.18} we obtain
\begin{align}\la{6.200}
\tr&\left(e^{-isH'_\ga} - e^{-isH_\ga}\right) = \\
&
\frac{1}{\pi}\int_\R \;dt\; \frac{\go(t)}{(1+i\ga t)^2}\frac{1}{2\pi i}
\oint_\gG\frac{e^{-isz}}{\left(z - \frac{t}{1+i\ga t}\right)^2}dz
= \frac{-is}{\pi}\int_\R  e^{-is\tfrac{t}{1+i\ga
    t}}\frac{\go(t)}{(1+i\ga t)^2}\;dt
\nonumber
\end{align}
Further, it follows from \cite[formula (IX.2.22)]{Ka76} and \eqref{6.19a} that for $s > 0$
\bea\label{6.33}
\lefteqn{
e^{-isH'_\ga} - e^{-isH_\ga} =}\\
& &
-i\int^s_0e^{-i(s-x)H'_\ga}(I + i\ga H')^{-1}V(I + i\ga H)^{-1} e^{-ixH_\ga}dx,
\nonumber
\eea
and
\be\label{6.34}
e^{-isH'} - e^{-isH} =
-i\int^s_0e^{-i(s-x)H'}Ve^{-ixH}dx, \qquad s > 0.
\ee
Since  $V \in \gotS_1(\gotH)$ {\vio{we find}}
$e^{-isH'_\ga} - e^{-isH_\ga} \in \gotS_1(\gotH)$ and  $e^{-isH'} - e^{-isH}
\in \gotS_1(\gotH)$, $s > 0$,  {\vio{see above}}.  Moreover, they imply
the following important estimates
\be\label{6.35}
\left\|e^{-isH'_\ga} - e^{-isH_\ga}\right\|_{\gotS_1} \le
  s\|V_\ga\|_{\gotS_1} \;\text{and} \; \left\|e^{-isH'} - e^{-isH}\right\|_{\gotS_1} \le  s\|V\|_{\gotS_1}.
\ee
Since $V \in \gotS_1(\gotH)$, it follows from \eqref{6.29} and   \eqref{6.19a} that
$\lim_{\ga\to 0}\|V_\ga - V\|_{\gotS_1} =0.$   Combining this relation with integral representations
\eqref{6.33} {\vio{and}} \eqref{6.34} we obtain
\be\la{6.210}
\lim_{\ga\to+0}\tr\left(e^{-isH'_\ga} - e^{-isH_\ga}\right)  =
\tr\left(e^{-isH'} - e^{-isH}\right), \quad s>0.
\ee
Since $\go(\cdot) \in L^1(\R)$ and $\ga>0$ the dominated convergence
theorem {\vio{(with majorizing function $|\go|$)}}
yields
\be\la{6.220a}
\lim_{\ga\to+0} \frac{-is}{\pi}\int_\R e^{-is\tfrac{t}{1+i\ga
    t}}\frac{\go(t)}{(1+i\ga t)^2}dt =
 \frac{-is}{\pi}\int_\R e^{-ist}\go(t)dt, \quad s > 0.
\ee
Taking into account \eqref{6.200},  \eqref{6.210} and \eqref{6.220a} we obtain
\be\la{6.240}
\tr\left(e^{-isH'} - e^{-isH}\right) = \frac{-is}{\pi}\int_\R e^{-ist}\go(t)dt, \quad s > 0.
\ee
On the other hand, since both $H$ and $H'$ are $m$-accumulative \eqref{6.180} yields the representation
\be\label{6.40}
\Phi(H') - \Phi(H) = \int_{{\vio{[0,\infty)}}}\frac{e^{-itH'} - e^{-itH}}{-it}dp(t).
\ee
Since the measure $p$ is finite, we obtain from \eqref{6.40} and  \eqref{6.35}   that $\Phi(H') - \Phi(H) \in
\gotS_1(\gotH)$ and
\bed
{\vio{\|\Phi(H') - \Phi(H)\|_{\gotS_1} \le \|V\|_{\gotS_1}\int_{[0,\infty)}|dp(t)|}}.
\eed
Combining  \eqref{6.240} with \eqref{6.40}  we {\vio{finally}} obtain
\bead
\lefteqn{
\tr(\Phi(H') - \Phi(H)) = \int_{{\vio{[0,\infty)}}}\frac{\tr\left(e^{-itH'} - e^{-itH}\right)}{-it}dp(t)} \\
& &
= \frac{1}{\pi}\int_{{\vio{[0,\infty)}}}dp(t)\int_\R e^{-itx}\go(x)\;dx\; =
\frac{1}{\pi}\int_\R dx\;\go(x)\int_{{\vio{[0,\infty)}}}e^{-itx}dp(t).
\eead
{\vio{By}} $\Phi'(x) = \int_{{\vio{[0,\infty)}}}e^{-itx}dp(t)$, $x \in \R$, we complete the proof.
\end{proof}

{\vio{
In fact the class of functions $\Phi$ introduced above is not
optimal even for selfadjoint operators. A more optimal class of
    functions in case of selfadjoint operators was introduced in
    \cite{AlekPel11,Pel05}.
}}

\subsection{Pairs $\{H,H'\}$ with one  $m$-accumulative operator}\la{sec.VIB}

Our next goal is to prove  trace formulas for pairs $\{H,H'\}$ with $m$-accumulative operator $H$,
i.e. to remove the condition \eqref{6.7a}. For this
purpose we need an analog of Lemma \ref{IV.3aB}. To this end we
recall a simple statement on the behavior at infinity of a
Blaschke products ${\cB}(\cdot)$ with non-real zeros
$\gL:=\{\lambda_j\}_{j\in\N}$ satisfying an additional
assumption
\be\label{6.1A}
\sum^{\infty}_{j=1}|\IM\lambda_j|<\infty
\ee
\bl[{\cite[Lemma 8.1]{K87}}]\label{lem6.0}
Let $\gL:=\{\lambda_j\}^\infty_{j=1}\subset{\C}\setminus{\mathbb R}$.
If the condition  \eqref{6.1A} is satisfied, then
\bed
\lim_{y\uparrow\infty}y^2\sum^{\infty}_{j=1}
\frac{\IM(\lambda_j)}{(iy- \lambda_j)(iy-\overline{\lambda}_j)} =
-\sum^{\infty}_{j=1}\IM\lambda_j.
\eed
and the (regularized) Blaschke product
\bed
\widetilde{\cB}(z)=
\prod^{\infty}_{j=1}\frac{z-\lambda_j}{z-\overline{\lambda}_j}
\eed
converges uniformly on any compact subset $\mathcal K\subset{\mathbb
C}$ satisfying $\dist(\mathcal K,\gL)>0$. Moreover,
the following relations hold
\bed
\lim_{y\uparrow\infty}\widetilde{\cB}(z)=1 \quad\text{and}\quad
\lim_{y\uparrow\infty}y \ln|\widetilde{\cB}(z)|=   -
2\sum^{\infty}_{j=1}\IM\lambda_j, \quad z=x+iy.
\eed
\el

Now we are ready to state a counterpart of Lemma \ref{IV.3aB} for  additive perturbations.
\bl\label{th6.3}
Let $H$ be a maximal accumulative operator in
$\mathfrak H$.

\item[\;\;\rm (i)]
If $0 \le V_+ = V^*_+ \in \gotS_1(\gotH)$ and the condition
\bed
(V_+f,f) \le -2\IM(Hf,f), \quad f\in \dom(H),
\eed
is satisfied, then the function $w_+(z) :=
\det\bigl(I + iV_+(H-z)^{-1}\bigr)$, $z \in \C_+$,
admits the representation
\be\label{7.1}
w_+(z) =  \prod^\infty_{j=1} \left(\frac{z - z^+_j}{z -
\overline{z^+_j}}\right)^{m^+_j}
 \exp\left\{\frac{i}{\pi}\int_\R
\frac{1}{t-z}d\mu_+(t)\right\},  \quad z\in \C_+,
\ee
where $\mu_+(\cdot)$ is a \emph{non-negative finite} Borel
measure,  $\{z^+_j\}^\infty_{j=1}$
is the set of zeros of  $w_+(\cdot)$  in $\C_+$,
and $\{m^+_j\}^\infty_{j=1}$  is the sequence of the corresponding  multiplicities.

\item[\;\;\rm (ii)]
If $V = V^* \in \gotS_1(\gotH)$ and the condition
\be\la{6.17a}
(Vf,f) \le -2\IM(Hf,f), \quad f\in \dom(H),
\ee
is satisfied, then the function $w(z) = \det(I + V(H-z)^{-1})$, $z \in
\C_+$, admits the representation
\eqref{7.1} where the measure $\mu_+(\cdot)$ is replaced by
a real-valued measure $\mu(\cdot)$ satisfying $\int_\R
|d\mu(t)| < \infty$, $\{z^+_j\}^\infty_{j=1}\subset{\C}_+$
are the zeros of the function $w(z)$ in $\C_+$
and $\{m^+_j\}^\infty_{j=1}$  their corresponding multiplicities.
\el
\begin{proof}
(i) We set  $H' = H+iV_+$.  Following  the proof of Lemma \ref{IV.3aB} with
$M(z)$ replaced by $z$,  we arrive at  the representation \eqref{4.180a}.
Obviously, we have
\bed
|w_+(z)| = |\cB_+(z)|\;\exp\left\{-\frac{1}{\pi}\int_\R
  \frac{y}{(t-x)^2 + y^2} d\mu_+(t)\right\},
\quad z = x+iy \in \C_+,
\eed
which yields
\bed
\exp\left\{\frac{1}{\pi}\int_\R \frac{y}{(t-x)^2 +
    y^2} d\mu_+(t)\right\} = |\cB_+(z)|\left|\frac{1}{w_+(z)}\right|,
\quad z \in \C_+ \setminus \bigcup^\infty_{j=1}\{z^+_j\},
\eed
where $\gs(H') \cap \C_+ = \bigcup^\infty_{j=1}\{z^+_j\}$. One easily gets
\bed
\frac{1}{w_+(z)} = \det\left(I - i\sqrt{V_+}(H'-z)^{-1}\sqrt{V_+}\right),
\quad z \in \C_+ \setminus \gs(H').
\eed
Combining this identity with the previous one and noting that $|\wt\cB(z)| \le 1$, $z \in \C_+$, we obtain
\bed
\exp\left\{\frac{1}{\pi}\int_\R \frac{y}{(t-x)^2 +
    y^2} d\mu_+(t)\right\} \le \left|\det\left(I - i\sqrt{V_+}(H'-z)^{-1}\sqrt{V_+}\right)\right|,
\eed
$z \in \C_+ \setminus \gs(H')$. In turn, combing this inequality  with a simple
known estimate  (see \cite[Section IV.1]{GK69}) we arrive at the estimate
\bed
\exp\left\{\frac{1}{\pi}\int_\R \frac{y}{(t-x)^2 +
    y^2} d\mu_+(t)\right\} \le
\exp\left\{\left\|V_+(H'-z)^{-1}\right\|_{\gotS_1}\right\},
\; z \in \C_+ \setminus \gs(H'),
\eed
which is equivalent to
\bed
\frac{1}{\pi}\int_\R \frac{y}{(t-x)^2 + y^2} d\mu_+(t)
\le \left\|V_+(H'-z)^{-1}\right\|_{\gotS_1},
\quad z = x + iy \in \C_+ \setminus \gs(H').
\eed
Further, combining  this estimate with the following  one
\bed
\left\|V_+(H'-z)^{-1}\right\|_{\gotS_1} \le
\|V_+\|_{\gotS_1}\frac{1}{y - \|V_+\|}, \qquad y > \|V_+\|,
\eed
we obtain
\bed
\frac{1}{\pi}\int_\R \frac{y}{(t-x)^2 + y^2} d\mu_+(t)
\le \|V_+\|_{\gotS_1}\frac{1}{y - \|V_+\|},
\qquad y > \|V_+\|.
\eed
Multiplying both sides by $y$ and tending $y$ to infinity we derive
\bed
\int_\R d\mu_+(t) \le \|V_+\|_{\gotS_1} = \tr(V_+).
\eed
The zeros $z^+_j$ of $w_+(z)$ lying in $C_+$ and their
multiplicities $m^+_j$ coincide with the eigenvalues of $H' := H + V_+$
and their algebraic multiplicities.  Since $\IM(Hf,f) \le 0$, $f \in \dom(H)$, we have
\be\label{7.22}
\IM(H'f, f) = \IM(Hf, f) + (V_+f, f) \le (V_+f,f), \quad f\in \dom(H),
\ee
Denoting by $\gH^+_p$ the (closed) invariant subspace
of $H'$ spanned  by the  (finite-dimensional) root subspaces
$\gotL_{z^+_j} := \ker(H'- z_j^+)^{m_j}$, $j\in \N$, and choose a Schur orthonormal basis
$\{f_k\}_{k\in\N}$ in  $\gH^+_p$ such that in this basis
the matrix of the operator $H'\upharpoonright\gH^+_p$ is triangular.
Taking into account \eqref{7.22} we get
\bead
\lefteqn{
0 \le \sum_j m^+_j \IM(\lambda^+_j) = \sum_k \IM(H'f_k,f_k)}\\
& &
= \sum_k\IM\bigl(Hf_k,f_k\bigr) + \sum_k(V_+f_k,f_k) \le \tr(V_+) <\infty. \nonumber
\eead
By Lemma \ref{lem6.0} the product $\wt B_+(z)= \prod_k
\left(\frac{z - z^+_j}{z - \overline{z^+_j}}\right)^{m^+_j}$
converges uniformly on compact subsets of ${\C}_+$. It is easily
seen that $\cB_+(z) = \wt \kappa\; {\wt \cB_+}(z)$ where $|\wt \gk| = 1$.
Again by Lemma \ref{lem6.0} we have $\lim_{y\uparrow\infty}{\wt \cB}_+(iy)=1$.
It follows from \eqref{4.180a} that
\be\la{6.48}
w_+(z) = \varkappa'\;\wt \cB(z) \exp\left\{\frac{i}{\pi}\int_\R
  \frac{1}{t-z}d\mu_+(t)\right\},
\quad z \in \C_+,
\ee
where
\bed
\varkappa' = \varkappa  \;\wt{\varkappa} \;\exp\left\{i\left(\ga_+ - \frac{1}{\pi}\int_\R
    \frac{t}{1+t^2}d\mu_+(t)\right)\right\}
\eed
Since $\lim_{y \uparrow \infty}w_+(iy) = 1$ we immediately obtain $\varkappa'
= 1$ which proves \eqref{7.1}.

(ii) We set $K := H - iV_-$ where $V := V_+ - V_-$, $V_{\pm} \ge
0$. We note $K$ is maximal accumulative. Using the chain rule for perturbation determinants
(see formula \eqref{1.1AB}) we easily get
\be\la{6.19}
w(z) = \frac{\det(I + iV_+(K - z)^{-1})}{\det(I + iV_-(K-z)^{-1})} =: \frac{w_+(z)}{w_-(z)}, \quad z \in \C_+.
\ee
Rewriting  \eqref{6.17a} in the form
\bed
(V_+f,f) - (V_-f,f) \le -2\IM(Hf,f), \quad f\in \dom(H),
\eed
we get for any $f \in \dom(K) = \dom(H)$
\bed
(V_+f,f) \le -2\IM(Hf,f) + (V_-f,f) \le -2\IM(Hf,f) + 2(V_-f,f) =
-\IM(Kf,f).
\eed
According to statement  (i)  the  perturbation determinant $w_+(z) := \det(I
+ iV_+(K-z)^{-1})$, admits  the representation \eqref{7.1}  $z \in \C_+$. Since
\bed
(V_-f,f) \le -\IM(Hf,f) + (V_-f,f) = -\IM(Kf,f), \quad f \in \dom(K) =
\dom(H),
\eed
Lemma \ref{cor6.4}(i) yields the following representation
\be\la{6.20a}
\det(I + iV_-(K-z)^{-1}) =
\exp\left\{\frac{i}{\pi}\int_\R\frac{\eta_+(t)}{t-z}dt\right\}, \quad
z \in \C_+.
\ee
Inserting \eqref{7.1} and \eqref{6.20a} into \eqref{6.19} we arrive at the representation
\bed
w(z) = \prod^\infty_{j=1} \left(\frac{z - z^+_j}{z -
\overline{z^+_j}}\right)^{m^+_j} \exp\left\{\frac{i}{\pi}\int_\R\frac{1}{t-z}d\mu(t)\right\},
\quad z\in \C_+.
\eed
Here $d\mu(t) = d\mu_+(t) - \eta_+dt$,  $\{z^+_j\}^\infty_{j=1}$  is the set zeros of
$w_+(\cdot)$ lying in $\C_+$ and $\{m^+_j\}^\infty_{j=1}$ the set
of the corresponding multiplicities.
Since the operator $H$ is $m$-accumulative, the function $w_-(z) = \det(I +
iV_-(K-z)^{-1}) = \Delta_{H/K}(z)$ has no zeros in $\C_+$.
Combining this fact with representation  \eqref{7.1} for $w_+(\cdot)$
we conclude  that the set  $\{z^+_j\}^\infty_{j=1}$  is  the set of zeros
of $w(\cdot)$ in $\C_+$ with the corresponding multiplicities  $\{m^+_j\}^\infty_{j=1}$.
\end{proof}

Now we are ready to obtain the trace formulas for a pair $\{H, H+V\}$
with  $m$-accumulative operator $H$. A counterpart of Theorem
\ref{th5.2} takes the following form for additive perturbations.
\bt\la{VI.8}
Let $H$ be a maximal accumulative operator in $\gotH$,  $V \in
\gotS_1(\gotH)$ and  let $H' := H+V$. Then {\vio{the following holds:}}

\item[\;\;\rm (i)]
There exists  a complex-valued  Borel measure
$d\nu(t) := id\mu_+(t) + \go(t)dt$ on $\R$ such that $d\mu_+(\cdot)$
is \emph{a non-negative finite Borel measure} on $\R$,  $\go(\cdot)\in L^1(\R)$
and  the perturbation determinant $\gD_{H'/H}(\cdot)$  admits the
representation
\be\la{6.21a}
\gD_{H'/H}(z) = {\vio{\prod^\infty_{j=1}}}\left(\frac{z - z^+_j}{z -
\overline{z^+_j}}\right)^{m^+_j}\exp\left\{\frac{1}{\pi}\int_\R\frac{1}{t-z}d\nu(t)\right\},
\quad z\in \C_+.
\ee
where  $d\nu(t) := id\mu_+(t) + \go(t)dt$, $\{z^+_j\}^\infty_{j=1}$  is
the set of  eigenvalues of $H'$ in $\C_+$, and $\{m^+_j\}^\infty_{j=1}$ is
the set of corresponding algebraic multiplicities.

\item[\;\;\rm (ii)]
The trace formula holds
\bea\la{6.4}
\lefteqn{
\tr\left((H'-z)^{-1} - (H-z)^{-1}\right) =}\\
& &
-{\vio{\sum^\infty_{j=1}}}\frac{2i \;m^+_j\cdot
\IM(z_j^+)}{(z-z^+_j)(z-\overline{z^+_j})} -
\frac{1}{\pi}\int_{\R}\frac{d\nu(t)}{(t-z)^2}, \qquad z \in \C_+.
\nonumber
\eea
In particular, one has
\be\label{7.2}
\begin{split}
\tr(V)
& = 2i\sum^{\infty}_{j=1}m^+_j\IM(z^+_j) +
\frac{1}{\pi}\int_{\mathbb R}d\nu(t) \\
& = 2i\sum_j \;m^+_j \IM(z^+_j)  +\frac{i}{\pi}\int_{\R}d\mu_+(t)
+\frac{1}{\pi}\int_{\R}\go(t)dt.
\end{split}
\ee
{\vio{and
\be\la{6.56}
\tr(V_I)
= 2\sum_j \;m^+_j \IM(z^+_j)  +\frac{1}{\pi}\int_{\R}d\mu_+(t) +\frac{1}{\pi}\int_{\R}\go_I(t)dt.
\ee
where $V_I := \IM(V)$ and $\go_I(t) := \IM(\go(t)) \le 0$, $t \in \R$.}}
\et
\begin{proof}
(i) Let $V_R := \RE(V)$ and $V_I = \IM(V)$. Further, let $V_I = V^+_I
- V^-_I$ be the spectral decomposition of $V_I$, i.e.  $V^\pm_I \ge 0$
and $V^+_IV^-_I = V^-_IV^+_I =0$. We set $K := H + \wt V$  and  $\wt V := V_R - i|V_I|$, where
$|V_I| = V^+_I + V^-_I$. Clearly, the
operator $K$ is  $m$-accumulative because so are $H$ and  $\wt V(\in [\gotH])$.

We put  $w_+(z) := \gD_{H'/K}(z) = \det(I + 2V^+_I(K - z)^{-1})$, $z
\in \C_+$. It is easily seen that  $(2V^+_If,f) \le -2\IM(Kf,f)$,
$f\in \dom(K)$. Therefore applying Lemma \ref{th6.3}(i) we arrive at
the representation \eqref{7.1} for   $w_+(\cdot)$ with the
\emph{non-negative finite Borel measure} $d\mu_+(\cdot)$,
the set $\{z^+_j\}^\infty_{j=1}$  of zeros of $w_+(\cdot)$ lying in $\C_+$
and the set $\{m^+_j\}^\infty_{j=1}$ of corresponding
multiplicities. However,  the zeros $\{z^+_j\}^\infty_{j=1}$ and their
multiplicities $\{m^+_j\}^\infty_{j=1}$ coincide with the eigenvalues of
$H'$ lying in $\C_+$ and their algebraic multiplicities,
respectively (see Appendix \ref{B}, property 4).

Further, since $H$ is accumulative,  we have
\bed
\IM(\wt Vf,f) = -(|V|f,f) \le -\IM(Hf,f),\qquad  f\in \dom(H).
\eed
Therefore, by Theorem \ref{VI.5}(i),  there exists  \emph{a complex-valued function}
$\go(\cdot) \in L^1(\R)$ such that the
following representation  holds
\be\la{6.24}
\gD_{K/H}(z) =  \det(I + (V_R - i|V_I|)(H - z)^{-1}) = \exp\left\{\frac{1}{\pi}\int_\R
\frac{\go(t)}{t-z}dt\right\},
\ee
$z\in \C_+$.
Setting $d\nu(t) := id\mu_+(t) + \go(t)dt$ we define a complex-valued  Borel
measure on $\R$ satisfying $\int_\R |d\nu(t)| < \infty$. Finally, combining
representation \eqref{7.1} for $w_+(\cdot) := \gD_{H'/K}(\cdot)$ with
representation  \eqref{6.24} for $\gD_{K/H}(\cdot)$ and  using the
identity $\gD_{H'/H}(\cdot) =\gD_{H'/K}(\cdot)\gD_{K/H}(\cdot)$ (see
\eqref{1.1AB}), we arrive at representation \eqref{6.21a}.

(ii) Clearly,   $\{z \in \C:
\IM(z) > \|V^+_I\|\} \subset \rho(H')$. Therefore
 formula \eqref{1.1} can be applied  to the determinant $\gD_{H'/H}(z)$
for $\IM(z) > \|V^+_I\|$. Taking the logarithmic  derivative of
both sides of \eqref{7.1} and applying  \eqref{1.1} we obtain
\bed
\tr\left((H - z)^{-1} - (H' - z)^{-1}\right) = \sum_j
 m^+_j\left(\frac{1}{z-z^+_j}-\frac{1}{z - \overline{z^+_j}}\right) +
 \frac{1}{\pi}\int_{\R}\frac{d\nu(t)}{(t-z)^2},
\eed
for $\IM(z) > \|V^+_I\|$ which proves \eqref{6.4}. Since $V$ is
bounded  one  rewrites this identity  as
\bed
\tr\bigl((H'-z)^{-1}V(H-z)^{-1}\bigr) =
\sum_j\frac{2i \;m^+_j\cdot
\IM(z_j^+)}{(z-z+_j)(z-\overline{z^+_j})} +
\frac{1}{\pi}\int_{\mathbb R}\frac{d\nu(t)}{(t-z)^2}.
\eed
Setting here $z=iy$, $y > \|V^+_I\|$,  multiplying both sides by
$y^2$ and passing to the limit as $y\uparrow\infty$ we obtain by combining
Lemma \ref{lem6.1} with  Lemma \ref{lem6.0} and applying
the dominated convergence theorem {\vio{the relation}}
\bed
\begin{split}
-\tr(V)
& =
-2i\sum_j\; m^+_j \IM(z^+_j) + \lim_{y\uparrow
\infty}\frac{y^2}{\pi}\int_{\R}\frac{d\nu(t)}{(t - iy)^2}\\
& = -2i\sum_j \;m^+_j \IM(z^+_j) - \frac{1}{\pi}\int_{\R}d\nu(t)  \\
& = -2i\sum_j \;m^+_j \IM(z^+_j)  - \frac{i}{\pi}\int_{\R}d\mu_+(t) -\frac{1}{\pi}\int_{\R}\go(t)dt.
\end{split}
\eed
{\vio{follows which implies \eqref{7.2}. In turn,  \eqref{7.2} yields  \eqref{6.56}}}.
\end{proof}

\section{Examples}\la{sec.VII}

\subsection{Matrix Sturm-Liouville operators on {$\R_+$}}\la{sec.VII.1}

 Let us consider the matrix Sturm-Liouville differential expression in $L^2(\R_+,\C^n)$
 \be\label{6.1}
(\mathcal A f)(x)  :=  -\frac{d^2}{dx^2}f(x) + Q(x)f(x), \quad
f=\col\{f_1,\ldots,f_n\},
\ee
with $n\times n$ selfadjoint matrix potential
$Q(\cdot)=Q(\cdot)^*\in L^1_{\rm loc}(\R_+,\C^{n \times n})$.

Denote  by $A = A_{\min}$ and $A_{\max}$ the minimal and the
maximal operators, respectively associated on $L^2(\R_+,\C^n)$ with the differential expression
\eqref{6.1}. Clearly, $A$ is symmetric. Assume also that $\mathcal
A$ is limit point at infinity, i.e. the deficiency
indices are minimal, $n_{\pm}(A)=n$.  It is known (see, for
instance, \cite[Section 5.17.4]{Nai69}) that  $A^*= A_{\max}$. The
latter means that the domain $\dom(A^*)$ is locally regular, i.e.
\be\label{6.2}
\dom(A^*)\subset W^{2,2}_{\rm loc}({\R}_+,\C^n) \quad
\text{and}\quad \chi_{[0,b]}\dom(A^*) = W^{2,2}([0,b],\C^n) \
\ee
for any $b>0$, where $\chi_\delta(\cdot)$ stands for the indicator of
a Borel subset $\delta$. $A^*$ is given by the differential expression \eqref{6.1} on
the domain $\dom(A^*)$. Therefore the trace operators $\Gamma_0,\Gamma_1:\dom(A^*)\to{\C}^n$,
\bed
\quad \gG_0f = f(0), \quad \gG_1f = f'(0),\quad f=\col\{f_1,\ldots,f_n\},
\eed
are well defined and the Green identity \eqref{2.0}  holds.
Moreover,  one easily proves that $\Pi
=\{\C^n,\gG_0,\gG_1\}$ forms a boundary triplet for $A^*.$ Hence
the minimal operator $A = A_{\min}$ is a restriction of $A^*$ to
the domain
\bead
\dom(A)  = \ker\gG_0\cap\ker\gG_1  = \{f \in \dom(A^*):\  f'(0) =
f(0) = 0\},
\eead
and due to \eqref{6.2} the regularity property $\dom(A) \subset
W^{2,2}_{0,\rm loc}(\R_+,\C^n)$ holds.

Notice that $\dom(A^*) = W^{2,2}(\R,\C^n)$ whenever
$Q\in L^{\infty}(\R_+,\C^{n\times n})$. Let
\bed
\Psi_j(z,x)=\col\{\Psi_{j_1}(z,x),\ldots,\Psi_{jn}(z,x)\},\qquad
j\in\{1,\ldots,n\},
\eed
be a basis in $\gN_z(A)=\ker(A^*-z)$ and let
$\Psi(z,x):=\bigl(\Psi_1(z,x),\ldots,\Psi_n(z,x)\bigr)$
$=\bigl(\Psi_{kj}(z,x)^n_{k,j=1}$ be the Weyl $n\times n$-matrix
solution of the equation $A^*f = zf$. Then the corresponding Weyl
function is
\bed
M(z) = {\Psi'(z,0)}{\Psi(z,0)}^{-1}.
\eed
We note that the assumption $n_{\pm}(A) = n$ is satisfied whenever
$Q(\cdot) = Q(\cdot)^* \in L^1(\R_+,\C^{n\times n}) \cap
L^\infty(\R_+,\C^{n \times n})$. In this case the Weyl matrix solution
$\Psi(z,x)$ is proportional to the so-called Jost function
$F(z,\cdot)$ which solves the equation
\be\la{6.0}
F(z,x) = e^{i\sqrt{z}x}I_n - \int^\infty_x
\;\frac{1}{\sqrt{z}}\sin\left(\sqrt{z}(x - t)\right)Q(t)F(z,t)dt,
\ z\in\C \setminus \{0\},
\ee
where $\IM(\sqrt{z})\ge 0$. Clearly,  $\Psi(z,x)=F(z,x)C(z)$
where $C(\cdot)$ is {\vio{a}} non-singular $n\times n$-matrix function,
$\det C(z)\not =0,\  z\in{\C}_{\pm}$. In this case the Weyl
function is $M(z)=F'(z,0)F(z,0)^{-1}$.

Let $\wt A'$ and $\wt A$ be proper extensions of $A$. Since
$n_\pm(A) = n<\infty$ condition \eqref{1.103} is always
satisfied. If both $\wt A'$ and $\wt A$ are disjoint from $A_0:=
A^*\upharpoonright\ker(\gG_0)$, then, by Proposition
\ref{prop2.1}, there exist bounded operators $B', B \in [\C^n]$
such that $\wt A' = A_{B'}$ and $\wt A = A_B$, i.e. $\dom(\wt A')
=  \{f \in \dom(A^*):\ f'(0) = B'f(0)\}$  and
$\dom(\wt A)  =  \{f \in \dom(A^*): f'(0) = Bf(0)\}$.
Thus, the boundary triplet $\Pi$ is regular for the pair $\{\wt
A',\wt A\}$ which yields $\{\wt A',\wt A\} \in \gotD^\Pi$. By
Lemma \ref{def7.3a}(iv) we get
\bed
\gD^\Pi_{\wt A'/\wt A}(z) =
\frac{\det\bigl(B'- M(z)\bigr)}{\det\bigl(B-M(z)\bigr)}=
\frac{\det\bigl(B'F(z,0)-F'(z,0)\bigr)}{\det\bigl(B F(z,0)-F'(z,0)\bigr)},\quad
z \in \rho(\wt A) \cap \rho(A_0).
\eed
In particular, if $Q\equiv 0$, then $M_0(z) = i\sqrt{z}I_n$ and
\bed
\gD^\Pi_{\wt A'/\wt A}(z) =
\frac{\det\bigl(B'- M_0(z)\bigr)}{\det\bigl(B - M_0(z)\bigr)} =
\frac{\det\bigl(B' -i\sqrt{z}\bigr)}{\det\bigl(B - i\sqrt{z}\bigr)},\quad z \in
\rho(\wt A) \cap \rho(A_0).
\eed
If $\wt A = A_0$, then the boundary triplet $\Pi$ is not regular
for $\{\wt A',\wt A\}$. Nevertheless, by Corollary \ref{IV.40}(i),
there exists a boundary triplet $\wt \Pi = \{\cH,\wt \gG_0,\wt
\gG_1\}$ for $A^*$ such that $\{\wt A',\wt A\} \in \gotD^{\wt
\Pi}$. If $\wt A'$ is disjoint from $A_1$, then $0\in \rho(B')$.
Thus, by Proposition \ref{IV.8}(iii),  there exists $\mu =0$ and a
constant $c \in \C$ such that
\bed
\gD^{\wt \Pi}_{\wt A',\wt A}(z) = c\det\left(I_n -
({B'})^{-1}M(z)\right) =
\frac{c}{\det(B')}\frac{\det(B'F(z,0) - F'(z,0))}{\det(F(z,0))},
\eed
for $z \in \rho(\wt A)$.  If $\wt A' = A_1$, then $B' = 0$ and
applying Proposition \ref{IV.8}(iii) with $\mu =1$ we find  a
constant $c \in \C$ such that
\be\la{6.6}
\gD^{\wt \Pi}_{\wt A'/\wt A}(z) = c \det(M(z)) =
 c\frac{\det(F'(z,0))}{\det(F(z,0))}, \quad z \in \rho(\wt A).
\ee

\subsection{Sturm-Liouville operators on $(0,b)$}\la{sec.VII.2}

Next we consider the differential expression \eqref{6.1} in
$L^2([0,b],\C^n)$ with $n\times
n$-matrix potential $Q(\cdot)=Q(\cdot)^*\in L^2([0,b],{\C}^{n\times n})$.
It is well known that the maximal operator
$A_{\max}$  associated in $L^2([0,b],{\C}^n)$ with the
differential expression $\cA$
\bed
(\cA[f])(x) := -\frac{d^2}{dx^2}f(x) + Q(x)f(x)
\eed
is given by
\be\label{6.7}
(A^*f)(x) := (\cA[f])(x), \quad f\in \dom(A^*) = W^{2,2}((0,b),\C^n).
\ee
where $f =\col\{f_1,\ldots,f_n\}$. The minimal operator $A=A_{\min}$ is a closed symmetric operator
given by
\be\la{7.6}
\begin{split}
(Af)(x) := & (\cA[f])(x),\\
f \in \dom(A) := &
\left\{W^{2,2}((0,b),\C^n):
\begin{matrix}
f(0) = f'(0) = 0\\
f(b) = f'(b) = 0
\end{matrix}.
\right\}
\end{split}
\ee
Notice that $A_{\max} = A^*$.
Due to the regularity property $\dom(A^*) = W^{2,2}((0,b),\C^n)$ the mappings
\be\label{6.8}
\gG_0f := \left(
\begin{matrix}
f(b)\\
-f(0)
\end{matrix}
\right),
\quad
\gG_1f := \left(
\begin{matrix}
-f'(b)\\[2mm]
-f'(0)
\end{matrix}
\right),\quad f \in  W^{2,2}((0,b),\C^n),
\ee
are  well defined. Moreover,  one easily checks that
$\Pi = \{\C^{2n},\gG_0,\gG_1\}$ forms a boundary triplet for
$A^*$.  Notice that $A_0 :=
A^*\upharpoonright\ker(\gG_0)$ and $A_1 :=
A^*\upharpoonright\ker(\gG_1)$ correspond to the Dirichlet and
Neumann extensions, respectively.

Let us introduce the $n\times n$ matrix solutions $C(z,x)$ and
$S(z,x)$
   \bead
\cA[C(z,x)] & = & zC(z,x), \quad C(z,0) = I_n, \quad C'(z,0) = 0_n\\
\cA[S(z,x)] & = & zS(z,x), \quad S(z,0) = 0_n, \quad S'(z,0)= I_n.
     \eead
Any $f_z \in \ker(A^* - z)$ admits the representation $f_z(x) =
C(z,x)\xi + S(z,x)\eta$  for some  $\xi,\eta  \in \C^n$. Hence
\bed
\gG_0f_z =
\begin{pmatrix}
C(z,b) & S(z,b)\\
-I_n & 0_n
  \end{pmatrix}
\begin{pmatrix}
\xi\\
\eta
\end{pmatrix}
\quad  \text{and}\quad
\gG_1f_z =
\begin{pmatrix}
-C'(z,b) & - S'(z,b)\\
0_n & -I_n
\end{pmatrix}
\begin{pmatrix}
\xi\\
\eta
\end{pmatrix}.
\eed
Combining these relations with  Definition \ref{Weylfunc}
we find that the Weyl function $M(\cdot)$ corresponding to the
triplet $\Pi$ is
\be\la{6.16}
\begin{split}
M(z) =&
\begin{pmatrix}
-C'(z,b)&-S'(z,b)\\
0_n& -I_n
\end{pmatrix}
    \begin{pmatrix}
0_n&-I_n\\
S(z,b)^{-1}&S(z,b)^{-1}C(z,b)
    \end{pmatrix}\\
=&  \begin{pmatrix}
-S'(z,b)S(z,b)^{-1}&C'(z,b)-S'(z,b)S(z,b)^{-1}C(z,b)\\
-S(z,b)^{-1}&-S(z,b)^{-1}C(z,b)
  \end{pmatrix}\\
=& - \begin{pmatrix}
S'(z,b)S(z,b)^{-1}& S^*(\bar z,b)^{-1} \\
S(z,b)^{-1}&-S(z,b)^{-1}C(z,b)
  \end{pmatrix},
 \qquad z \in \C_\pm.
\end{split}
 \ee
If $Q \equiv 0$, then the Weyl function $M_0(z)$ is
\be\la{6.17}
M_0(z) = -\frac{1}{\sin(\sqrt{z}b)}
\begin{pmatrix}
\sqrt{z}\cos(\sqrt{z}b)I_n & I_n\\
I_n & -\cos(\sqrt{z}b)I_n
\end{pmatrix}.
\ee
Let $B' =
\begin{pmatrix}
B'_{11} & B'_{12}\\
B'_{21} & B'_{22}
\end{pmatrix}
\quad \mbox{and} \quad B =
\begin{pmatrix}
B_{11} & B_{12}\\
B_{21} & B_{22}
\end{pmatrix}
$
where $B'_{ij}$, $B_{ij} \in  \C^{n\times n}$, $i,j\in \{1,2\}$.
Due to \eqref{6.8}  the extensions $\wt
A' = A_{B'}$ and $\wt A = A_B$ are given by
\bead
\wt A' & = & (\cA[f])(x), \quad f \in \dom(\wt A'),\\
\dom(\wt A') & = & \left\{f \in W^{2,2}((0,b),\C^n):
\begin{matrix}
f'(b) = -B'_{11}f(b) + B'_{12}f(0)\\
f'(0) = -B'_{21}f(b) + B'_{22}f(0)
\end{matrix}
\right\}
\eead
and
\bea\label{6.10}
\wt A & = & (\cA[f])(x), \quad f \in \dom(\wt A),\nonumber  \\
\dom(\wt A) & = & \left\{f \in W^{2,2}(0,b):
\begin{matrix}
f'(b) = -B_{11}f(b) + B_{12}f(0)\\
f'(0) = -B_{21}f(b) + B_{22}f(0)
\end{matrix}
\right\}.
\eea
Note that $\{\wt A',\wt A\} \in \gotD^\Pi$. By  Lemma
\ref{def7.3a}(iv),
\be\label{6.12}
\gD^\Pi_{\wt A'/\wt A}(z) = \frac{\det(B' - M(z))}{\det(B -
M(z))}, \qquad z \in \rho(\wt A) \cap \rho(A_0).
\ee
If for some boundary triplet $\wt\Pi$ the pair  $\{\wt A',\wt A\}
\in \gotD^{\wt\Pi}$, then, by Proposition \ref{III.5}, there exist
a constant $c \in \C$ such that
\be\label{6.13}
\gD^{\wt\Pi}_{\wt A'/\wt A}(z) = c\;\frac{\det(\wt B' - M(z))}{\det(\wt B -
M(z))}, \qquad z \in \rho(\wt A) \cap \rho(A_0) \cap \rho(A'_0).
     \ee
Slightly more complicated is the case when $\wt A' = A_{B'}$ and
$\wt A = A_0$. From Corollary \ref{IV.40}(i) we get the existence
of a boundary triplet $\wt \Pi$ such that $\{\wt A',\wt A\} \in
\gotD^{\wt \Pi}$ which is regular. By Proposition \ref{IV.8}(iii),
there  exist  a constant $c \in \C$ and a real number $\mu \in
\rho(B')$ such that
\bed
\gD^{\wt \Pi}_{\wt A'/\wt A}(z) = c\;\frac{\det(B' - M(z))}{\det(B' - \mu)}, \qquad z \in \rho(\wt A).
\eed
In particular, if $B' = 0$, then $\wt A' = A_1$. Therefore  chosen  $\mu = 1$
we get
\bed
\gD^{\wt \Pi}_{\wt A'/\wt A}(z) = c\;\det(M(z)), \quad z \in
\rho(\wt A).
\eed
This yields
\bed
\gD^{\wt \Pi}_{\wt A'/ \wt A}(z) = c\;\frac{\det C'(z,b)}{\det
S(z,b)}, \qquad z \in \rho(\wt A),\eed
which generalizes \eqref{6.6} to the case of a bounded interval.
\begin{proposition}\la{VII.1a}
Let $A$ be the minimal Sturm-Liouville operator on $[0,b]$ defined by \eqref{7.6} and
let $B\in{\C}^{2n\times 2n}$,   $B_I := \IM(B) \ge 0$, and $\ker
B_I=\{0\}$. Let also  $A_B = A^*\upharpoonright \dom(A_B)$, $\dom(A_B) := \ker(\gG_1 - B\gG_0)$.
Then:

\item[\;\;\rm (i)]
$A$ is simple.

\item[\;\;\rm (ii)]
$A_B$ is {\vio{a}} $m$-dissipative and completely
non-selfadjoint operator with discrete spectrum {\vio{such that
$\R \subseteq \rho(A_B)$}}. Additionally, $A_B$ is complete.

\item[\;\;\rm (iii)]
$A_B$ belongs to the class $C_{0}$. Moreover, the perturbation determinant
$d(\cdot) = \gD^{\Pi}_{A_B/A^*_B}(\cdot)$  is an annihilation function
for $A_B$, that is, $d(A_B) = 0$.

\item[\;\;\rm (iv)]
The annihilation function $d(\cdot)$ is minimal if
and only if $\dim\left(\ker\left(B-M(z)\right)\right) = 1$ for any
$z\in \sigma(A_B)\cap\C_+ = \sigma_p(A_B)$.
\end{proposition}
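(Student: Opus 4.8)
The plan is to verify that $A$ and $A_B$ satisfy all hypotheses of Proposition \ref{V.21} and then to read off (i)--(iv), supplying the one extra ingredient---completeness of $A_B$---that Proposition \ref{V.21} takes as a standing assumption in its part (iv). The decisive preliminary step is to show that \emph{every} point of $\C$ is of regular type for $A$, i.e. $\wh\rho(A) = \C$. Since $\dom(A) \subset \{f : f(0) = f'(0) = 0\}$, I would fix $z \in \C$, put $g := (\cA - z)f$ for $f \in \dom(A)$, and recover $f$ from $g$ by variation of parameters, $f(x) = \int_0^x K_z(x,t)g(t)\,dt$, where the kernel $K_z$ is assembled from the fundamental solutions $C(z,\cdot), S(z,\cdot)$ of $\cA[u]=zu$ and uses precisely the initial data $f(0)=f'(0)=0$. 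On the compact interval $[0,b]$ the kernel $K_z$ is continuous, so the associated Volterra operator is bounded (Hilbert--Schmidt) on $L^2([0,b],\C^n)$, whence $\|f\| \le C_z\|(\cA - z)f\| = C_z\|(A - z)f\|$. Thus $z \in \wh\rho(A)$ for every $z$, i.e. $\wh\rho(A)=\C$.

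(i) Simplicity now follows at once: if a nonzero subspace $\gotH_1$ reduced $A$ with $A_1 := A\!\upharpoonright\!\gotH_1$ self-adjoint, then for every real $\lambda$ the bound above restricts to $\dom(A_1)$, so $A_1 - \lambda$ is bounded below; being self-adjoint, $A_1$ would then have $\lambda \in \rho(A_1)$ for all $\lambda \in \R$, i.e. $\sigma(A_1) = \eset$, which is impossible on $\gotH_1 \neq \{0\}$. Hence $\gotH_1 = \{0\}$ and $A$ is simple.

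(ii) Since $B_I = \IM(B) \ge 0$ the operator $B$ is dissipative, and being bounded and everywhere defined on $\cH = \C^{2n}$ it is $m$-dissipative; by Proposition \ref{prop2.1}(iii) so is $A_B$. With $\wh\rho(A)=\C$, finite deficiency $n_\pm(A)=2n$, and $A$ simple now verified, Proposition \ref{V.21}(i),(ii) apply and give that $(A_B - z)^{-1}$ is compact (discrete spectrum) and, because $\ker B_I = \{0\}$, that $A_B$ is completely non-selfadjoint with $\R \subset \rho(A_B)$. It remains to prove completeness. For this I would fix a real $a \in \rho(A_B)$ and pass to $R := (A_B - a)^{-1}$, which is compact and accumulative; by the Krein-type formula \eqref{2.30} and the self-adjointness of $A_0$, the imaginary part $\IM(R)$ has finite rank (hence lies in $\gotS_1$). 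Thus the hypotheses of Theorem \ref{V.16} and Corollary \ref{V.20} are met, and completeness of the root system of $A_B$ is equivalent to the vanishing of the exponential-type constant $\ga$ in the representation \eqref{4.52A} of $d(\cdot) := \gD^\Pi_{A_B/A_B^*}(\cdot)$.

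The crux is therefore to show $\ga = 0$, and this is where I expect the only real work. I would use the finite-dimensional formula \eqref{3.2A}, $d(z) = \det(B - M(z))/\det(B^* - M(z))$, together with the explicit shape \eqref{6.16} of $M(z)$, which is built from $S(z,b), C(z,b)$ and their $x$-derivatives. These solutions are entire in $z$ of order $1/2$ (growth $\sim e^{b\sqrt{|z|}}$), so after clearing the common factor $\det S(z,b)$ the numerator and denominator of $d$ are entire of order $1/2$; a quotient of such functions has order $1/2 < 1$, hence exponential type zero, which forces $\ga = 0$. Consequently $A_B$ is complete. Statement (iii) is then exactly Proposition \ref{V.21}(iii). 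In (iv) one first notes that, $A_B$ being $m$-dissipative with discrete spectrum and $\R \cup \C_- \subset \rho(A_B)$, one has $\sigma(A_B) = \sigma(A_B)\cap\C_+ = \sigma_p(A_B)$; the equivalence of minimality of $d(\cdot)$ with $\dim\ker(B - M(z)) = 1$ on this set is Proposition \ref{V.21}(iv), using $\dim\ker(A_B - z) = \dim\ker(B - M(z))$ from Proposition \ref{t1.12}(ii). The main obstacle is thus the completeness claim, and within it the verification $\ga = 0$, where the order-$1/2$ growth of the transfer solutions---rather than anything about boundary triplets---does the essential work.
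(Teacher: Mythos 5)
Your proposal is correct in substance, but it reaches the only nontrivial claim --- completeness of $A_B$ in part (ii) --- by a genuinely different route than the paper. The paper's proof is very short: it notes that the Dirichlet realization $A_0$ on the compact interval $[0,b]$ has trace class resolvent, invokes Proposition \ref{prop2.9} to transfer this to $A_B$ (finite deficiency indices make the resolvent difference finite rank), and then applies the classical completeness theorem for $m$-dissipative operators with trace class resolvent (\cite[Theorem V.6.1]{GK69}, Lidskii-type); parts (i), (iii), (iv) are disposed of by citing the Cauchy uniqueness theorem and Proposition \ref{V.21}(iii),(iv), exactly as you do. You instead route completeness through Corollary \ref{V.20} (completeness $\Longleftrightarrow \ga = 0$ in the representation \eqref{4.52A}) and then kill $\ga$ by exhibiting $d(\cdot)$ via \eqref{3.2A} and \eqref{6.16} as a quotient of entire functions of order $1/2$. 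This is attractive because it extracts completeness from the analytic structure of the perturbation determinant itself: in your argument $\ga=0$ is the \emph{input} and completeness the output, whereas in the paper $\ga=0$ is only a \emph{byproduct} of completeness via Corollary \ref{V.20}. The price is that your argument is longer and that its decisive step is stated too loosely: ``order $1/2 < 1$, hence exponential type zero, which forces $\ga=0$'' is not yet a proof, since $d$ is only meromorphic (inner in $\C_+$), exponential type is not defined for such functions, and Nevanlinna order does not directly control pointwise decay along a single ray. To make it airtight, write $d = F_B/F_{B^*}$ with $F_B,F_{B^*}$ entire of order $\le 1/2$ and finite type, and apply the Cartan/minimum-modulus theorem to $F_B$: there exist radii $r_n \to \infty$ with $|F_B(z)| \ge \exp(-r_n^{1/2+\eps})$ on $|z| = r_n$, while $|F_{B^*}(ir_n)| \le \exp(Cr_n^{1/2})$, so $|d(ir_n)| \ge \exp(-2Cr_n^{1/2+\eps})$; if $\ga > 0$ the representation \eqref{4.52A} (with $\cB_-\equiv 1$, as $A_B$ has no eigenvalues in $\C_-$) would force $|d(ir_n)| \le e^{-\ga r_n}$, a contradiction. (Alternatively, one can show $d(iy) \to 1$ from the asymptotics of $M(iy)$.) Your variation-of-parameters proof of $\wh\rho(A) = \C$ and your empty-spectrum argument for simplicity in (i) are both fine, the latter being a small operator-theoretic variant of the paper's appeal to Cauchy uniqueness.
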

\begin{proof}
(i) It follows immediately from the Cauchy uniqueness theorem.

(ii) The first claim follows from Proposition \ref{V.21}(i)
{\vio{and}} (ii). Further, it is well known  that the resolvent of $A_0$ is of  trace class.
It follows from  Proposition  \ref{prop2.9} that the resolvent of $A_B$ is also
of trace class. Since, in addition, $A_B$ is $m$-dissipative, it
follows from \cite[Theorem V.6.1]{GK69}) that $A_B$
is complete.

(iii) and (iv) These statements follow from Proposition \ref{V.21}(iii) and (iv).
\end{proof}

\subsection{Second order elliptic operators in domains with compact boundary}\la{sec.VII.3}

\subsubsection{Elliptic background}

Here we present some known facts on second order elliptic operators, cf. \cite{Gru68} and \cite{Ma2010},  which we are need in the following .
Consider the second-order formally symmetric elliptic operator with  smooth
real coefficients in a domain $\Omega \subset \R^n$ with smooth compact boundary,
\be\label{7.14A}
{\mathcal A}:= - \sum_{j,k=1}^n \frac{\partial}{\partial x_j}
a_{jk}(x) \frac{\partial}{\partial x_j} + q(x), \quad a_{jk} = \overline{a}_{jk},\ \  q=\overline q  \in C^{\infty}({\overline\Omega}).
\ee
Recall  that ellipticity of $\cA$ means that its (principle)
symbol $a_0(x,\xi)$ satisfies
\bed
a_0(x,\xi) := \sum_{|\alpha|=|\beta| = 1}
a_{\alpha\beta}(x)\xi^{\alpha + \beta} \not =0, \qquad (x, \xi)
\in{\overline{\Omega}}\times ({\mathbb R}^n\setminus \{0\}).
\eed
Let  $A = A_{\min}$  be the minimal elliptic operator associated
in $L^2(\Omega)$ with the expression \eqref{7.14A}. Due to Green's
identity
the minimal operator $A = A_{\min}$  is symmetric in
$L^2(\Omega)$. Any proper extension $\widetilde A\in \Ext_A$ of
$A$ is called a realization of $\cA$. Clearly, any realization
$\widetilde A$ of $\cA$ is closable. We equip $\dom(A_{\max})$
with the corresponding graph norm. It is known (cf.\ \cite{Ber68,
LIoMag72}) for bounded domains that $\dom(A_{\min})
= H^{2}_0(\Omega)$ where the graph  norm of
$\dom(A_{\min})$ and the norm of the Sobolev space
$H^{2}_0(\Omega)$ are equivalent.
However, {\vio{in contrast}} to the case of ordinary differential
operators {\vio{one has instead of $\dom(A_{\max}) = H^{2}(\Omega)$}} always
\bed
H^{2}(\Omega)\subset \dom(A_{\max})\subset H^{2}_{\loc}(\Omega).
\eed
Since the symbol
$a_0(x,\xi)$ is real  the  operator $\cA$ is properly elliptic (see \cite{LIoMag72}).
\begin{Hypothesis} \label{h2.5}
Let $\cA$  be a formally symmetric  second order elliptic differential
expression of the form \eqref{7.14A} given on the bounded subset $\gO
\subseteq \R^n$ with smooth boundary $\partial\gO$ which is
uniformly elliptic. In addition, let $a_{\alpha\beta}(\cdot)\in C^{2}_b(\Omega)$
for $|\alpha| + |\beta| \le 2$ and $a_{\alpha\beta}(\cdot)\in C_{ub}^{\infty}(\Omega)$, cf. notation at the end of the introduction,
for $|\alpha| + |\beta| = 2$.
\end{Hypothesis}
In particular, assuming Hypothesis \ref{h2.5} we have
$\dom(A_{\min}) = H^{2}_0(\Omega)$.
Notice that for  bounded  $\Omega$  any elliptic differential
expression $\cA$ with $C({\overline{\Omega}})$-coefficients is
automatically uniformly elliptic in ${\overline{\Omega}}$.

Denote by $\frac{\partial}{\partial\nu}$  the conormal derivative:
\begin{equation}\label{7.3}
\frac{\partial}{\partial\nu} = \sum_{j,k=1}^n a_{jk}(x) \cos(n,
x_j) \frac{\partial}{\partial x_k}
\end{equation}
and  set
\bed
 G_0u := \gamma_0u := u|_{\partial \Omega}, \quad  G_1u :=
\gamma_0 \bigg(\frac{\partial u}{\partial\nu} \bigg) =
\bigg(\frac{\partial u} {\partial\nu}\bigg)\bigg|_{\partial
\Omega}, \qquad  u\in \dom(A_{\max}).
\eed
We define the Dirichlet  and Neumann realizations ${\wh A}_{G_0}$ and  ${\wh A}_{G_1}$ by setting
\be\label{7.16}
\begin{split}
 {\wh A}_{G_j} & := A_{\max}\upharpoonright \dom({\wh
A}_{G_j}),\\
\dom({\wh A}_{{G_j}}) & := \{u\in
H^{2}(\Omega) \,|\, G_ju =0\},\quad j\in \{0,1\}.
\end{split}
\ee
It is well known that under Hypothesis \ref{h2.5} the  realization ${\wh A}_{G_j}$
is  selfadjoint in $H^0(\gO) := L^2(\Omega)$, ${\wh A}_{G_j}= {\wh
  A}_{G_j}^*$, $j\in \{0,1\}$.

To apply Proposition \ref{IV.8} and Corollary \ref{IV.9} we need a
boundary triplet for $A^*.$ Note, that the classical Green's
formula reeds now as follows
\bea\label{6.5MFAT}
\lefteqn{
({\mathcal A}u,v) - (u, {\mathcal A}v) =  \int_{\partial
\Omega} \bigg( \frac{\partial u}{\partial\nu} \cdot {\overline v}
- u\cdot \overline{\frac{\partial v}{\partial\nu}}\bigg) ds}  \\
& &
= \int_{\partial \Omega} \Big( G_1u \cdot {\overline {G_0 v}} -
G_0u\cdot \overline{G_1 v}\Big)\;ds,  \qquad
u,v \in H^2(\Omega).
\nonumber
\eea
\begin{proposition}[{\cite{Gru68}}]\label{prop4.2Det}
Let the Hypothesis \ref{h2.5} be satisfied and let
$0 \in \rho\big({\widehat A}_{G_0}\big)$.
Then for any $s\in\R$ the operator ${G_0}$  isomorphically maps the set
\bed
Z^s_\cA(\Omega) := \{u\in H^s(\Omega): A_{\max}u=0\}
\eed
onto $H^{s-1/2}(\partial\Omega)$.
\end{proposition}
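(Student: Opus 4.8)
The plan is to read the assertion as the well-posedness of the Dirichlet problem for $\cA$ in the full Sobolev scale, the restriction of $G_0$ to $Z^s_\cA(\gO)$ being nothing but the inverse of the Poisson (solution) operator of that problem. The conceptual core, going back to \cite{LIoMag72} and available in the present generality in \cite{Gru68}, is that under Hypothesis \ref{h2.5} the pair
\[
\{A_{\max},G_0\}\colon H^s(\gO)\longrightarrow H^{s-2}(\gO)\oplus H^{s-1/2}(\partial\gO)
\]
is bounded and Fredholm for every $s\in\R$, and is an isomorphism precisely when the homogeneous Dirichlet problem has only the trivial solution, i.e.\ when $0\in\rho(\widehat A_{G_0})$. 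Granting this, the result is a restriction argument: by definition $Z^s_\cA(\gO)$ is the preimage of $\{0\}\oplus H^{s-1/2}(\partial\gO)$ under this isomorphism, and the second component of the restricted map is exactly $G_0$. Hence $G_0$ carries $Z^s_\cA(\gO)$ bijectively and bicontinuously onto $H^{s-1/2}(\partial\gO)$.

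To make this self-contained I would first establish surjectivity together with a bounded inverse. Given $\varphi\in H^{s-1/2}(\partial\gO)$ I pick a lift $w\in H^s(\gO)$ with $G_0w=\varphi$ (for $s>1/2$ a bounded right inverse of the trace exists; for $s\le 1/2$ one uses the transposition extension of the trace of \cite{LIoMag72}). Then $A_{\max}w\in H^{s-2}(\gO)$, and since $0\in\rho(\widehat A_{G_0})$ the Sobolev-scale elliptic regularity of the Dirichlet realization produces $v:=\widehat A_{G_0}^{-1}(A_{\max}w)\in H^s(\gO)$ with $G_0v=0$ and $\cA v=A_{\max}w$. Setting $u:=w-v$ gives $A_{\max}u=0$ and $G_0u=\varphi$, so $u\in Z^s_\cA(\gO)$ maps to $\varphi$ and the assignment $\varphi\mapsto u$ is bounded $H^{s-1/2}(\partial\gO)\to H^s(\gO)$.

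Injectivity is handled by regularity: if $u\in Z^s_\cA(\gO)$ and $G_0u=0$, then $u$ solves $\cA u=0$ with vanishing Dirichlet data, so boundary elliptic estimates let one bootstrap $u$ into $H^t(\gO)$ for every $t$, in particular $u\in H^2(\gO)$; thus $u\in\dom(\widehat A_{G_0})$ with $\widehat A_{G_0}u=0$, whence $u=0$ because $0\in\rho(\widehat A_{G_0})$. It remains to check that $G_0$ is bounded \emph{on} $Z^s_\cA(\gO)$ into $H^{s-1/2}(\partial\gO)$. For $s>1/2$ this is the classical trace theorem, but for $s\le 1/2$ the trace of a generic $H^s$-function need not lie in $H^{s-1/2}(\partial\gO)$; the constraint $\cA u=0$ upgrades it. One estimates the full Cauchy data $(G_0u,G_1u)$ of a solution and recovers the missing half-derivative from the equation, which is exactly the content of the a priori estimates of \cite{Gru68}. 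Together with the bounded inverse $\varphi\mapsto u$ this shows that $G_0|_{Z^s_\cA(\gO)}$ is an isomorphism.

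I expect the genuine obstacle to be precisely the low-regularity range $s\le 1/2$, including negative $s$, where neither the trace nor the Dirichlet problem has an elementary meaning and the ordinary $H^2$-theory (which suffices for $s\ge 2$) breaks down. There one must rely on the Lions--Magenes transposition method and the Sobolev-scale boundary a priori estimates of \cite{Gru68}; the point that the homogeneous equation improves the trace across all half-integer thresholds is what makes the clean statement $G_0\colon Z^s_\cA(\gO)\cong H^{s-1/2}(\partial\gO)$ hold uniformly in $s$.
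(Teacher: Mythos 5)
You should first be aware that the paper contains no proof of this proposition: it is quoted verbatim from Grubb \cite{Gru68} (resting on the Lions--Magenes theory \cite{LIoMag72}), so there is no internal argument to compare yours against. Judged on its own, your structural reduction is the right one and is genuinely complete in the high-regularity range: for $s\ge 2$ (and, using the classical trace theorem, for $s>1/2$) the lift-and-correct argument for surjectivity and the regularity bootstrap for injectivity are exactly how the isomorphism follows from $0\in\rho\big(\widehat A_{G_0}\big)$ together with elliptic regularity of the Dirichlet realization.

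The genuine gap is the range $s\le 1/2$, which you rightly call the crux but do not actually close. The step ``pick a lift $w\in H^s(\Omega)$ with $G_0w=\varphi$'' is vacuous there: $\gamma_0$ admits no continuous extension to all of $H^s(\Omega)$ once $s\le 1/2$, and the Lions--Magenes ``transposition extension of the trace'' you invoke to repair it is defined only on subspaces where $\cA u$ is controlled --- in particular on $Z^s_\cA(\Omega)$ itself --- so its existence and mapping properties there are essentially the statement being proved; citing it at this point is circular. The same applies to the claim that $\widehat A_{G_0}^{-1}$ maps $H^{s-2}(\Omega)$ into $H^s(\Omega)$ ``with $G_0v=0$'' for $s<1/2$: that is not elliptic regularity but again the transposition theory. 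The correct low-regularity argument is not lift-and-correct but direct duality: for $\varphi\in H^{s-1/2}(\partial\Omega)$ one defines the candidate solution $u$ by pairing, via Green's formula, against solutions $v_f$ of the homogeneous Dirichlet problem $\cA v_f=f$, $G_0v_f=0$ with smooth $f$ (here $\cA$ is formally symmetric, so no separate adjoint problem is needed), shows $u\in H^s(\Omega)$ from the a priori estimates for that problem, and then verifies $\cA u=0$ and that the weak trace of $u$ equals $\varphi$; interpolation and the identification of the relevant dual spaces then cover the whole scale $s\in\R$. That construction is the actual content of Grubb's and Lions--Magenes' proof, and without it your argument establishes the proposition only for $s>1/2$.
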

\bd[{\cite{Gru68, Vis63}}]\label{def3.2A}
{\rm
Assume Hypothesis \ref{h2.5}.
\item[\;\;(i)]
Let $z\in \rho({\widehat A}_{ G_0})$ and
$\varphi\in H^{s-1/2}(\partial\Omega)$, $s\in\R$. Then one defines
$P(z)\varphi$ to be the unique $u\in Z^s_{\cA-zI_{L^2(\Omega)}}(\Omega)$
satisfying $G_0 u=\varphi$.

\item[\;\;(ii)]
The Poincare-Steklov  operator $\gL (z)$ is defined  by
\be\label{3.5A}
\gL(z): H^{s-1/2}(\partial\gO) \to
 H^{s-3/2}(\partial\gO), \quad
\gL (z)\varphi = G_1 P (z)\varphi.
\ee
To be precise  we denote the operator $\gL(\cdot):\  H^{s}(\partial\gO) \to
 H^{s-1}(\partial\gO)$ by $\gL_{s}(\cdot)$. Notice that
$\gL_{s}(z) \in [H^{s}(\partial\gO),H^{s-1}(\partial\gO)]$.
}
\ed
Further, let $\Delta_{\partial\Omega}$ be the Laplace-Beltrami operator in
$L^2(\partial \Omega)$. {\vio{Since ${\Delta_{\partial\Omega}\ge 0}$}},
the operator  $(-\Delta_{\partial\gO} + I)^{-s/2}$ isomorphically maps $L^2(\partial\Omega)$ onto
$H^s (\partial\Omega)$, $s\in\R$.

Notice that the classical Green  formula \eqref{6.5MFAT}  cannot be extended to
$u,v\in\dom(A^*)$ since the traces $G_0 u$ and $G_1 u$ belong to
the spaces $H^{-1/2}(\partial\Omega)$ and
$H^{-3/2}(\partial\Omega)$, respectively (see\cite{LIoMag72, Gru09}). A
construction of a boundary triplet for $A^*$ as well as the
respective regularization of the Green formula \eqref{6.5MFAT}
goes back to the classical papers by Vishik \cite{Vis63} and Grubb
\cite{Gru68}. An adaptation of this construction to the case of boundary triplets
in the sense of Definition \ref{II.1}  was done in  \cite{Ma2010}.
{\vio{First we recall a result  from \cite{Ma2010} that
modifies and completes  \cite[Theorem 3.1.2 ]{Gru68}}}
\begin{proposition}[{\cite[Proposition 5.1]{Ma2010}}]\label{prop5.1}
Let the Hypothesis \ref{h2.5} be satisfied and let
$0 \in \rho\big({\widehat A}_{G_0}\big)$.
Then the following statements are valid:

\item[\rm\;\;(i)]
 The totality  $\Pi=\{\cH,\gG_0,\gG_1\}$, where $\cH:=
 L^2(\partial\gO)$ and
\be\label{3.10A}
\begin{split}
\gG_0 u  &:= (-\Delta_{\partial\gO} + I)^{-1/4}G_0u,\\
\gG_1 u  &:= (-\Delta_{\partial\gO} + I)^{1/4} (G_1  - \gL(0) G_0)u,
\end{split} \qquad u\in \dom(A_{\max}),
\ee
forms a boundary triplet for $A^*$. In particular,
the Green formula
\be\label{4.1}
(A^*u,v)_{L^2(\Omega)} - (u,A^*v)_{L^2(\Omega)} =
(\gG_1 u,\gG_0v)_{L^2(\partial\gO)}-(\gG_0 u,\gG_1v)_{L^2(\partial\gO)},
\ee
$u, v \in \dom(A^*)$, holds and $A_0 := A^*\upharpoonright\ker(\gG_0)
= \wh A_{G_0}$.

\item[\rm\;\;(ii)]
The  operator valued function $\gL_{-\frac{1}{2}}(z) - \gL_{-\frac{1}{2}}(0)$
has the regularity property
\begin{equation}\label{3.5AWeyl}
\gL_{-\frac{1}{2}}(z) - \gL_{-\frac{1}{2}}(0) \colon\ H^{-1/2}(\partial\Omega) \to
H^{ 1/2}(\partial\Omega), \qquad z\in\rho\big({\wh
A}_{G_0}\big).
\end{equation}
Moreover, $\gL_{-\frac{1}{2}}(z) - \gL_{-\frac{1}{2}}(0)\in [
H^{-1/2}(\partial\Omega),  H^{1/2}(\partial\Omega)]$
for any  $z\in\rho\big({\wh A}_{G_0}\big).$

\item[\rm\;\;(iii)]
The corresponding Weyl function is given by
\be\label{7.29}
 M(z) = {(-\Delta_{\partial\gO} + I)^{1/4}\bigl(\gL_{-\frac{1}{2}}
(z)-\gL_{-\frac{1}{2}}(0)\bigr) (-\Delta_{\partial\gO} + I)^{1/4}}, \quad
z\in \C_{\pm}.
\ee
\end{proposition}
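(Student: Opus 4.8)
The plan is to reconstruct the proof of \cite[Proposition 5.1]{Ma2010} inside the Vishik--Grubb framework, exploiting $0 \in \rho(\widehat A_{G_0})$ both to split the maximal domain and to symmetrize the (non-symmetric) classical Green formula \eqref{6.5MFAT}. The starting observation, which also pins down the main difficulty, is that for $u \in \dom(A_{\max})$ the traces satisfy only $G_0 u \in H^{-1/2}(\partial\gO)$ and $G_1 u \in H^{-3/2}(\partial\gO)$, so neither lands in $\cH = L^2(\partial\gO)$ and \eqref{6.5MFAT} cannot be applied to $u,v \in \dom(A^*)$. The renormalizations in \eqref{3.10A} are designed to repair this: $(-\Delta_{\partial\gO}+I)^{-1/4}$ is an isomorphism $H^{-1/2}\to L^2$, so $\gG_0$ maps $\dom(A_{\max})$ into $\cH$; for $\gG_1$ I must first show that the regularized Neumann trace $G_1 - \gL(0)G_0$ already takes values in $H^{1/2}(\partial\gO)$, after which $(-\Delta_{\partial\gO}+I)^{1/4}\colon H^{1/2}\to L^2$ finishes the job.

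Accordingly I would establish (ii) first, as it is the technical heart. For $\varphi \in H^{-1/2}(\partial\gO)$ set $w := (P(z)-P(0))\varphi$; by Definition \ref{def3.2A} one has $G_0 w = 0$ and $A_{\max}w = z P(z)\varphi$, while Proposition \ref{prop4.2Det} (with $s=0$) gives $P(z)\varphi \in L^2(\Omega)$. Elliptic regularity for the Dirichlet problem then forces $w \in \dom(\widehat A_{G_0}) \subset H^2(\Omega)$, indeed $w = z\,\widehat A_{G_0}^{-1}P(z)\varphi$, whence $(\gL(z)-\gL(0))\varphi = G_1 w \in H^{1/2}(\partial\gO)$; this is exactly \eqref{3.5AWeyl}, and boundedness follows by the closed graph theorem. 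The same computation, applied after the domain splitting below, yields $(G_1-\gL(0)G_0)u \in H^{1/2}$ for all $u \in \dom(A_{\max})$, so that $\gG_1$ indeed maps into $\cH$.

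For (i) I would use the decomposition $\dom(A_{\max}) = \dom(\widehat A_{G_0}) \dotplus \ker(A_{\max})$ afforded by $0 \in \rho(\widehat A_{G_0})$, writing $u = u_D + P(0)G_0u$ with $u_D \in \dom(\widehat A_{G_0}) \subset H^2$. On $H^2$ the classical identity \eqref{6.5MFAT} applies, and the self-adjointness of $\gL(0)$ (which follows from $\widehat A_{G_0} = \widehat A_{G_0}^{*}$ together with the formal symmetry of $\cA$) lets me subtract $\gL(0)G_0$ symmetrically to obtain the regularized Green identity
\[
(A^*u,v)-(u,A^*v)=\langle (G_1-\gL(0)G_0)u,\,G_0v\rangle-\langle G_0u,\,(G_1-\gL(0)G_0)v\rangle,
\]
with the $H^{1/2}$--$H^{-1/2}$ duality on $\partial\gO$. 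Since $-\Delta_{\partial\gO}+I$ is positive and self-adjoint, this duality is realized by $\langle f,g\rangle = ((-\Delta_{\partial\gO}+I)^{1/4}f,(-\Delta_{\partial\gO}+I)^{-1/4}g)_{L^2}$, which turns the right-hand side into $(\gG_1u,\gG_0v)_{L^2}-(\gG_0u,\gG_1v)_{L^2}$, i.e. the abstract Green identity \eqref{2.0}. Surjectivity of $(\gG_0,\gG_1)^{\top}$ onto $L^2\oplus L^2$ I would read off from the same splitting: the summand $P(0)G_0u$ realizes arbitrary Dirichlet data in $H^{-1/2}$ and contributes zero regularized Neumann data, while $G_1\colon \dom(\widehat A_{G_0})\to H^{1/2}$ is onto; composing with the isomorphisms $(-\Delta_{\partial\gO}+I)^{\mp1/4}$ gives surjectivity onto $\cH\oplus\cH$. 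Finally $\ker(\gG_0)=\ker(G_0)=\dom(\widehat A_{G_0})$ yields $A_0 = \widehat A_{G_0}$.

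For (iii) I would evaluate the defining relation $M(z)\gG_0 f_z = \gG_1 f_z$ of Definition \ref{Weylfunc} on $f_z \in \gN_z = \ker(A_{\max}-z)$, where $G_1 f_z = \gL(z)G_0 f_z$. Then $\gG_1 f_z = (-\Delta_{\partial\gO}+I)^{1/4}(\gL(z)-\gL(0))G_0 f_z$ and $G_0 f_z = (-\Delta_{\partial\gO}+I)^{1/4}\gG_0 f_z$; since $\gG_0$ restricts to a bijection of $\gN_z$ onto $\cH$, this identifies $M(z)$ with \eqref{7.29}, and (ii) guarantees the composition is bounded on $L^2(\partial\gO)$. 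The main obstacle throughout is (ii) together with the symmetrization in (i): both hinge on the cancellation of principal symbols in $\gL(z)-\gL(0)$ (the two-derivative gain) and on $\gL(0)=\gL(0)^{*}$, and it is only this regularization that makes the negative-order traces assemble into an honest ordinary boundary triplet.
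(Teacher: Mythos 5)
The paper itself contains no proof of this proposition: it is quoted from \cite[Proposition 5.1]{Ma2010}, and the surrounding text only records that the construction adapts the classical Vishik--Grubb regularization \cite{Vis63,Gru68} to boundary triplets. Your reconstruction follows exactly that route --- the decomposition $\dom(A_{\max}) = \dom(\wh A_{G_0}) \dotplus \ker(A_{\max})$, the identity $(P(z)-P(0))\varphi = z\,\wh A_{G_0}^{-1}P(z)\varphi$ which produces the two-derivative gain in (ii), symmetrization of \eqref{6.5MFAT} by the symmetric operator $\gL(0)$, and conjugation by $(-\Delta_{\partial\gO}+I)^{\pm 1/4}$ --- and it is essentially correct, including the surjectivity argument and the computation of $M(z)$ on $\gN_z$. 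The one step you should make explicit is the passage of the regularized Green identity from $H^2$ pairs to all of $\dom(A^*)$: after decomposing, the cross terms pair an $H^2$ function against a kernel element that is merely $L^2$, where \eqref{6.5MFAT} does not apply directly; this is repaired by approximating $\ker(A_{\max})$ in $L^2$ by $Z^2_{\cA}(\gO)$ (possible because, by Proposition \ref{prop4.2Det}, $G_0$ maps $Z^0_{\cA}(\gO)$ isomorphically onto $H^{-1/2}(\partial\gO)$ and $H^{3/2}(\partial\gO)$ is dense there), all terms being continuous in the graph norm since $(G_1-\gL(0)G_0)u = G_1\wh A_{G_0}^{-1}A_{\max}u$.
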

In {\vio{contrast}} to the mapping $\Gamma= \{\Gamma_0, \Gamma_1\}:  \dom(A_{\max})\to {\vio{L^2(\partial\Omega)\times L^2(\partial\Omega)}}$,
the mapping
\bed
G=\{G_0,G_1\}:\  \dom(A_{\max})\to H^{-1/2}(\partial\Omega)\times H^{-3/2}(\partial\Omega)
\eed
is not surjective. The following statement  describes  the range  $\ran(G)$.
\bc\label{cor7.6}
{\vio{Let the assumptions of Proposition  \eqref{prop5.1} be satisfied}}. Then for any pair
$\{h_0,h_1\}\in  H^{-1/2}(\partial\Omega)\times H^{-3/2}(\partial\Omega)$ the system $G_j f= h_j, j\in\{0,1\}$,
has a solution $f\in\dom(A_{\max})$ if and only if
\be\label{7.23}
h_1-\gL_{-\frac{1}{2}}(0)h_0\in H^{1/2}(\partial\Omega).
\ee
\end{corollary}
\begin{proof}
If condition \eqref{7.23} is satisfied, then it follows from
Proposition  \eqref{prop5.1}(i) and the surjectivity of
$\Gamma= \{\Gamma_0, \Gamma_1\}:  \dom(A_{\max})\to L^2(\partial\Omega)\times L^2(\partial\Omega)$
that the system
\bed
\begin{cases}
\Gamma_0f = & (-\Delta_{\partial\gO} + I)^{-1/4}h_0\\
\Gamma_1 f= & (-\Delta_{\partial\gO} + I)^{1/4} \bigl(h_1-\gL_{-\frac{1}{2}}(0)h_0\bigr)
\end{cases}
\eed
has a (non-unique) solution $f\in\dom(A_{\max})$. According to
definition  \eqref{3.10A},  $f$ also satisfies the system $G_j f=h_j$,
$j\in \{0,1\}$.

{\vio{Conversely, assume that there is a vector $f \in \dom(A_{\rm max})$ satisfying
$G_jf = h_j$,\  $j \in \{0,1\}$. Hence,
\bed
h_1 - \gL_{-\frac{1}{2}}(0)h_0 = G_1f - \gL_{-\frac{1}{2}}(0)G_0f = (-\gD_{\partial\gO} + I)^{-1/4}\gG_1f \in H^{1/2}(\partial\gO)
\eed
which verifies \eqref{7.23}.}}
\end{proof}

Let $\wt A$ be any proper extension of $A := A_{\rm min}$. We set
\bed
\Xi_{\wt A} := G\dom(\wt A) \subseteq H^{-1/2}(\partial\gO) \times H^{-3/2}(\partial\gO).
\eed
Clearly,  $\Xi_{\wt A}$ is not necessarily the graph of an operator.
\bl\la{VII.7}
Let the Hypothesis \ref{h2.5} be satisfied.

\item[\rm\;\;(i)]
Let  $\wt A$ be a proper extension of $A := A_{\rm min}$.
There exists an operator $K: H^{-1/2}(\partial\gO) \longrightarrow
H^{-3/2}(\partial\gO)$ such that $\Xi_{\wt A} = \graph(K)$ if and only if
$\wt A$ and $\wh A_{G_0}$ are disjoint.

\item[\rm\;\;(ii)]
Let  $\wt A_1$ and $\wt A_2$ be proper extensions of $A := A_{\rm
  min}$ which are disjoint from $\wh A_{G_0}$. Let $\Xi_j := G\dom(\wt
A_j) = \graph(K_j)$, $j =1,2$. If $K_1 = K_2$, then $\wt A_1 = \wt
A_2$.
\el
\begin{proof}
(i) Assume that the  extensions $\wt A$ and $\wh A_{G_0}$ are disjoint. Let
$\{h_0,h_1\} \in \Xi_{\wt A} = G\dom(\wt A)$, $h_0 \in H^{-1/2}(\partial\gO)$,
$h_1 \in  H^{-3/2}(\partial\gO)$. If $h_0 = 0$, then $0 = G_0f$, $f
\in \dom(\wt A) \subseteq \dom(A_{\rm max})$. Hence $f \in \dom(\wh
A_{G_0})$ which yields $f \in \dom(\wt A) \cap \dom(\wh A_{G_0})$.
Since $\dom(\wt A)$ and $\dom(\wh A_{G_0})$ are disjoint we find $f
\in \dom(A)$. Hence $\gG_1f = 0$ which implies $h_1 = 0$. Therefore,
$\Xi_{\wt A}$ is the graph of a linear operator $K: H^{-1/2}(\partial\gO) \longrightarrow
H^{-3/2}(\partial\gO)$.

Conversely, if $\wt A$ and $\wh A_{G_0}$ are not disjoint, then there
is a non-trivial $f \in \dom(\wt A) \cap \dom(\wh A_{G_0})$ such that
$f \not\in \dom(A_{\rm min})$. Notice that $h_0 = G_0f = 0$. Since
$\Xi_{\wt A}$ is the graph of an operator we get $0 = h_1 = G_1f$. Hence $f
\in \dom(A_{G_0}) \cap \dom(A_{G_1})$ which yields $f \in \dom(A_{\rm
  min})$.

(ii) Let $\{h_0,h_1\} \in \Xi_1 = \Xi_2$. Then there are elements
$f_1 \in \dom(\wt A_1)$ and $f_2 \in \dom(\wt A_2)$ such that $h_0 =
G_0f_1 = G_0f_2$ and $h_1 = G_1f_1 = G_1f_2$. Setting $f := f_1 - f_2
\in \dom(A_{\rm \max})$ we find $G_0f  = 0$ and $G_1f = 0$. Hence
$f \in \dom(A_{G_0}) \cap \dom(A_{G_1}) = \dom(A_{\rm min})$. Hence
$f_1 = f_2 + f$ where $f_2 \in \dom(\wt A_2)$ and $f \in \dom(A_{\rm
  min})$ which yields $f_1 \in \dom(\wt A_2)$. Similarly we prove
$f_2 \in \dom(\wt A_2)$. Therefore $\dom(\wt A_2) = \dom(\wt A_2)$
which yields $\wt A_1 = \wt A_2$.
\end{proof}

If $\wt A$ is a proper extension of $A_{\rm min}$ which is disjoint
from $\wh A_{G_0}$, then there is an operator
$K: H^{-1/2}(\partial\gO) \longrightarrow H^{-3/2}(\partial\gO)$
such that $G\dom(\wt A) = \graph(K)$.
The converse is in general not true.
That means, not for every operator $K: H^{-1/2}(\partial\gO)
\longrightarrow H^{-3/2}(\partial\gO)$ there exists a proper
extension $\wt A$ of $A_{\rm min}$ such that $\graph(K) = G\dom(\wt A)$.
\bl\label{lem3.6}
Let the Hypothesis \ref{h2.5} be satisfied and let
$0 \in \rho\big({\widehat A}_{G_0}\big)$. Further, let
$K: H^{-1/2}(\partial\gO) \longrightarrow  H^{-3/2}(\partial\gO)$.
There exists a unique proper extension $\wt A$ of $A_{\rm min}$
such that $\wt A$ is disjoint from $\wh A_{G_0}$ and
$\graph(K) = G\dom(\wt A)$ if and only if
the regularity condition $\ran(K - \gL_{-\frac{1}{2}}(0)) \subseteq  H^{1/2}(\partial\Omega)$
is satisfied.
\el
\begin{proof}
Let $\{h_0,h_1\} \in \graph(K)$. Notice that $h_1 = Kh_0$. Using the
assumption $\ran(K - \gL_{-\frac{1}{2}}(0)) \subseteq  H^{1/2}(\partial\Omega)$
we get $h_1 - \gL_{-\frac{1}{2}}(0)h_0 = Kh_0 - \gL_{-\frac{1}{2}}(0)h_0 \in H^{1/2}(\partial\Omega)$. By Corollary
\ref{cor7.6} there is an element $f \in \dom(A_{\rm max})$ such that
$h_j = G_jf$, $j = 0,1$. We set
\bed
\dom(\wt A) := \{f \in \dom(A_{\rm max}): h_j = G_jf, \; j= 0,1, \; \{h_0,h_1\}
\in \graph(K)\}.
\eed
Obviously, we have $\dom(A_{\rm min} \subseteq \dom(\wt A) \subseteq \dom(A_{\rm
  max})$. Setting $\wt A := A_{\rm max}\upharpoonright\dom(\wt A)$
we define a proper extension of $A_{\rm min}$ such that $\graph(K) =
G\dom(\wt A)$. By Lemma \ref{VII.7}(i) we get that $\wt A$ is
disjoint from $\wh A_{G_0}$. The uniqueness follows
from Lemma \ref{VII.7}(i).

Conversely, if there is a proper extension $\wt A$ such that $G\dom(\wt A) =
\graph(K)$, then $h_j = G_jf$, $j = 0,1$, $f \in \dom(\wt A)$,
$\{h_0,h_1\} \in \graph(K)$. Applying Corollary \ref{cor7.6} we
get $h_1 - \gL_{-\frac{1}{2}}(0)h_0 \in H^{1/2}(\partial\gO)$. Hence $Kh_0 -
\gL_{-\frac{1}{2}}(0)h_0   \in H^{1/2}(\partial\gO)$ for any $h_0 \in
\dom(K)$. Therefore $\ran(K - \gL_{-\frac{1}{2}}(0)) \subseteq
H^{1/2}(\partial\gO)$.
\end{proof}

Let $K: H^{-1/2}(\partial\gO) \longrightarrow
H^{-3/2}(\partial\gO)$. We set
\be\la{7.24b}
\begin{split}
\wh A_K &:= A_{\rm max} \upharpoonright\dom(\wh A_K),\\
\dom(\wh A_K) &:= \{f \in \dom(A_{\rm max}): G_1f = KG_0f\}.
\end{split}
\ee
Obviously, $\wh A_K$ is a proper extension of $A_{\rm min}$. Let $\wt
A$ be a proper extension disjoint from $\wh A_{G_0}$ and let $K$ the
operator defined by Lemma \ref{VII.7}. Then a straightforward
computation shows that $\wt A = \wh A_K$. In general, for any operator
$K: H^{-1/2}(\partial\gO) \longrightarrow
H^{-3/2}(\partial\gO)$ one has only $G\dom(\wh A_K) \subseteq
\graph(K)$.
\bc\la{VII.9}
Let the assumptions of Lemma \ref{lem3.6} be satisfied and let
$K: H^{-1/2}(\partial\gO) \longrightarrow  H^{-3/2}(\partial\gO)$.
The condition $\graph(K) = G\dom(\wh A_K)$ is satisfied if and only if
$\ran(K - \gL_{-\frac{1}{2}}(0)) \subseteq H^{1/2}(\partial\gO)$.
\ec
\begin{proof}
Let $\{h_0,h_1\} \in \graph(K)$,
then by the assumption $\ran(K - \gL_{-\frac{1}{2}}(0))$ we get $h_1 - \gL_{-\frac{1}{2}}(0)h_0 \in
H^{1/2}(\partial\gO)$.
By Corollary \ref{cor7.6} there is a $f \in \dom(A_{\rm max})$ such
that $h_1 = G_1f$ and $h_0 = G_0f$. Since $G_1f = h_1 = Kh_0 = KG_0f$
we get $f \in \dom(A_{\rm max})$. Hence $f \in \dom(\wh A_K)$ and
$\graph(K) = G\dom(\wh A_K)$.

Conversely, if $\graph(K) = G\dom(\wh A_K)$, then for any $\{h_0,h_1\}
\in \graph(K)$ there is $f\in \dom(\wt A_K)$ such that $h_0= G_0f$ and
$h_1 = G_1f$. By Corollary \ref{cor7.6} we get $h_1 - \gL_{-\frac{1}{2}}(0)h_0 = (K
- \gL_{-\frac{1}{2}}(0))h_0 \in H^{1/2}(\partial\gO)$ which yields $\ran(K
- \gL_{-\frac{1}{2}}(0)) \subseteq H^{1/2}(\partial\gO)$.
\end{proof}

Corollary \ref{VII.9} suggests besides the operator $K: H^{-1/2}(\partial\gO)
\longrightarrow H^{-3/2}(\partial\gO)$  to consider its restriction $K': H^{-1/2}(\partial\gO)
\longrightarrow H^{-3/2}(\partial\gO)$ defined by
\be\label{7.24}
\begin{split}
K' &:= K\upharpoonright\dom(K'), \\
\dom(K') &:= \{h \in \dom(K): Kh - \gL_{-\frac{1}{2}}(0)h \in H^{1/2}(\partial\gO)\} \subseteq \dom(K).
\end{split}
\ee
Clearly, $\graph(K') = G\dom(\wh A_K)$, i.e.
$\wh A_K = \wh A_{K'}$.
For instance, if  $\bO: H^{-1/2}(\partial\gO) \longrightarrow
H^{-3/2}(\partial\gO)$ is  the zero operator, then obviously, $\wh A_\bO = \wh A_{G_1}$. However,
$\bO' := \bO \upharpoonright\dom(\bO')$,
 \be\label{7.24A}
\dom(\bO') := \{f \in
H^{-1/2}(\partial\gO): -\gL_{-\frac{1}{2}}(0)f \in H^{1/2}(\partial\gO)\} = H^{3/2}(\partial\gO).
  \ee
Hence
$\wh A_{G_1} = \wh A_{\bO'}$ and $\dom(\wh A_{G_1}) = \{f \in
H^2(\gO): G_1f = 0\}$.

Obviously, the proper extension $\wh A_K$ admits a description with respect
to the boundary triplet $\Pi = \{L^2(\partial\gO),\gG_0,\gG_1\}$ given
by Proposition \ref{prop5.1}
\begin{proposition}[{\cite[Proposition 3.8]{Ma2010}}]\label{prop3.5}
Assume the conditions of Proposition \ref{prop5.1}.
Let  $K: H^{-1/2}(\partial\gO) \longrightarrow  H^{-3/2}(\partial\gO)$
and let  $\Pi=\{L^2(\partial\gO),
\gG_0,\, \gG_1\}$ be the boundary triplet for $A^*$ given by \eqref{3.10A}.
Then the following holds:

\item[\;\;\rm (i)]
$\wh A_K = A_{B_K}$, where $A_{B_K} :=
A^*\upharpoonright\ker(\gG_1 - B_K\gG_1)$ and
\be\label{3.24}
B_K := (-\Delta_{\partial\gO} + I)^{1/4}(K'
-\gL_{-\frac{1}{2}}(0))(-\Delta_{\partial\gO} + I)^{1/4}: L^2(\partial\gO)
\longrightarrow L^2(\partial\gO).
\ee
\item[\;\;\rm (ii)]
The operator $\wh A_K$ is closed and disjoint from $\wh A_{G_0}$ if and only if the
operator $K' - \gL_{-\frac{1}{2}}(0): H^{-1/2}(\partial\gO) \longrightarrow H^{1/2}(\partial\gO)$
is closed.

\item[\;\;\rm (iii)]
Let $\wh A_K$ be not closed (but necessary closable). Its closure is disjoint from
$\wh A_{G_0}$ if and only if  the operator $K' - \gL_{-\frac{1}{2}}(0)$ is closable.

\item[\;\;\rm (iv)]
If $z \in \rho(\wh A_{G_0})$, then $z \in \rho(\wh A_K)$ if and only
if  the operator $K' -\gL_{-\frac{1}{2}}(z)$ maps
$\dom(K') \subset H^{-1/2}(\partial\gO)$  onto $H^{1/2}(\partial\Omega)$ and its  kernel is trivial.
\end{proposition}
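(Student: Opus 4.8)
The plan is to reduce all four assertions to statements about the boundary operator $B_K$ through the dictionary between proper extensions and their boundary operators furnished by Propositions \ref{prop2.1} and \ref{t1.12}, exploiting that the auxiliary factor $(-\Delta_{\partial\gO}+I)^{1/4}$ acts as a topological isomorphism between the relevant Sobolev scales. For (i) I would simply unwind the definitions \eqref{3.10A}. Since $G_0u = (-\Delta_{\partial\gO}+I)^{1/4}\gG_0u$ and $(G_1-\gL_{-\frac{1}{2}}(0)G_0)u = (-\Delta_{\partial\gO}+I)^{-1/4}\gG_1u$, the defining condition $G_1f = KG_0f$ of $\dom(\wh A_K)$, together with $G\dom(\wh A_K)=\graph(K')$ established after \eqref{7.24} (so that $G_1f-\gL_{-\frac{1}{2}}(0)G_0f = (K'-\gL_{-\frac{1}{2}}(0))G_0f$), becomes $(-\Delta_{\partial\gO}+I)^{-1/4}\gG_1f = (K'-\gL_{-\frac{1}{2}}(0))(-\Delta_{\partial\gO}+I)^{1/4}\gG_0f$. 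Applying $(-\Delta_{\partial\gO}+I)^{1/4}$ to both sides yields $\gG_1f = B_K\gG_0f$ with $B_K$ as in \eqref{3.24}, i.e. $\wh A_K = A_{B_K}$.

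The core of the remaining parts is the observation that, by the mapping property quoted just before Proposition \ref{prop5.1}, the operators $T_1 := (-\Delta_{\partial\gO}+I)^{1/4}\colon L^2(\partial\gO)\to H^{-1/2}(\partial\gO)$ and $T_2 := (-\Delta_{\partial\gO}+I)^{1/4}\colon H^{1/2}(\partial\gO)\to L^2(\partial\gO)$ are homeomorphisms, and that $B_K = T_2(K'-\gL_{-\frac{1}{2}}(0))T_1$ with $\dom(B_K) = T_1^{-1}\dom(K')$. A routine graph argument then shows that $B_K$ is closed (resp. closable) in $L^2(\partial\gO)$ if and only if $K'-\gL_{-\frac{1}{2}}(0)\colon H^{-1/2}(\partial\gO)\to H^{1/2}(\partial\gO)$ is closed (resp. closable). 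To finish (ii) I would invoke Proposition \ref{prop2.1}(iv): the extension $A_{B_K}$ is disjoint from $A_0=\wh A_{G_0}$ precisely when $\graph(B_K)\in\cC(\cH)$, that is, when $B_K$ is a closed operator, in which case $\wh A_K=A_{B_K}$ is automatically closed; combining with the preceding equivalence gives (ii). Statement (iii) follows in the same manner, replacing $B_K$ by its closure and using that $\overline{\wh A_K}=A_{\overline{B_K}}$ is disjoint from $\wh A_{G_0}$ iff $\overline{B_K}$ is (the graph of) an operator, i.e. iff $B_K$, equivalently $K'-\gL_{-\frac{1}{2}}(0)$, is closable.

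For (iv), fix $z\in\rho(\wh A_{G_0})=\rho(A_0)$. By Proposition \ref{t1.12}(i), $z\in\rho(\wh A_K)=\rho(A_{B_K})$ if and only if $0\in\rho(B_K-M(z))$. Using the representation \eqref{7.29} of the Weyl function, $M(z)=T_2(\gL_{-\frac{1}{2}}(z)-\gL_{-\frac{1}{2}}(0))T_1$, where $\gL_{-\frac{1}{2}}(z)-\gL_{-\frac{1}{2}}(0)\in[H^{-1/2}(\partial\gO),H^{1/2}(\partial\gO)]$ by Proposition \ref{prop5.1}(ii), so that
\[
B_K - M(z) = T_2\bigl(K'-\gL_{-\frac{1}{2}}(z)\bigr)T_1 ,\qquad z\in\rho(A_0).
\]
Since $T_1$ and $T_2$ are homeomorphisms, $0\in\rho(B_K-M(z))$ is equivalent to $K'-\gL_{-\frac{1}{2}}(z)$ being a bijection of $\dom(K')\subset H^{-1/2}(\partial\gO)$ onto $H^{1/2}(\partial\gO)$, the boundedness of the inverse being then automatic from the closed graph theorem. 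This is exactly the asserted condition.

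I expect the main difficulty to be bookkeeping rather than conceptual: one must keep careful track of the three scales $H^{-1/2}(\partial\gO)$, $L^2(\partial\gO)$, $H^{1/2}(\partial\gO)$ on which the various factors act, verify that the domain restriction encoded in $K'$ (namely $\ran(K'-\gL_{-\frac{1}{2}}(0))\subseteq H^{1/2}(\partial\gO)$, cf. \eqref{7.24}) is precisely what turns $B_K$ into an operator in $L^2(\partial\gO)$ rather than merely a linear relation, and confirm throughout that $G\dom(\wh A_K)$ equals $\graph(K')$ and not $\graph(K)$, as guaranteed by Corollary \ref{VII.9} and the definition \eqref{7.24}. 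Once these identifications are in place, each of the four equivalences is a direct transcription through the homeomorphisms $T_1,T_2$.
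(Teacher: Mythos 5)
Your proposal is correct and takes essentially the same route as the paper's proof: (i) by unwinding \eqref{3.10A} using $\wh A_K = \wh A_{K'}$ and applying $(-\Delta_{\partial\gO}+I)^{1/4}$, (ii)--(iii) by translating disjointness/closedness into closedness (resp.\ closability) of $B_K$ and hence of $K'-\gL_{-\frac{1}{2}}(0)$ through the Sobolev-scale isomorphisms, and (iv) via Proposition \ref{t1.12}(i) combined with the factorization $B_K - M(z) = (-\Delta_{\partial\gO}+I)^{1/4}\bigl(K'-\gL_{-\frac{1}{2}}(z)\bigr)(-\Delta_{\partial\gO}+I)^{1/4}$, which is exactly \eqref{7.25}. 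The only differences are cosmetic: you make explicit the appeal to Proposition \ref{prop2.1}(iv) and the closed graph theorem, which the paper leaves implicit.
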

\begin{proof}
(i) From $\wh A_K = \wh A_{K'}$ and \eqref{7.24} we find that
\bed
G_1f - \gL_{-\frac{1}{2}}(0)G_0f = (K' - \gL_{-\frac{1}{2}}(0))G_0f, \quad f \in \dom(\wh A_K),
\eed
which yields
\bed
\begin{split}
\gG_1 f &= (-\gD_{\partial\gO} + I)^{1/2}(K'
-\gL_{-\frac{1}{2}}(0))(-\gD_{\partial\gO} + I)^{1/2}(-\gD_{\partial\gO} + I)^{-1/2}G_0f\\
&= (-\gD_{\partial\gO} + I)^{1/2}(K' -\gL_{-\frac{1}{2}}(0))(-\gD_{\partial\gO} + I)^{1/2}\gG_0f, \quad f \in \dom(\wh A_K).
\end{split}
\eed
Hence, if $f \in \dom(\wh A_K)$, then $f \in \ker(\gG_1 -
B_K\gG_0)$. Therefore $\wh A_K \subseteq A_{B_K}$.

Conversely, if $f \in \ker(\gG_1 -B_K\gG_0)$, then
\bed
(-\gD_{\partial\gO} + I)^{1/2}G_1f =(-\gD_{\partial\gO} + I)^{1/2}(K' -
\gL_{-\frac{1}{2}}(0))G_0f, \quad f \in \dom(A_{B_K}),
\eed
which implies $G_1f = (K' - \gL_{-\frac{1}{2}}(0))G_0f$, $f \in \dom(B_K)$. Hence
$\dom(A_{B_K}) \subseteq \dom(\wh A_K)$. Consequently, $\dom(\wh A_K) =
\dom(A_{B_K})$ or $\wh A_K = A_{B_K}$.

(ii) $\wh A_K$ is closed and disjoint from $\wh A_{G_0}$ if and only if $B_K$ is closed. However, $B_K$ is
closed if and only if $K' - \gL_{-\frac{1}{2}}(0): H^{-1/2}(\partial\gO)
\longrightarrow H^{1/2}(\partial\gO)$ is closed.

(iii)
The closure of $\wh{A_K}$ is disjoint from $\wh A_K$ if and only if $B_K$ is closable. However,
$B_K$ is closable if and only if $K' - \gL_{-\frac{1}{2}}(0)$ is closable.

(iv) By Proposition \ref{t1.12} we get that $z \in \rho(\wh A_K)$ if
and only if $z \in \rho(B_K - M(z))$ where $M(z)$ is the Weyl function
given by \eqref{7.29}. Obviously, we have
\be\label{7.25}
B_K - M(z) = (-\gD_{\partial\gO} + I)^{1/4}(K' - \gL_{-\frac{1}{4}}(z))(-\gD_{\partial\gO} + I)^{1/4}.
\ee
However, the operator $B_K - M(z)$ is invertible if and only if the
operator $K' - \gL_{-\frac{1}{2}}(z): H^{-1/2}(\partial\gO)
\longrightarrow H^{1/2}(\partial\gO)$ is invertible.
\end{proof}
\subsubsection{Perturbation determinants}

To state the next  result  we recall  the following definition.
\bd\label{def4.1}
Let $\cS_p(\gH)=\{T\in \gotS_\infty(\gH): s_j(T)
=O(j^{-1/p}), \; \text{as} \;j\to \infty\}$, $p>0$, where $s_j(T)$,
$j\in\N$, denote the singular values of $T$ (i.e., the
eigenvalues of $(T^*T)^{1/2}$ decreasingly ordered
counting multiplicity).
\ed

It is known that $\cS_p(\gH)$ is
a two-sided (non-closed) ideal in $[\gH]$.  Clearly,
$\mathcal S_{p_1}\subset \cS_{p_2}$ if $p_1 > p_2$.   An important property
of  the classes $\cS_p(\gH)$ needed in the sequel is
\be\label{4.1A}
\mathcal S_{p_1}\cdot \mathcal S_{p_2} \subset \mathcal S_{p}, \quad \text{where} \quad
p^{-1} = p_1^{-1} + p_2^{-1}.
\ee
\bt[{\cite[Theorem 4.13]{Ma2010}}]\label{VII.11}
Assume the Hypothesis \ref{h2.5}. Let $A_0 := \wh A_{G_0}$ and
$0 \in \rho\big({\widehat A}_{G_0}\big)$. Further, let
$K: H^{-1/2}(\partial\gO) \longrightarrow  H^{-3/2}(\partial\gO)$ be an
operator satisfying  $\dom(K)\subseteq L^2(\partial\gO)$ and
$\ran(K)\subseteq L^2(\partial\gO)$.   Then
\be\label{3.90AA}
(\wh A_{K} - z)^{-1} - (A_0 - z)^{-1}\in \mathcal S_{\frac{2n-2}{3}}(L^2(\Omega)), \quad z\in\rho(\wh A_{K})\cap\rho(A_0).
\ee
%
%
%
%
For $n = 2$ the resolvent difference in \eqref{3.90AA} is the trace class operator.
\et

\bl\la{VII.13}
Let $\gotX$ and $\gotY$ be Banach spaces and let $X: \gotX
\longrightarrow \gotY$ be a closed operator which is boundedly invertible.
Further, let $\gotX_0$ be another Banach space such that $\gotX_0$
is a  dense subset of $\gotX$ and  the embedding  $J : \gotX_0 \longrightarrow
\gotX$ is continuous. If $\dom(X) \subseteq J\gotX_0$, then the operator
$X_0 := XJ: \gotX_0 \longrightarrow \gotY$, $\dom(X_0) := \{f \in
\gotX_0: Jf \in \dom(X)\}$ is well defined, closed and
boundedly invertible. In particular,  $X^{-1} = JX^{-1}_0$.
\el
\begin{proof}
Let $\dom(X_0) \ni f_n \longrightarrow f$ and $A_0f_n \longrightarrow g
\in \gotY$ as $n \to \infty$. Obviously we have $Jf_n \longrightarrow
Jf$ and $XJf_n \longrightarrow g$  as $n \to \infty$. Since $X$ is
closed we get $Jf \in \dom(X)$ and $XJf = g$. Hence $f \in \dom(X_0)$
which shows that $X_0$ is closed. If $f \in \ker(X_0)$, then $XJf =
0$. hence $Jf = 0$ which yields $\ker(X_0) = \{0\}$. Finally, we have
$\ran(X) = \ran(X_0) = \gotY$ which shows that $X_0$ is boundedly
invertible.
\end{proof}
In what follows we apply Lemma  \ref{VII.13} with  $\gotX := H^{-1/2}(\partial\gO)$, $\gotX_0 :=
H^0(\partial\gO)$, $\gotY := H^{1/2}(\partial\gO)$ and $\gotY' := H^{-3/2}(\partial\gO)$.
Denote  by $J$  the embedding operator,
  \be\label{7.29A}
J:\ H^0(\partial\gO) = L^2(\partial\gO) \longrightarrow H^{-1/2}(\partial\gO),\qquad Jf=f.
 \ee
%
%
Since $\dom(K) \subseteq JL^2(\partial\gO)$, we can set
\bed
\begin{split}
K_0 &:= KJ: H^0(\partial\gO) \longrightarrow H^{-3/2}(\partial\gO), \\
\dom(K_0) &:= \{f\in H^0(\partial\gO): Jf \in \dom(K)\}.
\end{split}
\eed
Clearly,   $\gL_0(z) = \gL_{-\frac{1}{2}}(z)J$,\ \  $\dom(\gL_0(z)) := JH^0(\partial\gO)$, and
%
%
\bed
\begin{split}
K'_0 &:= K_0\upharpoonright\dom(K'_0),  \\
\dom(K'_0) &:= \{f \in \dom(K_0):\ (K_0 - \gL_0(0))f \in
H^{1/2}(\partial\gO)\}.
\end{split}
\eed
Clearly, $K'_0 = K'J: L^2(\partial\gO) \longrightarrow
H^{-3/2}(\partial\gO)$.

Now we are in position to state the  first main result of this section.
\begin{proposition}\la{VII.12}
Let the assumptions of Theorem  \ref{VII.11} be satisfied.
Further, let $0 \in \rho(A_0) \cap \rho(\wh A_K)$.  Then the following holds:

\item[\;\;\rm (i)]  For any  $z\in\rho(\wh A_{K})\cap\rho(A_0)$ the operator $K' - \gL_{-\frac{1}{2}}(z): H^{-1/2}(\partial\gO)
\longrightarrow H^{1/2}(\partial\gO)$ is boundedly invertible and
\be\la{7.26}
(K' - \gL_{-\frac{1}{2}}(z))^{-1} \in
\mathcal S_{\frac{2n-2}{3}}(H^{1/2}(\partial\gO),H^{-1/2}(\partial\gO)).
\ee
In particular, if $n=2$ then
\begin{equation}\label{7.28A}
 (K' - \gL_{-\frac{1}{2}}(z))^{-1} \in
\gotS_1(H^{1/2}(\partial\gO),H^{-1/2}(\partial\gO)).
\end{equation}
\item[\;\;\rm (ii)]  Let  $n =2$. Then the boundary triplet $\wt \Pi= \{\cH, -\gG_1,\gG_0\}$, where $\cH:= L^2(\partial\gO)$ and $\gG_0,\gG_1$  are given by  \eqref{3.10A},  is  regular  for the pair $\{\wt A,A_0\}$,
    $\{\wh A_K,A_0\} \in \gotD^{\wt \Pi}$, and the perturbation determinant  $\gD^{\wt \Pi}_{\wh A_K/\wh A_0}(\cdot)$ is
\be\label{7.30A}
\begin{split}
\gD^{\wt \Pi}_{\wh A_K/\wh A_0}(z) &= 
{\det}_{H^{-\frac{1}{2}}}\left(I - \bigl(K' - \gL_{-\frac{1}{2}}(0)\bigr)^{-1}\bigl(\gL_{-\frac{1}{2}}(z)-\gL_{-\frac{1}{2}}(0)\bigr)\right)\\
&= {\det}_{H^{\frac{1}{2}}}\left(I - \bigl(\gL_{-\frac{1}{2}}(z) - \gL_{-\frac{1}{2}}(0)\bigr)\bigl(K' - \gL_{-\frac{1}{2}}(0)\bigr)^{-1}\right)
\end{split}
 \ee
where $z\in \rho(\wh A_K)\cap \rho(A_0)$.

\item[\;\;\rm (iii)]  Let $n=2$. Then $(\gL_0(z)-\gL_0(0))(K'_0 - \gL_0(0))^{-1}\in \gotS_1(H^{1/2}(\partial\Omega))$ and the perturbation determinant
$\gD^{\wt \Pi}_{\wh A_K/A_0}(\cdot)$ admits the representation
\be\la{7.30}
\gD^{\wt \Pi}_{\wh A_K/A_0}(z) =  {\det}_{H^{1/2}}\left(I - (\gL_0(z)-\gL_0(0))(K'_0 - \gL_0(0))^{-1}\right).
\ee
\end{proposition}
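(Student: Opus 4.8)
The plan is to reduce statement (iii) to the first determinant formula in statement (ii), which has already been established, by transporting everything from the space $H^{-1/2}(\partial\gO)$ down to the Hilbert space $H^0(\partial\gO) = L^2(\partial\gO)$ via the embedding $J$ from \eqref{7.29A}. The key structural observation is that the operators $K_0, K_0', \gL_0(z)$ were defined precisely as the compositions with $J$, so that $K_0' = K'J$, $\gL_0(z) = \gL_{-\frac{1}{2}}(z)J$, and $K_0' - \gL_0(0) = (K' - \gL_{-\frac{1}{2}}(0))J$. By part (i) with $z=0$ (using $0 \in \rho(A_0) \cap \rho(\wh A_K)$), the operator $K' - \gL_{-\frac{1}{2}}(0) : H^{-1/2}(\partial\gO) \to H^{1/2}(\partial\gO)$ is boundedly invertible, so Lemma \ref{VII.13}, applied with $\gotX := H^{-1/2}(\partial\gO)$, $\gotX_0 := L^2(\partial\gO)$, $\gotY := H^{1/2}(\partial\gO)$ and $X := K' - \gL_{-\frac{1}{2}}(0)$, shows that $K_0' - \gL_0(0) : L^2(\partial\gO) \to H^{1/2}(\partial\gO)$ is closed and boundedly invertible, with the factorization
\bed
(K' - \gL_{-\frac{1}{2}}(0))^{-1} = J\,(K_0' - \gL_0(0))^{-1}.
\eed

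First I would establish the trace-class claim. From \eqref{7.28A} (the $n=2$ case of part (i)) we have $(K' - \gL_{-\frac{1}{2}}(0))^{-1} \in \gotS_1(H^{1/2}(\partial\gO), H^{-1/2}(\partial\gO))$. Writing $\gL_0(z) - \gL_0(0) = (\gL_{-\frac{1}{2}}(z) - \gL_{-\frac{1}{2}}(0))J$ and using the factorization above, one gets
\bed
(\gL_0(z) - \gL_0(0))(K_0' - \gL_0(0))^{-1}
= (\gL_{-\frac{1}{2}}(z) - \gL_{-\frac{1}{2}}(0))(K' - \gL_{-\frac{1}{2}}(0))^{-1}
\eed
as operators on $H^{1/2}(\partial\gO)$. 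Here the right-hand side is the product of the bounded operator $\gL_{-\frac{1}{2}}(z) - \gL_{-\frac{1}{2}}(0) \in [H^{-1/2}(\partial\gO), H^{1/2}(\partial\gO)]$ from Proposition \ref{prop5.1}(ii) with a trace-class operator, hence itself lies in $\gotS_1(H^{1/2}(\partial\gO))$, proving the stated inclusion.

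With the trace-class property in hand, the final identity \eqref{7.30} is just the invariance of the Fredholm determinant under the similarity induced by $J$: the operator $(\gL_0(z) - \gL_0(0))(K_0' - \gL_0(0))^{-1}$ on $H^{1/2}(\partial\gO)$ and the operator $(\gL_{-\frac{1}{2}}(z) - \gL_{-\frac{1}{2}}(0))(K' - \gL_{-\frac{1}{2}}(0))^{-1}$ appearing in the second line of \eqref{7.30A} are literally equal, so their determinants coincide and \eqref{7.30} follows directly from \eqref{7.30A}. The main obstacle I anticipate is bookkeeping rather than depth: one must verify carefully that $J$ intertwines the two expressions on the correct spaces — that $\dom$ and $\ran$ conditions match so that each composition is well defined on $L^2(\partial\gO)$ and on $H^{1/2}(\partial\gO)$ respectively — and that the use of Corollary \ref{A.2} (commuting the factors inside a $\det(I + \cdot)$) or the explicit $J$-similarity is legitimate, i.e. that no spurious change of the determinant arises from passing between the two Hilbert-space realizations. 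Once these identifications are checked, the proof is essentially a one-line appeal to \eqref{7.30A}.
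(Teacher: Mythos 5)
Your proposal for part (iii) is correct and follows essentially the same route as the paper's proof: Lemma \ref{VII.13} yields the factorization $(K' - \gL_{-\frac{1}{2}}(0))^{-1} = J(K'_0 - \gL_0(0))^{-1}$, the trace-class inclusion then follows from \eqref{7.28A} and the boundedness of $\gL_{-\frac{1}{2}}(z)-\gL_{-\frac{1}{2}}(0)$, and substituting into the $H^{1/2}$-determinant of \eqref{7.30A} together with the intertwining identity $\gL_{-\frac{1}{2}}(z)J = \gL_0(z)$ gives \eqref{7.30}. Only two small remarks: your opening sentence says you reduce to the \emph{first} determinant formula in (ii) whereas your argument (correctly, exactly as the paper does) uses the \emph{second} line of \eqref{7.30A}, and your observation that the two operators are literally equal on $H^{1/2}(\partial\gO)$ — so no determinant-invariance argument is even needed — is a slightly cleaner way of phrasing the paper's final step.
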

\begin{proof}
(i) By Proposition \ref{t1.12}(i),  $0 \in \rho(B_K - M(z))$ for any $z\in \rho(\wh A_K)\cap \rho(A_0)$.  Moreover, combining  Proposition \ref{prop2.9} with Theorem \ref{VII.11} we get  $(B_K - M(z))^{-1} \in \mathcal S_{\frac{2n-2}{3}}(H^0(\partial\Omega))$.
Combining this fact  with \eqref{7.25}  we obtain
\bead
\lefteqn{
(B_K - M(z))^{-1}}\\
& &
 = (I-\gD_{\partial\gO})^{-1/4}(K' - \gL_{-\frac{1}{2}}(z))^{-1}(I - \gD_{\partial\gO})^{-1/4} \in \mathcal S_{\frac{2n-2}{3}}(H^0(\partial\Omega)).
\eead
Since $(I - \gD_{\partial\gO})^{-1/4}$ isomorphically maps
$H^s(\partial\gO)$
onto $H^{s+1/2}(\partial\gO)$ for $s \in \R$ we arrive at \eqref{7.26}.
Further,  for $n=2$ inclusion \eqref{7.26} implies
$$
(K' - \gL_{-\frac{1}{2}}(z))^{-1} \in
\mathcal S_{\frac{2}{3}}(H^{1/2}(\partial\gO),H^{-1/2}(\partial\gO))\subset \gotS_1(H^{1/2}(\partial\gO),H^{-1/2}(\partial\gO)).
$$
This proves  the last statement.

(ii) By  Theorem \ref{VII.11},  $(\wh A_K - z)^{-1} - (A_0 -
z)^{-1} \in \gotS_1(L^2(\gO))$ since  $n=2$. Further, by Proposition \ref{t1.12}(i), the condition $0 \in \rho(A_0) \cap
\rho(\wh  A_K)$ is equivalent to   $0 \in \rho(B_K - M(0)) = \rho(B_K)$. By
 Proposition \ref{IV.8}(iii), a boundary triplet  $\wt \Pi= \{\cH, -\gG_1,\gG_0\}$ is regular for the pair $\{\wh A_K,A_0\}$, hence
   $\{\wh A_K,A_0\} \in \gotD^{\wt \Pi}$, and the perturbation determinant $\gD^{\wt \Pi}_{\wh A_K/A_0}(\cdot)$ is given by
\eqref{4.11} with $\mu=0$ and  constant $c=1$ (see formula  \eqref{4.11New}).  Inserting in this expression  formulas \eqref{7.29}
and \eqref{3.24}  we arrive at  the following formula for the determinant
\bed
\begin{split}
&\gD^{\wt \Pi}_{\wh A_K/A_0}(z) = {\det}_{L^2(\partial\gO)}(I - B^{-1}_KM(z)) =\\
&{\det}_{H^0}\left(I -
(I -\Delta_{\partial\gO})^{-1/4}(K' - \Lambda_{-\frac{1}{2}}(0))^{-1}
(\Lambda_{-\frac{1}{2}}(z) - \Lambda_{-\frac{1}{2}}(0))(I -\Delta_{\partial\gO})^{1/4}\right)
\end{split}
\eed
for  $z \in \rho(\wh A_K) \cap \rho(A_0)$.  Further,
according to Proposition \ref{prop5.1}(ii),  $\gL_{-\frac{1}{2}}(z) - \gL_{-\frac{1}{2}}(0)\in [H^{-1/2}(\partial\gO), H^{1/2}(\partial\gO)]$.
Combining this inclusion  with \eqref{7.28A} we get $T_2(z) \in \gotS_1(H^0(\partial\gO), H^{-1/2}(\partial\gO))$ where
\bed
T_2(z):= (K' - \Lambda_{-\frac{1}{2}}(0))^{-1}
(\Lambda_{-\frac{1}{2}}(z) - \Lambda_{-\frac{1}{2}}(0))(I -\Delta_{\partial\gO})^{1/4}, \quad z \in \rho(\wh A_K) \cap \rho(A_0).
\eed
Noting that $T_1 =  (I -\Delta_{\partial\gO})^{-1/4}$ isomorphically maps $H^{-1/2}(\partial\gO)$ onto $H^0(\partial\gO)$ we see that $T_2(z)T_1$ is well defined and $T_2(z)T_1\in \gotS_1(H^{-1/2}(\partial\gO))$. Moreover, due to the inclusion \eqref{7.28A}, $T_1T_2(z)\in \gotS_1(H^{0}(\partial\gO))$. Taking both last inclusions into account
and applying \eqref{2.26} we arrive at
the equality
$$
\gD^{\wt \Pi}_{\wh A_K/A_0}(z) =  {\det}_{L^2(\partial\gO)}\bigl(I- T_1T_2(z)\bigr) = c\;{\det}_{H^{-1/2}(\partial\gO)}\bigl(I - T_2(z)T_1\bigr)
$$
coinciding with  the first identity in  \eqref{7.30A}.
The second identity in   \eqref{7.30A}  is implied by combining the first one with  the property \eqref{2.26}. Note that the applicability of  \eqref{2.26} is possible due to
inclusion \eqref{7.28A} and Proposition \ref{prop5.1}(ii).

(iii) By Lemma \ref{VII.13}, the operator
$(K'_0 - \gL_0(0))^{-1}: H^{1/2}(\partial\gO) \longrightarrow
L^2(\partial\gO)$ is bounded and
\be\la{7.28a}
(K' - \gL_{-\frac{1}{2}}(0))^{-1} = J(K'_0 - \gL_0(0))^{-1}. \qquad  z \in \rho(\wh A_K) \cap \rho(A_0).
\ee
Combining this formula with \eqref{7.28A} we get $J(K'_0 - \gL_0(0))^{-1}\in \gotS_1(H^{1/2}(\partial\gO)).$
Therefore inserting \eqref{7.28a} into the second formula in  \eqref{7.30A} we get
\bed
\gD^{\wt \Pi}_{\wh A_K/A_0}(z) = c\;{\det}_{H^{1/2}}\left(I -
\left(\gL_{-\frac{1}{2}}(z) - \gL_{-\frac{1}{2}}(0)\right)J
\left(K'_0 - \gL_0(0)\right)^{-1}\right).
\eed
To  arrive at  \eqref{7.30} it remains to note that  $\gL_{-\frac{1}{2}}(z)J = \gL_0(z)$.
\end{proof}

Combining the chain rule \ref{3.15}  with Proposition \ref{VII.12} one  arrives at the following statement.
\bc\label{cor7.15a}
Assume the Hypothesis \ref{h2.5}. Let $A_0 := \wh A_{G_0}$ and
$0 \in \rho\big({\widehat A}_{G_0}\big)$.
Further, let
$K_j: H^{-1/2}(\partial\gO) \longrightarrow  H^{-3/2}(\partial\gO)$ be an
operator satisfying  $\dom(K_j)\subseteq L^2(\partial\gO)$ and
$\ran(K_j)\subseteq L^2(\partial\gO)$,  and let $A_j := A_{K_j}$,  $j\in \{1,2\}$.
Assume also that  $0 \in \rho(A_0) \cap
\rho(\wh A_{K_j})$, \ $j\in \{1,2\}$.
Then the boundary triplet
$\wt \Pi= \{\cH, -\gG_1,\gG_0\}$ for $A_{\rm max}$ with  $\cH:= L^2(\partial\gO)$ and $\gG_0,\gG_1$ given by  \eqref{3.10A},  is  regular  for the family  $\{\wh A_{K_1},\wh A_{K_2}, A_0\}$,
and the perturbation determinant is
\be\la{7.36a}
\gD^{\Pi}_{\wh A_2/\wh A_1}(z) =
\frac{ {\det}_{H^{-\frac{1}{2}}}\left(I - \bigl(K'_2 - \gL_{-\frac{1}{2}}(0)\bigr)^{-1}\bigl(\gL_{-\frac{1}{2}}(z)-\gL_{-\frac{1}{2}}(0)\bigr)\right)}
{ {\det}_{H^{-\frac{1}{2}}}\left(I - \bigl(K'_1 - \gL_{-\frac{1}{2}}(0)\bigr)^{-1}\bigl(\gL_{-\frac{1}{2}}(z)-\gL_{-\frac{1}{2}}(0)\bigr)\right)},
\ee
$z\in \rho(\wh A_{K_1})\cap \rho(\wh A_{K_2})\cap \rho(A_0)$.
\end{corollary}
Our next goal is to show that under  additional restrictions on $K$ the perturbation determinant  $\gD^{\wt \Pi}_{\wh A_K/A_0}(\cdot)$ can be computed in $L^2(\partial\gO)$. To this end we introduce the  operator-valued function
$\gL_{0,0}(\cdot): L^2(\partial\gO) \longrightarrow L^2(\partial\gO)$ by setting
\be\label{7.33}
\begin{split}
\gL_{0,0}(z) &:= \gL_0(z)\upharpoonright\dom(\gL_{0,0}(z)),\\
\dom(\gL_{0,0}(z)) &:= \{f \in \dom(\gL_0(z)): \gL_0(z)f \in L^2(\partial\gO)\} 
\end{split}
\ee
\bl\label{lem7.16}
Let $0 \in \rho(\wh A_{G_0})$. Then
\be\label{7.33a}
\dom(\gL_{0,0}(z)) =  H^1(\partial\gO), \qquad z\in \rho(\wh A_{G_0}),
\ee
and,  for any $z\in \rho(\wh A_{G_0}) \cap \rho(\wh A_{G_1})$ the operator $(\gL_{0,0}(z))^{-1}$ exits and  satisfies  $(\gL_{0,0}(z))^{-1} \in \mathcal S_{{1}}(H^{0}(\partial\gO))$.
Moreover, if $0 \in \rho(\wh A_{G_0}) \cap \rho(\wh A_{G_1})$, then the operator $\gL_{0,0}(0)$ is selfadjoint,
has discrete spectrum,  and $(\gL_{0,0}(0))^{-1} \in \mathcal S_{{1}}(H^{0}(\partial\gO))$.
\el
\begin{proof}
It follows from Definition \ref{def3.2A} that
$\dom(\gL_{0,0}(\cdot)) \supseteq H^1(\partial\gO)$.
Let us prove the equality \eqref{7.33a}.
Since both  realizations $\wh A_{G_0}$ and $\wh A_{G_1}$ are selfadjoint, $\rho(\wh A_{G_0})\cap \rho(\wh A_{G_1})\supset \C_{\pm}$. Let $z\in \rho(\wh A_{G_0})\cap \rho(\wh A_{G_1})$.
Then $\dom(\gL_{1}(z)) = H^1(\partial\gO)$ and
$\gL_{1}$  isomorphically maps $H^1(\partial\gO)$ onto  $H^0(\partial\gO)$ (see \cite[Theorem 5.2]{Gru68}).
Since $\gL_{0,0}(z)h = \gL_{1}(z)h$ for  $h\in H^1(\partial\gO)$ we conclude
that $\dom(\gL_{0,0}(z)) = H^1(\partial\gO)$
and   $\ran(\gL_{0,0}(z)) = H^0(\partial\gO)$.

Next let  $x_0= \bar x_0\in \rho(\wh A_{G_0}) \setminus  \rho(\wh A_{G_1})$.
We cam assume without loss of generality that $x_0=0.$ Otherwise we
replace the expression  ${\mathcal A}$ by ${\mathcal A} - x_0I$.
Then,  by Proposition  \ref{prop5.1},
the difference $T(z) := \gL_{-\frac{1}{2}}(z) - \gL_{-\frac{1}{2}}(0) \colon\ H^{-1/2}(\partial\Omega) \to H^{ 1/2}(\partial\Omega)$ is bounded.  Hence the difference $\gL_{0,0}(z) - \gL_{0,0}(0)$ being a restriction of $T(z)$ is bounded in $H^{0}(\partial\Omega)$ and $\dom(\gL_{0,0}(0)) = \dom(\gL_{0,0}(z)) = H^1(\partial\gO)$ for $z \in \rho(\wh A_{G_0})$.
Further, since
$\ran(\gL_{0,0}(z))^{-1}) =  H^{1}(\partial\gO)$  for $z \in \rho(\wh A_{G_0})\cap \rho(\wh A_{G_1})$, we have  $(\gL_{0,0}(z))^{-1} \in \mathcal S_{{1}}(H^{0}(\partial\gO))$.

Clearly, for any $x_0 = \bar x_0\in \rho(\wh A_{G_0})$ the operator $\gL_{0,0}(x_0)$ is symmetric.
If, in addition, $x_0\in \rho(\wh A_{G_1})$, then the operator $\gL_{0,0}(x_0)$ is selfadjoint  since $\ran(\gL_{0,0}(x_0)) =
H^{0}(\partial\gO)$.  
If $0 \in \rho(\wh A_{G_0}) \setminus  \rho(\wh A_{G_1})$, then
the self-adjointness of  $\gL_{0,0}(0)$ is implied by the self-adjointness  of  $\gL_{0,0}(x_0)$
with   $x_0 =  \bar x_0\in \rho(\wh A_{G_0}) \cap  \rho(\wh A_{G_1})$ and  the boundedness of
$\gL_{0,0}(x_0) - \gL_{0,0}(0)$ in $H^{0}(\partial\gO)$.

Further, since the boundary  $\partial\gO$ is compact, the spectrum of $\gL_{0,0}(0)$
is discrete. Moreover, since  by \eqref{7.33a},
$\ran(\gL_{0,0}(z))^{-1}) =  H^{1}(\partial\gO))$  for $\rho(\wh A_{G_0})\cap \rho(\wh A_{G_1})$, we have  $(\gL_{0,0}(z))^{-1} \in \mathcal S_{{1}}(H^{0}(\partial\gO))$.
\end{proof}
\begin{proposition}\la{VII.15}
Assume the Hypothesis \ref{h2.5}.
Let $K: H^{-1/2}(\partial\gO) \longrightarrow  H^{-3/2}(\partial\gO)$ be an
operator satisfying  $\dom(K)\subseteq L^2(\partial\gO)$ and
$\ran(K)\subseteq L^2(\partial\gO)$.
Assume also that  $0 \in \rho(\wh{A_{G_0}}) \cap  \rho(\wh{A_{G_1}}) \cap \rho(\wh A_K)$ and
\be\la{7.31a}
\widehat K_0 := KJ:  L^2(\partial\gO) \longrightarrow L^2(\partial\gO), \quad \dom(K) = J\dom(\widehat K_0),
\ee
where $J$ is the embedding operator given by \eqref{7.29A}.
If $\wh K_0$ is relatively compact with respect to $\gL_{0,0}(0)$, then
\be\label{7.37}
\bigl(\gL_{0,0}(z) -  \gL_{0,0}(0))(\wh K_0 - \gL_{0,0}(0)\bigr)^{-1}   \in \mathcal S_{\frac{1}{2}}(H^{0}(\partial\gO))\subset \gotS_1(H^{0}(\partial\gO)),
\ee
$z \in \rho(\wh A_K) \cap  \rho(\wh A_{G_0})$, and the perturbation determinant $\gD^{\wt \Pi}_{\wh A_K/A_0}(\cdot)$  given by \eqref{7.30A}
admits the representation
\be\la{7.32}
\gD^{\wt \Pi}_{\wh A_K/A_0}(z) =  {\det}_{L^2}\left(I - \bigl(\gL_{0,0}(z) -  \gL_{0,0}(0))(\wh{K_0} - \gL_{0,0}(0)\bigr)^{-1}\right),
\ee
for $z \in \rho(\wh A_K) \cap  \rho(\wh A_{G_0})$. In particular,  representation \eqref{7.32} holds
whenever $\widehat K_0$ is bounded, i.e. $\widehat K_0 \in [H^0(\partial\Omega)]$.
\end{proposition}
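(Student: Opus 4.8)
The plan is to reduce the claimed $L^2(\partial\gO)$ representation \eqref{7.32} to the $H^{1/2}(\partial\gO)$ representation \eqref{7.30} already established in Proposition \ref{VII.12}(iii), by passing from the operator $K'_0$ to the operator $\widehat K_0$ through a careful analysis of their resolvents. First I would note that, since $0 \in \rho(\wh A_K) \cap \rho(\wh A_{G_0})$, Proposition \ref{VII.12}(i) gives that $K' - \gL_{-\frac{1}{2}}(0): H^{-1/2}(\partial\gO) \to H^{1/2}(\partial\gO)$ is boundedly invertible with $(K' - \gL_{-\frac{1}{2}}(0))^{-1} \in \gotS_1(H^{1/2}(\partial\gO),H^{-1/2}(\partial\gO))$ (for $n=2$). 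By Lemma \ref{VII.13}, applied with $\gotX = H^{-1/2}(\partial\gO)$, $\gotX_0 = L^2(\partial\gO)$, $\gotY = H^{1/2}(\partial\gO)$, this is equivalent to the bounded invertibility of $K'_0 - \gL_0(0): L^2(\partial\gO) \to H^{1/2}(\partial\gO)$ with the factorization $(K' - \gL_{-\frac{1}{2}}(0))^{-1} = J(K'_0 - \gL_0(0))^{-1}$.

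The key step is to identify $\widehat K_0 - \gL_{0,0}(0)$ with the restriction of $K'_0 - \gL_0(0)$ acting into $L^2(\partial\gO)$, and to show that the resolvents agree in the appropriate sense. The hypothesis that $\widehat K_0$ is relatively compact with respect to the self-adjoint operator $\gL_{0,0}(0)$ (which by Lemma \ref{lem7.16} has discrete spectrum with $(\gL_{0,0}(0))^{-1} \in \gotS_1$) guarantees that $\widehat K_0 - \gL_{0,0}(0)$ is a well-defined closed operator in $L^2(\partial\gO)$ with the same domain $\dom(\gL_{0,0}(0)) = H^1(\partial\gO)$, and that $0 \in \rho(\wh A_K)$ translates via Proposition \ref{t1.12}(i) and Proposition \ref{prop3.5}(iv) into the bounded invertibility of $\widehat K_0 - \gL_{0,0}(0)$ in $L^2(\partial\gO)$. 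I would then verify the operator identity $(\wh K_0 - \gL_{0,0}(0))^{-1} = (K'_0 - \gL_0(0))^{-1}\upharpoonright L^2(\partial\gO)$, using that $\gL_{0,0}(0)$ is the $L^2$-part of $\gL_0(0)$ as defined in \eqref{7.33}, together with Lemma \ref{VII.13} once more applied to the closed operators restricted to $L^2(\partial\gO)$.

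With this identification in hand, the trace-class inclusion \eqref{7.37} follows by combining the regularity property $\gL_{0,0}(z) - \gL_{0,0}(0) \in [H^0(\partial\gO), H^1(\partial\gO)]$ (the $L^2$-restriction of Proposition \ref{prop5.1}(ii)) with $(\wh K_0 - \gL_{0,0}(0))^{-1} \in \gotS_1(H^{1/2}(\partial\gO), L^2(\partial\gO))$, exploiting the ideal property \eqref{4.1A} of the classes $\cS_p$ and the smoothing from $H^{1/2}$ into $H^1$; for $n = 2$ the product lands in $\cS_{1/2} \subset \gotS_1$. Finally I would write \eqref{7.30} as a determinant over $H^{1/2}(\partial\gO)$ and transport it to $L^2(\partial\gO)$ using property \eqref{2.26} (cyclicity of the determinant under the similarity implemented by $J$ and the embeddings), thereby obtaining \eqref{7.32}; the last sentence, covering bounded $\widehat K_0 \in [H^0(\partial\gO)]$, is then immediate since every bounded operator is trivially $\gL_{0,0}(0)$-compact because $(\gL_{0,0}(0))^{-1}$ is compact.

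\emph{Main obstacle.} The delicate point is the precise matching of the two resolvents across the scale of Sobolev spaces: one must show that restricting the $H^{-1/2}$-theoretic operator $K' - \gL_{-\frac{1}{2}}(0)$ to act in $L^2$ does not alter the determinant, i.e.\ that the domains, the invertibility, and the relevant Schatten-class memberships are genuinely preserved under the embedding $J$. The relative compactness hypothesis on $\widehat K_0$ is exactly what is needed to control the domain of $\widehat K_0 - \gL_{0,0}(0)$ and to apply Lemma \ref{VII.13} at the $L^2$-level, so the hard part is organizing these functional-analytic identifications so that the cyclic determinant identity \eqref{2.26} applies legitimately.
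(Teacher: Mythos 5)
Your overall strategy coincides with the paper's: start from the $H^{1/2}$-representation \eqref{7.30} of Proposition \ref{VII.12}(iii), identify the resolvent of $\wh K_0 - \gL_{0,0}(0)$ with $(K'_0 - \gL_0(0))^{-1}$, and transport the determinant to $L^2(\partial\gO)$ via the cyclicity property \eqref{2.26}. However, there is a genuine gap at the central step, namely your assertion that $0 \in \rho(\wh A_K)$ ``translates via Proposition \ref{t1.12}(i) and Proposition \ref{prop3.5}(iv) into the bounded invertibility of $\wh K_0 - \gL_{0,0}(0)$ in $L^2(\partial\gO)$''. Proposition \ref{prop3.5}(iv) only gives that $K' - \gL_{-\frac{1}{2}}(0)$ maps $\dom(K')$ bijectively onto $H^{1/2}(\partial\gO)$. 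Since $\wh K_0 - \gL_{0,0}(0)$ (with domain $H^1(\partial\gO)$) is a \emph{strict} extension of $J^*(K'_0 - \gL_0(0))$ (cf.\ \eqref{7.36}), this transfers injectivity and shows that $\ran(\wh K_0 - \gL_{0,0}(0)) \supseteq H^{1/2}(\partial\gO)$ is dense in $L^2(\partial\gO)$ --- but for a closed operator, trivial kernel plus \emph{dense} range does not give bounded invertibility; one needs \emph{closedness} of the range. Lemma \ref{VII.13} cannot supply this either: it passes from a boundedly invertible operator $X$ to its restriction $XJ$, whereas here you need invertibility of an extension with a different (larger) codomain, which is not an instance of that lemma. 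The missing ingredient --- and the place where the relative-compactness hypothesis does its real work --- is the Fredholm argument of the paper: by Lemma \ref{lem7.16}, $\gL_{0,0}(0)$ is selfadjoint with discrete spectrum, hence Fredholm of index zero; since $\wh K_0$ is $\gL_{0,0}(0)$-compact, $\wh K_0 - \gL_{0,0}(0)$ is Fredholm of index zero as well (\cite[Theorem 4.5.26]{Ka76}); its range is therefore closed, and being dense it equals $L^2(\partial\gO)$, whence $0 \in \rho(\wh K_0 - \gL_{0,0}(0))$. Without this (or an equivalent) argument, your identification \eqref{7.31} of the two resolvents, and hence formula \eqref{7.32}, is unsupported.

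A secondary flaw concerns \eqref{7.37}. You use $(\wh K_0 - \gL_{0,0}(0))^{-1} \in \gotS_1(H^{1/2}(\partial\gO), L^2(\partial\gO))$, apparently inherited from \eqref{7.28A} through the factorization $(K' - \gL_{-\frac{1}{2}}(0))^{-1} = J(K'_0 - \gL_0(0))^{-1}$. This inference is invalid: from $JS \in \gotS_1$ one cannot conclude that $S$ is of trace class, since $J^{-1}$ is unbounded --- Schatten properties pass forward through $J$, not backward. The paper instead obtains $(\wh K_0 - \gL_{0,0}(0))^{-1} \in \cS_1(H^0(\partial\gO))$ from the factorization $(\wh K_0 - \gL_{0,0}(0))^{-1} = -(\gL_{0,0}(0))^{-1}\bigl(I - \wh K_0(\gL_{0,0}(0))^{-1}\bigr)^{-1}$, which uses $(\gL_{0,0}(0))^{-1} \in \cS_1(H^0(\partial\gO))$ from Lemma \ref{lem7.16} --- note this is precisely where the hypothesis $0 \in \rho(\wh A_{G_1})$ enters, a hypothesis your Schatten counting never actually exploits; combined with $\overline{\gL_{0,0}(z) - \gL_{0,0}(0)} \in \cS_1(H^0(\partial\gO))$ and \eqref{4.1A} this yields the $\cS_{\frac{1}{2}}$-inclusion. (Also, $\gL_{0,0}(z) - \gL_{0,0}(0)$ maps $H^0(\partial\gO)$ into $H^{1/2}(\partial\gO)$, not into $H^1(\partial\gO)$ as you state.) Your final remark --- that a bounded $\wh K_0$ is automatically $\gL_{0,0}(0)$-compact because $(\gL_{0,0}(0))^{-1}$ is compact --- is correct and agrees with the paper.
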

\begin{proof}
(i)  Let us prove the inclusion  \eqref{7.37}.
According to \eqref{3.5AWeyl}, $\gL_0(z) - \gL_0(0): H^{0}(\partial\gO) \longrightarrow  H^{1/2}(\partial\gO)$. Hence
\be\label{7.38}
\overline{\gL_0(z)-\gL_0(0)} \in \cS_{{2}}(H^{0}(\partial\gO)),\qquad z\in \rho(A_0).
\ee
Further, by definition  \eqref{7.29A},  $J^*$ continuously embeds  $H^{1/2}(\partial\gO)$ into $L^2(\partial\gO)$.
Therefore
\be\label{7.35}
J^*(\gL_0(z) - \gL_0(0))h = (\gL_{0,0}(z) - \gL_{0,0}(0))h, \quad h \in H^1(\partial\gO).
\ee
Combining relations \eqref{7.35} and \eqref{7.38}, using $J^*\in S_{{2}}(H^{1/2}, H^{0})$ and taking property
\eqref{4.1A} into account we obtain
\be\label{7.38A}
\overline{\gL_{0,0}(z) - \gL_{0,0}(0)} \in \cS_{{1}}(H^{0}(\partial\gO)),\qquad z\in \rho(A_0).
\ee
On the other hand, by Lemma \ref{lem7.16} the operator $\gL_{0,0}(0)$ is selfadjoint, and due to the assumption $0\in \rho(\wh{A_{G_1}})$,  $\gL_{0,0}(0)$ is invertible
and $(\gL_{0,0}(0))^{-1} \in \cS_{{1}}(H^{0}(\partial\gO))$.

Further,  by Proposition \ref{prop3.5}(iv) the inclusion $0\in \rho(\wh{A_K})$ is equivalent
to the inclusion  $0\in \rho({\wh K}_0 - \gL_{0,0}(0))$. Hence
$0\in \rho\bigl(I- \wh K_0(\gL_{0,0}(0))^{-1}\bigr)$.
Thus,  the inverse operator
$\bigl(I - \wh K_0(\gL_{0,0}(0))^{-1}\bigr)^{-1}\in [H^0(\partial\gO)]$ and
\bed
\bigl(\widehat K_0 - \gL_{0,0}(0)\bigr)^{-1}
= - \left(\gL_{0,0}(0)\right)^{-1}\left(I - \widehat K_0(\gL_{0,0}(0))^{-1}\right)^{-1} \in \mathcal S_{{1}}(H^{0}(\partial\gO)).
\eed
Using \eqref{7.38A} and taking  into account \eqref{4.1A} we arrive at  \eqref{7.37}.

(ii) In this step we prove formula \eqref{7.32}.
Since  $\ran(K) \subseteq L^2(\partial\gO)$, the operator  $K_0 =KJ: H^0(\partial\gO) \longrightarrow H^{-3/2}(\partial\gO)$ satisfies   $\ran(K_0) \subseteq L^2(\partial\gO)$.
However, we distinguish it from the operator  $\widehat K_0$ defined by \eqref{7.31a}.
Since $\dom(\widehat K_0) = \dom(K_0)$  and  $\ran(\widehat K_0) \subseteq L^2(\partial\gO)$, one
gets  from definition \eqref{7.24} and \eqref{7.33}
\bed
\dom(K'_0) := \{h\in\dom(\widehat K_0) \cap H^1(\partial\gO): \widehat K_0h - \gL_{0,0}(0)h \in H^{1/2}(\partial\gO)\},
\eed
and  in  accordance with \eqref{7.29A}
\be\label{7.36}
J^*(K'_0 - \gL_0(0))h = (\widehat K_0 - \gL_{0,0}(0))h, \qquad h \in \dom(K'_0).
\ee
Clearly, $J^*(K'_0 - \gL_0(0)) \subseteq  \widehat K_0 - \gL_{0,0}(0)$ (in fact, the inclusion is always strict).\marginpar{Why?}
Since  $\widehat K_0$ is relatively compact with respect to $\gL_{0,0}(0)$,
 $\dom(\widehat K_0 - \gL_{0,0}(0)) = \dom(\gL_{0,0}(0))$.
Moreover, since $0\in \rho(\widehat A_{K_0})$ and $\dom(K')\subseteq H^0(\partial\gO)$, Proposition \ref{prop3.5}(iv) yields  $\ran(K'_0 - \gL_{0}(0)) =\ran(K' - \gL_{-\frac{1}{2}}(0)) = H^{1/2}(\partial\gO)$.
Hence the range $\ran(\widehat K_0 - \gL_{0,0}(0))$ is dense in $H^0(\partial\gO)$.

On the other hand, by Lemma \ref{lem7.16},  the operator $\gL_{0,0}(0)$ is selfadjoint and   its spectrum is discrete.
In particular, $\gL_{0,0}(0)$ is a Fredholm operator with zero index.
In turn,  since  $\widehat K_0$  is $\gL_{0,0}(0)$-compact, the operator $\widehat K_0 -  \gL_{0,0}(0)$ is also  Fredholm operator with zero index  \cite[Theorem 4.5.26]{Ka76} (in fact, it has discrete spectrum too).  Therefore the range
$\ran(\widehat K_0 - \gL_{0,0}(0))$ is closed and being dense in $H^0(\partial\gO)$, coincides with $H^0(\partial\gO)$. Since ${\rm ind}(\widehat K_0 - \gL_{0,0}(0)) =0,$ the latter is equivalent to
$0\in \rho(\widehat K_0 - \gL_{0,0}(0))$.
From Lemma \ref{VII.13} we get the existence of $(K'_0f - \gL_0(0))^{-1}: H^{1/2}(\partial\gO) \longrightarrow H^0(\partial\gO)$. Combining this fact with \eqref{7.36}
we find
\be\la{7.31}
(\widehat K_0 -  \gL_{0,0}(0))^{-1}J^* = (K'_0 - \gL_0(0))^{-1}
\ee
Inserting \eqref{7.31} into \eqref{7.30} we obtain
\bed
\gD^{\wt \Pi}_{\wh A_K/A_0}(z) = c\; {\det}_{H^{1/2}}\left(I - (\gL_0(z)-\gL_0(0))(\wh K_0 - \gL_{0,0}(0))^{-1}J^*\right).
\eed
Using \eqref{7.37}  we get by the cyclicity property (see \eqref{2.26}) that
\bed
\gD^{\wt \Pi}_{\wh A_K/A_0}(z) = c\; {\det}_{H^0}\left(I - J^*(\gL_0(z)-\gL_0(0))(\wh K_0 - \gL_{0,0}(0))^{-1}\right),
\eed
$z \in \rho(\wh A_K) \cap \rho(A_0)$. Combining this identity with  \eqref{7.35} we arrive at \eqref{7.32}.
\end{proof}

\bc\label{cor7.15}
Assume the Hypothesis \ref{h2.5}.
Let
$K_j: H^{-1/2}(\partial\gO) \longrightarrow  H^{-3/2}(\partial\gO)$ be an
operator satisfying  $\dom(K_j)\subseteq L^2(\partial\gO)$ and
$\ran(K_j)\subseteq L^2(\partial\gO)$, $j\in \{1,2\}$.
Further, let $0 \in \rho(\wh{A_{G_0}}) \cap \rho(\wh{A_{G_1}}) \cap
\rho(\wh A_{K_j})$, \  and
\bed
\wh{K_{j,0}}:= K_jJ:  L^2(\partial\gO) \longrightarrow L^2(\partial\gO), \quad \dom(K_j) = J\dom(\wh{K_0}),
\quad j \in \{1,2\}.
\eed
If the operator $\wh{K_{j,0}}$ is relatively compact with respect to $\gL_{0,0}(0)$, then the perturbation determinant
$\gD^{\Pi}_{\wh A_2/\wh A_1}(\cdot)$ given by \eqref{7.36a} admits the representation
\bed
\gD^{\Pi}_{\wh{A_{K_2}}/\wh{A_{K_1}}}(z) =
\frac{{\det}_{L^2}\left(I - \bigl(\gL_{0,0}(z) -  \gL_{0,0}(0))(\wh{K_{2,0}} - \gL_{0,0}(0)\bigr)^{-1}\right)}
{{\det}_{L^2}\left(I - \bigl(\gL_{0,0}(z) -  \gL_{0,0}(0))(\wh{K_{1,0}} - \gL_{0,0}(0)\bigr)^{-1}\right)},
\eed
for $z \in \rho(\wh{A_{K_2}}) \cap  \rho(\wh{A_{K_1}}) \cap \rho(\wh{A_{G_0}})$.
\ec
\begin{proof}
The proof is immediate by combining  Corollary \ref{cor7.15a} with  Proposition \ref{VII.15}.
\end{proof}

Consider  Robin-type realizations
\bed
\begin{split}
\wh A_{\gs} &:= A_{\max}\upharpoonright \dom(\wh A_{\gs}),\\
\dom(\wh A_{\gs}) &:= \{f\in H^2(\gO): G_1f = \gs G_0f\}.
\end{split}
\eed
%
It follows from the classical a priory estimate (see \cite[Theorem 15.2]{Agmon1959})
that the realization $\wh A_{\sigma}$ is closed whenever $\sigma\in C^2(\partial\Omega)$.
Moreover, in this case $\rho(\wh A_{\sigma})\not =\emptyset$ and $\wh A_{\sigma}$ is selfadjoint whenever $\sigma$ is real.
\bc
Assume the conditions  of Proposition \ref{VII.15}.  Let $\gs\in C^2(\partial\gO)$
and  let $\wh \gs$ denote the multiplication operator induced by $\gs$  in $L^2(\partial\gO)$.
If $0 \in \rho(\wh{A_\gs}) \cap \rho(\wh{A_{G_0}}) \cap \rho(\wh{A_{G_1}})$, then the boundary triplet $\wt \Pi= \{\cH, -\gG_1,\gG_0\}$ given in Proposition  \ref{VII.12}, is regular for the  pair $\{\wh A_\gs, \wh{A_{G_0}}\}$,  $\{\wh A_\gs, \wh{A_{G_0}}\} \in \gotD^{\wt \Pi}$, and the corresponding
perturbation determinant $\gD^{\wt \Pi}_{\wh A_\gs/A_0}(\cdot)$ is
\bed
\gD^{\wt \Pi}_{\wh A_\gs/A_0}(z) = 
{\det}_{L^2(\partial\gO)}\left(I - (\gL_{0,0}(z) - \gL_{0,0}(0))(\wh \gs - \gL_{0,0}(0))^{-1}\right), \
\eed
$z \in \rho(\wh A_\gs) \cap \rho(\wh{A_{G_0}})$.
\ec
\begin{proof}
Setting $K=\wh{\sigma}$ and noting that $\gs\in C^2(\gO)$ we easily  get from \eqref{6.12c}
\bed
\dom\bigl(K-\Lambda_{-1/2}(0)\bigr) = \dom(K)=\dom(\wh{\sigma})\subset \ran(G_0) = H^{3/2}(\partial\Omega).
\eed
Since $\Lambda_{3/2}(0)$ is a restriction of  $\Lambda_{-1/2}(0)$  one gets from  \eqref{3.5A} that
$\ran\bigl(\Lambda_{-1/2}(0)\upharpoonright H^{3/2}(\partial\Omega)\bigr)\subset H^{1/2}(\partial\Omega)$.
Further, the assumption  $\gs\in C^2(\gO)$ yields
$\ran\bigl(K \upharpoonright H^{3/2}(\partial\Omega)\bigr)\subset H^{3/2}(\partial\Omega)$.
Combing these inclusions  we arrive at  the regularity property
$\ran\bigl(K-\Lambda_{-1/2}(0)\bigr)\subset H^{1/2}(\partial\Omega)$ (see \eqref{7.23}).

 Hence $K'=K$ (see definition \eqref{7.24}) and $\dom(\wh A_K) =
\dom(\wh A_\gs)$. Moreover, since $\dom(K),\  \ran(K)\subset  H^{3/2}(\partial\Omega)$, then according to \eqref{7.31a},  $\widehat K_0 = \widehat \sigma$.
Finally, since $\widehat K_0 = \widehat \sigma \in [H^{0}(\partial\Omega)]$, one completes the proof by applying Proposition \ref{VII.15}.
\end{proof}

\begin{appendix}

\section*{Appendix}\la{App}

\section{Infinite determinants}\la{A.I}

Let us briefly recall the definition of determinants and their
basic properties following {\cite{GK69}}.
\bd
{\em
Let  $T$ be a trace class operator, i.e. $T\in{\mathfrak S}_1(\cH)$,
and let $\{\lambda_j(T)\}^{\infty}_{j=1}$ be its
eigenvalues counted with respect to their algebraic multiplicity.
The determinant $\det(I+T)$ is defined by
$\det(I+T) := \Pi^{\infty}_{j = 1}\bigl(1+\lambda_j(T)\bigr)$.
}
\ed

The perturbation determinant has the following interesting properties.
\begin{proposition}[{\cite[Section 4.1]{GK69}}]\label{classprop}
Let $T_1\in[\cH_1, \cH_2]$ and $T_2\in[\cH_2, \cH_1]$.
\begin{enumerate}

\item[\rm (i)] If $T_1 T_2\in{\mathfrak S}_1(\cH_2)$ and $T_2
T_1\in{\mathfrak S}_1(\cH_1)$, then
 \be\label{2.26}
{\det}_{\cH_2}(I+T_1T_2) = {\det}_{\cH_1}(I + T_2 T_1).
  \ee
\item[\rm (ii)] If $\cH := \cH_1 =\cH_2$ and $T_1,T_2\in{\mathfrak S}_1(\cH)$, then
\be\label{2.27} \det[(I+T_1)(I+T_2)]=\det(I+T_1)\cdot\det(I+T_2).
\ee

\item[\rm (iii)] If $T\in{\mathfrak S}_1(\cH)$, then
\bed \det(I+T^*)={\overline{\det(I+T)}}. \eed
\end{enumerate}
\end{proposition}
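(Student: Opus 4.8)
The plan is to reduce all three assertions to elementary spectral facts about compact operators, working from the definition $\det(I+T)=\prod_{j}\bigl(1+\lambda_j(T)\bigr)$, where only the nonzero eigenvalues matter (the zero eigenvalues contribute trivial factors). Two classical tools will carry most of the weight: first, that the nonzero eigenvalues of $T_1T_2$ and of $T_2T_1$ coincide together with their algebraic multiplicities; and second, that the functional $\det(I+\cdot)$ is continuous on $\mathfrak{S}_1$ with respect to the trace norm, finite-rank operators being dense there. Note that for $T\in\mathfrak{S}_1(\cH)$ one has $\sum_j|\lambda_j(T)|<\infty$, so the defining product converges absolutely and may be rearranged freely; I would record this at the outset.

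For (iii) I would argue directly. For a compact operator $T$ the nonzero eigenvalues of $T^*$ are exactly the complex conjugates $\overline{\lambda_j(T)}$, with the same algebraic multiplicities. Hence
\[
\det(I+T^*)=\prod_j\bigl(1+\overline{\lambda_j(T)}\bigr)=\overline{\prod_j\bigl(1+\lambda_j(T)\bigr)}=\overline{\det(I+T)},
\]
the rearrangement being justified by the absolute convergence just noted. Assertion (i) I would deduce from the first tool: since the families $\{\lambda_j(T_1T_2)\}$ and $\{\lambda_j(T_2T_1)\}$ of nonzero eigenvalues agree with multiplicities (the hypotheses $T_1T_2\in\mathfrak{S}_1(\cH_2)$ and $T_2T_1\in\mathfrak{S}_1(\cH_1)$ guaranteeing that both determinants are defined), the two infinite products coincide. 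To make the eigenvalue coincidence self-contained in the finite-rank case I would exhibit the block-matrix factorization
\[
\begin{pmatrix} I & -T_1\\ T_2 & I\end{pmatrix}
=\begin{pmatrix} I & 0\\ T_2 & I\end{pmatrix}\begin{pmatrix} I & -T_1\\ 0 & I+T_2T_1\end{pmatrix}
=\begin{pmatrix} I & -T_1\\ 0 & I\end{pmatrix}\begin{pmatrix} I+T_1T_2 & 0\\ T_2 & I\end{pmatrix},
\]
which, by taking determinants of the two triangular factorizations, gives $\det(I+T_1T_2)=\det(I+T_2T_1)$ in finite dimensions; the general case then follows by trace-norm approximation.

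The genuine obstacle is (ii), because the eigenvalues of $(I+T_1)(I+T_2)-I=T_1+T_2+T_1T_2$ are not directly expressible through those of $T_1$ and $T_2$, so the product formula cannot simply be read off the eigenvalue definition. Here I would invoke the second tool. For finite-rank $T_1,T_2$ the identity $\det[(I+T_1)(I+T_2)]=\det(I+T_1)\det(I+T_2)$ reduces, on the finite-dimensional subspace carrying the ranges (and kernels' orthogonal complements), to the multiplicativity of ordinary matrix determinants. For general $T_1,T_2\in\mathfrak{S}_1(\cH)$ I would pick finite-rank $T_1^{(n)}\to T_1$ and $T_2^{(n)}\to T_2$ in trace norm; then $T_1^{(n)}+T_2^{(n)}+T_1^{(n)}T_2^{(n)}\to T_1+T_2+T_1T_2$ in trace norm as well, and the trace-norm continuity of $\det(I+\cdot)$ lets me pass to the limit on both sides. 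The only care required is in checking that the products converge in trace norm, via $\|ST\|_{\mathfrak{S}_1}\le\|S\|\,\|T\|_{\mathfrak{S}_1}$, and that the continuity estimate for the determinant applies on the relevant bounded set; none of this is deep, but it is where the analytic content of the proof resides.
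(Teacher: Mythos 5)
The paper offers no proof of Proposition \ref{classprop}: it is quoted from \cite[Section 4.1]{GK69}, and the authors prove only the refinements stated after it (Lemma \ref{A.1} and Corollary \ref{A.2}). So your proposal can only be measured against the classical arguments, and on that score it follows the standard route and is correct in its main lines: (iii) follows from the fact that the nonzero spectrum of $T^*$ consists of the conjugates of the nonzero eigenvalues of $T$ with equal algebraic multiplicities; (i) follows from the coincidence of the nonzero eigenvalues of $T_1T_2$ and $T_2T_1$, counted with algebraic multiplicities, both determinants being absolutely convergent products by Weyl's inequality; and (ii) is obtained by finite-rank approximation combined with trace-norm continuity of $A\mapsto\det(I+A)$, which is precisely the Gohberg--Krein (Plemelj--Smithies) argument.

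Two points deserve attention. First, your closing claim in (i) that ``the general case then follows by trace-norm approximation'' from the block-matrix identity is unjustified as a standalone route: in (i) the operators $T_1$ and $T_2$ are merely bounded, so neither admits finite-rank approximants in any reasonable norm, and if one instead truncates by finite-rank projections $P_n\uparrow I_{\cH_1}$, $Q_n\uparrow I_{\cH_2}$, the cross terms such as $T_1(I-P_n)T_2$ are not obviously small in $\mathfrak{S}_1$-norm. This does not damage your proof, because your primary argument --- equality of the two families of nonzero eigenvalues with multiplicities, hence of the two products --- already covers the general case; simply confine the block factorization to its intended role as a finite-dimensional verification (applied to $zT_1$ in place of $T_1$ it even identifies the eigenvalues there). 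Second, in (ii) the continuity estimate for the determinant that you invoke must itself be established without appeal to multiplicativity, or the argument becomes circular; this is indeed possible and standard, e.g. via the bound $|\det(I+A)-\det(I+B)|\le\|A-B\|_{\mathfrak{S}_1}\exp\bigl(\|A\|_{\mathfrak{S}_1}+\|B\|_{\mathfrak{S}_1}+1\bigr)$ obtained from the expansion of $\det(I+A)$ in traces of exterior powers, but this should be said explicitly.
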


For technical reasons we need a slightly improved version of the
property \eqref{2.26}.
\bl\la{A.1}
Let $T = T^* \ge 0$ such that $0 \in \rho(T)$.
Further, let $C$ be linear operator such that $\dom(C) \supseteq
\ran(T)$. If $\overline{T^{-1}C} \in \gotS_1(\gotH)$ and $CT^{-1} \in
\gotS_1(\gotH)$, then
\be\la{a.1}
\det(I + \overline{T^{-1}C}) = \det(I + CT^{-1}).
\ee
\el
\begin{proof}
Let $K = K^*$ be a bounded operator such that $\|K\| < \inf\gs(T)$
In this case we have $T+ K \ge 0$ and $0 \in  \rho(T+K)$.
Moreover, we find $\overline{(T + K)^{-1}C} \in
\gotS_1(\gotH)$ and $C(T+K)^{-1} \in \gotS_1(\gotH)$. Both relations
immediately follow from the identities
\bed
(T + K)^{-1} = (I - (T+K)^{-1}K)T^{-1} = T^{-1}(I - K(T+K)^{-1})
\eed
Moreover, if $\{K_n\}_{n\in\N}$ is a sequence of selfadjoint bounded
operators such $\sup_n\|K_n\| < \inf\gs(T)$ and
$\lim_{n\to\infty}\|K_n\| = 0$, then
\be\la{a.2}
\lim_{n\to\infty} \det(I + \overline{(T + K_n)^{-1}C}) = \det(I + T^{-1}C)
\ee
and
\be\la{a.3}
\lim_{n\to\infty} \det(I + C(T + K_n)^{-1}) = \det(I + CT^{-1}).
\ee
For any $\varepsilon \in (0,\inf\gs(T))$ there is a Hilbert-Schmidt
operator $K_\varepsilon$ such that $\|K_\varepsilon\|_{\gotS_2}
< \varepsilon$ and $T + K_\varepsilon$ has
pure point spectrum. Let $\gl_n$, $n \in \N$, be an enumeration of the
eigenvalues of $T+K_\varepsilon$ where without loss of generality we may assume
that the eigenvalues are simple. Let $P_m$ the orthogonal projection onto the
subspace which is spanned by the first $m$-eigenspaces. Obviously, we
have $s-\lim_{m\to\infty}P_m = I$ and $(T+K_\varepsilon)P_m = P_m(T+K_\varepsilon)$, $m \in
N$. We find
\bead
\lefteqn{
s-\lim_{n\to\infty}\det(I + (T+K_\varepsilon)^{-1}P_mC) =}\\
& &
s-\lim_{n\to\infty}\det(I + P_mC(T+K_\varepsilon)^{-1}) = \det(I + C(T+K)^{-1}).
\eead
and
\bead
\lefteqn{
s-\lim_{n\to\infty}\det(I + (T+K_\varepsilon)^{-1}P_mC) = }\\
& &
s-\lim_{n\to\infty}\det(I + P_m\overline{(T+K_\varepsilon)^{-1}C}) =
\det(I + \overline{(T+K_\varepsilon)^{-1}C}
\eead
which yields
\bed
\det(I + \overline{(T+K_\varepsilon)^{-1}C}) = \det(I + C(T+K_\varepsilon)^{-1}).
\eed
Choosing a sequence $\{\varepsilon_n\}_{n\in\N}$, $\varepsilon_n > 0$,
which tends to zero as $n\to\infty$ we get
\bed
\det(I + \overline{(T+K_{\varepsilon_n})^{-1}C}) = \det(I + C(T+K_{\varepsilon_n})^{-1}).
\eed
for each $n \in \N$. Taking into account \eqref{a.2} and \eqref{a.3}
we verify \eqref{a.1}.
\end{proof}
\bc\la{A.2}
Let $T$ be a densely defined closed symmetric operator such that $0
\in \rho(T)$. Further, let $C$ be a linear operator such that $\dom(C)
\supseteq \ran(T)$. If $\overline{T^{-1}C} \in \gotS_1(\gotH)$ and
$CT^{-1} \in \gotS_1(\gotH)$, then
\be\la{a.4}
\det(I + \overline{T^{-1}C}) = \det(I + CT^{-1}).
\ee
\ec
\begin{proof}
By the polar decomposition we have $T* = U|T^*|$ where $U$ is
unitary. Hence, $T^{-1} = U|T^*|^{-1}$. Since $U^*T^{-1}C = |T^*|^{-1}C$
one has $\overline{|T^*|^{-1}C} \in \gotS_1(\gotH)$. From
$CT^{-1} = \wt C |T^*|^{-1} \in \gotS_1(\gotH)$ where $\wt C = CU$.
Obviously, we find
\bed
\det(I + T^{-1}C) = \det(I + U|T^*|^{-1}C) = \det(I + |T^*|^{-1}\wt C)
\eed
and
\bed
\det(I + CT^{-1}) = \det(I + \wt C|T^*|^{-1}).
\eed
Applying Lemma \ref{A.1} we get $\det(I + |T^*|^{-1}\wt C) = \det(I + \wt C|T^*|^{-1})$
which yields \eqref{a.4}
\end{proof}

\section{Perturbation determinant and their properties}\la{B}

Let us summarize some important properties of perturbation determinants for additive
perturbations $\gD_{H'/H}(\cdot)$, cf. \cite[Section 8.1]{Yaf92} and \cite{BY92a,GK69}.
\bd[{\cite[Chapter I.2]{GK69}}]
{\em
\item[\;\;\rm (i)]
A vector $\varphi\in\gH\setminus\{0\}$ is called
a root vector of a closed operator $T\in \cC(\gotH)$ corresponding
to its eigenvalue $\lambda_0\in\sigma_p(T)$ if there exists
$n\in{\N}$ such that $(T - \lambda_0)^n\varphi=0$. The closure
of the set $\mathfrak L'_{\lambda_0}(T)$ of all root vectors of $T$
corresponding to $\lambda_0$ is called the root subspace.
\bed
\gotL_{\lambda_0}(T)=\overline{\gotL'_{\lambda_0}(T)}, \quad
\gotL'_{\lambda_0}(T)=\{f \in \gotH:\ (T-\lambda_0)^n f=0\; \text{for
some}\; n\in{\N}\}.
\eed
\item[\;\;\rm (ii)]
The dimension $m_0=m_{\lambda_0}(A)=\dim\mathfrak
L_{\lambda_0}(A)$ is called the algebraic multiplicity of
$\lambda_0$.
\item[\;\;\rm (iii)]
An eigenvalue $\lambda_0\in\sigma_p(T)$ is called a
normal eigenvalue of $T$ if it is isolated and its
algebraic multiplicity $m_{\lambda_0}(T)$ is finite,
$m_{\lambda_0}(T)<\infty$.
}
\ed

If $m_0< \infty,$ then $\gotL'_{\lambda_0}(A)$ is closed and $=\gotL_{\lambda_0}(A)$ is closed.
An isolated eigenvalue $\lambda_0\in\sigma_p(T)$ is a normal one
if and only if
\bed
m_0 = \dim P_{\lambda_0} <\infty, \qquad P_{\lambda_0} =
-\frac{1}{2\pi i}\int_{|z- \lambda_0|=\delta}R_T(z)dz.
\eed
We set $m_{\lambda_0}(T) = 0$, if  $z_0$ is regular, i.e. $z_0\in \rho(T)$.

Further, if the function $f(\cdot)$ is analytic in a punctured
neighborhood of $z_0 (\in \C)$ and $z_0$ is not an essential
singularity of it, then the order $\ord(f(z_0))$ of  $f(\cdot)$ at $z_0 \in \C$ is the
integer $k \in \Z$ in the representation $f(z) = (z-z_0)^kg(z)$
where $g(z)$ is analytic at $z_0$ and $g(z_0) \not= 0$,
\cite[Chapter IV.3]{GK69}.

If $H'$ and $H$ are densely defined closed operators in the Hilbert space $\gotH$
such that $\{H',H\} \in \gotD$, then the perturbation determinant defined by \eqref{1.0} has the following properties:

\begin{enumerate}

\item
If $H',H \in \gotS_1(\gotH)$, then $\{H',H\} \in \gotD$ and
   \begin{equation}\la{1.1AA}
\gD_{H'/H}(z) = \frac{\det(I - z^{-1}H')}{\det(I - z^{-1}H)},
\qquad z\in \rho(H) \setminus \{0\}.
\end{equation}

\item
If $\{H'',H'\} \in \gotD$ and $\{H',H\} \in \gotD$, then
$\{H'',H\} \in \gotD$ and the following  chain rule holds
  \begin{equation}\la{1.1AB}
\gD_{H'',H'}(z)\gD_{H',H}(z) = \gD_{H'',H}(z), \qquad  z \in
\rho(H') \cap (H).
   \end{equation}
\item
If $\{H',H\} \in \gotD$, then $\{H,H'\} \in \gotD$ and
$\gD_{H'/H}(z)\gD_{H/H'}(z) = 1$ for $z \in \rho(H') \cap
\rho(H)$.

\item
If $\{H',H\} \in \gotD$ and $z$ is either a common regular
point or a normal eigenvalue of both $H'$ and $H$ of algebraic
multiplicities $m_z(H')$ and $m_z(H)$, then $\ord(\gD_{H'/H}(z)) =
m_z(H') - m_z(H)$.

\item
If $\{H',H\} \in \gotD$, then
\bea\la{1.1}
\lefteqn{
\frac{1}{\gD_{H'/H}(z)}\frac{d}{dz}\gD_{H'/H}(z)^{-1}= }\\
& &
\tr((H - z)^{-1} - (H' - z)^{-1}), \quad z \in \rho(H')\cap \rho(H).
\nonumber
\eea

\item
If $\{H',H\} \in \gotD$  and $\{H'^*,H^*\} \in \gotD$, then
$\gD_{H'^*/H^*}(z) = \overline{\gD_{H'/H}(\overline{z})}$ for $z
\in \rho(H^*)$.

\item If $\{H',H\} \in \gotD$, then the following  identity holds
\bed
\frac{\gD_{H'/H}(z)}{\gD_{H'/H}(\zeta)} = \det\left(I +
(z - \zeta)(H'-\zeta)^{-1}V(H - z)^{-1}\right),
\eed
$z \in \rho(H)$ and $\zeta \in \rho(H') \cap \rho(H)$.
\end{enumerate}

\section{Logarithm}\la{App.II}

In the following we need the definition of the logarithm $\log(z)$ of a
complex number $z \in \C$. We shall define the logarithm by
\be\la{2.19}
\log(z) := -i\int^\infty_0 \left((z + i\gl)^{-1} -
(1 + i\gl)^{-1}\right)d\gl, \quad z \in \C_+ \setminus -i\R_+,
\ee
with a cut along the negative imaginary semi-axis. One proves that
\bed
\log(e^{z}) = z, \quad e^z \in \C_+ \setminus -i\R_+,
\eed
which yields
\bed
e^{\log(z)} = z, \quad z  \in \C_+ \setminus -i\R_+.
\eed
Let $f(\cdot)$
and $g(\cdot)$ be  holomorphic functions in a domain $\gO$
satisfying  $f(z) \not= 0$ and $f(z) = e^{g(z)}.$  Then for a
neighborhood $\cO$ of a fixed point $z_0 \in \gO$ such that
$f(z_0)$ does not belong to the negative imaginary semi-axis one
has
\bed
\log(f(z)) = g(z) + 2n\pi i, \quad z \in \cO, \quad n \in \Z.
\eed
By analytical continuation this equality can be extended to the whole
$\gO$. Using definition \eqref{2.19} we find
\bed
\frac{d}{dz}\log(f(z)) = \frac{1}{f(z)}\frac{d}{dz}f(z), \quad z \in
\gO.
\eed
Furthermore, we need the definition of the logarithm of a
dissipative operator $G$ given in \cite{GM00b}. Let a $G$ be a
bounded dissipative operator such that $0 \in \rho(G)$. Then we set
\be\la{2.17}
\log(G) := -i\int^\infty_0 \left((G + i\gl)^{-1} -
(1 + i\gl)^{-1}\right)d\gl
\ee
where the integral is understood in the operator norm. One proves that
$e^{\log(G)} = G$, cf. \cite[Lemma 2.5(e)]{GM00b}. Moreover, from
\cite[Lemma 2.6]{GM00b} we find $0 \le \IM(\log(G)) \le \pi I$. If
$G \in \gotS_1(\gotH)$ and dissipative, then $\log(I
+ G) \in \gotS_1(\gotH)$. Moreover, one gets
\be\la{2.20}
\det(I + G) = e^{\tr(\log(I+G))}, \quad G \in \gotS_1(\gotH).
\ee

\section{Holomorphic functions in {$\C_+$}}\la{App.III}

A holomorphic function $F_+(\cdot)$ in $\C_+$ belongs to $F_+(\cdot) \in
H^\infty(\C_+)$ if $\sup_{z\in\C_+}|F_+(z)| < \infty$. Let  $\{z^+_k\}_{k\in\N} (\subset \C_+)$ be  the set of
its zeros and   $m_k$  the corresponding  multiplicities.
It is well known (see for instance \cite[Section VI C]{Koo80})
that the set of zeros  satisfy the condition
\be\la{4.12}
\sum_k\frac{m_k\IM(z^+_k)}{1 + |z^+_k|^2}<\infty.
\ee
If the sequence $\{\ga^+_k\}_{k\in\N}\subset \R$ is chosen such that
$e^{i\ga^+_k}({i - z^+_k})({i - \overline{z}{^{\,+}_k}})^{-1} \ge
0$,\ $k\in \N,$ then  the corresponding Blaschke product
\be\la{7.17A}
\cB_+(z) = \prod_k (b_{z_k})^{m_k}  :=  \prod_k \left(e^{i\ga^+_k}\frac{z -
z^+_k}{z - \overline{z^+_k}}\right)^{m_k}, \qquad  z \in \C_+,
\ee
converges uniformly on compact subsets of $\C_+$.

Moreover, $F_+(\cdot)$  admits (see \cite[Section VI C]{Koo80})
the following  representation
\be\la{4.9a}
F_+(z) = \varkappa_+\cB_+(z)
\exp\left\{\frac{i}{\pi}\int_\R\left(\frac{1}{t-z} -
\frac{t}{1+t^2}\right)d\mu_+(t)\right\} e^{i\ga_+ z},
\ee
$z \in \C_+$, where $\varkappa_+ \in \T$, $\ga_+ \ge 0$ and  $\mu_+(\cdot)$  is a non-decreasing
function on $\R$ generating a non-negative  Borel
measure  and satisfying
\be \la{4.10}
\int_\R \frac{1}{1 + t^2}d\mu_+(t) < \infty.
\ee
If $F_+(z)$ has no zeros in $\C_+$, the Blaschke product
$\cB_+(\cdot)$ in  \eqref{4.9a} is missing. Let $\mu_+ = \mu^s_+ +
\mu^{ac}_+$ be the Lebesgue  decomposition  of $\mu_+$, where
$\mu^s_+$  and $\mu^{ac}_+$ are  the singular  and the absolutely
continuous measures, respectively. Setting
  \be\label{4.10Inner}
\begin{split}
I_{F_+}(z) & := \cB_+(z) S_{F_+}(z) e^{i\ga_+ z},\\
S_{F_+}(z) & := \exp\left\{\frac{i}{\pi}\int_R \left(\frac{1}{t-z} -
\frac{t}{1+t^2}\right)d\mu^s_+(t)\right\}, 
\end{split}
\ee
where $\ga_+ \ge 0$ and
\bed
\cO_{F_+}(z) := \exp\left\{\frac{i}{\pi}\int_R \left(\frac{1}{t-z}
- \frac{t}{1+t^2}\right)d\mu^{ac}_+(t)\right\},\quad z \in \C_+,
\eed
one gets the unique factorization $F_+(z) =
\varkappa_+I_{F_+}(z)\cO_{F_+}(z)$, $z \in \C_+$, where
$\varkappa_+ \in \T$ and $I_{F_+}(z)$ and $\cO_{F_+}(z)$ are the
inner  and the outer factors, respectively.   Note, that
$|I_{F_+}(t + i0)| = 1$, $|\cO_{F_+}(t+i0)| = |F_+(t+i0)|$ for
a.e. $t \in \R$ and $d\mu^{ac}_+(t) = - \ln(|F_+(t+i0)|)dt$.
 Note, that
$F_+(\cdot)$ is an outer function in $\C_+$, if and only if it
admits the representation
\be\la{4.15a}
F_+(z) = \varkappa_+ \exp\left\{ -\frac{i}{\pi}\int_R
\left(\frac{1}{t-z}-
\frac{t}{1+t^2}\right)\ln(|F_+(t+i0)|)\,dt\right\}, \quad  z \in
\C_+.
\ee
Clearly, $F_+(i)$ is real, if and only if  $\varkappa_+ = 1$.

A holomorphic function $F$ belongs to the
Smirnov class $\cN^+(\C_+)$ if it admits the representation $F =
F_+/G$ where $F_+,G \in H^\infty(\C_+)$ and $G$ is an outer
function. Any function $F \in \cN^+(\C_+)$ admits the
representation
\be\la{D.7}
\begin{split}
F(z) = & \varkappa B_+(z)\exp\left\{\frac{i}{\pi}\int_\R\left(\frac{1}{t-i} - \frac{t}{1+t^2}\right)d\mu^s_+(t)\right\}\times\\
     & \exp\left\{-\frac{i}{\pi}\int_\R\left(\frac{1}{t-i} - \frac{t}{1+t^2}\right)h(t)dt\right\}e^{i\ga z},\quad z \in \C_+,
\end{split}
\ee
where $\varkappa \in \T$, $\ga \ge 0$, $\mu^s_+(\cdot)$ is a non-negative Borel
measure, which is singular with respect to the Lebesgue measure, and
$h \in L^1(\R,\tfrac{1}{1+t^2}dt)$.
Using \eqref{D.7} one easily verifies that the $H^\infty(\C_+)$-functions $F_+$ and $G$ can be chosen contractive.
Indeed, let $\eta(t) := \max\{h(t),0\} \ge 0$, $t \in \R$, and
$k(t) := \eta(t) - h(t) \ge 0$, $t \in \R$.
Notice that $h(t) = \eta(t) - k(t)$, $t\in \R$.
Setting $d\mu_+(\cdot) := d\mu^s_+(\cdot) + k(\cdot)dt$,
\bed
F_+(z) := \varkappa B_+(z)\exp\left\{\frac{i}{\pi}\int_\R\left(\frac{1}{t-i} - \frac{t}{1+t^2}\right)d\mu_+(t)\right\}e^{i\ga z},
\quad z \in \C_+,
\eed
and
\bed
G(z) := \exp\left\{\frac{i}{\pi}\int_\R\left(\frac{1}{t-i} - \frac{t}{1+t^2}\right)\eta(t)dt\right\}
\quad z \in \C_+,
\eed
we define contractive analytic functions, where $G$ is an outer function, such that $F = F_+/G$.
Summing up we have proved the following lemma.
\bl\la{D.I}
If $F \in \cN^+(\C_+)$, then there exists a non-negative Borel measure
$\mu_+(\cdot)$ satisfying $\int_\R \tfrac{d\mu_+(t)}{1 + t^2}dt < \infty$ and
a non-negative function $\eta(\cdot) \in L^1(\R,\tfrac{1}{1+t^2}dt)$
as well as constants $\varkappa \in \T$ and $\ga \ge 0$
such that the representation
\be\la{D.8}
F(z) = \varkappa B_+(z)\exp\left\{\frac{i}{\pi}\int_\R\left(\frac{1}{t-i} - \frac{t}{1+t^2}\right)d\mu(t)\right\}
\ee
holds where $d\mu(\cdot) = d\mu_+(\cdot) - \eta(\cdot)dt$.
\el

\section{On {$H^1(\D)$} functions}\la{App.IV}

Let $\D := \{w \in \C: |w| < 1\}$.
By $\ln(\cdot)$ we denote a branch of the logarithm such that
$\ln(z) \in \R$ for $z \in \R_+$ and $\IM(\ln(z)) \in
(-\pi/2,\pi/2)$ for $\RE(z)
> 0$.
\bl\la{A.III}
Let $H(w)$ be a holomorphic function in $\D$ such that $\RE(H(w))
\ge 0$ for $w \in \D$. Let $G(w) := \ln(1 + H(w))$ for $w \in \D$.
Then $G(w) \in H^1(\D)$ and the following  estimate holds
        \be\la{A.80}
0 \le\int^\pi_{-\pi} \RE(G(e^{i\gth})d\gth \le 2\pi\,|H(0)|.
     \ee
\el
\begin{proof}
Obviously we have $|\IM(G(w))| \le \pi/2$, $w \in \D$. Furthermore,
we have
\bed
G(0) = \frac{1}{2\pi}\int^{\pi}_{-\pi} G(re^{i\gth})d\gth,
\quad r \in (0,1),
\eed
which yields
\bed
2\pi\RE(G(0)) = \int^{\pi}_{-\pi}\RE(G(re^{i\gth}))d\gth.
\eed
Since $\RE(G(re^{i\gth})) \ge 0$ we obtain
      \bed
\|G\|_{H^1} \le 2\pi\RE(G(0)) + \pi^2
       \eed
which yields $G \in H^1(\D)$. In particular, we have
\be\la{A.8}
\|G_R\|_{L^1} = 2\pi G_R(0), \quad G_R(w) = \RE(G(w)), \quad w \in \D.
\ee
Using the estimate $\RE(G(0)) = \ln(|1 + H(0)|)  \le |H(0)|$ we
arrive at \eqref{A.80}.
\end{proof}

The result can be carried over to upper half-plane.
\bc\la{A.4}
Let $h(z)$, $z \in \C_+$, be a holomorphic function such that
$\RE(h(z)) \ge 0$ for $z \in \C_+$. Let $g(z) := \ln(1 + h(z))$
for $z \in \C_+$. Then the following  estimate
   \be\la{A.10}
\int_\R|g(x+i0)|\frac{dx}{1+x^2} \le  2\pi\;|h(i)|
           \ee
is valid where $g(x+i0) := \lim_{y\downarrow 0}g(x + iy)$.
\ec
\begin{proof}
We set
\bed
H(w) := h\left(i\frac{1+w}{1-w}\right)
\eed
and
     \bed
G(w) := \ln(1 + H(w)) = \ln\left(1 +
  h\left(i\frac{1+w}{1-w}\right)\right)
= g\left(i\frac{1+w}{1-w}\right).
     \eed
Since
\bed
\int^{\pi}_{-\pi}|G(e^{i\gth})|d\gth = \int_\R |g(x + i0)|\frac{dx}{1+x^2}
\eed
and $h(i) = H(0)$ we obtain \eqref{A.10} from \eqref{A.80}.
\end{proof}

\section{Riesz-Dunford functional calculus}\la{App.V}

Let $T$ be a densely defined closed operator. We say the
function $\Phi$ belongs to the class $\cF(T)$ if there is a
a simple closed curve $\gG$ in $\C$ which does not intersect the real
axis such that
\begin{enumerate}

\item[(i)]
its open exterior domain $\gO^{\rm ext}_\gG$ contains the spectra
of $T$ and

\item[(ii)] there is a neighborhood $\cO$ of the closed set
$\overline{\gO^{\rm ext}_\gG}$ such that $\Phi$ is holomorphic in
$\cO$ including infinity.
\end{enumerate}
We note that if $\rho(T)$ is not empty, then the class $\cF(T)$ is
not an empty too. In this case one defines $\Phi(T)$  by
\be\la{F0}
\Phi(T) := \Phi(\infty)I + \frac{1}{2\pi i}\oint_\gG \Phi(z)(T - z)^{-1} dz
\ee
where the integral $\oint_\gG$ is taken in mathematical positive
sense with respect to open inner domain $\gO^{\rm in}_\gG$, see
\cite[Section VII.9]{DSch88}. We note that
\be\label{F3}
\Phi(\xi) = \Phi(\infty) -\frac{1}{2\pi i}\oint_\gG
\frac{\Phi(z)}{z - \xi}dz, \quad \xi \in \gO^{\rm ext}_\gG,
   \ee
Since
\be\la{4.14b}
\oint_\gG |\Phi(z)|\;|dz| < \infty
\ee
the integral $\oint_\gG \Phi(z) dz$ is well-defined. Hence the
residuum $\res_\infty(\Phi)$, $\Phi \in \cF(T)$, is well defined by
\be\la{4.14a}
\res_\infty(\Phi) := -\frac{1}{2\pi i} \oint_\gG \Phi(z) dz
\ee
Since the curve $\gG$ does not
intersect the real axis, we get from \eqref{F3} and \eqref{4.14b} that $\sup_{t\in\R}(1+t^2)|\Phi'(t)| <
\infty$ for $\Phi \in \cF(T)$. Therefore, if $\nu$ is a
complex-valued Borel measure satisfying $\int_\R \frac{1}{1+t^2}|d\nu(t)| < \infty$, then
\bed
\int_\R |\Phi'(t)||d\nu(t)| < \infty
\eed
which guarantees the existence of the integral $\int_\R \Phi'(t)d\nu(t)$
and
\be\label{F4}
\int_\R \Phi'(t)d\nu(t) = -\frac{1}{2\pi i}\oint_\gG \Phi(z)\left(\int_\R
\frac{1}{(t-z)^2}d\nu(t)\right) dz
   \ee
for $\Phi \in \cF(T)$ provided $\nu$
satisfies $\int_\R \frac{1}{1+t^2}|d\nu(t)| < \infty$.

Let $T$ and $T'$ be two densely defined closed operators. We set
$\cF(T,T') := \cF(T) \cap \cF(T')$, that is, there is a simple
closed curve $\gG$ such that $\gO^{\rm ext}_\gG$ contains the
spectra of both $T$ and $T'$. If $\rho(T) \cap \rho(T') \not=
\emptyset$, then $\cF(T,T')\not = \emptyset$.
\bl\la{A.Va}
Let $T$ and $T'$ be two densely defined closed operators in $\gotH$
such that $\rho(T) \cap \rho(T') \not= \emptyset$.
If the condition $(T'- \xi)^{-1} - (T- \xi)^{-1} \in \gotS_1(\gotH)$
for some $\xi \in \rho(T) \cap \rho(T')$, then $\Phi(T') - \Phi(T) \in \gotS_1(\gotH)$
for $\Phi \in \cF(T,T')$.
\el
\begin{proof}
Obviously we have
\bed
\Phi(T') - \Phi(T) = \frac{1}{2\pi i}\oint_\gG \Phi(z)((T' - z)^{-1} - (T - z)^{-1})dz
\eed
which yields the estimate
\bed
\|\Phi(T') - \Phi(T)\|_{\gotS_1} \le \sup_{z\in\gG}\|(T' - z)^{-1} - (T - z)^{-1}\|_{\gotS_1}\frac{1}{2\pi }\oint_\gG |\Phi(z)||dz| < \infty
\eed
which proves the assertion.
\end{proof}

\section{Root vectors}\la{App.VI}

Let $T$ be an unbounded operator in $\gotH$. If $0 \not\ni \psi \in \ker((T -
\gl)^n)$ for some $\gl \in \C$ and $n \in \N$, then $\psi$ is called
a root vector of $T$ belonging to $\gl$. The point $\gl$ is necessarily an
eigenvalue of $T$. The set of all root vectors belonging to $\gl$
is denoted by $\cV_T(\gl)$. The set $\cV_T = \bigcup_{\gl \in \gs(T)}\cV_T(\gl)$ is
called the root vector system of $T$. If $\cV_T$ is a total set,
then the root vector system is called complete.
\bl\la{A.V}
Let $T$ be a densely defined closed operator such that $\xi \in \rho(T)$.
The root vector system of $T$ is complete if and only if the root
vector system of $R := (T -\xi)^{-1}$ is complete.
\el
\begin{proof}
Using
\be\la{5.820}
(R - \mu)^n\psi = (-1)^n\mu^n(T - \xi)^{-n}(T - \gl)^n\psi, \quad \mu
:= \frac{1}{\gl - \xi}, \quad \psi \in \gotH,
\ee
we get that $\psi \in \cV_T(\gl)$ yields $\psi \in
\cV_R(\mu)$. Conversely, if $\psi \in \cV_R(\mu)$, then
\bed
0 = (R - \mu)^n\psi = \sum^n_{k=0}\binom{n}{k}\mu^{n-k}R^k\psi =
\mu^n\psi + \sum^n_{k=1}\binom{n}{k}\mu^{n-k}R^k\psi
\eed
which implies
\bed
\psi = -R\sum^{n-1}_{k=0}\binom{n}{k+1}\mu^{-(k+1)}R^k\psi.
\eed
Hence, $\psi \in \dom(T)$, and
\bed
\psi = -R\sum^{n-1}_{k=0}\binom{n}{k+1}\mu^{-(k+1)}R^{k+1}(T-\gl)\psi.
\eed
This yields
\bed
\psi = -R^2\sum^{n-1}_{k=0}\binom{n}{k+1}\mu^{-(k+1)}R^k(T-\gl)\psi.
\eed
and $\psi \in \dom(T^2)$. If we proceed further in this way we
finally get $\psi \in \dom((A-\gl)^n)$. Applying again formula
\eqref{5.820} we obtain $\psi \in \cV_T(\gl_0)$.
\end{proof}

\section{The class {$C_0$}}\la{App.VII}

Let us briefly recall some basic concepts
and facts on contractions following \cite{FN70}. In \cite{FN70}
Nagy and Foias  using theory of dilations have extended
the Riesz-Dunford functional calculus for a contraction $T$
to the class $H^{\infty}_T(\D)$ (see \cite[Section
3.2]{FN70} for precise definitions). If a contraction $T$ is
completely non-unitary, then $H^{\infty}_T(\D)=
H^{\infty}(\D)$ is just the Hardy class in the unit disc $\D$.
The extended functional calculus makes it possible to
introduce concepts of $C_0$-contractions and minimal
annihilation function.
\bd[{\cite{FN70}}]
\item[\;\;\rm (i)]
A contraction $T$ in $\gH$ is put in the class $C_{0\cdot}$
($C_{\cdot 0}$ ) if $s-\lim_{n\to \infty}T^n = 0$ (s-$\lim_{n\to
\infty}T^{*n} = 0$). It is set $C_{00} := C_{\cdot 0}\cap
C_{0\cdot}.$

\item[\;\;\rm (ii)]
It is said that  a completely non-unitary operator
$T$ belongs to the class $C_0$ if there exists a function $u(\cdot)\in
H^{\infty}(\D)\setminus\{0\}$ such that $u(T)=0$.
The function  $u(\cdot)$  is called an annihilation function for $T$.

\item[\;\;\rm (iii)]
An annihilation  function $u_0(\cdot)$  is called minimal if
it is a divisor in $H^{\infty}(\D)$ of any other annihilation
function $u(\cdot)$ for $T$.
\ed

It is well known that $C_{0}\subset C_{00}.$ Moreover, it is known
\cite[Proposition 3.4.4]{FN70} that for any $T\in C_0$ the minimal
function exists and is unique up to a multiplicative constant. The
minimal function is denoted by $m_T(\cdot)$. {\vio{It}} is always an inner one.

Alongside a $m$-dissipative operator $D$ in $\gH$ we consider its
Cayley transform
\bed
T := T_{D} :=(D - i)(D + i)^{-1} = I -2i(D + i)^{-1}.
\eed
Clearly, $T_D$ is a contraction. Moreover,  $T_D$ is
completely non-unitary if and only if $D$ is completely
non-selfadjoint.

For any function $v(\cdot) \in H^{\infty}({\C}_+)$ and any completely
non-selfadjoint $m$-dissipative operator $D$ we set $v(D) := \wt v(T_D)$
where
\bed
H^\infty(\D) \ni \wt v(\zeta) := v\left(i\frac{1+\zeta}{1-\zeta}\right), \quad \zeta \in \D.
\eed
We say the $m$-dissipative operator $D$ belongs to the class  $C_{0\cdot}$
($C_{\cdot 0},\  C_0$) if $T_D$ belongs to $C_{0\cdot}$
(resp. $C_{\cdot 0},\  C_0$). In other words,  $D\in C_{0}$ if it is
completely non-selfadjoint and there  is  a function $v(\cdot) \in
H^{\infty}({\C}_+)$ such that $v(D) = \wt v(T_D) = 0$.  Clearly,  there  always exists a minimal function
$m_D(\cdot) \in H^{\infty}({\C}_+)$ which is an inner function.
\end{appendix}


\begin{thebibliography}{10}

\bibitem{AG81}
N.~I. Achieser and I.~M. Glasmann.
\newblock {\em Theorie der linearen {O}peratoren im {H}ilbert-{R}aum}.
\newblock Verlag Harri Deutsch, Thun, eighth edition, 1981.

\bibitem{AN90}
V.~M. Adamjan and H.~Neidhardt.
\newblock On the summability of the spectral shift function for pair of
  contractions and dissipative operators.
\newblock {\em J. Operator Theory}, 24(1):187--205, 1990.

\bibitem{AdamPav79}
V.~M. Adamjan and B.~S. Pavlov.
\newblock Trace formula for dissipative operators.
\newblock {\em Vestnik Leningrad. Univ. Mat. Mekh. Astronom.}, (2):5--9, 118,
  1979.

\bibitem{Agmon1959}
S.~Agmon, A.~Douglis, and Louis Nirenberg.
\newblock {Estimates near the boundary for solutions of elliptic partial
  differential equations satisfying general boundary conditions. I.}
\newblock {\em Commun. Pure Appl. Math.}, 12:623--727, 1959.

\bibitem{AlekPel11}
A.~B. Aleksandrov and V.~V. Peller.
\newblock Trace formulae for perturbations of class {$S_m$}.
\newblock {\em J. Spectr. Theory}, 1(1):1--26, 2011.

\bibitem{Ber68}
Yu.~M. Berezanski{\u\i}.
\newblock {\em Expansions in eigenfunctions of selfadjoint operators}.
\newblock Translations of Mathematical Monographs, Vol. 17. American
  Mathematical Society, Providence, R.I., 1968.

\bibitem{BirKrei62}
M.~{\v{S}}. Birman and M.~G. Kre{\u\i}n.
\newblock On the theory of wave operators and scattering operators.
\newblock {\em Dokl. Akad. Nauk SSSR}, 144:475--478, 1962.

\bibitem{BirKrei63}
M.~{\v{S}}. Birman and M.~G. Kre{\u\i}n.
\newblock Some topics of the theory of the wave and scattering operators.
\newblock In {\em Outlines {J}oint {S}ympos. {P}artial {D}ifferential
  {E}quations ({N}ovosibirsk, 1963)}, pages 39--45. Acad. Sci. USSR Siberian
  Branch, Moscow, 1963.

\bibitem{BY92a}
M.~{\v{S}}. Birman and D.~R. Yafaev.
\newblock Spectral properties of the scattering matrix.
\newblock {\em Algebra i Analiz}, 4(6):1--27, 1992.

\bibitem{BY92b}
M.~{\v{S}}. Birman and D.~R. Yafaev.
\newblock The spectral shift function. {T}he papers of {M}. {G}. {K}re\u\i n
  and their further development.
\newblock {\em Algebra i Analiz}, 4(5):1--44, 1992.

\bibitem{BirPush98}
M.~Sh. Birman and A.~B. Pushnitski.
\newblock Spectral shift function, amazing and multifaceted.
\newblock {\em Integral Equations Operator Theory}, 30(2):191--199, 1998.

\bibitem{BMN02}
J.~F. Brasche, M.~M. Malamud, and H.~Neidhardt.
\newblock Weyl function and spectral properties of self-adjoint extensions.
\newblock {\em Integral Equations Operator Theory}, 43(3):264--289, 2002.

\bibitem{Bro71}
M.~S. Brodski{\u\i}.
\newblock Triangular and {J}ordan representations of linear operators.
\newblock pages viii+246, 1971.
\newblock Translated from the Russian by J. M. Danskin, Translations of
  Mathematical Monographs, Vol. 32.

\bibitem{BGP07}
Jochen Br\"uning, Vladimir Geyler, and Konstantin Pankrashkin.
\newblock Spectra of self-adjoint extensions and applications to solvable
  schr\"odinger operators.
\newblock {\em Rev. Math. Phys.}, 20(1):1--70, 2008.

\bibitem{DM87}
V.~A. Derkach and M.~M. Malamud.
\newblock On the {W}eyl function and {H}ermite operators with lacunae.
\newblock {\em Dokl. Akad. Nauk SSSR}, 293(5):1041--1046, 1987.

\bibitem{DM91}
V.~A. Derkach and M.~M. Malamud.
\newblock Generalized resolvents and the boundary value problems for
  {H}ermitian operators with gaps.
\newblock {\em J. Funct. Anal.}, 95(1):1--95, 1991.

\bibitem{DM92a}
V.~A. Derkach and M.~M. Malamud.
\newblock Characteristic functions of almost solvable extensions of {H}ermitian
  operators.
\newblock {\em Ukra\"\i n. Mat. Zh.}, 44(4):435--459, 1992.

\bibitem{DM92}
V.~A. Derkach and M.~M. Malamud.
\newblock Characteristic functions of linear operators.
\newblock {\em Dokl. Akad. Nauk}, 323(5):816--822, 1992.

\bibitem{DM95}
V.~A. Derkach and M.~M. Malamud.
\newblock The extension theory of {H}ermitian operators and the moment problem.
\newblock {\em J. Math. Sci.}, 73(2):141--242, 1995.
\newblock Analysis. 3.

\bibitem{DM85}
V.~A. Derkach and M.~M. Malamud.
\newblock {\em Weyl function of {H}ermite operator and its connection with
  characteristic function}.
\newblock Preprint 85-9 (104) Donetsk Phis. Tech. Inst. of Acad. Sci. of
  Ukraine, Donetsk 1985.
\newblock (in Russian).

\bibitem{DSch88}
N.~Dunford and J.~T. Schwartz.
\newblock {\em Linear operators. {P}art {I}}.
\newblock Wiley Classics Library. John Wiley \& Sons Inc., New York, 1988.
\newblock General theory.

\bibitem{Gar81}
J.~B. Garnett.
\newblock {\em Bounded analytic functions}, volume~96 of {\em Pure and Applied
  Mathematics}.
\newblock Academic Press Inc., New York, 1981.

\bibitem{Gesztesy2007b}
F.~Gesztesy, Y.~Latushkin, and K.~A. Makarov.
\newblock Evans functions, {J}ost functions, and {F}redholm determinants.
\newblock {\em Arch. Ration. Mech. Anal.}, 186(3):361--421, 2007.

\bibitem{Gesztesy2005}
F.~Gesztesy, Y.~Latushkin, M.~Mitrea, and M.~Zinchenko.
\newblock Nonselfadjoint operators, infinite determinants, and some
  applications.
\newblock {\em Russ. J. Math. Phys.}, 12(4):443--471, 2005.

\bibitem{Gesztesy2008}
F.~Gesztesy, Y.~Latushkin, and K.~Zumbrun.
\newblock Derivatives of (modified) {F}redholm determinants and stability of
  standing and traveling waves.
\newblock {\em J. Math. Pures Appl. (9)}, 90(2):160--200, 2008.

\bibitem{GM00b}
F.~Gesztesy and K.~A. Makarov.
\newblock Some applications of the spectral shift operator.
\newblock In {\em Operator theory and its applications (Winnipeg, MB, 1998)},
  volume~25 of {\em Fields Inst. Commun.}, pages 267--292. Amer. Math. Soc.,
  Providence, RI, 2000.

\bibitem{Gesztesy2003}
F.~Gesztesy and K.~A. Makarov.
\newblock ({M}odified) {F}redholm determinants for operators with matrix-valued
  semi-separable integral kernels revisited.
\newblock {\em Integral Equations Operator Theory}, 47(4):457--497, 2003.

\bibitem{Gesztesy2007}
F.~Gesztesy, M.~Mitrea, and M.~Zinchenko.
\newblock Multi-dimensional versions of a determinant formula due to {J}ost and
  {P}ais.
\newblock {\em Rep. Math. Phys.}, 59(3):365--377, 2007.

\bibitem{Gesztesy2007a}
F.~Gesztesy, M.~Mitrea, and M.~Zinchenko.
\newblock Variations on a theme of {J}ost and {P}ais.
\newblock {\em J. Funct. Anal.}, 253(2):399--448, 2007.

\bibitem{Gesztesy2009}
F.~Gesztesy, M.~Mitrea, and M.~Zinchenko.
\newblock On {D}irichlet-to-{N}eumann maps and some applications to modified
  {F}redholm determinants.
\newblock In {\em Methods of spectral analysis in mathematical physics}, volume
  186 of {\em Oper. Theory Adv. Appl.}, pages 191--215. Birkh\"auser Verlag,
  Basel, 2009.

\bibitem{Gesztesy2012}
F.~Gesztesy and M.~Zinchenko.
\newblock Symmetrized perturbation determinants and applications to boundary
  data maps and {K}rein-type resolvent formulas.
\newblock {\em Proc. Lond. Math. Soc. (3)}, 104(3):577--612, 2012.

\bibitem{GK69}
I.~C. Gohberg and M.~G. Kre{\u\i}n.
\newblock {\em Introduction to the theory of linear nonselfadjoint operators}.
\newblock American Mathematical Society, Providence, R.I., 1969.

\bibitem{GG84}
V.~I. Gorbachuk and M.~L. Gorbachuk.
\newblock {\em Granichnye zadachi dlya differentsialno-operatornykh uravnenii}.
\newblock ``Naukova Dumka'', Kiev, 1984.

\bibitem{GG91}
V.~I. Gorbachuk and M.~L. Gorbachuk.
\newblock {\em Boundary value problems for operator differential equations},
  volume~48 of {\em Mathematics and its Applications (Soviet Series)}.
\newblock Kluwer Academic Publishers Group, Dordrecht, 1991.

\bibitem{Gru68}
G.~Grubb.
\newblock A characterization of the non-local boundary value problems
  associated with an elliptic operator.
\newblock {\em Ann. Scuola Norm. Sup. Pisa (3)}, 22:425--513, 1968.

\bibitem{Gru09}
G.~Grubb.
\newblock {\em Distributions and operators}, volume 252 of {\em Graduate Texts
  in Mathematics}.
\newblock Springer, New York, 2009.

\bibitem{Ka76}
T.~Kato.
\newblock {\em Perturbation theory for linear operators}.
\newblock Springer-Verlag, Berlin, second edition, 1976.
\newblock Grundlehren der Mathematischen Wissenschaften, Band 132.

\bibitem{Koo80}
P.~Koosis.
\newblock {\em Introduction to {$H\sb{p}$} spaces}, volume~40 of {\em London
  Mathematical Society Lecture Note Series}.
\newblock Cambridge University Press, Cambridge, 1980.

\bibitem{K47}
M.~G. Krein.
\newblock The theory of self-adjoint extensions of semi-bounded {H}ermitian
  transformations and its applications. {I}.
\newblock {\em Rec. Math. [Mat. Sbornik] N.S.}, 20(62):431--495, 1947.

\bibitem{K53}
M.~G. Kre{\u\i}n.
\newblock On some cases of effective determination of the density of an
  inhomogeneous cord from its spectral function.
\newblock {\em Doklady Akad. Nauk SSSR (N.S.)}, 93:617--620, 1953.

\bibitem{K53b}
M.~G. Kre{\u\i}n.
\newblock On the trace formula in perturbation theory.
\newblock {\em Mat. Sbornik N.S.}, 33(75):597--626, 1953.

\bibitem{K59}
M.~G. Kre{\u\i}n.
\newblock Criteria for completeness of a system of root vectors of a
  dissipative operator.
\newblock {\em Uspehi Mat. Nauk}, 14(3 (87)):145--152, 1959.

\bibitem{K60}
M.~G. Kre{\u\i}n.
\newblock A contribution to the theory of linear non-selfadjoint operators.
\newblock {\em Soviet Math. Dokl.}, 1:38--40, 1960.

\bibitem{K62a}
M.~G. Kre{\u\i}n.
\newblock On perturbation determinants and a trace formula for unitary and
  self-adjoint operators.
\newblock {\em Dokl. Akad. Nauk SSSR}, 144:268--271, 1962.

\bibitem{K64}
M.~G. Kre{\u\i}n.
\newblock Some new studies in the theory of perturbations of self-adjoint
  operators.
\newblock In {\em First {M}ath. {S}ummer {S}chool, {P}art {I} ({R}ussian)},
  pages 103--187. Izdat. ``Naukova Dumka'', Kiev, 1964.

\bibitem{K87}
M.~G. Kre{\u\i}n.
\newblock Perturbation determinants and a trace formula for some classes of
  pairs of operators.
\newblock {\em J. Operator Theory}, 17(1):129--187, 1987.

\bibitem{Kur61}
S.~T. Kuroda.
\newblock On a generalization of the {W}einstein-{A}ronszajn formula and the
  infinite determinant.
\newblock {\em Sci. Papers Coll. Gen. Ed. Univ. Tokyo}, 11:1--12, 1961.

\bibitem{LIoMag72}
J.-L. Lions and E.~Magenes.
\newblock {\em Non-homogeneous boundary value problems and applications. {V}ol.
  {I}}.
\newblock Springer-Verlag, New York, 1972.
\newblock Die Grundlehren der mathematischen Wissenschaften, Band 181.

\bibitem{Mal92}
M.~M. Malamud.
\newblock Some classes of extensions of a {H}ermitian operator with lacunae.
\newblock {\em Ukra\"\i n. Mat. Zh.}, 44(2):215--233, 1992.

\bibitem{Ma2010}
M.~M. Malamud.
\newblock Spectral theory of elliptic operators in exterior domains.
\newblock {\em Russ. J. Math. Phys.}, 17(1):96--125, 2010.

\bibitem{MalNei09}
M.~M. Malamud and H. Neidhardt.
\newblock On the unitary equivalence of absolutely continuous parts of
  self-adjoint extensions.
\newblock {\em J. Funct. Anal.}, 260(3):613--638, 2011.

\bibitem{Nai69}
M.~A. Na{\u\i}mark.
\newblock {\em Lineinye differentsialnye operatory}.
\newblock Izdat. ``Nauka'', Moscow, 1969.
\newblock Second edition, revised and augmented, With an appendix by V. {\`E}.
  Ljance.

\bibitem{Nikol80}
N.~K. Nikol{\cprime}ski{\u\i}.
\newblock {\em Lektsii ob operatore sdviga}.
\newblock ``Nauka'', Moscow, 1980.

\bibitem{Pel05}
V.~V. Peller.
\newblock An extension of the {K}oplienko-{N}eidhardt trace formulae.
\newblock {\em J. Funct. Anal.}, 221(2):456--481, 2005.

\bibitem{Ryb84b}
A.~V. Rybkin.
\newblock The spectral shift function for a dissipative and a selfadjoint
  operator, and trace formulas for resonances.
\newblock {\em Mat. Sb. (N.S.)}, 125(167)(3):420--430, 1984.

\bibitem{Ryb84}
A.~V. Rybkin.
\newblock The trace formula for dissipative and selfadjoint operators and
  spectral identities for resonances.
\newblock {\em Vestnik Leningrad. Univ. Mat. Mekh. Astronom.}, (vyp. 4):97--99,
  112, 1984.

\bibitem{Ryb87}
A.~V. Rybkin.
\newblock A trace formula for a contractive and a unitary operator.
\newblock {\em Funktsional. Anal. i Prilozhen.}, 21(4):85--87, 1987.

\bibitem{Ryb89}
A.~V. Rybkin.
\newblock The discrete and the singular spectrum in the trace formula for a
  contractive and a unitary operator.
\newblock {\em Funktsional. Anal. i Prilozhen.}, 23(3):84--85, 1989.

\bibitem{Ryb94}
A.~V. Rybkin.
\newblock The spectral shift function, the characteristic function of a
  contraction and a generalized integral.
\newblock {\em Mat. Sb.}, 185(10):91--144, 1994.

\bibitem{Str60}
A.~V. {\v{S}}traus.
\newblock Characteristic functions of linear operators.
\newblock {\em Izv. Akad. Nauk SSSR Ser. Mat.}, 24:43--74, 1960.

\bibitem{FN70}
B.~Sz.-Nagy and C.~Foia{\lfhook{s}}.
\newblock {\em Harmonic analysis of operators on {H}ilbert space}.
\newblock Translated from the French and revised. North-Holland Publishing Co.,
  Amsterdam, 1970.

\bibitem{VesNab86}
V.~F. Veselov and S.~N. Naboko.
\newblock The determinant of the characteristic function and the singular
  spectrum of a nonselfadjoint operator.
\newblock {\em Mat. Sb. (N.S.)}, 129(171)(1):20--39, 159, 1986.

\bibitem{Vis63}
M.~I. Vishik.
\newblock On the first boundary value problem for quasilinear elliptic
  equations and systems of higher order.
\newblock In {\em Outlines {J}oint {S}ympos. {P}artial {D}ifferential
  {E}quations ({N}ovosibirsk, 1963)}, pages 281--286. Acad. Sci. USSR Siberian
  Branch, Moscow, 1963.

\bibitem{Wey10b}
H.~Weyl.
\newblock {\"Uber gew\"ohnliche Differentialgleichungen mit Singularit\"aten
  und die zugeh\"origen Entwicklungen willk\"urlicher Funktionen.}
\newblock {\em Math. Ann.}, 68:220--269, 1910.

\bibitem{Yaf92}
D.~R. Yafaev.
\newblock {\em Mathematical scattering theory}, volume 105 of {\em Translations
  of Mathematical Monographs}.
\newblock American Mathematical Society, Providence, RI, 1992.

\end{thebibliography}

\def\cprime{$'$} \def\cprime{$'$} \def\cprime{$'$} \def\cprime{$'$}
  \def\cprime{$'$} \def\cprime{$'$} \def\cprime{$'$}
  \def\lfhook#1{\setbox0=\hbox{#1}{\ooalign{\hidewidth
  \lower1.5ex\hbox{'}\hidewidth\crcr\unhbox0}}} \def\cprime{$'$}
  \def\cprime{$'$} \def\cprime{$'$} \def\cprime{$'$} \def\cprime{$'$}
  \def\cprime{$'$} \def\cprime{$'$} \def\cprime{$'$} \def\cprime{$'$}
  \def\lfhook#1{\setbox0=\hbox{#1}{\ooalign{\hidewidth
  \lower1.5ex\hbox{'}\hidewidth\crcr\unhbox0}}} \def\cprime{$'$}

\end{document}